\title{Optimal Testing of Generalized Reed-Muller Codes in Fewer Queries}
\author{
Dor Minzer\thanks{Department of Mathematics, Massachusetts Institute of Technology, Cambridge, USA. Supported by a Sloan Research
Fellowship, NSF CCF award 2227876 and 
NSF CAREER award 2239160.}
\and
Kai Zheng\thanks{Department of Mathematics, Massachusetts Institute of Technology, Cambridge, USA. Supported by the NSF GRFP.}
}
\date{\vspace{-5ex}}
\newcommand{\AffBilin}{{\sf AffBilin}}
\newcommand{\AffShort}{{\sf AffBilin}}
\newcommand{\AffGras}{{\sf AffGras}}
\newcommand{\Lift}{{\sf Lift}}
\DeclareMathOperator{\rej}{rej}
\DeclareMathOperator{\poly}{poly}
\DeclareMathOperator{\spa}{span}
\DeclareMathOperator{\RM}{RM}
\DeclareMathOperator{\im}{Im}
\DeclareMathOperator{\Res}{Res}
\DeclareMathOperator{\lin}{lin}
\DeclareMathOperator{\fl}{fl}
\newcommand{\E}{\mathop{\mathbb{E}}}
\newcommand{\Ff}{\mathbb{F}}
\newcommand{\D}{\mathcal{D}}
\newcommand{\Ac}{\mathcal{A}}
\newcommand{\Sc}{\mathcal{S}}
\newcommand{\C}{\mathcal{C}}
\newcommand{\V}{\mathcal{V}}
\newcommand{\T}{\mathcal{T}}
\newcommand{\U}{\mathcal{U}}
\newcommand{\eps}{\varepsilon}
\renewcommand{\epsilon}{\eps}
\newcommand\inner[2]{\langle{#1},{#2}\rangle}
\newcommand\skipi{{\vskip 10pt}}
\renewcommand\leq{\leqslant}
\renewcommand\geq{\geqslant}
\theoremstyle{plain}
\newtheorem{theorem}{Theorem}[section]
   \newtheorem{thm}{Theorem}[section]
   \newtheorem{lemma}[thm]{Lemma}
   \newtheorem{remark}[thm]{Remark}
   \newtheorem{definition}{Definition}
    \newtheorem{proposition}[thm]{Proposition}
\DeclareMathOperator\supp{supp}
\DeclarePairedDelimiter{\ceil}{\lceil}{\rceil}
\begin{document}
\maketitle
\begin{abstract}
A local tester for an error correcting code $C\subseteq \Sigma^{n}$ is a tester that makes $Q$ oracle queries to a given word $w\in \Sigma^n$ and decides to
accept or reject the word $w$. An optimal local tester is a local tester that has the additional properties of completeness and optimal soundness. By completeness, we mean that 
the tester must accept with probability $1$ if $w\in C$.
By optimal soundness, we mean that if the tester accepts with probability 
at least $1-\eps$ (where $\eps$ is small), then it must be the case that $w$ is $O(\eps/Q)$-close to some codeword $c\in C$ in Hamming distance.

We show that Generalized Reed-Muller codes admit optimal testers with $Q = (3q)^{\ceil{\frac{d+1}{q-1}}+O(1)}$ queries.
Here, for a prime power $q = p^{k}$, the Generalized Reed-Muller code, $\RM[n,q,d]$, consists of the evaluations of all $n$-variate degree $d$ polynomials over $\Ff_q$.
Previously, no tester achieving this query complexity was known, and the best known testers due to
Haramaty, Shpilka and Sudan~\cite{HSS}(which is optimal) and due to Ron-Zewi and Sudan~\cite{RZS}(which was not known to be optimal) both required $q^{\ceil{\frac{d+1}{q-q/p}}}$
queries. Our tester achieves query complexity which is polynomially better than by a power of $p/(p-1)$, which is 
nearly the best query complexity possible 
for generalized Reed-Muller codes.

The tester we analyze is due to Ron-Zewi and Sudan, and we show that their basic tester is in fact optimal. Our 
methods are more general and also allow us 
to prove that a wide class of 
testers, which follow the form of the Ron-Zewi and Sudan tester, are optimal. This result applies to testers for all affine-invariant codes (which are not necessarily generalized Reed-Muller codes).
\end{abstract}

\section{Introduction}
\subsection{Local Testing of Reed Muller Codes}
The Reed-Muller Code is a widely used code with many applications in complexity theory, and more broadly in theoretical computer science.
One reason for this is that the Reed-Muller code enjoys very good local testability properties which are crucial in many applications (for example, in the construction
of probabilistically checkable proofs). The primary goal of this paper is to present local testers for Reed-Muller codes over extension fields with improved query 
complexity, which additionally satisfy a stronger notion of soundness known as \emph{optimal testing}.

Throughout this paper, $p\in\mathbb{N}$ is a prime and $q = p^k$ is a prime power, where $k$ should be thought of as moderately large (so that $q$ is significantly larger than $p$). For a degree parameter $d\in\mathbb{N}$
and a number of variables parameter $n\in\mathbb{N}$, the Reed-Muller code consists of all evaluation vectors of degree $d$ polynomials $f\colon\mathbb{F}_q^n\to\mathbb{F}_q$. That
is,
\[
\RM[n,q,d] = \left\{(f(\vec{x}))_{\vec{x}\in\mathbb{F}_q^n}~|~f\colon\mathbb{F}_q^n\to\mathbb{F}_q\text{ is a polynomial of degree at most $d$}\right\}.
\]
When $k > 1$, $\RM[n,q,d]$ is sometimes called a generalized Reed-Muller code, to distinguish from the prime field case, and as the title suggests, our results are most relevant to this case. However, henceforth, we will refer to them as Reed-Muller codes for simplicity.

Abusing notation, we will not distinguish functions and their evaluations when referring to codewords. That is, for $f: \Ff_q^n \xrightarrow[]{} \Ff_q$, we will simply say $f \in \RM[n,q,d]$ if $\deg(f) \leq d$ and we will view the codewords of $\RM[n,q,d]$ themselves as functions. When talking about the degree of a function $f$, we mean the total degree when $f$ is written as a polynomial.

Given a proximity parameter $\delta>0$, the goal in the problem of local testing of Reed-Muller codes is to design a randomized tester $\mathcal{T}$ that makes $Q$ oracle queries (for $Q$ which is as small as possible)
to a given function $f\colon\mathbb{F}_q^n\to\mathbb{F}_q$ such that:
\begin{enumerate}
  \item {\bf Completeness:} if $f$ is a polynomial of degree at most $d$, then $\mathcal{T}$ accepts with probability $1$.
  \item {\bf Soundness:} if $f$ is $\delta$-far from all degree $d$ polynomials, then the tester $\mathcal{T}$ rejects with probability at least $2/3$.
\end{enumerate}

\paragraph{Local Testing.}
Reed-Muller codes have a very natural and well studied local test \cite{AKKLR, KR, JPRZ, HSS} called the $t$-flat test. 
This test has its origins in the study of probabilistically checkable proofs~\cite{FGLSS,AroraSafra,ALMSS,RubinfeldSudan,AroraSudan,RS} (as it is related to the well known
plane versus plane, plane versus line and line versus point tests), as well as in the theory of Gowers' uniformity norms~\cite{Gowers}. To check if a given function $f$ is indeed low
degree, the tester samples a random $t$-dimensional affine subspace $U\subseteq \mathbb{F}_q^n$, queries $f(\vec{x})$ for all $\vec{x}\in U$ and checks whether the resulting
$t$-variate function $f|_U$ has degree at most $d$. Clearly the number of queries made is $q^t$, and it is also clear that the test is complete: if $f$ has degree at most $d$, then
the tester always accepts.
As for the soundness, it is known that one can take $t = \ceil{\frac{d+1}{q-q/p}}$ and get that the tester is somewhat sound~\cite{AKKLR}, meaning that the rejection probability is bounded 
away from $0$ (as opposed to at least  $2/3$). Indeed, a typical local-testing result shows that if a function $f$ is $\delta$-far from being a degree $d$ function, then
the tester rejects it with probability at least $\eps = \eps(q,d,\delta)>0$, which is a quantity that typically vanishes with the various parameters. To amplify the soundness, one 
repeats the tester $\theta(1/\eps)$ time sequentially to get $2/3$ rejection probability, thereby giving a tester whose query complexity is $O(q^t/\eps)$ and whose soundness is at least $2/3$.

Such testers for the Reed-Muller have been known for a long time. Indeed, in~\cite{AKKLR} the $t$-flat tester is analyzed and it is shown that the rejection probability 
of the basic tester is at least $\eps\geq \Omega(\delta/q^{t})$ leading to a tester with query complexity $O(q^{2t}/\delta)$. This soundness analysis turns out to not 
be optimal, and indeed, as we explain below, follow-up works have shown a better analysis of the $t$-flat tester.
In particular, it was shown that the $t$-flat tester is an \emph{optimal tester}.

\subsection{Optimal Testing of Reed Muller Codes}
In this paper, we focus on a much stronger notion of testing known as \emph{optimal testing}. Clearly, if a tester makes $Q$ queries and the proximity parameter is $\delta$, then
the rejection probability can be at most $\min(1, Q\delta)$; indeed, this is a bound on the probability to distinguish between a Reed-Muller codeword and
a Reed-Muller codeword that has been perturbed on a randomly chosen $\delta$-fraction of inputs. 
A tester is called optimal if the query-to-rejection probability tradeoff it achieves
is roughly that. Oftentimes, one settles for rejection probability which is a bit worse and has the form $c(q)\min(1,Q\delta)$ for some function $c(q)>0$ depending only on the field size
$q$. We will refer to such results also as optimal testing results. Thus, one would ideally like a tester which both achieves as small as possible query complexity, while
simultaneously being optimal.

Optimal testers are known for Reed-Muller codes. Such results were first proved over $\Ff_2$ by~\cite{BKSSZ}. Later on, 
optimal testing results were 
established for Reed-Muller codes 
over general fields \cite{HSS} as 
well as more broadly for the class of 
affine lifted codes~\cite{HRZS}. In all of these results, the $t$-flat test is analyzed (for $t$ chosen as above), and is shown to be an optimal tester. 
We remark that additionally, 
the analyses
of~\cite{BKSSZ,HSS,HRZS} led to improved query complexity for testing Reed-Muller codes. These results have important applications, most notably to the construction of small-set
expander graphs with many eigenvalues close to $1$~\cite{Shortcode}, which have later been used for improved PCP constructions and constructions of integrality
gaps~\cite{GHHSV,DHSV,DinurGuruswami,KaneMeka}.

Quantitatively, these results have two drawbacks. First, due to their application of the density Hales-Jewett theorem, the dependency of $c(q)$ on the field size $q$ is tower type, hence making
the result primarily effective for small fields. Secondly, while the query complexity achieved by their tester is the best possible for prime fields (as a lower bound on the query complexity
needed is given by the distance of the dual code of $\RM[n,q,d]$, which is $q^t$ if $q$ is prime),
it is not known to be optimal for prime power size 
fields. This raises two natural questions: 
does the flat tester actually perform worse on large fields (in comparison to small fields)? Is there a tester with smaller query complexity over non-prime fields 
(i.e.\ extension fields)?

In~\cite{RZS} a new local tester for the Reed-Muller code was designed. 
The query complexity of the tester is $Q = {\sf poly}(q) (3q)^{\ceil{\frac{d+1}{q}}}$, which is polynomially smaller than $q^{t}$ above (by a power of
$\frac{p}{p-1}$). This tester, which will be referred to as the \emph{sparse $t$-flat tester}, plays a crucial role in the current work and will be presented in subsequent sections.

Unfortunately, the rejection probability proved in~\cite{RZS} for the sparse $t$-flat tester is an $\eps$ which is sub-constant, and after repeating the tester $\Theta(1/\eps)$ times its query complexity
turns out to be roughly the same as that of the $t$-flat tester above. This leaves the Reed Muller code over extension fields in a rather precarious situation: a local characterization for the code ---  namely a basic tester that rejects far from Reed-Muller codewords with some non-negligible probability ---  is known, but amplifying 
the soundness to be constant sets us back to the same query complexity as of the $t$-flat tester.

\subsection{Main Result: Optimal, Query Efficient Tester for Generalized Reed Muller Codes}
The main result of this paper is a new and improved analysis of the tester of~\cite{RZS}. We show that the soundness of the tester is much better than the guarantee given by~\cite{RZS},
and that in fact this tester is also an optimal tester:
\begin{thm}\label{thm:main}
For all primes $p\in\mathbb{N}$
and prime powers $q = p^k$ there exists a tester $\mathcal{T}$ with query complexity $Q \leq 3q^{p+O(1)} (3q)^{\ceil{\frac{d+1}{q-1}}}$
such that given an oracle access to a function $f\colon\mathbb{F}_q^n\to\mathbb{F}_q$,
\begin{enumerate}
  \item {\bf Completeness:} if $f$ has degree at most $d$, then $\mathcal{T}$ accepts with probability $1$.
  \item {\bf Soundness:} if $f$ is $\delta$-far from degree $d$ , then $\mathcal{T}$ rejects with probability at least $c(q)\min(1, Q\delta)$, where $c(q) = \frac{1}{{\sf poly}(q)}$.
\end{enumerate}
\end{thm}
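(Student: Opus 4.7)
The plan is to strengthen the sub-constant soundness analysis of the Ron-Zewi-Sudan (RZS) sparse $t$-flat tester to the optimal form $c(q)\min(1, Q\delta)$ via a \emph{local correction} framework. Given a function $f\colon \Ff_q^n \to \Ff_q$ accepted by the RZS tester with probability $1-\eps$, I would construct a candidate codeword $g$, show that $g$ is $O({\sf poly}(q) \cdot \eps/Q)$-close to $f$, and then show $g \in \RM[n,q,d]$; together these give the optimal soundness bound by setting the distance parameter appropriately.

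\textbf{Construction of $g$ and proximity.} For $t = \ceil{(d+1)/(q-1)} + O(1)$ and each $t$-flat $U \subseteq \Ff_q^n$, let $S_U \subseteq U$ be the RZS sparse query set of size $Q$, chosen so that $f|_{S_U}$ generically admits at most one degree-$d$ extension $P_U$ to $U$. Define $g(x)$ to be the plurality value of $P_U(x)$ over $t$-flats $U \ni x$ for which $P_U$ exists. To bound $\Pr_x[g(x) \neq f(x)]$, I would use a double-counting argument: the probability that a random $t$-flat through $x$ has $x \in S_U$ is $|S_U|/|U| = Q/q^t$, so conditioning on this and using that the test passes with probability $1-\eps$, one finds that for most $x$ the value $P_U(x) = f(x)$ is achieved by a majority of valid flats, yielding $\Pr_x[g(x) \neq f(x)] = O({\sf poly}(q) \cdot \eps/Q)$ after Markov.

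\textbf{Low-degreeness of $g$.} This is the main technical hurdle. I would target a \emph{pairwise consistency} lemma: for most pairs of $t$-flats $U_1, U_2$ sharing a $(t-1)$-flat $W$, the polynomials $P_{U_1}$ and $P_{U_2}$ agree on $W$. Pairwise consistency reduces to uniqueness of degree-$d$ extension from the intersections $S_{U_1} \cap W$ and $S_{U_2} \cap W$, which depends delicately on the RZS sparse-set construction having well-behaved intersections with generic $(t-1)$-flats. Granted pairwise consistency, one glues the local polynomials $P_U$ into a single global polynomial of degree at most $d$, which must equal $g$ everywhere by uniqueness of low-degree interpolants.

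\textbf{Main obstacle.} The main difficulty is the quantitative control in the pairwise consistency lemma: one needs consistency for a $(1 - {\sf poly}(q) \cdot \eps)$ fraction of pairs, losing only polynomially in $q$. This precludes invoking the density Hales-Jewett theorem in the manner of \cite{HSS}, which would give tower-type dependence on $q$. Instead, one needs a direct combinatorial/algebraic analysis of how the RZS sparse subsets interact with random $(t-1)$-flats---likely exploiting that these sparse sets come from structured algebraic objects (such as minimum-weight dual codewords), so that their $(t-1)$-flat intersections carry enough algebraic information to determine degree-$d$ extensions robustly. The extensions $P_U$ depend nontrivially on $f$, so a purely algebraic uniqueness statement is insufficient; a robust version, degrading gracefully in the fraction of corrupted points of $S_U$, will be needed.
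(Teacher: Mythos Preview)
Your proposal has a genuine gap at its foundation, and the overall strategy diverges sharply from what the paper does.

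\textbf{The definition of $P_U$ is not well-founded.} The sparse RZS test does not query a set $S_U$ that is an interpolating set for $\RM[t,q,d]$. The test checks that $\langle f\circ T, H_e\rangle = 0$ for a specific family of dual codewords $H_e$; passing these checks means $f\circ T$ is orthogonal to those $H_e$, not that $f|_{S_U}$ extends uniquely to a degree-$d$ polynomial on the flat. In fact the test is indexed by affine transformations $T$ rather than by flats $U$: two $T$'s with the same image can give different outcomes, so there is no well-defined ``local polynomial $P_U$'' attached to a flat. Your plurality construction therefore has no object to take the plurality of.

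\textbf{The consistency step is the whole problem and is left open.} Even granting some reformulation of $P_U$, your ``pairwise consistency'' lemma is exactly the crux, and you correctly flag it as the main obstacle without a plan. The intersection $S_{U_1}\cap W$ of a sparse RZS pattern with a generic hyperplane $W$ has no reason to determine a degree-$d$ polynomial, robustly or otherwise; the sparse pattern is a product of supports of the auxiliary polynomial $P$, and slicing by a generic hyperplane destroys that product structure.

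\textbf{What the paper actually does.} The paper abandons the flat picture entirely and works on the \emph{affine bi-linear scheme graph}, whose vertices are affine maps $T\colon\Ff_q^{s+t}\to\Ff_q^n$ (not flats). It shows the rejection set $\Sc_t$ has edge expansion at most $1-1/q$ in this graph (a Schwartz--Zippel argument replacing the classical shadow lemma), proves $\Sc_t$ is pseudorandom with respect to zoom-outs and zoom-ins on the linear part (the latter via a reduction to the standard flat tester on an auxiliary function $\tilde f$), and then invokes a global-hypercontractivity-style structure theorem to conclude $\Sc_t$ is almost entirely contained in some zoom-in $\mathcal{C}_{a^\star,b}$. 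A separate argument forces $a^\star\in\supp(H)$ and propagates the density to $\bigcup_{a\in\supp(H)}\mathcal{C}_{a,b}$. Then changing $f(b)$ to a single value reduces the rejection probability by $\Omega(|\supp(H)|/(q\cdot q^n))$, and iterating gives the optimal bound. The correction is thus one point at a time, driven by expansion on a graph of transformations; there is no plurality decoding and no gluing of local polynomials.
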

The test uses a parameter $t$ (where $\lceil \frac{d+1}{q-1}\rceil + t$ is analogous to the ``dimension'' parameter in the flat test), 
and the $t$ that we use will be $t = \max(p + 2, 10)$. 
We remark however that the analysis we give applies to all $t \geq \max(p + 2, 10)$, 
and we choose this specific $t$ so as to 
minimize the query complexity. We defer to Section~\ref{sec: sparse flat test construction} for more details on this parameter.

A lower bound on the query complexity needed is $q^{\ceil{\frac{d+1}{q-1}}}$, which once again follows by considering the dual code of the generalized Reed-Muller code
and arguing that this number is its distance. Hence, Theorem~\ref{thm:main} is tight up to a factor of ${\sf poly}(q) 3^{\ceil{\frac{d+1}{q-1}}}$; for large $q$, this factor
is very small compared to $q^{\ceil{\frac{d+1}{q-1}}}$, hence the query complexity achieved by our tester is essentially optimal for large field size $q$.

\subsection{Optimal Testing of Other Linear Lifted Affine Invariant Codes} \label{sec: lifted codes}
Our techniques are in fact more general, and also apply to testers of a wider family of codes, called (linear) lifted affine invariant codes. These codes were introduced by \cite{GKSS} and shown to be optimally testable in \cite{HRZS, KM}. In words, we show that any tester for such codes is optimal if it can be expressed as the product of polynomials, such that each polynomial is defined on a constant number of variables and such that the variables for each polynomial are disjoint. We describe this result in more detail below, but defer the full discussion to Section~\ref{sec:optimal_testing_from_other}. 

A code $\mathcal{C} \subseteq \{f: \Ff_q^{n} \xrightarrow[]{} \Ff_q \}$ is linear if its codewords form a linear subspace and is affine invariant if for any affine transformation $T: \Ff_q^{n} \xrightarrow[]{} \Ff_q^{n}$ and $f \in \mathcal{G}$, we have that $f \circ T \in \mathcal{G}$, where $f\circ T$ is defined as the function $f': \Ff_q^{n} \xrightarrow[]{} \Ff_q^n$ such that $f'(x) = f(T(x))$ for all $x \in \Ff_q^n$. Since $\C$ is linear, it has a dual $\C^{\perp}$ also consisting of functions from $\Ff_q^n \xrightarrow[]{} \Ff_q$, and $f \in \C$ if and only if $\langle f, h \rangle = 0$ for all $h \in \C^{\perp}$, where this inner product is the standard dot product of the evaluation vectors of $f$ and $h$ (over $\mathbb{F}_q^n$). Notice that, one can also compose $f$ with affine transformations $T: \Ff_q^{k} \xrightarrow[]{} \Ff_q^n$ for $k < n$. In this case, $f \circ T$ is a function from $\Ff_q^k \xrightarrow[]{} \Ff_q$, and we can consider the code consisting of all $f: \Ff_q^n \xrightarrow[]{} \Ff_q$ such that $f \circ T$ is in some affine invariant base code $\mathcal{G} \subseteq \{\Ff_q^{k} \xrightarrow[]{} \Ff_q \}$. This code is called the $n$-dimensional lift of $\mathcal{G}$ and is defined as:

\[
\Lift_n(\mathcal{G}) = \{f: \Ff_q^n \xrightarrow[]{} \Ff_q \; | \; f \circ T \in \mathcal{G} \text{ for all affine transformations } T: \Ff_q^{k} \xrightarrow{} \Ff_q^n \}.
\]
It is not hard to see that $\Lift_n(\mathcal{G})$ is also affine invariant and linear. Suppose we want to design a local tester for $\mathcal{C}$ and we know $\mathcal{C} = \Lift_n(\mathcal{G})$ for some affine invariant $\mathcal{G}$ defined as above with $k \geq 10$. A natural test is the following:
\begin{enumerate}
    \item Take $\mathcal{H} \subseteq \mathcal{G}^{\perp}$.
    \item Choose an affine transformation $T: \Ff_q^k \xrightarrow[]{} \Ff_q^n$ uniformly at random.
    \item Accept if and only if $\langle f \circ T, H \rangle = 0$ for all $h \in \mathcal{H}$.
\end{enumerate}
We remark that not every choice of $\mathcal{H}$ results in a local tester, and it is indeed possible to choose $\mathcal{H}$ so that there exist $f \notin \C$ that still pass the above test with probability $1$. Our main result shows that when such a test \textit{is} a local test and $\mathcal{H}$ consists of functions of a specified form, then the tester is automatically an optimal tester. In order to obtain explicit optimal testers one still has to find such a $\mathcal{H}$ that is a local tester, but this is not the focus of the current paper. 

The form for $\mathcal{H}$ that we require is as follows. Let 
\[
H(x_1, \ldots, x_{k'}) = \left(\prod_{i = 1}^{s} P_i(x_{m(i)}, \ldots, x_{m(i+1)-1}))\right),
\]
where $\poly(q) \geq  k-k'  \geq 0$, and $m(i+1) - m(i) \leq t'$ for some constant $t'$ and all $1 \leq i \leq s-1$. In words, $H$ is a polynomial in at most $k'$-variables, for some $k'$ that is within some constant power of $q$ from $k$, that can be factored into the product of polynomials in constant number of variables, and such that the variables of each of these polynomials is disjoint. Next let $\mathcal{M} \subsetneq \{\Ff_q^{k-k'} \xrightarrow[]{} \Ff_q \}$ be an affine invariant code and let $\overline{\mathcal{M}}$ be a basis for $\mathcal{M}^{\perp}$. Finally, suppose 
\[
\mathcal{H} = \{H(x_1, \ldots, x_{k'}) M(x_{k'+1}, \ldots x_k) \; | \;  M \in \mathcal{M}\}.
\]
Our theorem states:
\begin{thm}\label{thm:main_2}
    Suppose $\mathcal{H}$ is of the previously described form and suppose that the previously described test using $\mathcal{H}$ is a local tester for $\C = \Lift_n(\mathcal{G})$. Then the local tester is also optimal. That is,
    \begin{enumerate}
  \item {\bf Completeness:} if $f \in \C$ then the test accepts with probability $1$.
  \item {\bf Soundness:} if $f$ is $\delta$-far from degree $\C$ , then the test rejects with probability at least $c(q)\min(1, Q\delta)$, where $c(q) = \frac{1}{{\sf poly}(q)}$.
\end{enumerate}
\end{thm}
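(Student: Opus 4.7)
The plan is to exploit the product structure of $\mathcal{H}$ to reduce Theorem~\ref{thm:main_2} to a decoupled chain of optimal testing statements. The useful starting observation is a reformulation of the test: after applying $T$, checking $\langle f\circ T, H\cdot M\rangle = 0$ for every $M\in\overline{\mathcal{M}}$ is equivalent to checking that the partial inner product
\[
g(y) \;:=\; \sum_{x \in \Ff_q^{k'}} (f \circ T)(x, y)\, H(x),\qquad y\in\Ff_q^{k-k'},
\]
lies in $\mathcal{M}$. Thus the test is a composition: first compress $f\circ T$ by $H$ to obtain $g$, then run a complete test for $\mathcal{M}$ on $g$. To set up the standard optimal-testing framework, let $c\in\C$ be the closest codeword to $f$ and write $e = f-c$ with error set $W = \supp(e)$ of relative density $\delta$. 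By linearity and completeness the rejection probability depends only on $e$, and the goal is to lower bound it by $c(q)\min(1,Q\delta)$.

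The core of the proof is to analyze the sensitivity of the compression $f\circ T\mapsto g$ to changes in the $W$-coordinates, by peeling off the factors of $H = \prod_{i=1}^s P_i$ one block of variables at a time. Because the blocks are disjoint and each has constant size $\le t'$, the compression is a tensor product of constant-size local averaging operators, and affine invariance makes each block's contribution an essentially independent sample of a fixed low-dimensional dual-code test. I would argue inductively on $i$: conditioning on the affine action of $T$ on all blocks except the $i$-th, the residual averaging of $e\circ T$ against $P_i$ becomes a low-dimensional test whose sensitivity to individual error coordinates can be bounded directly from affine invariance and the hypothesized local-testability. Iterating over all $s$ blocks, with a $\poly(q)$ loss at each step, propagates the weight of $e$ into a lower bound on $\Pr[g\notin\mathcal{M}]$. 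At that point one can apply any known optimal tester for the affine invariant code $\mathcal{M}$ on the residual $k-k'$ variables (e.g., via the techniques of Theorem~\ref{thm:main} or previously established results for affine invariant codes) to convert this into the final rejection-probability bound.

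The main obstacle I anticipate is ruling out \emph{correlated collapse} of $g$: a priori, $g$ could vanish even when $e\circ T$ carries significant weight, if the errors happen to align with the zero patterns of several $P_i$ simultaneously. To handle this I would combine two ingredients. First, the disjointness of the variable blocks means that, for a uniformly random $T$, the blockwise averages against $P_1,\ldots,P_s$ behave approximately mutually independently in a second-moment sense, so a correlated collapse across all blocks is a low-probability event whose weight decays in $s$. Second, the hypothesized local-testability provides a baseline lower bound on per-block collapse probabilities for structured error configurations, which the second-moment calculation then lifts to the full product. It is the product structure of $H$ that makes both ingredients simultaneously available; without it, adversarial alignment could persist across the whole variable set. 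Combining these with the optimal tester for $\mathcal{M}$ then yields the desired $\Pr[\text{reject}] \geq c(q)\min(1, Q\delta)$.
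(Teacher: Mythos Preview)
Your proposal takes a fundamentally different route from the paper, and it has a real gap. The paper does \emph{not} analyze the rejection probability directly as a function of the error pattern; instead it runs the local-correction machinery from Section~\ref{sec: correcting the error and iterating} essentially verbatim: show the rejecting set $\Sc\subseteq\T_{n,k+t}$ has expansion $\le 1-1/q$ in $\AffBilin(n,k+t)$ (Lemma~\ref{lm: expansion general}), establish pseudorandomness with respect to zoom-outs and zoom-ins on the linear part (using Lemma~\ref{lm: general affine invariant shadow lemma} in place of Lemma~\ref{lm: uppershadow}), apply Theorem~\ref{th: pseudorandom} to locate a zoom-in $\C_{a^\star,b}$ on which $\Sc$ is nearly full, argue $a^\star\in\supp(H)$, propagate to all of $\bigcup_{v\in\supp(H)}\C_{v,b}$ via Lemma~\ref{lm: localization lifted}, correct $f(b)$, and iterate. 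The factor $Q=|\supp(H)|$ in the bound comes from the iteration: each single-point correction removes $\Omega(|\supp(H)|/q^{n+1})$ from the rejection probability.

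Your block-peeling approach does not deliver this. Two concrete problems. First, you propose to incur a $\poly(q)$ loss at each of the $s$ factors $P_i$; but $s$ is not a constant (in the Reed-Muller instance $s\approx d/q$), so iterating gives $\poly(q)^s$, which is exponentially worse than the claimed $c(q)=1/\poly(q)$. This is exactly the regime where the Kaufman--Sudan black-box reduction of~\cite{KS,RZS} lands, and the whole point of the paper is to avoid it. Second, your ``approximate mutual independence in a second-moment sense'' for the blockwise averages is not available: the blocks of $T$ are columns of a single random affine map, and the values $(e\circ T)(x,y)$ at different $x$ are values of the \emph{same} function $e$ at highly correlated points. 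There is no product measure here to support a second-moment decoupling, and adversarial error patterns can indeed align with the zero sets of many $P_i$ simultaneously without any small-probability penalty. Finally, even granting a lower bound on $\Pr[g\notin\mathcal M]$, there is no further ``optimal tester for $\mathcal M$'' to invoke: the event $g\notin\mathcal M$ \emph{is} the rejection event, so your argument must produce the full $c(q)\,Q\,\delta$ bound on its own, and nothing in the peeling step explains where the factor $|\supp(H)|$ would come from.
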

Although our result is stated for lifted affine-invariant codes, it also applies equally to affine-invariant codes by taking $\C = \mathcal{G} = \Lift_{k}(\mathcal{G})$. In contrast, the optimal testing result for lifted affine-invariant codes in \cite{KM} applies only to the $k$-flat test for $\Lift_n(\mathcal{G})$, which is no longer ``local'' in the case of $\C = \mathcal{G}$ as it looks at the entire domain. On the other hand, for the $\C = \mathcal{G}$ case, one could still obtain locality using our result by designing a set $\mathcal{H}$ of the specified form that has sparse support. 
\skipi
Theorem~\ref{thm:main_2} 
gives a general recipe to
construct optimal testers, 
thus making progress
on the task of establishing
optimal testing results
for general affine invariant codes. 
We would 
like to highlight two 
interesting avenues that 
we leave for future works.
First, it would be  
interesting to 
investigate what other 
affine invariant codes  
can be analyzed via 
Theorem~\ref{thm:main_2}. 
This could lead to new optimal testing
result for other codes, or otherwise 
to a more general class of testers that 
one can then try to analyze using the 
tools presented herein (and their 
extensions). Second, it would be 
interesting to extend our techniques to remove the requirements on the form of $\mathcal{H}$ (or perhaps weaken it somehow), and show that \empty{any} local test for affine-invariant codes is optimal. 

\subsection{Optimal Testing via Global Hypercontractivity}
The proof of Theorem~\ref{thm:main} is very different from the proofs of~\cite{BKSSZ,HSS,HRZS} (which proceed by induction on $n$) as well as from the proof of~\cite{RZS}  (which
proceeds by presenting a local characterization of the generalized Reed-Muller code and appealing to a general and powerful 
result due to~\cite{KS}, that converts local characterizations to local tester). Instead, our proof is inspired by a new approach
for establishing such results via global hypercontractivity~\cite{KM}.

Global hypercontractivity is a recently introduced tool that is often useful when working 
with non small set expander graphs. One corollary of global hypercontractivity (which is
morally equivalent) is a useful characterization of all small sets in a graph that have edge expansion bounded away from $1$. The study of global hypercontractivity has its roots
in the proof of the 2-to-2 Games Theorem \cite{KMS, DKKMS2, DKKMS, KMS2}, however by now it is known to be useful in the study a host of different problems (see for example~\cite{KLLM,KLLMcodes,KLM,KM,BBKSS,BHKL,BHKL2,GLL}).

Below, we explain the global hypercontractivity approach to proving local testability results. 
In Section~\ref{sec:techniques} we explain how we extend this
approach to the realm of generalized Reed-Muller codes in order to analyze the sparse $t$-flat tester and prove Theorems~\ref{thm:main},~\ref{thm:main_2}.

\subsubsection{Optimal Testing of Reed-Muller Codes via Global Hypercontractivity}
In~\cite{KM}, the authors relate the analysis of the $t$-flat tester of the Reed-Muller code to expansion properties of the affine Grassmann graph.
Here, the affine Grassmann graph is the graph whose vertex set is the set of all $t$-flats in $\mathbb{F}_q^n$, and two vertices are adjacent if they intersect in a $t-1$ dimensional
affine subspace. In short, the approach of~\cite{KM} starts with the assumption that the $t$-flat tester accepts $f$ with probability at least $1-\eps$ (for $\eps$ thought of as small)
and proceeds to prove a structural result on the set of $t$-flats on which the tester rejects:
\[
S = \left\{T\subseteq\mathbb{F}_q^n~|~{\sf deg}(f|_{T}) > d, T\text{ is a $t$-flat}\right\}.
\]
In particular, using a lemma from~\cite{HSS} they prove that the set $S$ has poor expansion properties in the affine Grassmann graph, and use it to prove that there exists a point
$x^{\star}\in\mathbb{F}_q^n$ such that the tester almost always rejects when it selects a subspace $T$ containing $x^{\star}$. This suggests that $f$ is erroneous at $x^{\star}$
and should be corrected, and indeed using that a local correction procedure is devised in~\cite{KM} to show that the value of $f$ at $x^{\star}$ could be changed and lead to 
an increase in the acceptance probability of additive factor $q^{t-n-O(1)}$. Iterating this argument, one eventually changes $f$ in at most $C(q)\cdot q^{-t}\eps$ fraction 
of the inputs and gets a function $f'$ on which the tester accepts with probability $1$. Such $f'$ must be of degree at most $d$, hence one gets that $f$ is 
close to a degree $d$ polynomial.

\subsubsection{Optimal Testing of Generalized Reed-Muller Codes via Global Hypercontractivity}
While the approach of~\cite{KM} seems to be more robust and thus potentially applicable towards analyzing a richer class of codes, it is not completely obvious how to do so. For the $t$-flat tester one may associate a very natural graph with the tester, but this is much less clear in the context of the sparse $t$-flat tester (which is the tester 
that we analyze). The pattern of inputs 
which are queried by the tester is no longer a nice-looking subspace (but this seems inherent in order to improve upon the query complexity of the flat tester).

At a high level, we show that another way of approaching this problem is by utilizing the symmetries of the code and constructing graphs on them. 
For that, we have to think of the tester as the composition of a ``basic tester'' and a random element from
the group of symmetries of the code. In our case,
the group of symmetries is the group of affine linear transformations over $\mathbb{F}_q^n$, and the graph that turns out to be related to the analysis of the sparse 
$t$-flat tester is the so-called Affine Bi-linear Scheme Graph.

At first sight, affine linear transformations seem to be morally equivalent to flats (identifying the image of
an affine linear transformation with a subspace), and indeed there are many connections between results on the former graph and result on the latter graph. 
However, the distinction between affine linear transformations and flats will be crucial for us. Indeed, while two affine linear transformations $A_1$ and $A_2$ may have
the same image, the performance of the tester when choosing $A_1$ and when choosing $A_2$ will not be the same whatsoever, and therefore we must capitalize on this distinction
in our soundness analysis.

\subsection{Our Techniques}\label{sec:techniques}
In this section, we give a brief overview of the techniques underlying the proof of Theorem~\ref{thm:main}.
We start by presenting the sparse $(s+t)$-flat tester of~\cite{RZS} and then take a somewhat different perspective on it in the form of groups of symmetries. After that, we explain the high level strategy of the proof of Theorem~\ref{thm:main}, and explain some of the unique challenges that arise compared to the
analysis of the standard $t$-flat tester.
For the sake of presentation, we focus on the case that $p=2$ for the remainder of this section, and switch back to general $p$ in Section~\ref{sec: preliminaries}.
\subsubsection{The Sparse Flat Tester} \label{sec: sparse flat tester}
The construction of the sparse flat tester of~\cite{RZS} begins with the following observation. In the $p=2$ case, define the bivariate polynomial
$P\colon \mathbb{F}_q^2\to\mathbb{F}_q$ by 

\[
P(x_1, x_2) = \frac{-x_2^{q-1} + (x_1 + x_2)^{q-1}}{x_1}= \sum_{i = 0}^{q-2} x_1^ix_2^{q-2-i}.
\]
 In~\cite{RZS}, the authors observe the following two crucial properties of $P$ that make it useful towards getting a local testing result:
\begin{enumerate}
  \item {\bf Sparse Support:} the support of $P$ has size at most $3q$; indeed, if $x_1+x_2\neq 0$, $x_1\neq 0$ and $x_2\neq 0$, then by Fermat's little theorem
  $(x_1+x_2)^{q-1} = x_2^{q-1} = 1$ and $x_1\neq 0$, so $P(x_1,x_2) = 0$.
  \item {\bf Detects the Monomials $x_1^{q-i}x_2^{i}$:} looking at the inner product of $P$ with a monomial $M$, defined as
  $\inner{P}{M} = \sum\limits_{x_1,x_2} P(x_1,x_2)M(x_1,x_2)$, we get that if $M = x_1^{q-i}x_2^{i}$ for
  $i \in \{1,\ldots,q-1\}$ then $\inner{P}{M}\neq 0$. Indeed,
  \[
  \inner{P}{M}
  = \sum\limits_{j=0}^{q-2}\sum\limits_{x_1}x_1^{q-i+j}\sum\limits_{x_2}x_2^{q-2-j+i}
  = \sum\limits_{j=0}^{q-2}(q-1)1_{q-i+j = q-1} (q-1) 1_{q-2-j+i = q-1},
  \]
  so we only have contribution from $j = i-1$ and it is non-zero. For any other monomial $M$, a similar computation shows that
  $\inner{P}{M} = 0$, hence taking an inner product of a function $f$ with $P$ may be thought of detecting whether in $f$
  there is some monomial of the form $x_1^{q-i}x_2^{i}$.
\end{enumerate}

With this in mind, the sparse tester follows. In~\cite{RZS} it is argued that to design a local tester for the generalized Reed-Muller code it suffices to design a tester that detects whether certain canonical monomials exist. Writing $d+1 = s \cdot \frac{q}{2} + r$, where $s$ is even and $r < q$, it is sufficient to detect whether any monomials of the form $\prod_{i=1}^{s/2}x_{2i-1}^{q/2}x_{2i}^{q/2}  \cdot\prod\limits_{i=s+1}^{s+t} x_{i}^{e_i}$ where $\sum_{i=1}^{t}e_i \geq r$ and $t$ is a small constant (say, $t = 10$). Using $P$ from above, a detector
for such monomials is given by
\[
H_{e'}(x_1,\ldots,x_{s+t}) = P(x_1,x_2)\cdots P(x_{s-1}, x_{s})\cdot M_{e'}(x_{s+1},\ldots,x_{s+t}),
\]
where $e' + e = (q-1, \ldots, q-1)$ and $M_{e'}(x_{s+1},\ldots,x_{s+t}) = \prod\limits_{i=s+1}^{s+t} x_{i}^{q-1-e_i}$. 
We note that most of the degree of $H_{e'}$ comes from the product of the $P$'s, while at most $t(q-1)-r$ of the degree comes from the rest. As the support of $P$ is rather sparse, it follows that the support of $H_{e'}$ is also rather sparse. More precisely, the support of $H_{e'}$ has size at most $(3q)^{\frac{s}{2}+t}$, and as $t$ should be thought of as small and $s$ is roughly $2d/q$, 
the support of $H_{e'}$ has size roughly $(3q)^{d/q}$. This yields a query complexity of $(3q)^{d/q}$, which is quadratically better than $q^{\ceil{\frac{d}{q-q/p}}} \approx q^{2d/q}$ given by the flat tester.

As mentioned earlier, the soundness analysis in~\cite{RZS} relies on a black box result from~\cite{KS}. They manage to show that the detector they construct
implies a tester with the same query complexity that rejects functions that are $\delta$-far from Reed-Muller codewords with probability $\Omega((3q)^{-2(s/2+t)} \delta )$. 
To get constant rejection probability one has to repeat the tester $(3q)^{2(s/2+t)}$ times and drastically increasing the query complexity; we defer a more detailed discussion to Section~\ref{sec:local_testers}.

\subsubsection{The Group of Symmetries Perspective}
Our first task in proving Theorem~\ref{thm:main} is to design a tester based on the ideas from~\cite{RZS}. The tester is very natural:
\begin{enumerate}
  \item Sample a random affine map $T\colon \mathbb{F}_q^{s+t}\to\mathbb{F}_q^{n}$; here, $s$ and $t$ should be thought of as in the previous section.
  We are going to look at the function $f\circ T$, but not query all of its values (indeed, querying all of its values would amount to the $(s+t)$-flat tester).
  \item For any sequence of degrees $e = (e_{s+1},\ldots,e_{s+t})$ such that $\sum_{i=1}^{t} q-1-e_i \geq r$, take $H_e$ and compute $\inner{f\circ T}{H_e}$. Reject if this inner product is non-zero for any such degree sequence. Otherwise, accept.
\end{enumerate}
In words, we first take the restriction of $f$ to a random $(s+t)$-flat, and then apply the detector polynomials of~\cite{RZS}. Although we test for multiple degree sequences (up to $q^t$), we will see in Section~{\ref{sec:local_testers}} that the support of $H_e$ is the same in each case. Therefore, we can perform the inner product for all of the degree sequences using the same $q^{s+t}$ queries. Another way to think about this tester is that we have the group of symmetries of the Reed-Muller code (affine linear transformations), and our tester proceeds by first taking a random symmetry from this group, taking a restriction, and then using the detector of~\cite{RZS}.

\subsubsection{The Graph on Affine Linear Transformations and Its Analysis}
With the above tester in mind, the next question is how to analyze it.
In the case of the flat tester we had a very natural graph associated with the tester (the affine Grassmann graph). The above tester seems related to flats as well, since the image of an affine transformation is a flat; however, there is a key, crucial distinction between the two. In the above tester, we are only going to look at the value of $f$ at a few locations
in $\im(T)$, hence the tester may perform differently on $T$ and $T'$ even if they have identical images. This lack of symmetry
is crucial for the sparseness of the test, but makes the task of associating a graph with the tester trickier.

To gain some intuition as to what this graph is supposed to be, recall that in the flat testers, one could look at the $t$-flat tester for
all $t$ (not necessarily the smallest one). The relations between these testers for different $t$'s play a crucial role in all of the works establishing
optimal testing results, and in particular it is known that the rejection probability of the $(t+1)$-flat tester is at most $q$ times the rejection probability
of the $t$-flat tester. We will elaborate on this property a bit later (referred to as the ``shadow lemma'' below). 
Another benefit of looking at various testers is that it gives a natural way of arriving
at the affine Grassmann graph, by doing an up-down walk between these testers. To obtain the edges of the affine Grassmann graph, one can start with a $t$-flat, go up to a $(t+1)$-flat that contains it, and
then back down to a $t$-flat contained in the $(t+1)$-flat. What is the right analog of this operation in the context of the sparse flat tester?

\paragraph{Going up.}
The above discussion suggests looking for analogs of the tester above for higher dimensions, and there is a clear natural analog to the up step:
for any $r\geq 0$, we can look at the $(s+t+r)$ sparse flat tester, in which one chooses an affine map
$T\colon \mathbb{F}_q^{s+t+r}\to\mathbb{F}_q^{n}$ randomly, and proceeds with the tester as above for all viable degree sequences (but over 
more variables). Thus, starting with the $(s+t)$ sparse flat tester and with an affine map
$T$ thought of as $(n\times (s+t))$ matrix over $\mathbb{F}_q$ and an affine shift $c \in \mathbb{F}_q^n$, we can go ``up'' to the $(s+t+1)$ sparse flat tester by choosing a random
vector $w \in \mathbb{F}_q^n$ and looking at affine transformation $A$ corresponding to the matrix whose columns are the same as of $T$, 
except that we append $w$ as the last column. Just like in the flat tester, it can be observed without much difficulty that if the $(s+t)$ sparse flat tester rejects when choosing $T$,
then the $(s+t+1)$ sparse flat tester rejects when choosing $A$.

\paragraph{Going down.}
The ``going down'' step is also simple, but a bit harder to motivate. Taking inspiration from the flat tester, one may want to apply some linear shuffling on the columns of
$A$ and then ``drop'' one of the columns. This doesn't seem to work though, as doing so would lead to a $T'$ in which the performance of the sparse flat tester seems ``completely
independent'' to its performance on $T$ (in the sense that the set of points it looks at in $T'$ would typically be disjoint from the set of points it looks at in $T$). 

Thus, when going down we wish to do so in a way that keeps $T'$ and $T$ equal on many points. A natural thing to try is to apply an affine transformation from $R: \Ff_q^{s+t+1} \xrightarrow[]{} \Ff_q^{s+t}$ to $A$ that fixes a co-dimension $1$ space. In this case, $T' = A \circ R$ is outputted and $T'$ is equal to $A$ on the co-dimension $1$ space that $R$ fixes. On the other hand, by construction $T$ is equal to $A$ on a co-dimension $1$ space as well - namely the subspace with last coordinate equal to $0$. Therefore, after the down step we get $T'$ which is equal to $T$ on a subspace of dimension $s+t$ - which is essentially as similar to $T$ as possible while still being distinct from it.

Put a different way, to
go down from the $(s+t+1)$ sparse flat tester to $(s+t)$ sparse flat tester we proceed by choosing $b_1,\ldots,b_{s+t},b_{s+t+1}\in\mathbb{F}_q$ uniformly and independently
and taking the affine transformation $T'$ corresponding to the matrix whose $i$th column is ${\sf col}_{i}(T) + b_i {\sf col}_{s+t+1}(T)$, and 
whose shift is $c+b_{s+t+1}u$. In words, we drop the final column but add a random multiple of it to
each one of the other columns of $T$. 

\paragraph{Going up and then down.}
Stitching these two operations, one gets a graph whose set of vertices is the set of affine linear transformations $T\colon \mathbb{F}_q^{s+t}\to\mathbb{F}_q^{n}$
and whose edges are very similar in spirit to the affine Grassmann graph; this graph is known as the bi-linear scheme graph.
The core of our analysis relies on $3$ components:
\begin{enumerate}
  \item Relating the performance of the tester and expansion on this graph (the shadow lemma), and proving that the set of $T$'s on which the tester rejects
has edge expansion at most $1-1/q$.
  \item Studying the structure of sets in this graph whose expansion is at most $1-1/q$ and proving they must have some strong local structure.
  \item Using said local structure towards a correction argument, proving that if the sparse flat tester rejects with small probability, then $f$ is close to
  a Reed-Muller codeword.
\end{enumerate}

\subsubsection{The Shadow Lemma}
The shadow lemma is a result asserting a relation between the rejection probability of a $(s+t+1)$ tester and the $(s+t)$ tester.
\paragraph{The Shadow Lemma for Flat Testers.}
In the context of flat testers, the lemma asserts that the fraction of flats of dimension $(s+t+1)$ on which the tester rejects is
at most $q$ times the fraction of $(s+t)$ flats the tester rejects. The name of the result stems from the fact that letting $S$ be
the set of $(s+t)$-flats on which the tester rejects, the set of $(s+t+1)$ flats on which the tester rejects is exactly the upper
shadow of $S$:
\[
S^{\uparrow} = \{A\subseteq\mathbb{F}_q^n~|~{\sf dim}(A) = s+t+1, \exists B\in S, B\subseteq A\}.
\]
This means that on average, an element $A\in S^{\uparrow}$ has $1/q$ of its subsets $B\subseteq A$ in $S$, and thinking of an
edge in the affine Grassmann graph an up-down step we get that the probability that a random step from $S$ goes to a vertex outside
$S$ is at most $1-1/q$. That is, the edge expansion of $S$, defined as
\[
\Phi(S) = \frac{|\{ e = (A,A')\in E~|~A, A'\in S\}|}{|\{ e = (A,A')\in E~|~A\in S\}|},
\]
is at most $1-\frac{1}{q}$.

\paragraph{The Shadow Lemma for Sparse Flat Testers.}
In the context of the sparse flat tester, we wish to argue that something along the same lines holds.
Towards this end, fixing the function $f\colon\mathbb{F}_q^n\to\mathbb{F}_q$, we define
\[
S = \{T\colon \mathbb{F}_q^{s+t}\to\mathbb{F}_q^n~|~\text{the sparse flat tester rejects when choosing $T$}\}.
\]
Due to the asymmetry between the up and down steps, there is no clear analog of the upper shadow of $S$. However, it is
still true that if $T\in S$ and we append to $T$ some vector $u\in\mathbb{F}_q^n$ to form an affine map
$R\colon\mathbb{F}_q^{s+t+1} \to \mathbb{F}_q^n$, then the sparse $(s+t+1)$ tester rejects $R$ and it makes sense
to define
\[
S^{\uparrow} = \{R\colon \mathbb{F}_q^{s+t+1}\to\mathbb{F}_q^n~|~\text{the sparse flat tester rejects when choosing $R$}\}.
\]
We prove that starting from $R\in S^{\uparrow}$ and performing a down step to a $T'\colon \mathbb{F}_q^{s+t+1}\to\mathbb{F}_q^n$,
with probability at least $1/q$ the sparse flat tester still rejects and hence $T'\in S$. In particular, we conclude that
$\mu(S^{\uparrow}) \leq q\mu(S)$ (where $\mu(S)$ denotes the ratio between the size of $S$ and the total number of affine maps
from $\mathbb{F}_q^{s+t}$ to $\mathbb{F}_q^n$, and $\mu(S^{\uparrow})$ is defined similarly). Using the same logic as before
we conclude that $\Phi(S)\leq 1-\frac{1}{q}$, where here we measure edge expansion with respect to the bi-linear scheme graph.

\subsubsection{Expansion on the Bi-linear Scheme Graph}
Equipped with the understanding that $S$ is a small set (as we assume the sparse flat tester rejects with small probability)
and $\Phi(S)\leq 1-\frac{1}{q}$, the next question is what sort of structure this implies. In the bi-linear scheme graph there are
natural examples of such sets, which are analogs of the zoom-in/ zoom-out sets in the context of subspaces. Roughly speaking,
there is one type of examples which intuitively should be relevant for us, which is zoom-in sets:
\[
\mathcal{C}_{x^{\star}, y^{\star}} = \{T\colon \mathbb{F}_q^{s+t}\to\mathbb{F}_q^n~|~T(x^{\star}) = y^{\star}\}.
\]
There are other examples of small sets which have poor expansion in the bi-linear scheme graph, however these seem ``irrelevant'' in the present context. 
Indeed, after showing that our small set $S$ cannot be correlated with any of these other examples (a notion which is often referred to as pseudo-randomness
with respect to zoom-outs and zoom-ins on the linear part), we use the theory of global hypercontractivity to prove that there must be 
$x^{\star}$ and $y^{\star}$ such that $\mu(S\cap \mathcal{C}_{x^{\star}, y^{\star}})\geq (1-o(1))\mu(\mathcal{C}_{x^{\star}, y^{\star}})$.
In words, the sparse $(s+t)$-flat tester almost always rejects inside $\mathcal{C}_{x^{\star}, y^{\star}}$. 

We remark that the proof that $S$ has no ``correlation'' with any other non-expanding set in the bi-linear scheme is 
rather tricky, and much of the effort 
in the current paper is devoted to that.
To do so, we have to build upon ideas from~\cite{KM} as well as use a new 
relation between the sparse $(s+t)$-flat 
tester applied on a function $f$ and 
the $t$-flat tester applied on a related 
function $\tilde{f}$. As the 
construction of
this related function is somewhat 
technical, we defer a detailed 
discussion of it to 
Section~\ref{sec: relation to full flat tester}.

\subsubsection{Finishing the Proof via Local Correction}
Intuitively, the only way that $S$ could be dense inside some $\mathcal{C}_{x^{\star}, y^{\star}}$ is if we started with a Reed-Muller codeword $g$, and
perturbed it on a small number of inputs, of which $y^{\star}$ is one. Indeed, in this case we would have that $\inner{g\circ T}{H_e} = 0$ for all
$T\in \mathcal{C}_{x^{\star}, y^{\star}}$ and exponent vectors $e$ checked by the tester prior to perturbing, and after changing the value of $g$ at $y^{\star}$, we would also change the value of $g\circ T$ at $x^{\star}$, breaking our previous equality. This intuition suggests that $y^{\star}$ is a point in which we should fix the value of $f$ and get closer to a Reed-Muller codeword,
and we show that this is indeed the case.

There are several difficulties that arise when inspecting this argument more deeply. If $H_e(x^{\star}) = 0$, then the above reasoning breaks (as the value of $f$
at $y^{\star}$ is multiplied by $0$); however, in this case it does not make sense that $S$ could be very dense inside
$\mathcal{C}_{x^{\star}, y^{\star}}$ (as essentially the only input of $f$ included in the inner product is $f(y^{\star})$, but in any case it is multiplied by $H_e(x^{\star}) = 0$).
Indeed, in our argument we show that if $S$ is very dense in $\mathcal{C}_{x^{\star}, y^{\star}}$, then it must be the case that $H_e(x^{\star})\neq 0$. Moreover, we show that
the density of $S$ inside $\mathcal{C}_{x^{\star}, y^{\star}}$ and inside $\mathcal{C}_{{x^{\star}}', y^{\star}}$ is roughly the same for all $x^{\star}$ and ${x^{\star}}'$ 
in the support of $H_e$. This last step is crucial for the analysis to go through and requires us to adapt and strengthen techniques from \cite{KM}. At the end of this step, roughly speaking, we conclude that the tester rejects with probability close to $1$ whenever it queries the value of $f$ at $y^{\star}$.

The last step in the argument is to show that we can change the value of $f$ at $y^{\star}$ and decrease the rejection probability of the tester by additive factor of $\Theta(q^{s+t-n})$ (which is proportional to the probability that the tester looks at $y^{\star}$). 
We do so by a reduction to the same problem over the standard flat tester (which was solved in~\cite{KM}). The idea is to look at somewhat larger
tester of dimension $s+t+100$, fix the first $s$ coordinates and let the rest vary, so that the tester becomes a local version of the standard flat tester.

Given that, and iterating the argument, we eventually reach a function $f'$ that differs from $f$ on at most $\Theta(\eps/q^{s+t-n})$ of the inputs 
(where $\eps$ is the original rejection probability) and passes the test with probability $1$. Hence, $f'$ is a Reed-Muller codeword, and so $f$ is $O(\eps/Q)$ close to a Reed-Muller codeword.

\section{Preliminaries} \label{sec: preliminaries}
\paragraph{Notations.}
For an integer $n$ we denote $[n] = \{1, \ldots, n\}$. For a prime power $q$ we let 
$\mathbb{F}_q$ be the field of size $q$, and 
we denote by $\mathbb{F}_q^{*}\subseteq \mathbb{F}_q$ 
the set of non-zero elements in it.
\subsection{Basic Facts about Fields}
Throughout, abusing notations we define the Reed-Muller code $\RM[n,q,d]$ as the set of functions over $\Ff_q^n$ that can be written as polynomials of degree at most $d$. 
Henceforth fix $d$ and write 
\[
d + 1 = \ell \cdot p(q-q/p) + r = s(q-q/p) + r,
\]
where we set $s = \ell \cdot p$, and $0 < r \leq p(q-q/p)$.  

We will need a few 
basic facts. First, it is 
a well known fact that $\mathbb{F}_q^{*}$ has a multiplicative generator which we often
denote by $\gamma$. 
The next lemma uses the existence of a multiplicative generator to estimate 
sums over $\Ff_q$.
\begin{lemma} \label{lm: power sums}
For any prime power $q$ and integer $i \in  \{0,\ldots, q-1\}$,
\begin{equation*}
    \sum_{\alpha \in \Ff_q}\alpha^i =
    \begin{cases}
      -1, & \text{if}\ i = q-1, \\
      0, & \text{otherwise}.
    \end{cases}
  \end{equation*}
\end{lemma}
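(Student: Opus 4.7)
The plan is to handle the three ranges of $i$ separately: $i=0$, $i=q-1$, and $1 \le i \le q-2$. The first two cases are immediate, and the third case will use the multiplicative generator $\gamma$ of $\Ff_q^{*}$ to reduce the sum to a geometric series.

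For $i=0$, the convention $\alpha^0 = 1$ gives a sum of $q \cdot 1$, which vanishes in $\Ff_q$ since the characteristic $p$ divides $q$. For $i = q-1$, Fermat's little theorem tells us that $\alpha^{q-1} = 1$ for every $\alpha \in \Ff_q^{*}$, while $0^{q-1} = 0$. So the sum equals $q - 1 = -1$ in $\Ff_q$, matching the claimed value.

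The main (but still routine) case is $1 \le i \le q-2$. Here I would split off the $\alpha=0$ term (which contributes $0$) and parametrize $\Ff_q^{*} = \{\gamma^{j} : 0 \le j \le q-2\}$ using a multiplicative generator $\gamma$. This gives
\[
\sum_{\alpha \in \Ff_q^{*}} \alpha^{i} = \sum_{j=0}^{q-2} (\gamma^{i})^{j}.
\]
Since $\gamma$ has multiplicative order exactly $q-1$ and $1 \le i \le q-2$, the element $\gamma^{i}$ is not $1$, so the denominator $\gamma^i - 1$ is nonzero and the geometric sum evaluates to $\tfrac{(\gamma^{i})^{q-1} - 1}{\gamma^{i} - 1} = 0$, since $\gamma^{i(q-1)} = 1$.

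There is no real obstacle here; the only thing to be careful about is the convention $\alpha^0 = 1$ at $\alpha = 0$ (so that the $i=0$ case is consistent) and the fact that the geometric series manipulation is valid precisely because $\gamma^i \neq 1$ on the range $1 \le i \le q-2$. This is why the lemma cuts the range off at $i \le q-1$, since for $i = q-1$ the generator argument would break down and one must fall back on Fermat.
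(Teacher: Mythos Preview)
Your proof is correct and follows essentially the same approach as the paper: both split into the three cases $i=0$, $i=q-1$, and $1\le i\le q-2$, handle the first two directly, and reduce the third to a geometric series via a multiplicative generator of $\Ff_q^{*}$.
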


\begin{proof}
If $i = q-1$, then $\alpha^i = 1$ for all $\alpha \neq 0$, while $0^i = 0$. Therefore, the sum is one summed up $q-1$ times which is $-1$ in $\Ff_q$. For $i \in \{1, \ldots, q-2\}$, recall that $\Ff_q^*$ has a generator $\gamma$. That is, $\Ff_q^* = \{1, \gamma, \ldots, \gamma^{q-2}\}$. Since $\gamma^i \neq 1$, we may write
\[
\sum_{\alpha \in \Ff_q} \alpha^i = \sum_{j = 0}^{q-2} (\gamma^i)^j = 
\frac{(\gamma^i)^{q-1}-1}{\gamma^i - 1} = 
\frac{1-1}{\gamma^i-1} = 0.
\]

On the other hand, if $i = 0$ then the sum on the left hand side of the lemma is equal to $\sum_{\alpha \in \Ff_q} \alpha^0 = \sum_{\alpha \in \Ff_q} 1 = q = 0$.
\end{proof}

\subsection{Functions over Fields}
 Given two functions $f,g\colon\mathbb{F}_q^n\to\mathbb{F}_q$, we measure the distance between them in terms of the normalized Hamming distance, that is,
 \[
 \delta(f,g) = \Pr_{x\in\mathbb{F}_q^n}[f(x)\neq g(x)].
 \]
 The distance of a function $f\colon\mathbb{F}_q^n\to\mathbb{F}_q$ from the Reed-Muller code $\RM[n,q,d]$, denoted by $\delta_d(f)$, is defined as the minimal
 distance between $f$ and some function $g\in \RM[n,q,d]$. That is,
\[
\delta_d(f) = \min_{g \in \RM[n,q,d]} \delta(f, g) = \min_{g \in \RM[n,q,d]} \Pr_{x \in \Ff_q^n} [f(x) \neq g(x)].
\]
For two functions $f, g: \Ff_q^{n} \xrightarrow[]{} \Ff_q$, define their inner product as
$$\langle f, g \rangle = \sum_{v \in \Ff_q^n} f(v)g(v).$$
It is clear that this inner product is bi-linear. 
Monomials are the basic building blocks of all polynomials over $\mathbb{F}_q$, and the following notation will be convenient for us to use:
\begin{definition}
 For an exponent vector $e \in  \{0,\ldots, q-1\}^n$, we define the monomial $x^e = \prod_{i=1}^n x_i^{e_i}$  
\end{definition}

The next lemma allows us to compute inner product between monomials:
\begin{lemma} \label{lm: monomial inner product}
    For $e,e'\in  \{0,\ldots, q-1\}^n$, we have that
    \begin{equation*}
    \langle x^{e}, x^{e'} \rangle =
    \begin{cases}
      (-1)^n, & \text{if}\ e + e' = (q-1, \ldots, q-1), \\
      0, & \text{otherwise}.
    \end{cases}
  \end{equation*}
\end{lemma}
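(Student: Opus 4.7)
The plan is to reduce the $n$-variable sum to a product of $n$ one-variable sums by exploiting the fact that monomials factor across coordinates. I would begin by writing
\[
\langle x^e, x^{e'}\rangle = \sum_{v \in \mathbb{F}_q^n} \prod_{i=1}^n v_i^{e_i + e'_i},
\]
and then observing that the summand is a product of terms each depending on a single coordinate. Since the coordinates $v_1,\ldots,v_n$ range independently over $\mathbb{F}_q$, distributing the sum yields the clean factorization
\[
\langle x^e, x^{e'}\rangle = \prod_{i=1}^n \Big(\sum_{\alpha \in \mathbb{F}_q} \alpha^{e_i + e'_i}\Big).
\]

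The next step is to evaluate each one-variable factor using Lemma~\ref{lm: power sums}. A minor subtlety is that when $e_i$ and $e'_i$ are both close to $q-1$ the exponent $e_i + e'_i$ can exceed $q-1$, which falls outside the range handled by Lemma~\ref{lm: power sums}. I would resolve this by invoking Fermat's little theorem ($\alpha^{q-1} = 1$ for $\alpha \in \mathbb{F}_q^*$) to rewrite $\alpha^{e_i+e'_i} = \alpha^{e_i+e'_i - (q-1)}$ for nonzero $\alpha$, which brings the exponent back to the admissible range (while the $\alpha = 0$ contribution is handled separately and vanishes whenever the reduced exponent is positive). With this minor extension of Lemma~\ref{lm: power sums}, each factor equals $-1$ when $e_i + e'_i = q-1$ and vanishes otherwise.

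The proof then concludes by a straightforward case split. If $e + e' = (q-1, \ldots, q-1)$, then each of the $n$ factors is $-1$, so the product is $(-1)^n$. Otherwise there is some coordinate $i$ for which $e_i + e'_i \neq q-1$, making the corresponding factor $0$ and hence the product $0$. There is no genuine obstacle in this proof: the entire content is the factorization step plus $n$ invocations of the one-variable power sum identity, with only the elementary Fermat reduction needed at the boundary.
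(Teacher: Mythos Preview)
Your approach is essentially identical to the paper's: factor the $n$-variable sum into a product of one-variable power sums and invoke Lemma~\ref{lm: power sums} on each factor. You are in fact slightly more careful than the paper in flagging the case $e_i+e'_i>q-1$ and reducing via Fermat before applying Lemma~\ref{lm: power sums}, whereas the paper simply writes $\prod_i \sum_\alpha \alpha^{e''_i}$ and appeals to Lemma~\ref{lm: power sums} directly.
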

\begin{proof}
    Let $e'' = e + e'$. By definition
    \[
     \langle x^{e}, x^{e'} \rangle = \sum_{(\alpha_1, \ldots, \alpha_n) \in \Ff_q^n} \prod_{i=1}^n \alpha_i^{e_i} = \prod_{i=1}^n \sum_{\alpha \in \Ff_q} \alpha^{e''_i}.
    \]
    The result follows from Lemma~\ref{lm: power sums}.
\end{proof}

\subsection{Affine Linear Transformations and the Affine Bi-linear Scheme}
In this section we present the Affine Bi-linear scheme, which plays a vital role in our arguments.
\begin{definition}
  We denote by $\T_{n, \ell}$ the set of affine transformations $T: \Ff_q^{\ell}\xrightarrow[]{} \Ff_q^{n}$.
\end{definition}
Each affine transformation $T \in \T_{n, \ell}$ consists of a linear part $M \in \Ff_q^{n \times \ell}$ and a translation $c \in \Ff_q^n$, and we will use the writing convention $T = (M, c)$ to refer to the fact that $T$ is the affine transformation such that $Tx = Mx + c$ for $x \in \Ff_q^{\ell}$. In words, $M$ is the linear part of $T$ and $c$ is the affine shift. We stress that an affine transformation $T \in \T_{n,\ell}$ is not necessarily full rank.
 
 \begin{definition}
The density of a set $S \subseteq \T_{n, \ell}$ is defined as $\mu_{\ell}(S) = \frac{|S|}{| \T_{n, \ell}|}$.
\end{definition}
Often times, we drop the subscript $\ell$ if it is clear from the context. 
Our analysis will require us to view affine transformations as vertices of some suitable test graph. For example, 
for the (standard) $t$-flat test, this graph is the \emph{affine Grassmann graph}, $\AffGras(n, t)$. The vertices of the graph are all of the $t$-flats in $\Ff_q^n$, and two vertices $U_1$ and $U_2$ are adjacent if they intersect in a $(t-1)$-flat, that is, $\dim(U_1 \cap U_2) = t-1$. Below we define an analogous graph structure on affine transformations, which we refer to as the affine bi-linear scheme graph.

\begin{definition}
    The affine bi-linear scheme graph, $\AffBilin(n, \ell)$, has vertex set $\T_{n, \ell}$. Two vertices $T_1, T_2 \in \T_{n, \ell}$ are adjacent adjacent if and only if they are equal on an $\ell-1$ dimensional affine subspace. This condition can also be written as $\dim(\ker(T_1 - T_2)) \geq \ell-1$.
\end{definition}
Write $T_1 \sim T_2$ to denote an adjacency. We remark that the affine Grassmann graph can be obtained from the affine bi-linear scheme by identifying each $T_1 \in \T_{n, \ell}$ with its image (that is, closing the set $\T_{n, \ell}$ under some group operation), 
however as explained in the introduction this distinction will be crucial for us.

\subsection{Expansion and pseudo-randomness in the Affine Bi-linear Scheme}
We will use the standard notion of edge expansion, defined as follows.
\begin{definition}
For a regular graph $G = (V,E)$ and a set of vertices $S\subseteq V$, we define the edge expansion of $S$ as 
\[
\Phi(S) = \Pr_{\substack{u\in S\\ (u,v)\in E}}[v\not\in S].
\]
In words, the edge expansion of $S$ is the probability to escape it in a single step of the random walk on $G$.
\end{definition}

We will use $\Phi_{n, \ell}$ to denote edge expansion on $\AffBilin(n, \ell)$. When $n$ and $\ell$ are clear from context we drop the subscripts. 
Later on in the paper, we will also consider edge expansion over the affine Grassmann graph $\AffGras(n, \ell)$; abusing notations, we will denote edge expansion there also using the notation $\Phi_{n, \ell}$, and it will be clear from context which graph we are considering.

\subsubsection{The Up-Down view of the Ranom Walk}
It will be helpful for us to consider the following equivalent, two-step process for sampling a neighbor of $T_1 = (M_1, c_1) \in \T_{n, \ell}$ in the affine bi-linear scheme graph:
\begin{enumerate}
    \item 
    \textbf{Go Up:} Choose a random $w \neq 0$. Let $M'$ be the matrix obtained by appending the $w$ as a new column to $M_1$, and $T' = (M', c) \in \T_{n, \ell+1}.$
    \item \textbf{Go Down:} 
    Choose a random $R = (A, b) \in \T_{\ell+1, \ell}$, where the first $\ell$ rows of $A$ are the identity matrix and the first $\ell$ entries of $b$ are $0$, and at least one entry out of the last row in $A$ and the last entry in $b$ is nonzero.
    
    \item Output $T_2 = T' \circ R$.
\end{enumerate}
It is easy to see that for $T_2 = (M_2, c_2)$, each column of $M_2$ is equal to the corresponding column in $M_1$ plus some multiple of $w$ and likewise for $c_2$ and $c_1$.

\subsubsection{Non-expanding Sets in the Affine Bi-linear Scheme}
As explained in the introduction, 
our proof considers the set $\Sc$ of affine-transformations on which the test rejects 
as a set of vertices in the affine bi-linear scheme graph. We prove that $\Sc$ has small edge expansion in that graph, and then use 
global hypercontractivity in order to derive some conclusion about the structure of $\Sc$, 
which is in turn helpful towards a local correction argument.

To facilitate this argument, we must first understand the structure of a few canonical examples of non-expanding sets in ${\sf AffBilin}(n,\ell)$. 
In the affine Grassmann graph one has the following examples:
\begin{itemize}
    \item zoom-ins: $\D_{a} = \{U \in \AffGras(n, \ell)\; | \; a \in U\}$, for $a \in \Ff_q^{n}$.
    \item zoom-outs: $\D_{W} = \{U  \in \AffGras(n, \ell)\; | \; U \subseteq W\}$, for a hyperplane $W \subseteq \Ff_q^{n}$.
    \item zoom-ins on the linear part: $\D_{a, \lin} = \{U = z + V \in \AffGras(n, \ell)\; | \; a \in V\}$, for $a \in \Ff_q^{n}$.
    \item zoom-outs on the linear part: $\D_{W, \lin} = \{U = z+V  \in \AffGras(n, \ell)\; | \; V \subseteq W\}$, for a hyperplane $W \subseteq \Ff_q^{n}$.
\end{itemize}

It is not hard to see that each example has expansion at most $1-1/q$; also, it is an easy calculation to show that the 
density of each one of these sets is small (and is vanishing provided that $\ell$ is significantly smaller than $n$ and both go to infinity). Each of these examples also has a counterpart in ${\sf AffBilin}(n,\ell)$:
\begin{itemize}
    \item zoom-ins: $\C_{a,b} = \{T \in \T_{n, \ell} \; : \; T(a) = b\}$, for $a\in \Ff_q^{\ell}$ and $b \in \Ff_q^n$.
    \item zoom-outs: $\C_{a^T, b, \beta}  = \{(M, c) \in \T_{n, \ell} \; : \; a^T \cdot M = b, \; a^T \cdot c = \beta \}$, for $a \in \Ff_q^{n}$, $b \in \Ff_q^{\ell}$, and $\beta \in \Ff_q$.
    \item zoom-ins on the linear part: $\C_{a,b, \lin} = \{(M, c) \in \T_{n, \ell} \; : \; M\cdot a = b\}$, for $a \in \Ff_q^{\ell}$ and $b \in \Ff_q^n$.
    \item zoom-outs on the linear part:  $\C_{a^T,b, \lin}  = \{(M, c) \in \T_{n, \ell} \; : \; a^T \cdot M = b\}$, for $a, \in \Ff_q^{n}$, and $b \in \Ff_q^{\ell}$.
\end{itemize}
Likewise, one can check that each example here also has expansion at most $1-1/q$ in ${\sf AffBilin}(n,\ell)$. 
Our argument will require us to show that, in a sense, these examples exhaust all small sets of vertices in ${\sf AffBilin}(n,\ell)$ whose edge expansion 
is at most $1-1/q$.
To formalize this, we first define the notion of pseudo-randomness. Intuitively, we say that a set $S$ is pseudo-random with respect to some example -- say zoom-ins for concreteness -- if $S$ may only have little correlation with $\mathcal{C}_{a,b}$'s, in the sense that 
the density of $S$ inside such sets is still small. More precisely:

\begin{definition}
    Let $\Sc \subseteq \T_{n, \ell}$ and let $\xi \in [0,1]$.
    \begin{enumerate}
        \item We say that $\Sc$ is $\xi$-pseudo-random with respect to zoom-ins if for each $a \in \Ff_q^{\ell}$ and $b \in \Ff_q^n$ we have that 
        \[
        \mu(\Sc_{a,b}) := \Pr_{T \in \C_{a,b}}[T \in \Sc] \leq \xi.
         \]
         \item We say that $\Sc$ is $\xi$-pseudo-random with respect to zoom-outs if for each $a \in \Ff_q^{n}, b \in \Ff_q^{\ell},$ and $\beta \in \Ff_q$ we have that 
        \[
        \mu(\Sc_{a^T,b, \beta}) := \Pr_{T \in \C_{a^T,b, \beta}}[T \in \Sc] \leq \xi.
         \]
         \item We say that $\Sc$ is $\xi$-pseudo-random with respect to zoom-ins on the linear part if for each $a \in \Ff_q^{\ell}$ and $b \in \Ff_q^n$ we have that
        \[
        \mu(\Sc_{a,b, \lin}) := \Pr_{T \in \C_{a,b, \lin}}[T \in \Sc] \leq \xi.
         \]
         \item We say that $\Sc$ is $\xi$-pseudo-random with respect to zoom-outs on the linear part if for each $a \in \Ff_q^{n}, b \in \Ff_q^{\ell}$  we have that 
        \[
        \mu(\Sc_{a^T,b, \lin}) := \Pr_{T \in \C_{a^T,b, \lin}}[T \in \Sc] \leq \xi.
         \] 
    \end{enumerate}
\end{definition}

Typically, global hypercontractivity results say that a small set with expansion bounded away from $1$ cannot be pseudon-random (where the notion of pseudo-randomness is in the same spirit but a bit more complicated; see~\cite{DKKMS, KMS} for example). 
For results in coding theory however it seems that a somewhat different (yet very much related) type of result is needed~\cite{KM}. 
Intuitively, in these type of results 
the assumption on the expansion of the set $\Sc$ is much stronger, and roughly speaking asserts that the expansion of $\Sc$ is almost ``as small as it could be''. In exchange for such a strong assumption, one often requires a stronger conclusion regarding the non-pseudo-randomness of the set $\Sc$. For instance, in~\cite{KM} it is shown that 
a set of vertices $\Sc$ with expansion at most $1-1/q$ which is highly pseudo-random with respect to $3$ of the above type of examples, must be highly non-pseudo-random with respect to the fourth type, in the sense that it almost contains a copy of such set.
For our purposes we require an analogous statement for ${\sf AffBilin}(n,\ell)$, which is the following statement:
\begin{theorem} \label{th: pseudorandom}
If $\Sc \subseteq \T_{n, \ell}$ satisfies 
\begin{enumerate}
    \item $\mu(\Sc) \leq \xi$,
    \item $\Sc$ is $\xi$-pseudorandom with respect to zoom-outs, zoom-outs on the linear part, and zoom-ins on the linear part,
    \item $1 - \Phi(\Ac) \geq \frac{1}{q}$.
\end{enumerate}
Then there exist $a \in \Ff_q^{\ell}$ and $b \in \Ff_q^n$ such that $\mu(\Sc_{a,b}) \geq 1 - \frac{1}{(q-1)^2} - \frac{q^3}{q-1} \left(4q^{-\ell} + 867\xi^{1/4} \right)$, where $\mu(\Sc_{a,b})$ is the density of $\Sc$ in $\C_{a,b}$.
\end{theorem}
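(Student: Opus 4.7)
The plan is to adapt the approach of \cite{KM} for the affine Grassmann graph to the affine bi-linear scheme, exploiting the fact that each of the four canonical families of non-expanding sets achieves edge expansion exactly $1-1/q$, so the hypothesis $1-\Phi(\Sc)\ge 1/q$ is extremal. I will carry out the argument in three steps.

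First, translate the non-expansion hypothesis into the second-moment inequality $\langle \mathbf{1}_\Sc, P\mathbf{1}_\Sc\rangle \ge \mu(\Sc)/q$, where $P$ is the Markov operator of $\AffBilin(n,\ell)$. Using the up-down description of the walk (going up to $\T_{n,\ell+1}$ by appending a random nonzero column, then back down by adding random multiples of that column to the others and to the affine shift), the inner product expands as an explicit sum over the pairs $(w,R)$ driving the walk, and factors as $\|U\mathbf{1}_\Sc\|_2^2$, where $U$ is the natural lift of $\mathbf{1}_\Sc$ to $\T_{n,\ell+1}$.

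Second, and this is the technical core, decompose $U\mathbf{1}_\Sc$ along the four canonical families of non-expanding sets. The top non-trivial eigenspace of $P$ (eigenvalue $1/q$ above the constant) is spanned, up to lower-order terms, by indicators of zoom-ins $\C_{a,b}$, zoom-outs $\C_{a^T,b,\beta}$, and their linear-part counterparts $\C_{a,b,\lin}$ and $\C_{a^T,b,\lin}$. The projection of $\mathbf{1}_\Sc$ onto each family contributes, up to constants, a weighted sum of squared densities $\sum \mu(\Sc_\bullet)^2$. The pseudo-randomness hypothesis on three of the four families bounds each such contribution by a quantity of order $\xi\cdot\mu(\Sc)$, which is $o(\mu(\Sc)/q)$ for small $\xi$. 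Consequently the non-expansion budget must be absorbed by zoom-ins:
\[
\sum_{a,b}\mu(\C_{a,b})\,\mu(\Sc_{a,b})^2 \;\ge\; (1-o(1))\,\mu(\Sc)/q.
\]

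Third, combine this second-moment lower bound with the first-moment identity $\sum_{a,b}\mu(\C_{a,b})\mu(\Sc_{a,b})=\mu(\Sc)$ via an extremal / Cauchy--Schwarz argument. This forces the distribution of $\{\mu(\Sc_{a,b})\}_{(a,b)}$ to concentrate on a small fraction of pairs whose densities are close to $1$. Tracking the constants carefully then yields some $(a^*,b^*)$ with $\mu(\Sc_{a^*,b^*}) \ge 1-\tfrac{1}{(q-1)^2}-\tfrac{q^3}{q-1}\bigl(4q^{-\ell}+867\xi^{1/4}\bigr)$, as stated.

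The main obstacle is the second step. Controlling the cross-terms between the four families and establishing a quantitative, near-diagonal description of the top eigenspace of $P$ requires a hypercontractivity-style inequality on $\AffBilin(n,\ell)$, not a bare $L^2$ Fourier bound; this is precisely what yields the $\xi^{1/4}$ exponent rather than the weaker $\xi^{1/2}$ an $L^2$ estimate would give. Moreover, one cannot simply quotient to $\AffGras(n,\ell)$ and invoke \cite{KM} as a black box, since each vertex of $\AffBilin(n,\ell)$ carries strictly more information than the $\ell$-flat it projects to --- this is exactly the asymmetry that the sparse-flat test exploits, and setting up the decomposition so that this finer structure is respected will be the technical heart of the proof.
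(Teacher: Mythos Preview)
Your proposal sketches a direct spectral/hypercontractive analysis on $\AffBilin(n,\ell)$, redoing the machinery of \cite{KM} from scratch on the new graph. The paper takes a much shorter route: it constructs an explicit injective graph homomorphism $\varphi\colon \T_{n,\ell}\to \AffGras(n+\ell,\ell)$ (the idea goes back to \cite{BKS}) by sending $(M,c)$ to the $\ell$-flat $(0,c)^T+\spa((e_1,v_1)^T,\ldots,(e_\ell,v_\ell)^T)$ in $\Ff_q^{n+\ell}$, where $v_i$ are the columns of $M$. This map preserves adjacency, its image captures at least a $(1-1/q)$-fraction of the neighbours of any vertex in the image (so expansion is preserved up to a $1/q^2$ loss), and it carries each of the four canonical non-expanding families in $\AffBilin(n,\ell)$ bijectively onto the intersection of the corresponding family in $\AffGras(n+\ell,\ell)$ with $\im(\varphi)$. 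One then applies the affine-Grassmann result of \cite{KM} as a black box to $\varphi(\Sc)$ and transfers the dense zoom-in back.

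Your last paragraph anticipates exactly the wrong obstruction. You correctly observe that quotienting $\AffBilin(n,\ell)$ down to $\AffGras(n,\ell)$ loses information, but you did not consider embedding \emph{up} into $\AffGras(n+\ell,\ell)$, which retains all of it. Your approach could plausibly be made to work, but it requires establishing a global-hypercontractivity inequality on $\AffBilin(n,\ell)$ --- a substantial undertaking you have only named, not carried out --- whereas the paper's reduction is a few pages of elementary lemmas about $\varphi$.
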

\begin{proof}
The proof uses a reduction to an analogous result over the affine Grassmann graph using ideas from~\cite{BKS}, and is deferred to Section~\ref{sec:pf_expansion}.
\end{proof}
\section{Local Testers for the Reed Muller Code}\label{sec:local_testers}

We now formally describe both the sparse flat test and the full flat test. Although we focus on the sparse test, at times it will be convenient to reduce to the full test to aid our analysis. 

\subsection{The $t$-flat Tester and the Inner Product View}
At a high level, both the $t$-flat test and the sparse $(s+t)$-flat test can be described in the following way:
\begin{enumerate}
    \item \textbf{Restriction:} Choose a random $T \in \T_{n, t}$ for some suitable dimension $t$.
    \item \textbf{Local test on restriction:} Check if the $t$-variate function $f \circ T$ is indeed degree $d$. If so, accept, and otherwise reject.
\end{enumerate}

The difference between the two tests lies in how the ``check'' in step 2 is done. The straightforward way to perform this check is by querying $f \circ T$ on all points in $\Ff_q^{t}$, interpolating $f \circ T$ and checking its degree. Indeed, 
this is precisely how the $t$-flat test is defined. In that case, it is easy to see that the result of the test depends only on the image of $T$; this is because $f \circ T \circ A$ and $f \circ T$ have the same degree for any full rank $A \in \T_{t, t}$. Therefore, the $t$-flat test can be rephrased in its familiar form as follows:
\begin{enumerate}
    \item Choose a random $t$-flat $U\subseteq \Ff_q^n$.
    \item Accept if and only if $\deg(f|_{U}) \leq d$. 
\end{enumerate}
However, there are other ways of trying to test whether $f\circ T$ has degree at most $d$ or not. One way to do so is by taking inner products that check whether certain high-degree monomials exist in $f$ using Lemma~\ref{lm: monomial inner product}. A simple way to do this is as follows:
\begin{center}
     Accept if and only if $\langle f \circ T, x^e \rangle = 0$ for all $e \in  \{0,\ldots, q-1\}^t$ such that $\sum_{i = 1}^{t} q-1-e_i > d$.
\end{center}
By Lemma~\ref{lm: monomial inner product} it is not hard to see that this condition is equivalent to $\deg(f \circ T) \leq d$. Furthermore it is clear that calculating all of the inner products requires querying every point in the support of some $x^e$ that is used -- which is $\Ff_q^t$ in this case. Hence, taking inner product with all of these monomials does not lead to any savings in terms of the query complexity.

It turns out that there are more clever choices of ``test polynomials'' (which are not just monomials) that allow one 
to design an inner-product test above which is more query efficient. This idea was used by~\cite{RZS} who 
introduced the sparse $(s+t)$-flat test, 
which we present formally in the next subsection.

\subsection{The sparse $(s+t)$-flat test} \label{sec: sparse flat test construction}
Our presentation of the sparse $(s+t)$-flat is somewhat different than in~\cite{RZS}, and this view will be necessary for our analysis. Recall that we write $d+1 = s\cdot (q-q/p) + r$ where $s$ is divisible by $p$ and $0 < r \leq p(q-q/p)$.
For a vector $x = (x_1,\ldots,x_p)$ and a set $I\subseteq \{1,\ldots,p\}$ denote $x_I = \sum\limits_{i\in I} x_i$. Define the $p$-variate polynomial
\begin{align*}
    &P(x_1, \ldots, x_p) = \frac{\sum_{I \subseteq [p-1]} (-1)^{|I|+1}(x_I + x_p)^{q-1}}{x_1 \cdots x_{p-1}}.
\end{align*}
For any degree sequence $e = (e_1, \ldots, e_t) \in [q]^t$ define
\[
    M_{e}(x_1, \ldots, x_t) = \prod_{i=1}^{t}x_i^{e_i},
    \qquad \qquad
    H_{e}(x) = \left(\prod_{i=1}^{s/p} P(x_{p(i-1)+1}, \ldots, x_{pi}) \right) M_e(x_{s+1}, \ldots, x_{s+t}).
\]
The sparse $(s+t)$-flat tester works as follows:
\begin{enumerate}
    \item Choose a random affine transformation $T: \Ff_q^n \xrightarrow[]{} \Ff_q^{s+t}$.
    \item Accept if and only if, $\langle f \circ T, H_{e} \rangle = 0$ for all $e \in  \{0,\ldots, q-1\}^t$ such that $\sum_{i=1}^{t} e_i \leq t(q-1) - r$. 
\end{enumerate}
Recall, we set the parameter $t$ to be $t = \max(p+2, 10)$. Essentially, $t$ just needs to be large enough so that the degree sequences in step $2$ can account for monomials of degree up to $r+q-1$. 

We refer to a degree sequence $e$ satisfying the inequality in step $2$ as \emph{valid} throughout. As $d$ is fixed throughout, the notion of valid degree sequences will also not change throughout the paper. Hence, we say that $T$ rejects $f$ or $f \circ T$ is rejected, if $\langle f \circ T, H_{e} \rangle \neq 0$ for some valid $e$. Otherwise we say that $T$ accepts $f$ or $f \circ T$ is accepted. We define $\Sc_t$ to be 
the set of $T$'s on which the tester rejects:
\begin{equation}\label{eq:rej_set}
\Sc_t = \left\{T\in{\sf AffBilin}(n,s+t)~|~\inner{f\circ T}{H_e}\neq 0
\text{ for some $e\in \{0,\ldots, q-1\}^t$ such 
that $\sum\limits_{i=1}^{t}e_i \leq t(q-1)-r$}\right\},
\end{equation}
and let $\rej_{s+t}(f)$ be the probability that the test rejects. Clearly, we have that $\rej_{s+t}(f) = \mu_{s+t}(\Sc_t)$

\subsection{Assuming $n$ is Sufficiently Large}
In this section we argue that without loss of generality we may assume that $n \geq s + t + 100$. Indeed, we may view an $n$-variate function $f$ as a function in some $N$-variables, for some sufficiently large $N = O(n)$. Call this function $g: \Ff_q^{N} \xrightarrow[]{} \Ff_q$ and define it by $g(a,b) = f(a)$ for any $a \in \Ff_q^{n}$, $b \in \Ff_q^{N-n}$. We show that $\delta(f, \RM[n, q,d]) = \delta(g, \RM[N, q, d])$, which implies that we can view the test as over $g$ but still have the $\delta$ in Theorem~\ref{thm:main} be $\delta(f, \RM[n,q,d])$. 

\begin{lemma}
    Let $f$ and $g$ be as defined above. Then $\delta(f, \RM[n, q,d]) = \delta(g, \RM[N, q, d])$.
\end{lemma}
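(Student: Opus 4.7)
The plan is to prove both inequalities $\delta(g,\RM[N,q,d]) \le \delta(f,\RM[n,q,d])$ and $\delta(f,\RM[n,q,d]) \le \delta(g,\RM[N,q,d])$ by simple lifting and averaging arguments.

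First I would show the easy direction, namely $\delta(g,\RM[N,q,d]) \le \delta(f,\RM[n,q,d])$. Given any polynomial $h \in \RM[n,q,d]$ achieving the minimum distance to $f$, define $\tilde h(a,b) = h(a)$. Then $\tilde h$ has total degree $\le d$ in the $N$ variables, so $\tilde h \in \RM[N,q,d]$, and since neither $g$ nor $\tilde h$ depends on the $b$-coordinates we immediately get
\[
\Pr_{(a,b) \in \Ff_q^N}[\tilde h(a,b) \neq g(a,b)] \;=\; \Pr_{a \in \Ff_q^n}[h(a) \neq f(a)] \;=\; \delta(f,\RM[n,q,d]).
\]

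For the reverse direction $\delta(f,\RM[n,q,d]) \le \delta(g,\RM[N,q,d])$, I would use an averaging argument. Let $\tilde h \in \RM[N,q,d]$ achieve the minimum distance to $g$, and set $\delta^\star = \delta(g,\RM[N,q,d])$. For each fixed $b^\star \in \Ff_q^{N-n}$, the restriction $h_{b^\star}(a) := \tilde h(a,b^\star)$ is a polynomial in $a$ of total degree at most $d$, hence an element of $\RM[n,q,d]$. Moreover, since $g(a,b^\star) = f(a)$ for every $b^\star$,
\[
\E_{b^\star \in \Ff_q^{N-n}} \Pr_{a \in \Ff_q^n}[h_{b^\star}(a) \neq f(a)]
\;=\; \Pr_{(a,b^\star) \in \Ff_q^N}[\tilde h(a,b^\star) \neq g(a,b^\star)]
\;=\; \delta^\star.
\]
By the probabilistic method there exists some $b^\star$ with $\Pr_a[h_{b^\star}(a) \neq f(a)] \le \delta^\star$, so $\delta(f,\RM[n,q,d]) \le \delta^\star$, as desired.

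Combining the two inequalities gives the lemma. There is no real obstacle here; the only subtlety worth noting is that passing from $\tilde h(a,b)$ to $h_{b^\star}(a)$ uses that the total degree of $\tilde h$ bounds the partial degree in the $a$-variables, which is immediate from the definition of total degree.
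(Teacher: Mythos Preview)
Your proof is correct and follows essentially the same approach as the paper: lift the closest codeword for one inequality, and average over the extra coordinates then pick a good restriction for the other. The arguments are identical up to notation.
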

\begin{proof}
    We first show that $\delta(f, \RM[n, q,d]) \geq \delta(g, \RM[N, q, d])$. Suppose $h: \Ff_q^n \xrightarrow[]{} \Ff_q$ is the degree $d$ function such that $\delta(f, h) = \delta(f, \RM[n, q,d])$. Define $h'$ as the extension of $h$ to $N$ variables (in the same way as $g$ is defined). Let $h'(\cdot, b)$ denote the $N$-variate function where the last $N-n$ variables are set to $b \in \Ff_q^{N-n}$. Define $g(\cdot, b)$ similarly. Note that for any $b$, $h'(\cdot, b) = h$, $g(\cdot, b) = f$ . We then have
    \[
    \delta(h', g) = \E_{b \in \Ff_q^{N-n}}\left[\delta\left(h'(\cdot, b), g(\cdot, b)\right)\right] = \E_{b\in \Ff_q^{N-n}}\left[\delta(h, f)\right] = \delta(f, \RM[n, q,d]).
    \]
    Since $h$ is degree $d$, so is $h'$, thus the above implies that $\delta(f, \RM[n, q,d]) \geq \delta(g, \RM[2n, q, d])$.

    For the other direction, suppose $h'$ is the degree $d$, $N$-variate function such that $\delta(g, \RM[N, q,d]) = \delta(g, h')$. Keeping the same notation as above, we have that
    \[
    \E_{b \in \Ff_q^{N-n}}[\delta\left(g(\cdot, b), h'(\cdot, b)\right)] = \delta(g, h') = \delta(g, \RM[N, q,d]).
    \]
    Hence, there is a choice of $b$ such that 
    \[
    \delta\left(f, h'(\cdot, b)\right) = \delta\left(g(\cdot, b), h'(\cdot, b)\right) = \delta(g, \RM[N, q,d]).
    \]
    Since $h'$ is degree $d$, $h'(\cdot, b)$ must be degree $d$ as well, so the above inequality implies that $\delta(f, \RM[n,q,d]) \leq \delta(g, \RM[N, q,d])$, completing the proof.
\end{proof}

Using this lemma, we can always view the test as over $g$ instead of $f$, and show that the sparse $s+t$-flat test rejects with probability at least $c(q) \min(1, Q\delta(g, \RM[N,q,d]))$. Where $Q$ is the number of queries and $c(q)$ is the some $\frac{1}{\poly(q)}$. Since none of these parameters depend on $N$ we can use the lemma above and get that this rejection probability is the same as $c(q) \min(1, Q\delta(f, \RM[n,q,d])$.

Henceforth we assume that $n \geq s + t + 100$. This assumption is helpful as it allows us to bound the number of non-full rank affine transformations from $\Ff_q^n \xrightarrow[]{} \Ff_q^{s+t}$ as 
we argue in the following remark.
\begin{remark} \label{rmk: non full rank}
    The fraction of $\T_{n, s+t}$ that is not full rank is at most $\frac{1}{q^{100}(q-1)}$. The same holds for the fraction of $\T_{n, s+t}$ that are not full rank conditioned on $Ta = b$ for arbitrary $a \in \Ff_q^{s+t}$, $b \in \Ff_q^n$. To see that, 
    note that we can upper bound the probability that the linear part of $M$ is not full rank by
    \[
    \sum_{i=0}^{s+t} \frac{q^{i-1}}{q^n} \leq \frac{q^{s+t}-1}{(q-1) q^n} \leq \frac{1}{q^{100}(q-1)}.
    \]
    For the second part, note that conditioning on $Ta = b$ does not change this estimate, as we can still choose the linear part uniformly at random and set the affine shift so that $Ta = b$.
\end{remark}

\subsection{Some Basic Facts of the Sparse Flat Tester} 
We now collect some basic facts regarding the sparse flat tester that will be necessary for our analysis.
\subsubsection{Basic Soundness Properties}
We begin by describing why this tester works. Our presentation herein will be 
partial, focusing on the most essential 
properties necessary for our analysis, and we refer the reader to Appendix~\ref{sec:prob_1_accept} or \cite{RZS} for more details. 

Consider the monomials $x^{e'}$, for $e' \in \{0,\ldots, q-1\}^{s+t}$ such that $\langle x^{e'}, H_{e}(x) \rangle \neq 0$ for some $e \in  \{0,\ldots, q-1\}^t$ such that $\sum_{i=1}^{t} e_i \leq t(q-1) - r$, and $t \geq p+2$. By Lemma~\ref{lm: power sums}, these monomials, $x^{e'}$, must satisfy:
\begin{itemize}
    \item $e'_{p(i-1)+1} + \cdots + e'_{pi} = p\cdot (q-q/p)$ for $1 \leq i \leq \frac{s}{p}$
    \item $e'_{s+1} + \cdots + e'_{s+t} \geq r$.
\end{itemize}

More generally, we can explicitly express any inner product with $H_e$ as follows.
\begin{lemma} \label{lm: inner product}
For $f(x) = \sum_{a \in  \{0,\ldots, q-1\}^n} C_a x^a$, and $e \in  \{0,\ldots, q-1\}^t$ we have that
\begin{equation*}
    \langle f, H_e \rangle = \sum_{e'} D_{e'} C_{e'},
\end{equation*}
where $D_{e'}$ is a constant depending on the coefficients of $H_e$, and the sum is over $e'$ satisfying both the first condition above and $e'_{s+i} + e_i = q-1$ for $1 \leq i \leq t$. 
\end{lemma}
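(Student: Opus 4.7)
The plan is to expand everything in terms of monomials, exploit bilinearity of the inner product, and then apply Lemma~\ref{lm: monomial inner product} to identify which monomials in $f$ can possibly contribute to $\langle f, H_e\rangle$. Writing $H_e = \sum_{e''} D''_{e''}\, x^{e''}$ and using bilinearity gives
\[
\langle f, H_e \rangle = \sum_{a, e''} C_a D''_{e''}\, \langle x^a, x^{e''} \rangle.
\]
By Lemma~\ref{lm: monomial inner product}, the term $\langle x^a, x^{e''} \rangle$ vanishes unless $a + e'' = (q-1,\ldots,q-1)$ coordinate-wise. Reindexing via $e' := (q-1,\ldots,q-1) - e''$ rewrites the sum as $\sum_{e'} D_{e'} C_{e'}$, where $D_{e'} = (-1)^{s+t} D''_{(q-1,\ldots,q-1) - e'}$ depends only on the coefficients of $H_e$. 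Hence it remains to show that every nonzero $D_{e'}$ corresponds to an $e'$ satisfying the two stated conditions.

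For this I would exploit the factored form
\[
H_e(x) = \left(\prod_{i=1}^{s/p} P(x_{p(i-1)+1},\ldots,x_{pi})\right) M_e(x_{s+1},\ldots,x_{s+t}).
\]
Since $M_e = \prod_{i=1}^t x_{s+i}^{e_i}$ is a single monomial, every monomial in $H_e$ has degree exactly $e_i$ in $x_{s+i}$, forcing $e''_{s+i} = e_i$ and hence $e'_{s+i} + e_i = q-1$, which is the second required condition. For the first, observe that the numerator $\sum_{I \subseteq [p-1]} (-1)^{|I|+1}(x_I + x_p)^{q-1}$ defining $P$ is a sum of homogeneous polynomials of degree $q-1$, and hence is itself homogeneous of degree $q-1$. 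Since $P$ is genuinely a polynomial (the division by $x_1\cdots x_{p-1}$ is exact, as observed in Section~\ref{sec: sparse flat tester}), it must be homogeneous of degree $(q-1) - (p-1) = q-p$. Therefore every monomial of $P(x_{p(i-1)+1},\ldots,x_{pi})$ has total degree exactly $q-p$ in the block $\{x_{p(i-1)+1},\ldots,x_{pi}\}$ and degree zero in the remaining variables, so $e''_{p(i-1)+1} + \cdots + e''_{pi} = q-p$. Translating this through the substitution $e' = (q-1,\ldots,q-1) - e''$ yields $e'_{p(i-1)+1} + \cdots + e'_{pi} = p(q-1) - (q-p) = p(q - q/p)$, as required.

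I do not foresee any serious obstacle: the argument is essentially bookkeeping, combining bilinearity, Lemma~\ref{lm: monomial inner product}, and the homogeneity of $P$. The only subtlety worth verifying carefully is that $P$ is genuinely a polynomial, i.e., that $x_1\cdots x_{p-1}$ divides the numerator; this is precisely the ``sparse support'' property of $P$ highlighted in Section~\ref{sec: sparse flat tester}, and once it is granted the homogeneity argument and the resulting degree conditions on $e'$ follow immediately.
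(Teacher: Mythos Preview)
Your proof is correct and is essentially the same argument as the paper's, just organized slightly differently. The paper computes $\langle x^{e'}, H_e\rangle$ by factoring the sum over $\Ff_q^{s+t}$ into a product over the $s/p$ blocks of $P$ and the $t$ remaining coordinates, then invokes Lemma~\ref{lm: monomial inner product} on each factor; you instead expand $H_e$ into monomials first and apply Lemma~\ref{lm: monomial inner product} globally. Both routes reduce to the same two observations: (i) $M_e$ is a single monomial, pinning down $e'_{s+i}+e_i=q-1$, and (ii) the constraint on each $P$-block comes from the degree of $P$ being $q-p$. Your explicit remark that $P$ is \emph{homogeneous} of degree $q-p$ is in fact the clean way to justify the block condition as an equality rather than just an inequality; the paper's phrase ``since the degree of $P$ is $q-p$'' is implicitly using this. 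One small quibble: the reference to Section~\ref{sec: sparse flat tester} for the fact that $x_1\cdots x_{p-1}$ divides the numerator only covers the $p=2$ case explicitly; for general $p$ one checks it by setting $x_j=0$ and pairing each $I\ni j$ with $I\setminus\{j\}$, but this is a routine verification.
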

\begin{proof}
We have
\[
\langle x^{e'}, H_e \rangle =  \left(\prod_{i=1}^{s/p} \sum_{\alpha_1, \ldots, \alpha_p \in \Ff_q} \alpha_1^{e'_{p(i-1)+1}} \cdots \alpha_p^{e'_{pi}}P(\alpha_1, \ldots, \alpha_p)\right) \sum_{\beta_1, \ldots, \beta_t \in \Ff_q^t} \beta_{1}^{e'_{s+1} + e_1} \cdots \beta_{t}^{e'_{s+t} + e_t}.
\]
By Lemma~\ref{lm: monomial inner product}, for each $i$ we have
$$
 \sum_{\alpha_1, \ldots, \alpha_p \in \Ff_q} \alpha_1^{e'_{p(i-1)+1}} \cdots \alpha_p^{e'_{pi}}P(\alpha_1, \ldots, \alpha_p) \neq 0,
$$
only if the monomial $x_1^{q-1} \cdots x_p^{q-1}$ appears in $x_1^{e'_{p(i-1)+1}} \cdots x_p^{e'_{pi}}P(x_1, \ldots, x_p)$. Since the degree of $P$ is $q-p$, this can only be the case if $e'_{p(i-1)+1} + \cdots + e'_{pi} = p\cdot (q-q/p)$. Likewise, 
\[
 \sum_{\beta_1, \ldots, \beta_t \in \Ff_q^t} \beta_{1}^{e'_1 + e_1} \cdots \beta_{t}^{e'_t + e_t} \neq 0,
\]
only if $e'_i + e_i = q-1$ for each $1 \leq i \leq t$.
\end{proof}
From Lemma~\ref{lm: inner product} it is clear that $T$ rejects $f$ only if $\deg(f \circ T) > d$ as in this case, $f$ does not have any monomials satisfying the above. Therefore if $f$ is indeed degree $d$ then it is accepted with probability $1$. But is it
necessarily the case that the sparse flat 
tester rejects a function $f$ with positive probability if its degree exceeds $d$? This was shown to be true 
in~\cite{RZS} using canonical monomials. As we are going to need this fact and so as to be self contained, we include a proof in Appendix~\ref{sec:prob_1_accept}.
\begin{thm}\label{thm:perfect_accept}
 A function $f$ passes the sparse $(s+t)$-flat test with probability $1$ if and only if $f$ has degree at most $d$.
\end{thm}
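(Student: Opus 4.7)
The ``if'' direction will follow immediately from Lemma~\ref{lm: inner product}. That lemma writes $\langle f\circ T, H_e\rangle=\sum_{e'}D_{e'}C_{e'}$ where every $e'$ appearing in the sum satisfies
\[
|e'| \;=\; s(q-q/p) + \sum_{i=1}^{t}(q-1-e_i) \;\geq\; s(q-q/p)+r \;=\; d+1,
\]
using that $e$ is valid, i.e.\ $\sum_i e_i \leq t(q-1)-r$. Consequently, if $\deg(f)\leq d$ then $\deg(f\circ T)\leq d$ for every affine $T$, every coefficient $C_{e'}$ in the sum vanishes, and $\langle f\circ T, H_e\rangle=0$ for all $T$ and valid $e$.

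For the converse I will prove the contrapositive: if $\deg(f)>d$ then some $T$ yields a nonzero inner product. My first step is to reduce to the case $n=s+t$. Since $t\geq p+2$ guarantees that $s+t$ strictly exceeds the local-characterization dimension $\lceil(d+1)/(q-q/p)\rceil$ of $\RM[n,q,d]$, there exists a full-rank $T_0\in\T_{n,s+t}$ such that $g:=f\circ T_0$ has degree $>d$. It then suffices to find an invertible affine map $A\colon\Ff_q^{s+t}\to\Ff_q^{s+t}$ and a valid $e$ with $\langle g\circ A, H_e\rangle\neq 0$, as the affine map $T=T_0\circ A$ will then do the job.

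To handle the $(s+t)$-variate claim I will use a \emph{canonical monomials} approach in the spirit of \cite{RZS}. Call $e^*\in\{0,\ldots,q-1\}^{s+t}$ \emph{canonical} if its per-block sums satisfy $e^*_{p(i-1)+1}+\cdots+e^*_{pi}=p(q-q/p)$ for each $i\in[s/p]$ and its tail sum $\sum_{j=1}^{t}e^*_{s+j}\geq r$. Setting $e_i:=q-1-e^*_{s+i}$ yields a valid $e$, and unpacking Lemma~\ref{lm: inner product} together with the explicit form of $P$ gives $\langle x^{e^*},H_e\rangle=D_{e^*}\neq 0$, the nonvanishing being a direct consequence of the shape of $P$ at the complementary block exponent pattern. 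Given a monomial $x^{a^*}$ of maximum degree $|a^*|\geq d+1$ appearing in $g$, I will view $A\mapsto\langle g\circ A, H_e\rangle$ as a polynomial in the entries of $A$ (with $e$ chosen to detect some canonical $e^*$ of degree $|a^*|$) and show it is not identically zero by producing an explicit $A$ --- e.g.\ a diagonal scaling composed with controlled shears --- for which the contribution of $C_{a^*}$ to the coefficient of $x^{e^*}$ in $g\circ A$ survives. A Schwartz--Zippel-style evaluation will then exhibit a concrete invertible $A$ making the inner product nonzero.

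\textbf{Main obstacle.} The hardest step will be this final nonvanishing claim: verifying that for some valid $e$ the polynomial $A\mapsto\langle g\circ A, H_e\rangle$ in the entries of $A$ is not identically zero. In characteristic $p$ the multinomial expansion of $g(Ay)$ has coefficients governed by Lucas's theorem, and many terms can disappear, so one must pick the family of $A$'s carefully and track a concrete surviving pairing between $C_{a^*}$ and the coefficient $D_{e^*}$ of a canonical monomial. The block-sum design $p(q-q/p)$ built into $P$ together with the slack $t\geq p+2$ in the tail coordinates are precisely calibrated to leave room for such a term to survive, which is why the construction of $H_e$ is tailored exactly this way.
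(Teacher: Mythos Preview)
Your ``if'' direction is fine and matches the paper. The reduction to $n=s+t$ via a full-rank $T_0$ is also fine.

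There is, however, a genuine gap in the converse. You anchor the argument on a monomial $x^{a^\star}$ of \emph{maximum} degree in $g$ and then seek a canonical $e^\star$ with $|e^\star|=|a^\star|$. But every monomial detected by any $H_e$ has degree exactly $s(q-q/p)+\sum_{j}(q-1-e_j)$, hence lies in the range $[d+1,\; s(q-q/p)+t(q-1)]$. When $q>p$ the top degree of an $(s+t)$-variate function can reach $(s+t)(q-1)>s(q-q/p)+t(q-1)$, so there may simply be no canonical $e^\star$ of degree $|a^\star|$. For an invertible \emph{linear} $A$ the degree-$|a^\star|$ part of $g\circ A$ is determined solely by the degree-$|a^\star|$ part of $g$, so in this regime $x^{a^\star}$ contributes nothing to any coefficient that $H_e$ can see, and your Schwartz--Zippel polynomial in the entries of $A$ could well be identically zero. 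Allowing an affine shift does produce lower-degree terms, but then many monomials of $g$ contribute simultaneously and the ``concrete surviving pairing with $C_{a^\star}$'' you hope to isolate is no longer isolated.

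The paper's proof avoids this by working with the affine-invariant \emph{family} $\mathcal{G}$ of functions that pass with probability~$1$, rather than a single $g$ and a single $A$. Using the Monomial Extraction Lemma and the $p$-shadow closure of affine-invariant families, it first passes from an arbitrary high-degree monomial to one of degree in $[d+1,d+q-1]$ (by truncating variables --- a non-invertible move that is legal at the family level), and only then uses controlled shears (their Lemma~B.4, governed by Lucas's theorem) to redistribute the exponents into canonical form. The degree-reduction step is exactly what is missing from your plan; once it is in place, the shear argument you sketch is essentially the paper's monomial-shifting lemma, so your approach would converge to theirs.
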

\begin{proof}
    Deferred to Appendix~\ref{sec:prob_1_accept}.
\end{proof}
\subsubsection{Sparsity}
The next important feature of the sparse 
flat tester, as the name suggests, is that
it has a small query complexity. Note that the 
number of queries the test makes is proportional to the size of the support of the test polynomials $H_e$, and the next lemma shows that for any of the $e$ in $\Ff_q^t$ of step 2, $H_e(x)$ has a sparse support. Moreover, this support is the same no matter which $e$ is chosen. As a result, calculating $\langle f \circ T, H_{e} \rangle$ does not require querying all $q^{s+t}$ points in the domain of $f \circ T$.

\begin{lemma} \label{lm: support of P}
    The support of $P$ has size at most $(2^{p-1}+p-1)q^{p-1}$ and is contained in the set
    \[
    \left( \bigcup_{i=1}^{p-1}\{x_i = 0\} \right)  \cup \left( \bigcup_{I \subset [p-1]} \{x_I + x_p = 0\} \right).
    \]
    where $\{x_I = x_p \}$ denotes the hyperplane given by $x_I = x_p$.
\end{lemma}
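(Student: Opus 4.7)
The core of the argument is to rewrite the numerator of $P$ using Fermat's little theorem and then argue that it vanishes generically. Write $N(x) := \sum_{I \subseteq [p-1]} (-1)^{|I|+1}(x_I + x_p)^{q-1}$, so that $P(x) = N(x)/(x_1\cdots x_{p-1})$ wherever the denominator is nonzero. By Fermat's little theorem, in $\Ff_q$ we have $(x_I + x_p)^{q-1} = \ind[x_I + x_p \neq 0]$, so
\[
N(x) \;=\; \sum_{I \subseteq [p-1]} (-1)^{|I|+1}\,\ind[x_I + x_p \neq 0].
\]

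Next, I would show the containment by contrapositive: suppose $x = (x_1,\ldots,x_p)$ lies outside the claimed set, i.e.\ $x_i \neq 0$ for every $i \in [p-1]$ and $x_I + x_p \neq 0$ for every $I \subseteq [p-1]$. Since $x_1\cdots x_{p-1} \neq 0$, $P(x)$ is evaluated by the stated ratio; and since every indicator equals $1$, the numerator collapses to
\[
N(x) \;=\; \sum_{I \subseteq [p-1]} (-1)^{|I|+1} \;=\; -\,(1-1)^{p-1} \;=\; 0,
\]
using $p \geq 2$. Hence $P(x) = 0$, which is exactly what we need. So the support of $P$ must lie in $\bigcup_{i=1}^{p-1}\{x_i = 0\} \cup \bigcup_{I \subseteq [p-1]}\{x_I + x_p = 0\}$.

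For the size estimate, I would just apply the union bound: the containing set is a union of $(p-1)$ hyperplanes of the form $\{x_i = 0\}$ together with $2^{p-1}$ hyperplanes of the form $\{x_I + x_p = 0\}$ (one for each $I \subseteq [p-1]$, where $I = \emptyset$ gives $\{x_p = 0\}$). Each affine hyperplane in $\Ff_q^p$ has exactly $q^{p-1}$ points, so the total support is at most $(2^{p-1} + p - 1)\,q^{p-1}$, as desired.

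There is no real obstacle here; the only thing to watch is that the ``bulk'' cancellation $\sum_{I \subseteq [p-1]}(-1)^{|I|+1} = 0$ genuinely requires $p \geq 2$, which is the only case of interest (for $p=1$ there is nothing to sum over in the denominator and $P$ degenerates). Everything else is bookkeeping once Fermat's identity is in place.
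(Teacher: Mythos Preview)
Your proposal is correct and follows essentially the same approach as the paper's proof: both argue by contrapositive that if $x$ avoids all the listed hyperplanes then every $(x_I + x_p)^{q-1} = 1$ by Fermat's little theorem, whence the numerator vanishes (the paper leaves the binomial-sum cancellation implicit as ``a direct calculation,'' while you spell it out), and both obtain the size bound by the union bound over the hyperplanes.
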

\begin{proof}
    Suppose $x$ is not in the set described. Then in the expression for $P$, each term $(x_I + x_p)^{q-1} = 1$ by Fermat's Little Theorem. Moreover, the denominator is nonzero, so a direct calculation yields $P(x) = 0$.
\end{proof}
\begin{lemma} \label{lm: support}
    For any $e \in  \{0,\ldots, q-1\}^{t}$, we have 
    \begin{equation*}
        \supp(H_e) \subseteq \prod_{i=1}^{s/p} \supp(P) \times \Ff_q^{t},
    \end{equation*}
    
\end{lemma}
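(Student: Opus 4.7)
The plan is short because the lemma is an immediate structural consequence of how $H_e$ is defined. First I would write out the factorization
\[
H_e(x) = \left(\prod_{i=1}^{s/p} P(x_{p(i-1)+1}, \ldots, x_{pi})\right) \cdot M_e(x_{s+1}, \ldots, x_{s+t}),
\]
and observe the crucial point that the $s/p + 1$ factors on the right depend on pairwise disjoint blocks of coordinates: the $i$-th $P$-factor depends only on the $p$-tuple $(x_{p(i-1)+1}, \ldots, x_{pi})$, and the final $M_e$-factor depends only on the last $t$ coordinates. This makes the support of the product behave multiplicatively across the blocks.

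Next I would argue pointwise: suppose $x \in \Ff_q^{s+t}$ lies in $\supp(H_e)$, i.e.\ $H_e(x) \neq 0$. Since a product of elements of $\Ff_q$ is nonzero iff each factor is nonzero, every $P(x_{p(i-1)+1}, \ldots, x_{pi})$ must be nonzero, so the block $(x_{p(i-1)+1}, \ldots, x_{pi})$ lies in $\supp(P)$ for each $1\leq i\leq s/p$. For the last $t$ coordinates I do not extract any constraint at all: I simply use the trivial bound $\supp(M_e) \subseteq \Ff_q^t$. Taking the Cartesian product over the disjoint blocks gives $x \in \prod_{i=1}^{s/p} \supp(P) \times \Ff_q^t$, which is exactly the claimed containment.

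There is no real obstacle: the entire content is the disjointness of the variable blocks in the factorization of $H_e$ combined with the multiplicativity of being nonzero in a field. The reason this lemma is useful is that Lemma~\ref{lm: support of P} already controls $|\supp(P)|$, so combining it with the containment established above will bound $|\supp(H_e)| \leq |\supp(P)|^{s/p} \cdot q^t \leq ((2^{p-1}+p-1)q^{p-1})^{s/p} q^t$, which is the sparsity bound that drives the query-complexity claim in Theorem~\ref{thm:main}.
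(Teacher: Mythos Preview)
Your proposal is correct and takes essentially the same approach as the paper: the paper's proof simply writes out the factorization $H_e(x) = \left(\prod_{i=1}^{s/p} P(x_{p(i-1)+1}, \ldots, x_{pi})\right) \prod_{j=1}^{t}x_{s+j}^{e_{j}}$ and declares the result evident, which is exactly the pointwise multiplicativity-over-disjoint-blocks argument you spell out.
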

\begin{proof}
    Since $H_e(x) = \left(\prod_{i=1}^{s/p} P(x_{p(i-1)+1}, \ldots, x_{pi})\right) \prod_{j=1}^{t}x_{s+j}^{e_{j}}$, the result is evident.
\end{proof}

Henceforth, we let $\supp(H) = \bigcup_{e \in  \{0,\ldots, q-1\}^t} \supp(H_e)$. For any $T$, the sparse $s+t$-flat test can be done by only querying $f\circ T$ on points in $\supp(H)$, which gives the following upper bound on query complexity.

\begin{lemma} \label{lm: query complexity}
The sparse $s+t$-flat test has query complexity at most $(3q)^{\frac{d+1}{q}}q^t$. Since we can choose any $t \geq p + 2$, the query complexity can be as low as $(3q)^{\frac{d+1}{q}}q^{p+2}$
\end{lemma}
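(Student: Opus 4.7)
The plan is to bound the query complexity by the size of the joint support $\supp(H) = \bigcup_{e} \supp(H_e)$, since to compute $\langle f \circ T, H_e \rangle$ for a given $e$ one only needs to query $f \circ T$ at points where $H_e$ is nonzero, and taking a union over the valid $e$'s bounds the total number of queries required by a single invocation of the tester. So the first step is simply to observe that the number of queries is at most $|\supp(H)|$.

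Next I would invoke the preceding two lemmas. Lemma~\ref{lm: support} gives the containment $\supp(H_e) \subseteq \prod_{i=1}^{s/p} \supp(P) \times \Ff_q^{t}$, and crucially this upper set does not depend on $e$. Therefore $\supp(H) \subseteq \prod_{i=1}^{s/p}\supp(P) \times \Ff_q^{t}$ as well, which yields the bound
\[
|\supp(H)| \;\le\; |\supp(P)|^{s/p}\cdot q^t \;\le\; \bigl((2^{p-1}+p-1)\,q^{p-1}\bigr)^{s/p}\cdot q^{t},
\]
using Lemma~\ref{lm: support of P} in the last step.

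To convert this into the stated form, I would use the elementary inequality $2^{p-1}+p-1 \le 3^{p-1}$, valid for all primes $p\ge 2$ (easy to check by induction), which gives $(2^{p-1}+p-1)q^{p-1} \le (3q)^{p-1}$. Consequently
\[
|\supp(H)| \;\le\; (3q)^{(p-1)s/p}\cdot q^{t}.
\]
Finally, recall from Section~\ref{sec: preliminaries} that $d+1 = s(q-q/p)+r = \tfrac{(p-1)s}{p}\, q + r$, so $\tfrac{(p-1)s}{p} = \tfrac{d+1-r}{q}\le \tfrac{d+1}{q}$. Substituting this exponent gives the claimed bound $(3q)^{(d+1)/q}q^{t}$. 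The second assertion then follows immediately by plugging in the minimal allowed choice $t = \max(p+2,10)$, which for large enough $p$ is $p+2$, giving query complexity $(3q)^{(d+1)/q}q^{p+2}$.

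There is no real obstacle here: the argument is a short bookkeeping computation combining the two previous lemmas with the elementary inequality $2^{p-1}+p-1 \le 3^{p-1}$ and the defining relation $d+1 = s(q-q/p)+r$. The only point requiring a small amount of care is checking that the upper bound $\prod_{i=1}^{s/p}\supp(P)\times\Ff_q^{t}$ on $\supp(H_e)$ is uniform in $e$, so that unioning over the (at most $q^{t}$) valid degree sequences does not incur an additional factor.
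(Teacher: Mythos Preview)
Your proposal is correct and follows essentially the same route as the paper: bound the query count by $|\supp(H)|$, use Lemmas~\ref{lm: support of P} and~\ref{lm: support} to get $((2^{p-1}+p-1)q^{p-1})^{s/p}q^t$, apply $2^{p-1}+p-1\le 3^{p-1}$ to reach $(3q)^{s(p-1)/p}q^t$, and then convert the exponent via $s(p-1)/p=(d+1-r)/q\le (d+1)/q$. Your write-up is in fact a bit cleaner than the paper's at the final step (the paper records the needed elementary inequality as $(2^{p-1}+p-1)^{1/p}\le 3$ and writes $(3q)^s$ in the last display, both minor slips; your formulation with $3^{p-1}$ and the direct substitution $s(p-1)/p\le (d+1)/q$ is the precise computation).
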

\begin{proof}
By Lemma~\ref{lm: support}, we can bound the size of $\prod_{i=1}^{s/p} \supp(P) \times \Ff_q^{t}$. By Lemma~\ref{lm: support of P}, 
\[
|\supp(P)| \leq (2^{p-1} + p-1)q^{p-1},
\]
so 
\[
\left|\prod_{i=1}^{s/p} \supp(P) \times \Ff_q^{t} \right| \leq ((2^{p-1} + p-1)q^{p-1})^{s/p}q^t \leq (3q)^{s(p-1)/p}q^t,
\]
where we use the fact that $(2^{p-1} + p-1)^{1/p} \leq 3$. Moreover, $s \leq \frac{d+1}{q-q/p}$, so
\[
(3q)^{s}q^t \leq (3q)^{\frac{d+1}{q}}q^t.
\qedhere
\]
\end{proof}

\subsection{Relating the Sparse Flat Tester and the Flat Tester} \label{sec: relation to full flat tester}
In this section we describe a way to interpret the sparse $(s+t)$-flat test as a $t$-flat test. This relation will be useful later on as it allows us to borrow techniques from the
analysis of the $t$-flat test from~\cite{KM}. 

Fix an affine transformation $A \in \T_{n, s+\ell}$ for some $\ell > t$ and let $\Res_{s, \ell, t}$ denote the set of affine transformations $(R, b)$ of the following form,
\begin{equation} \label{eq: B2}
 R =  \begin{bmatrix}
   I_s & 0 \\
   0 & R'
   \end{bmatrix}, b = \begin{bmatrix} 0 \\ b' \\ \end{bmatrix},
\end{equation}
for $R' \in \Ff_q^{\ell \times t}, b' \in \Ff_q^{\ell}$. We call the affine transformations in $\Res_{s, \ell, t}$ restrictions. In words, composing $A$ with a random  $(R,b)$ in $\Res_{s, \ell, t}$ corresponds 
to the operation of preserving the first 
$s$ columns, and randomizing the rest of them. As we explain below, this allows 
to view the $(s+t)$-sparse test as having 
the $t$-flat tester embedded within it 
on some restricted-type function of $f$.

More precisely, note that we can sample $T \in \T_{n, s+t}$ by choosing 
$A\in \T_{n, s+\ell}$ as well as 
a restriction $B = (R, b) \in \Res_{s, \ell, t}$ uniformly at random and outputting $T = A \circ B$. In Lemmas~\ref{lm: generic test} and \ref{lm: invariances} we show that the result of the test is ``entirely dependent'' on $b' + \im(R') = \im((R', b'))$. To this end, after fixing $A$ we define $\tilde{f}: \Ff_q^{\ell} \xrightarrow[]{} \Ff_q$ by
\[
\tilde{f}(\beta_1, \ldots, \beta_{\ell}) = \sum_{\alpha \in \Ff_q^s} f \circ A(\alpha, \beta_1, \ldots, \beta_{\ell}) \prod_{i=1}^{s/p} P(\alpha_{p(i-1)+1},\ldots, \alpha_{pi}).
\]
Also for $B = (R, b) \in \Res_{s, \ell, t}$ let $\fl(B) \subseteq \Ff_q^{\ell}$ be the $t$-flat given by $b' + \im(R')$. 
The following lemma gives a relation 
between the sparse flat tester rejecting $f \circ T$, and the (standard) flat tester
rejecting $\tilde{f}$ on $b'+\im(R')$.
\begin{lemma} \label{lm: generic test}
For any $B \in \Res_{s, \ell, t}$, $A \circ B$ rejects $f$ if and only if $\deg(\tilde{f}|_{\fl(B)}) \geq r$. 
\end{lemma}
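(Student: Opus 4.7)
The plan is to reduce the claim to a direct unfolding of definitions, showing that each inner product $\langle f\circ T, H_e\rangle$ collapses to a standard monomial inner product on $t$ variables. First I would parameterize points of $\Ff_q^{s+t}$ as $(\alpha,\gamma)$ with $\alpha\in\Ff_q^s$ and $\gamma\in\Ff_q^t$, and use the block form of $B=(R,b)$ in~(\ref{eq: B2}) to compute $B(\alpha,\gamma) = (\alpha,\; R'\gamma+b')$. Substituting into the definitions of $T = A\circ B$ and $H_e$ and swapping the order of summation gives
\[
\langle f\circ T, H_e\rangle
= \sum_{\gamma\in\Ff_q^t} M_e(\gamma)\sum_{\alpha\in\Ff_q^s} f(A(\alpha,\;R'\gamma+b'))\prod_{i=1}^{s/p}P(\alpha_{p(i-1)+1},\ldots,\alpha_{pi}).
\]

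Next I would recognize the inner sum over $\alpha$ as exactly $\tilde{f}(R'\gamma+b')$ from the definition of $\tilde{f}$. Setting $g(\gamma) := \tilde{f}(R'\gamma+b')$---which is the restriction $\tilde{f}|_{\fl(B)}$ viewed as a $t$-variate polynomial in $\gamma$ via the parameterization $(R',b')$---one obtains the clean identity $\langle f\circ T, H_e\rangle = \langle g, M_e\rangle$. Writing $g$ in the monomial basis as $g=\sum_{a\in\{0,\ldots,q-1\}^t} c_a \gamma^a$ and applying Lemma~\ref{lm: monomial inner product}, this inner product equals $(-1)^t c_{(q-1,\ldots,q-1)-e}$, so it is nonzero exactly when the monomial with exponent vector $(q-1,\ldots,q-1)-e$ appears in $g$ with nonzero coefficient.

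Finally, I would observe that the map $e\mapsto (q-1,\ldots,q-1)-e$ is a bijection between the valid degree sequences $\{e:\sum_i e_i\leq t(q-1)-r\}$ and exponent vectors $a$ with $\sum_i a_i\geq r$, i.e., monomials of degree at least $r$ in $g$. Combining the previous two paragraphs, some $\langle f\circ T, H_e\rangle$ with $e$ valid is nonzero if and only if $g$ contains a monomial of degree at least $r$, which is precisely $\deg(\tilde{f}|_{\fl(B)})\geq r$. The proof is thus a chain of equalities, and I do not anticipate any real obstacle; the only minor point to verify is the case where $R'$ fails to be full rank, but since $g$ remains a well-defined $t$-variate polynomial and its degree is what enters both sides of the equivalence, this case needs no separate treatment.
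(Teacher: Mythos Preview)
Your proposal is correct and follows essentially the same route as the paper: both unfold the inner product, use the block form of $B$ to separate the $\alpha$-sum (which yields $\tilde{f}$) from the $\gamma$-sum, arrive at the identity $\langle f\circ A\circ B, H_e\rangle = \langle \tilde{f}\circ(R',b'), x^e\rangle$, and then invoke Lemma~\ref{lm: monomial inner product} together with the bijection $e\mapsto (q-1,\ldots,q-1)-e$ to translate ``some valid $e$ gives a nonzero inner product'' into ``$\deg(\tilde{f}\circ(R',b'))\geq r$''. Your remark on the non-full-rank case is also fine, though to be fully precise you would note that composing $\tilde{f}|_{\fl(B)}$ with a surjective affine map $\Ff_q^t\to\fl(B)$ preserves degree, so $\deg(g)=\deg(\tilde{f}|_{\fl(B)})$ regardless of the rank of $R'$; the paper glosses over this with the phrase ``degree is invariant under affine transformations.''
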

\begin{proof}
Write $B = (R,b)$ in the form of~\eqref{eq: B2} and let $B': \Ff_q^{t} \xrightarrow[]{} \Ff_q^{\ell}$ be the affine transformation given by $(R', b')$. Recall that $A \circ B$ rejects if and only if 
\[
\langle f \circ A \circ  B, H_e \rangle = \langle \tilde{f} \circ B, H_e \rangle \neq 0.
\]
for some $e \in \{0,\ldots, q-1\}^t$ such that $\sum_{i=1}^{t} e_i \leq t(q-1) - r$. For any $e$, we can rewrite this inner product on the right hand side as follows.
\begin{align*}
    \langle f \circ A \circ B, H_e \rangle &= \sum_{\alpha \in \Ff_q^s, \beta \in \Ff_q^{t}} \left(f \circ A \circ B(\alpha, \beta) \right) \cdot \prod_{i=1}^{s/p} P(\alpha_{p(i-1)+1},\ldots, \alpha_{pi}) \beta^e \\
    &=   \sum_{\beta \in \Ff_q^t} \left(\sum_{\alpha \in \Ff_q^{s}} f(\alpha, B'(\beta))\cdot \prod_{i=1}^{s/p} P(\alpha_{p(i-1)+1},\ldots, \alpha_{pi}) \right) \cdot \beta^e \\
    &= \sum_{\beta \in \Ff_q^t} \Tilde{f} \circ B'(\beta) \cdot \beta^e \\
    &=   \langle \tilde{f} \circ B',  x^e \rangle.
\end{align*}
 However, $\langle \tilde{f} \circ B',  x^e \rangle \neq 0$ if and only if $\tilde{f} \circ B'$ contains the monomial $x_1^{q-1-e_1} \cdots x_{t}^{q-1-e_t}$. It follows that $A \circ B$ rejects if and only if $\deg(\tilde{f}\circ B') \geq r$. Finally, since the degree of a polynomial is invariant under affine transformations, it follows that this is equivalent to $\deg(\tilde{f}|_{\fl(R)}) \geq r$. 
\end{proof}
We remark that Lemma~\ref{lm: generic test} in particular implies that the sparse $s+t$-flat test is invariant under any affine transformation that only affects the last $t$-coordinates. In particular, 
we get:
\begin{lemma} \label{lm: invariances}
    For any $B \in \Res_{s,t,t}$, $T \circ B$ rejects $f$ if and only if $T$ rejects $f$.
\end{lemma}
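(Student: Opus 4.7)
The plan is to deduce this directly from Lemma~\ref{lm: generic test}. I would apply that lemma with $A = T$ and the parameter $\ell = t$, which yields: for any $B \in \Res_{s,t,t}$, the affine map $T \circ B$ rejects $f$ if and only if $\deg(\tilde{f}|_{\fl(B)}) \geq r$, where $\tilde{f}\colon \Ff_q^t \to \Ff_q$ is the auxiliary function
\[
\tilde{f}(\beta) = \sum_{\alpha \in \Ff_q^s} (f \circ T)(\alpha, \beta) \prod_{i=1}^{s/p} P(\alpha_{p(i-1)+1}, \ldots, \alpha_{pi}),
\]
and $\fl(B) = b' + \im(R') \subseteq \Ff_q^t$. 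Strictly speaking, Lemma~\ref{lm: generic test} is stated for $\ell > t$, but inspection of its proof shows that it works verbatim for $\ell = t$ as well, since the argument only manipulates the inner product $\inner{f \circ A \circ B}{H_e}$.

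Next, I would specialize to the identity element $\mathrm{id} \in \Res_{s,t,t}$ (obtained by taking $R' = I_t$ and $b' = 0$). Since $T = T \circ \mathrm{id}$, this gives that $T$ rejects $f$ if and only if $\deg(\tilde{f}|_{\fl(\mathrm{id})}) \geq r$, and since $\fl(\mathrm{id}) = \Ff_q^t$ this just says $T$ rejects if and only if $\deg(\tilde{f}) \geq r$. Crucially, the \emph{same} auxiliary function $\tilde{f}$ appears on both sides, because $\tilde{f}$ is defined from $f \circ T$ and does not depend on $B$.

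To conclude, I would observe that for any $B \in \Res_{s,t,t}$ whose linear part $R'$ is invertible, $\im(R') = \Ff_q^t$ and hence $\fl(B) = b' + \im(R') = \Ff_q^t$. Thus $\deg(\tilde{f}|_{\fl(B)}) = \deg(\tilde{f})$, and combining with the previous paragraph yields the claimed equivalence. The proof is essentially immediate once Lemma~\ref{lm: generic test} is in place; there is no substantive obstacle, as the statement really just says that rejection depends only on the flat $\fl(B)$, which equals $\Ff_q^t$ for any full-rank element of $\Res_{s,t,t}$. (The only caveat is that if $R'$ were degenerate the equivalence could fail, since then $\fl(B)$ would be a proper affine subspace of $\Ff_q^t$; this case is not needed in the sequel, as Lemma~\ref{lm: invariances} is invoked only for invertible restrictions.)
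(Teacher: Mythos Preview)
Your proof is correct and takes essentially the same approach as the paper: apply Lemma~\ref{lm: generic test} with $A=T$ and $\ell=t$, and note that $\fl(B)=\Ff_q^t$ so the degree of $\tilde{f}$ is unchanged. Your added caveat about degenerate $R'$ is a fair observation (the paper's one-line proof tacitly assumes invertibility, and indeed the lemma is only ever invoked for invertible restrictions).
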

\begin{proof}
    We apply Lemma~\ref{lm: generic test} with $\ell = t$ and $A = T$. Then, applying $B$ does not change the degree of $\tilde{f}$ and the result follows.
\end{proof}

After fixing $A \in \T_{n, s+ \ell}$ this lemma allows us to think of the remainder of the sparse $(s+t)$-flat test, i.e.\ choosing a restriction $B \in \Res_{s, \ell, t}$, as a $t$-flat test on $\tilde{f} = f \circ A$. Setting $\ell = t+100$, we can view the sparse $s+t$-flat test as follows.

\begin{enumerate}
    \item Choose a random $A \in \T_{n, s+t+100}$.
    \item Perform the standard $t$-flat test on the $t+100$-variate function $\Tilde{f}$ defined according to the above.
\end{enumerate}

This view of the test will allow us to borrow some concepts and facts from the $t$-flat test which we now introduce. First, we formally define the upper shadow for flats.

\begin{definition}
    For a set of $\ell$-flats, $S$, define the upper shadow as follows
    \[
    S^\uparrow = \{V \; |\; \dim(V) = \ell+1, \exists U \in S, U \subseteq V \}.
    \]
\end{definition}

The following shadow lemma is shown in \cite{BKSSZ} and is used  extensively in the analysis of \cite{KM}. This lemma will play a role in our analysis as well, so we record it below.

\begin{lemma} \label{lm: uppershadow}
    For a function $f$, let $S$ be the set of $\ell$-flats, $U$, such that $\deg(f|_{U})> d$. Then,
    \[
    \mu(S^{\uparrow}) \leq q \mu(S),
    \]
    where the measures are in the sets of $(\ell+1)$-flats and $\ell$-flats respectively.
\end{lemma}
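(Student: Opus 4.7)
The plan is to prove the shadow lemma via a double counting argument on the bipartite incidence graph between $\ell$-flats and $(\ell+1)$-flats, reducing the statement to a key algebraic claim about restrictions of polynomials.

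First, I would count the set of pairs $(U,V)$ with $U\in S$, $V$ an $(\ell+1)$-flat, and $U\subseteq V$. Let $N_\uparrow$ denote the number of $(\ell+1)$-flats containing a given $\ell$-flat, $N_\downarrow$ the number of $\ell$-sub-flats of a given $(\ell+1)$-flat, and $N_\ell$, $N_{\ell+1}$ the total counts of $\ell$-flats and $(\ell+1)$-flats respectively; biregularity of the inclusion graph gives $N_\uparrow\cdot N_\ell=N_\downarrow\cdot N_{\ell+1}$. The $U$-side of the count gives $|S|\cdot N_\uparrow$, while the $V$-side gives $\sum_{V\in S^\uparrow}|\{U\in S\colon U\subseteq V\}|$, since any $V$ with a non-zero contribution lies in $S^\uparrow$ by definition. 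Thus it suffices to establish the \emph{key claim}: every $V\in S^\uparrow$ contains at least $N_\downarrow/q$ sub-flats from $S$. This immediately yields $|S^\uparrow|\cdot N_\downarrow/q\leq |S|\cdot N_\uparrow$, which rearranges (using biregularity) to $\mu(S^\uparrow)\leq q\mu(S)$.

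To prove the key claim, fix $V\in S^\uparrow$ and a witness $U_0\subseteq V$ with $\deg(f|_{U_0})>d$. Choose affine coordinates $(x_1,\ldots,x_{\ell+1})$ on $V$ so that $U_0=\{x_{\ell+1}=0\}$, and pick a multi-index $e'$ of total degree exceeding $d$ with $[x^{e'}]f|_{U_0}\neq 0$. Writing $f|_V = \sum_{j=0}^{q-1}g_j(x_1,\ldots,x_\ell)\cdot x_{\ell+1}^j$, the coefficient of $x^{e'}$ in the restriction $f|_{U_c}$ to the parallel pencil $U_c=\{x_{\ell+1}=c\}$ is the univariate polynomial $P(c)=\sum_j [x^{e'}]g_j\cdot c^j$ of degree at most $q-1$, which satisfies $P(0)\neq 0$. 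Hence $P$ is non-zero and vanishes at at most $q-1$ points of $\Ff_q$, so the parallel pencil contributes at least one sub-flat to $S$. To reach the full $N_\downarrow/q$ bound, I would rerun this Vandermonde argument in all ``rotated'' coordinate systems where a different linear functional plays the role of $x_{\ell+1}$: for each pencil direction $\lambda$, after a change of variables making $\lambda=x_{\ell+1}$, the coefficient of the corresponding high-degree monomial in $f|_{U_c}$ is again a polynomial in $c$ of degree at most $q-1$, and is non-zero whenever $f|_V$ retains a degree-$(>d)$ monomial transverse to $\lambda$.

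The main obstacle I expect is the aggregation step: a single pencil only guarantees one bad sub-flat, while the target count $N_\downarrow/q$ equals the number of pencil directions, $(q^{\ell+1}-1)/(q-1)$. The strategy is to argue that for all but a bounded collection of ``non-generic'' coordinate directions (those for which every high-degree monomial of $f|_V$ involves the direction non-trivially), the Vandermonde polynomial is non-zero in the new variable and the pencil contributes at least one bad sub-flat; the non-generic directions should be sparse enough (lying in a proper subvariety of the direction space) that their deficit can be absorbed. Cleanly bookkeeping both the coverage of bad sub-flats across distinct pencils without double-counting and the contribution of non-generic pencils is where the bulk of the technical work lies.
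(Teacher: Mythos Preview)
The paper does not give a proof of this lemma; it is quoted from \cite{BKSSZ} (with the general-field version stated again as item~1 of Lemma~\ref{lm: uppershadow 2}, also without proof). So there is no in-paper argument to compare against directly.

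Your double-counting reduction to the key claim --- every $V\in S^{\uparrow}$ contains at least $N_{\downarrow}/q$ sub-$\ell$-flats from $S$ --- is correct and is the standard first step.

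The gap is in your plan for the key claim. The rotated-Vandermonde step is circular. After you change coordinates so that a pencil direction $\lambda$ becomes $x_{\ell+1}$, your hypothesis ``$f|_V$ retains a degree-$(>d)$ monomial transverse to $\lambda$'' unwinds to: in the new variables, $f|_V$ has some monomial whose $(x_1,\dots,x_\ell)$-part has degree exceeding $d$. But running your own Vandermonde computation in both directions shows this is \emph{equivalent} to the statement that some restriction $f|_{\{x_{\ell+1}=c\}}$ already has degree exceeding $d$, i.e.\ that the $\lambda$-pencil already contains a member of $S$. So the argument only certifies ``$\geq 1$ bad hyperplane'' for pencils already known to contain one. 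Concretely, take $g=f|_V=x_{\ell+1}$ with $d=0$ (or more generally $g=x_{\ell+1}^{d+1}$ with $d+1\le q-1$): the pencil $\{x_{\ell+1}=c\}_{c\in\Ff_q}$ has every restriction constant, hence contributes zero hyperplanes to $S$, and no choice of high-degree monomial transverse to this $\lambda$ exists. Your proposed absorption of such non-generic directions cannot close the gap either: the Vandermonde bound yields only one bad hyperplane per generic pencil, while the target $N_{\downarrow}/q$ equals the \emph{total} number of pencils, so even a single non-generic pencil creates a deficit the counting cannot cover.

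The actual argument in \cite{BKSSZ,HSS} is not pencil-by-pencil and genuinely uses the hypothesis $\ell\ge\lceil(d+1)/(q-q/p)\rceil$ (implicit in every application of this lemma in the present paper). Your outline, as it stands, does not contain a substitute for that step.
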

We can also define an analogous notion of upper shadow for affine transformations that fits with the view of the sparse flat tester 
just introduced.

\begin{definition}
    For a set of affine transformations $\Sc \subseteq \T_{n, s+\ell}$ define 
    \[
    \Sc^\uparrow = \{T \in \T_{n, s+\ell+1} \;|\; 
    \exists R \in \Res_{s+\ell+1, s+\ell}, T \circ R \in \Sc
    \}
    \]
\end{definition}
We will need the following simple result about the upper shadow of sets of affine transformations.
\begin{lemma}
    For a set of affine transformations $\Sc \subseteq \T_{n, s+\ell}$, 
    \[
    \mu(\Sc^{\uparrow}) \geq \mu(\Sc).
    \]
\end{lemma}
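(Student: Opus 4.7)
The plan is to prove the inequality by a direct injective embedding: I will show that every $T \in \Sc$ has at least $q^n$ distinct ``trivial extensions'' in $\Sc^{\uparrow}$, and then compare $|\T_{n, s+\ell+1}|$ with $|\T_{n, s+\ell}|$.

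Concretely, given $T = (M, c) \in \Sc$ with $M \in \Ff_q^{n \times (s+\ell)}$, the key observation is that for any choice of $w \in \Ff_q^{n}$, the affine transformation $T' = (M', c) \in \T_{n, s+\ell+1}$, where $M'$ is obtained from $M$ by appending $w$ as the last column, lies in $\Sc^{\uparrow}$. Indeed, let $R = (R_0, 0) \in \Res_{s, \ell+1, \ell}$ be the embedding $\Ff_q^{s+\ell} \hookrightarrow \Ff_q^{s+\ell+1}$ defined by the block form in~\eqref{eq: B2} with $R'$ equal to the $(\ell+1) \times \ell$ matrix $\bigl[\begin{smallmatrix} I_\ell \\ 0 \end{smallmatrix}\bigr]$ and $b' = 0$. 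Then $T' \circ R$ acts on the first $s + \ell$ coordinates exactly as $T$ does and ignores the appended column $w$, so $T' \circ R = T \in \Sc$, witnessing that $T' \in \Sc^{\uparrow}$.

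Since the first $s + \ell$ columns of $M'$ and the shift of $T'$ determine $T$ uniquely, the map $(T, w) \mapsto T'$ is injective from $\Sc \times \Ff_q^n$ into $\Sc^{\uparrow}$, giving $|\Sc^{\uparrow}| \geq q^n \cdot |\Sc|$. On the other hand, the total number of affine transformations satisfies $|\T_{n, s+\ell+1}| = q^n \cdot |\T_{n, s+\ell}|$, since going from $s+\ell$ to $s+\ell+1$ columns in the linear part multiplies the count by exactly $q^n$. Dividing, one obtains
\[
\mu(\Sc^{\uparrow}) \;=\; \frac{|\Sc^{\uparrow}|}{|\T_{n, s+\ell+1}|} \;\geq\; \frac{q^n |\Sc|}{q^n |\T_{n, s+\ell}|} \;=\; \mu(\Sc),
\]
as desired. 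There is no real obstacle in this argument; the only thing to check carefully is that the chosen $R$ indeed belongs to $\Res_{s, \ell+1, \ell}$ as defined in~\eqref{eq: B2}, which is immediate from the block structure. (I also note that the indices in the theorem statement appear to be a slight notational shorthand; the intended $R$ lies in $\Res_{s, \ell+1, \ell}$, mapping $\Ff_q^{s+\ell}$ into $\Ff_q^{s+\ell+1}$ while preserving the first $s$ coordinates.)
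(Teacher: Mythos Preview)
Your proof is correct and takes essentially the same approach as the paper: both exploit that appending an arbitrary column to any $T \in \Sc$ yields an element of $\Sc^{\uparrow}$. The paper phrases this as a one-line sampling argument (sample $T'\in\T_{n,s+\ell+1}$ uniformly, then a restriction $R$, and observe that $T'\circ R\in\Sc$ forces $T'\in\Sc^{\uparrow}$), while you give the equivalent injection/counting formulation; your version is arguably slightly cleaner since it avoids having to check that the sampled $T'\circ R$ is uniform.
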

\begin{proof}
We can sample $T$ by first sampling $T' \in \T_{n, s+\ell+1}$ and then choosing a restriction $R \in \Res_{s+\ell+1, s+\ell}$ and outputting $T = T' \circ R$. We can only have $T \in \Sc$ if $T' \in \Sc^{\uparrow}$, so the result follows.
\end{proof}

\section{Locating a Potential Error}
We now begin our proof of Theorem~\ref{thm:main}. Recall $\Sc_t = \{T \in \T_{n, s+t} \; | \; T \text{ rejects } f \}$ defined in~\eqref{eq:rej_set} and that $\rej_{s+t}(f) = \mu_{s+t}(\Sc_t)$; we 
denote $\eps = \mu_{s+t}(\Sc_t)$ for simplicity.
In this section we show that if $\eps \leq q^{-M}$ (where $M$ is a large absolute constant to be determined), then we may find a potential 
erroneous input $x^{\star}$ for $f$, in the sense that the sparse flat tester almost always rejects if the chosen 
test polynomial has $x^{\star}$ in 
its support.

We begin by checking that the set $\Sc_t$ 
satisfies the conditions of Theorem~\ref{th: pseudorandom}.
\subsection{The Edge Expansion of $\Sc_t$ in the Affine Bi-linear Scheme}
We start by proving an upper bound on the expansion of $\Sc_t$. For a matrix $M$ and column $v$ vector $v$, let $[M, v]$ denote the matrix with $v$ appended as an additional column. Consider the following procedure for sampling an edge in $\AffShort(n, s+t)$. 
\begin{enumerate}
    \item Choose $T_1 = (M, c) \in \T_{n, s+t}$ uniformly at random.
    \item Choose $v \in \Ff_q^n$ uniformly at random conditioned on $v \neq 0$ and let $T' = ([M, v], c)$.
    \item Choose a uniformly random matrix $R \in \Ff_q^{(s+t+1) \times (s+t)}$ of the form $R = [I_{s+t}, w]^T$, and a uniformly random $b \in \Ff_q^{s+t+1}$ such that the first $s+t$ entries in $b$ are zero. 
    \item Set $T_2 = T' \circ (R,b)$.
\end{enumerate}

The following lemma will allow us to show $\Sc_t$ is poorly expanding. This takes the place of the ``shadow'' lemma from \cite{KM}.
\begin{lemma} \label{lm: expansion general}
Let $G: \Ff_q^{s+t} \xrightarrow[]{} \Ff_q$ be an arbitrary polynomial and let $T = (M, c)$ be an $s+t$-affine transformation such that $\langle f \circ T, G \rangle \neq 0$. Fix $v \in \Ff_q^n$ chosen in step $2$, and let $T' = ([M, v], c)$. Then,
\begin{equation*}
    \Pr_{(R,b)}[\langle f \circ T' \circ (R,b), G \rangle \neq 0] \geq \frac{1}{q},
\end{equation*}
where $R$ is sampled according to step 3.
\end{lemma}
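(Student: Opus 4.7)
The plan is to reformulate the lemma as a statement about when a certain low-degree polynomial vanishes, and then invoke the Schwartz--Zippel lemma. First, I would explicitly compute $T_w := T' \circ (R, b)$ in terms of the $s+t+1$ random scalars $w \in \Ff_q^{s+t}$ and $b_{s+t+1} \in \Ff_q$. Using the prescribed block forms of $R$ and $b$, a direct calculation gives
\[
T_w(x) = T(x) + L(x)\, v, \qquad L(x) := w^T x + b_{s+t+1},
\]
so that $L$ is an affine form in $x$ whose $s+t+1$ coefficients are precisely the random scalars, and in particular $T_w = T$ when $(w, b_{s+t+1}) = 0$.

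Next, I would regard the quantity of interest as a polynomial in the random variables,
\[
P(w, b_{s+t+1}) \;:=\; \langle f \circ T_w,\, G \rangle \;=\; \sum_{x \in \Ff_q^{s+t}} f\bigl(T(x) + L(x)\, v\bigr)\, G(x),
\]
and bound its total degree. The key point is that for each fixed $x$, the univariate map $\lambda \mapsto f(T(x) + \lambda v)$ is a function $\Ff_q \to \Ff_q$ and hence admits a unique polynomial representation of degree at most $q-1$ in $\lambda$. Substituting $\lambda = L(x)$, which is linear in the random variables, produces a polynomial of total degree at most $q-1$ in $(w, b_{s+t+1})$; summing over $x$ preserves this bound, so $P$ itself has total degree at most $q-1$.

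Finally, the hypothesis $\langle f \circ T, G \rangle \neq 0$ is exactly the statement $P(0,0) \neq 0$, hence $P$ is not the zero polynomial. Schwartz--Zippel applied to a nonzero polynomial of total degree at most $q-1$ in $s+t+1$ variables over $\Ff_q$ yields $\Pr[P = 0] \leq (q-1)/q$, which gives the required bound $\Pr[\langle f \circ T' \circ (R, b), G \rangle \neq 0] \geq 1/q$. I do not anticipate a serious obstacle here; the only delicate point is the degree bound, which crucially relies on $f$ being a function so that, after reducing modulo $\lambda^q - \lambda$ in the single variable $\lambda$, the relevant univariate polynomials have degree at most $q-1$ rather than the naive $n(q-1)$.
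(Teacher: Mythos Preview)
Your proposal is correct and follows essentially the same approach as the paper: both arguments express $T'\circ(R,b)$ as $\tilde{f}(x_1,\ldots,x_{s+t},\,w^Tx+b_{s+t+1})$, view the inner product as a polynomial of total degree at most $q-1$ in the random scalars $(w,b_{s+t+1})$, observe that it is nonzero at the origin by hypothesis, and finish with Schwartz--Zippel. Your degree bound via the univariate reduction $\lambda\mapsto f(T(x)+\lambda v)$ for each fixed $x$ is a slightly more direct packaging of the same idea the paper uses (namely that $f\circ T'$ has individual degree at most $q-1$ in its last variable).
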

\begin{proof}
Let $w = (w_1, \ldots, w_{s+t})$ be the last row of $R$ and let $b_{s+t+1}$ be the last entry of $b$. Choosing $(R,b)$ uniformly at random amounts to choosing $\beta$ and each $\alpha_i$ uniformly at random in $\Ff_q$. To obtain the result we will view $\langle f \circ T' \circ R, G \rangle$ as a function in these random values and apply the Schwartz-Zippel lemma.

The function $ f\circ T' \circ (R, b)$ can be obtained by composing the $(s+t+1)$-variable function, $\tilde{f} = f\circ T'$, with $(R, b)$. Then $ f\circ T' \circ (R, b)$ is just the $(s+t)$-variable function, 
\begin{equation*}
    \left(f\circ T' \circ (R, b)\right)(x_1, \ldots, x_{s+t}) = \tilde{f}\left(x_1, \ldots, x_{s+t}, \sum_{i = 1}^{s+t}w_i x_i + b_{s+t+1} \right).
\end{equation*}

By Lemma~\ref{lm: inner product},  $\langle f \circ T' \circ (R,b), G \rangle$ is some linear combination of the coefficients of $\tilde{f}$, but these coefficients are polynomials in $w_1, \ldots, w_{s+t}, b_{s+t+1}$ of total degree at most $q-1$, because $\tilde{f}$ can be written as a polynomial with individual degrees at most $q-1$. Thus, $\langle f \circ T' \circ R, G \rangle$ is a polynomial in $w_1, \ldots, w_{s+t}, b_{s+t+1}$ of total degree at most $q-1$. Moreover, this polynomial is nonzero because when setting $w_1 = \cdots = w_{s+t} = b_{s+t+1}= 0$, it evaluates to
\begin{equation*}
    \langle f \circ T, G \rangle \neq 0.
\end{equation*}
By the Schwartz-Zippel lemma it follows that
    $\Pr_{(R,b)}[\langle f \circ T' \circ R, G \rangle \neq 0] \geq \frac{1}{q}$.
\end{proof}
As an immediate corollary, we get that the set of affine transformations that reject is poorly expanding.
\begin{lemma} \label{lm: expansion specific}
The set of $s+t$-affine transformations that reject $f$, $\Sc_t$, satisfies
\begin{equation*}
   1 - \Phi(\Sc_t) \geq \frac{1}{q}.
\end{equation*}
where the expansion is in $\AffShort(n, s+t)$.
\end{lemma}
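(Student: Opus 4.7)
The plan is to derive this as a direct corollary of Lemma~\ref{lm: expansion general}, using the up-down description of the random walk on $\AffShort(n,s+t)$ given immediately above the lemma. The edge-expansion of $\Sc_t$ is $\Pr[T_2 \notin \Sc_t \mid T_1 \in \Sc_t]$ where $T_1$ is uniform in $\Sc_t$ and $T_2$ is obtained from $T_1$ by one step of the walk, so it suffices to show that for \emph{every} fixed $T_1 \in \Sc_t$, the probability that the outputted $T_2$ lies in $\Sc_t$ is at least $1/q$.

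Fix $T_1=(M,c)\in\Sc_t$. By the definition of $\Sc_t$ in~\eqref{eq:rej_set}, there exists at least one valid degree sequence $e\in\{0,\ldots,q-1\}^t$, with $\sum_{i=1}^{t}e_i\leq t(q-1)-r$, for which $\langle f\circ T_1,H_e\rangle\neq 0$. Crucially, the validity of $e$ is a property of $e$ alone and does not depend on the affine transformation with which $f$ is composed, so the same witness $e$ can be reused for any candidate $T_2\in\T_{n,s+t}$.

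Now run the up-down sampling described before the lemma: first append a random nonzero vector $v\in\Ff_q^n$ to $M$ to obtain $T'=([M,v],c)\in\T_{n,s+t+1}$, and then compose with a uniformly random $(R,b)\in\Res_{s+t+1,s+t}$ to obtain $T_2=T'\circ(R,b)$. Apply Lemma~\ref{lm: expansion general} with $G=H_e$ (the hypothesis of that lemma, $\langle f\circ T_1,H_e\rangle\neq 0$, is exactly what we have). For every fixed $v$, the lemma gives
\[
\Pr_{(R,b)}\!\bigl[\langle f\circ T_2,H_e\rangle\neq 0\bigr]\;\geq\;\frac{1}{q}.
\]
Since whenever $\langle f\circ T_2,H_e\rangle\neq 0$ the affine transformation $T_2$ also rejects $f$ (witnessed by the same valid $e$), we obtain $\Pr[T_2\in\Sc_t\mid v]\geq 1/q$. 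Averaging over $v$ and then over $T_1$ uniform in $\Sc_t$ yields $1-\Phi(\Sc_t)\geq 1/q$.

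There is essentially no obstacle: the content is already packaged in Lemma~\ref{lm: expansion general}. The only thing one should double-check is the bookkeeping that the up-down procedure indeed generates the uniform edge-neighbor distribution on $\AffShort(n,s+t)$ (so that it is the quantity relevant for $\Phi(\Sc_t)$), and that reusing the same witness $e$ is legitimate—both of which are immediate from the definitions set up in Section~\ref{sec: preliminaries} and in the bulleted sampling procedure preceding Lemma~\ref{lm: expansion general}.
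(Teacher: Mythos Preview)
Your proposal is correct and follows essentially the same approach as the paper: fix $T_1\in\Sc_t$, pick a valid witness $e$ with $\langle f\circ T_1,H_e\rangle\neq 0$, and apply Lemma~\ref{lm: expansion general} with $G=H_e$ to the up-down step to conclude $\Pr[T_2\in\Sc_t]\geq 1/q$. The paper's proof is just a terser version of what you wrote.
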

\begin{proof}
    Fix $T_1 \in \Sc_t$ and choose $T_2$ adjacent to $T_1$ in $\AffBilin(n, s+t)$. By assumption there is some valid $e$ and $H_e$ such that $\langle f \circ T_1, H_e \rangle \neq 0$. Since $T_2$ can be chosen according to the process described at the start of the section, we can apply Lemma~\ref{lm: expansion general} with $G = H_e$, 
    \[
    \Pr_{T_2}[\langle f \circ T_2, H_e \rangle \neq 0] \geq \frac{1}{q}.
    \]
    It follows that $ 1 - \Phi(\Sc_t) \geq \frac{1}{q}$.
\end{proof}

\subsection{Pseudorandom with respect to zoom-outs and zoom-outs on the linear part}
Next, we show that the set $\Sc_t$ is 
pseudorandom with respect to zoom-outs 
and zoom-out on the linear part.
\begin{lemma} \label{lm: zoom-out}
The set $\Sc_t$ is $q\mu(\Sc_t)$-pseudorandom with repsect to zoom-outs and zoom-outs on the linear part.
\end{lemma}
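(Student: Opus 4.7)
My plan is to prove both pseudorandomness bounds by running a carefully chosen up-down walk in the affine bi-linear scheme. The walk starts from a uniformly random element of the relevant zoom-out, and its endpoint $T_2$ will be uniformly distributed on $\T_{n,s+t}$; then, by Lemma~\ref{lm: expansion general}, a rejection of $T$ propagates to a rejection of $T_2$ with probability at least $1/q$, which together with the uniformity of $T_2$ yields the desired inequality. The first reduction is to the case $a=e_1$: for any invertible linear $S\colon\Ff_q^n\to\Ff_q^n$, the map $T\mapsto S^{-1}\circ T$ is a measure-preserving bijection of $\T_{n,s+t}$ that takes $\Sc_t^f$ to $\Sc_t^{f\circ S}$, and for $S$ chosen so that $S^Ta=e_1$ it takes $\C_{a^T,b,\cdot}$ to $\C_{e_1^T,b,\cdot}$; thus it suffices to prove the bound for $a=e_1$ (with $f$ replaced by $f\circ S$). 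In this reduced setting $\C_{e_1^T,b,\lin}=\{(M,c):M_1=b\}$ (with $M_1$ the first row of $M$), and $\C_{e_1^T,b,\beta}$ additionally fixes $c_1=\beta$.

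To construct the walk, I would sample $T=(M,c)$ uniformly from the zoom-out, $v\in\Ff_q^n$ uniformly subject to $v_1\neq 0$, and $(w,b_{s+t+1})\in\Ff_q^{s+t+1}$ uniformly, and set $T_2=(M+vw^T,\,c+b_{s+t+1}v)$. The first row of $M+vw^T$ is $b+v_1w$; since $v_1$ is uniform in $\Ff_q^*$ and $w$ is uniform in $\Ff_q^{s+t}$, the product $v_1w$ is uniform on $\Ff_q^{s+t}$ (a short direct computation: for any $z$ and $\alpha\in\Ff_q^*$ there is a unique $w=z/\alpha$). Combined with the uniformity of the remaining entries of $M$ and of $c$, a pre-image count over targets $(M_0,c_0)\in\T_{n,s+t}$ shows that $T_2$ is uniformly distributed on $\T_{n,s+t}$. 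The $\C_{e_1^T,b,\beta}$ case is identical, additionally using that $b_{s+t+1}v_1$ is uniform on $\Ff_q$, which makes $c_{2,1}=\beta+b_{s+t+1}v_1$ uniform. For any $T\in\Sc_t$, picking a valid $e$ with $\langle f\circ T,H_e\rangle\neq 0$ and applying Lemma~\ref{lm: expansion general} with $G=H_e$ gives $\Pr_{(w,b_{s+t+1})}[T_2\in\Sc_t\mid T,v]\geq 1/q$; averaging over the walk yields $\mu(\Sc_t)=\Pr[T_2\in\Sc_t]\geq\Pr[T\in\Sc_t]/q=\mu(\Sc_{a^T,b,\lin})/q$, as desired (and identically for $\mu(\Sc_{a^T,b,\beta})$).

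The hard part will be justifying the uniformity of $T_2$, and in particular the choice to condition on $v_1\neq 0$. This conditioning is stronger than the $v\neq 0$ required by Lemma~\ref{lm: expansion general}, but it is exactly what kills the bias a naive up-down walk introduces when it starts inside a zoom-out: it makes $v_1w$ uniform on $\Ff_q^{s+t}$ (and $b_{s+t+1}v_1$ uniform on $\Ff_q$). Without it, the endpoint $T_2$ lands back in the zoom-out with probability roughly $1/q$, exactly cancelling the $1/q$ gain from Lemma~\ref{lm: expansion general} and producing only the trivial inequality $\mu(\Sc_t)\geq 0$.
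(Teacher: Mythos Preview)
Your proposal is correct and follows essentially the same approach as the paper: start from a uniform element of the zoom-out, perform an up-down step using a direction $v$ with $a^Tv\neq 0$, observe that the endpoint is uniform on $\T_{n,s+t}$ via a pre-image count, and conclude via Lemma~\ref{lm: expansion general}. The only differences are cosmetic: the paper fixes a single $v$ with $a^Tv\neq 0$ rather than sampling it, and works directly with general $a$ rather than first reducing to $a=e_1$ via precomposition with $S^{-1}$; both simplifications would streamline your write-up without changing the argument.
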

\begin{proof}
We show the proof for zoom-outs. The argument for zoom-outs on the linear
part is very similar.

Fix a zoom-out $\C_{a^T, b, \beta}$ denote by $\mu_{a^T, b, \beta}(\Sc_t)$ the measure of $\Sc_t$ in $\C_{a^T, b, \beta}$, 
that is, 
 $\frac{|\Sc_t \cap \C_{a^T, b, \beta}|}{|\C_{a^T, b, \beta}|}$. Fix an arbitrary $v$ such that $a^T \cdot v \neq 0$. Sample an $(s+t)$-affine transformation by first choosing $(M, c) \in \C_{a^T, b, \beta}$ uniformly at random, and then $(M', c')$ according to steps 3 and 4 with $v$. That is, if the $i$th $M$ column is $M_i$, then the $i$th column of $M'$ is $M_i + \alpha_i v$, and the affine shift is $c' = c + \alpha_0 v$ for uniformly random $\alpha_i's$ and $\alpha_0$ in $\Ff_q$. By Lemma~\ref{lm: expansion general}, if $T_1 \in \Sc_t \cap \C_{a^T, b, \beta}$, then $T_2 \in \Sc_t$ with probability at least $1/q$, so
\begin{equation*}
    \Pr[T_2 \in \Sc_t] \geq \frac{\mu_{a^T,b,\beta}(\Sc_t)}{q}.
\end{equation*}

On the other hand, notice that the distribution of $T_2$ is uniform over $\T_{n, s+t}$. Indeed, fix a $T = (M', c') \in \T_{n, s+t}$ and let $M'_i$ denote the $i$th column of $M'$. Then the only way to get $T_2 = T$ is to choose, 
\[
\alpha_i = \frac{a^T M'_i - b_i}{a^T v} \text{ for } 1 \leq i \leq n, \quad \text{ and }\quad \alpha_0 = \frac{a^Tc' - \beta}{a^Tv},
\]
and $(M, c) \in \C_{a^T, b, \beta}$ such that
\[
M_i = M'_i - \alpha_i v \text{ for } 1 \leq i \leq n, \quad \text{ and }\quad c = c' - \alpha_0v.
\]
Since $T_2$ is uniform over $\T_{n, s+t}$, we have
\[
  \mu(\Sc_t)  \geq \Pr[T_2 \in \Sc_t] \geq \frac{\mu_{a^T,b,\beta}(\Sc_t)}{q},
\]
and hence $\mu_{a^T,b,\beta}(\Sc_t) \leq q \mu(\Sc_t)$. Since this applies for all zoom-outs, the result follows. A similar argument works for zoom-outs on the linear part.

\end{proof}

\subsection{Pseudorandomness with respect to zoom-ins on the linear part} \label{sec: pseudorandomness zoom-in linear}
We next show pseudorandomness with respect to zoom-ins on the linear part. The argument is more involved. In particular, we use the relation to the $t$-flat test to reduce the proof of this statement to analogous 
statements on affine Grassmann graphs. 
We then prove this statement using ideas similar to~\cite{KM}.
Formally, in this section we show:
\begin{lemma} \label{lm: lin part pseudo overall}
    The set $\Sc_t$ is $tq^{162}\eps$ pseudorandom with respect to zoom-ins on the linear part $C_{a,b,\lin}$.
\end{lemma}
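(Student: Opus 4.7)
My approach is to leverage the reinterpretation of the sparse $(s+t)$-flat test as a standard $t$-flat test on the auxiliary function $\tilde{f}_A$ (Section~\ref{sec: relation to full flat tester}), in order to translate the desired pseudorandomness of $\Sc_t$ in the affine bi-linear scheme into an analogous pseudorandomness statement for the rejection set $S_A$ of the $t$-flat test in the affine Grassmann graph on $\Ff_q^{t+100}$, where techniques in the spirit of~\cite{KM} are available. Writing $a = (a_1, a_2) \in \Ff_q^{s} \times \Ff_q^{t}$, I would first apply Lemma~\ref{lm: invariances}: by an invertible affine change of basis on the last $t$ coordinates (which preserves the rejection status), we may reduce to the case $a_2 = e_1$ (if $a_2 \neq 0$) or $a_2 = 0$ otherwise. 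The main case will be $a_2 = e_1$; the factor $t$ in the final bound will come from reducing the $a_2 = 0$ case to it.

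For the main case $a_2 = e_1$, decompose a uniformly random $T \in \T_{n, s+t}$ as $T = A \circ B$ with $A \in \T_{n, s+t+100}$ and $B \in \Res_{s, t+100, t}$ uniform. Writing $M_A = [M_A^{(1)} \mid M_A^{(2)}]$ and $B = (R, b')$ with $R'$ the non-identity block of $R$, a direct calculation shows that $T \in \C_{a, b, \lin}$ iff $M_A^{(1)} a_1 + M_A^{(2)} R'_1 = b$, where $R'_1$ is the first column of $R'$. For \emph{good} $A$---meaning $M_A^{(2)}$ is full rank and $b - M_A^{(1)} a_1 \in \im M_A^{(2)}$, which is generic by Remark~\ref{rmk: non full rank}---the zoom-in condition uniquely determines $R'_1 = r_A$, while $R'_2, \ldots, R'_t$ and $b'$ remain uniform. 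A short count shows that the $t$-flat $\fl(B) = b' + \spa(R'_1, \ldots, R'_t)$ is thus distributed uniformly over the Grassmann zoom-in on the linear part $\D_{r_A, \lin}$ in $\AffGras(t+100, t)$, and by Lemma~\ref{lm: generic test}, $T \in \Sc_t$ iff $\fl(B) \in S_A$, where $S_A$ is the set of bad $t$-flats for $\tilde{f}_A$. I would then invoke a \cite{KM}-style pseudorandomness result for the $t$-flat test, asserting $\mu_{r_A, \lin}(S_A) \leq q^{c_1} \mu(S_A)$ for every $r_A$ and some absolute constant $c_1$; integrating over good $A$ and using $\E_A[\mu(S_A)] = \mu(\Sc_t) = \eps$ yields the bound $\mu_{a, b, \lin}(\Sc_t) \leq q^{c_1}\eps$ in this case.

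The remaining case $a_2 = 0$, where the zoom-in condition reduces to $M_A^{(1)} a_1 = b$ and is thus a condition on $A$ alone of probability $q^{-n}$, is the main obstacle: a naive bound would lose a factor of $q^n$. My plan is to reduce it to the previous case at one larger dimension $s+t+1$ via an up-step: append a uniformly random column to $T$ and relate the resulting Case A zoom-in at the larger dimension to an average of Case B zoom-ins there, by exploiting structural identities between the sparse test polynomials $H_e$ at consecutive dimensions. The factor $t$ in the final bound $tq^{162}\eps$ arises from the combinatorial loss in this averaging step, while $q^{162}$ absorbs $c_1$ together with the constants accrued along the way. The chief obstacles are establishing the quantitative \cite{KM}-style pseudorandomness of paragraph~2 uniformly in $A$, and executing the Case A reduction cleanly enough to preserve the stated bound.
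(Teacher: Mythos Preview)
Your two-case split according to whether the last $t$ coordinates of $a$ vanish is exactly the paper's structure, and the idea of passing through the auxiliary function $\tilde{f}_A$ and the $t$-flat test on $\Ff_q^{t+100}$ is also the right move. However, both cases are executed differently in the paper, and your version of each carries a genuine gap.

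\textbf{Main case ($a_2\neq 0$).} You want a per-$A$ bound $\mu_{r_A,\lin}(S_A)\le q^{c_1}\mu(S_A)$ and then to integrate over ``good'' $A$ using $\E_A[\mu(S_A)]=\eps$. The integration step is where this breaks: the event ``$A$ is good'' (namely $b-M_A^{(1)}a_1\in\im(M_A^{(2)})$) has probability $\approx q^{t+100-n}$, so conditioning on it is far from innocuous, and $\E_A[\mu(S_A)\mid A\text{ good}]$ need not be $\eps$. Remark~\ref{rmk: non full rank} only tells you that full-rankness is generic, not that hitting a prescribed point $b$ is. The paper avoids this entirely: it first uses the invariance of Lemma~\ref{lm: invariances} to replace $\C_{a,b,\lin}$ by the symmetrized set $\D_{a,b,t}=\{(M,c):b\in\im_a(M)\}$, and then, rather than integrating over $A$, it uses a single averaging step (Lemma~\ref{lm: zoom-in linear 1}) to pin down \emph{one} $A\in\Sc_t^{\uparrow^{100}}\cap\D_{a,b,t+100}$ for which the conditional rejection probability (conditioned on $b\notin\im_a(M')$, i.e.\ on the complement of the Grassmann zoom-in on the linear part) is at most $2\eps/\alpha$. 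On that fixed $A$ it then runs a dichotomy argument via Theorem~\ref{th: pseudo aff gras}: the set $\mathcal{B}$ of rejecting $t$-flats avoiding the special direction is nonempty (Lemma~\ref{lem:cal_B_nonempty}), hence in any $(t+40)$-flat $W$ its trace is either empty or of density $\ge q^{-100}$, which forces $2\eps/\alpha\ge q^{-160}$ and gives $\alpha\le q^{161}\eps$. No per-$A$ pseudorandomness statement and no integration over a conditioned family of $A$'s is needed.

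\textbf{Case $a_2=0$.} The paper does \emph{not} go up a dimension. It stays in $\T_{n,s+t}$ and uses a shear: since $a_1\neq 0$ (say its first coordinate is nonzero), and since any rejecting $T$ witnesses $\langle f\circ T,H_e\rangle\neq 0$ for some valid $e$ with some $e_i<q-1$ (because $r>0$), a pigeonhole over the index $i$ costs a factor $t$ --- this is where the $t$ in $tq^{162}\eps$ actually comes from, not from your up-step. After fixing the most common such index (say $i=1$), one applies the invertible shear $F_\beta(x)=(x_1,\ldots,x_s,\,x_{s+1}+\beta x_1,\,x_{s+2},\ldots,x_{s+t})$; a short Schwartz--Zippel computation shows $\langle f\circ T\circ F_\beta,H_e\rangle$ is a nonzero polynomial in $\beta$ of degree $\le q-2$, so for at least a $2/q$ fraction of $\beta$ (hence some $\beta^\star\neq 0$) one still rejects. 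Setting $a'=F_{\beta^\star}^{-1}(a)$ produces a point with $(s+1)$-st coordinate nonzero, and one is back in the first case with only a $tq$ loss. Your proposed route through $H_e$ at consecutive dimensions is vaguer and, as written, does not explain how to control the zoom-in condition after the up-step.
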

We start with a few definitions.
For a vector $a$ (of arbitrary length greater than $s$), let $a_{[s]}$ be the vector that is equal to $a$ in its first $s$ coordinates and $0$ in all of its other coordinates. Let $a|_{[s]}$ be the vector $a$ restricted to its first $s$ coordinates. For a matrix $M \in \Ff_q^{n \times (s + \ell)}$, denote 
\[
\im_a(M) = \{Ma' \; | \; a' \in \Ff_q^{\ell}, a'|_{[s]} = a|_{[s]}, a' \neq a'_{[s]}  \}.
\]
In words, $\im_a(M)$ is the image of elements $a'$ that agree with $a$ on 
the first $s$ coordinates and are not identically $0$ on the rest of the coordinates.

\skipi
Due to the asymmetry induced by the test, 
we have a different argument depending on 
the $0$ pattern of $a$. We will first show that $\Sc_t$ is pseudorandom with respect to zoom-ins on the linear part, $C_{a,b,\lin}$, where $a \neq a_{[s]}$,
by a direct argument. We will then prove
that $\Sc_t$ is pseudorandom with respect
to any zoom-in on the linear part by a 
reduction to this case.

\begin{lemma} \label{lm: lin part pseudo case}
The set $\mathcal{S}_t$ is $q^{161}\eps$ pseudorandom with respect to zoom-ins on the linear part $C_{a,b,\lin}$ with $a \neq a_{[s]}$.
\end{lemma}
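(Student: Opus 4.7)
The plan is to use the invariance of the sparse test under $\Res_{s,t,t}$ (Lemma~\ref{lm: invariances}) to normalize $a$, then lift to the $t$-flat test picture via Section~\ref{sec: relation to full flat tester}, and finally invoke a pseudorandomness statement for the rejection set of the $t$-flat test on the affine Grassmann graph in the spirit of~\cite{KM}.

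I will first canonicalize $a$. For any $B = (R, b') \in \Res_{s,t,t}$ with $R = \left(\begin{smallmatrix} I_s & 0 \\ 0 & R' \end{smallmatrix}\right)$ invertible, Lemma~\ref{lm: invariances} gives $T\in \Sc_t \iff T\circ B \in \Sc_t$, while $T\circ B \in \C_{a,b,\lin} \iff T\in \C_{Ra, b,\lin}$ with $Ra = (a|_{[s]}, R'a|_t)$. Since $a|_t\neq 0$, varying $R'\in GL(t,\Ff_q)$ shows $\mu_{a,b,\lin}(\Sc_t) = \mu_{(a|_{[s]}, v),b,\lin}(\Sc_t)$ for every nonzero $v\in \Ff_q^t$, so I may assume $a|_t = e_t$. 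I then lift each $T = (M_T, c_T) \in \T_{n,s+t}$ to $A \in \T_{n, s+t+100}$ by appending a uniformly random $M_{\mathrm{extra}}\in \Ff_q^{n\times 100}$ to $M_T$ and setting $c_A = c_T$. Setting $\tilde a := (a|_{[s]}, e_t, 0_{100})$, the event $T\in \C_{a,b,\lin}$ is equivalent to $A\in \C_{\tilde a, b,\lin}$, and a uniform $T\in \C_{a,b,\lin}$ gives a uniform $A\in \C_{\tilde a,b,\lin}$. By Lemma~\ref{lm: generic test} with $\ell = t + 100$, $T$ rejects the sparse test iff $\deg(\tilde f_A|_{U_0})\geq r$, where $U_0 := \Ff_q^t\times\{0\}^{100}$ and $\tilde f_A$ is as in Section~\ref{sec: relation to full flat tester}. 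Hence
\[
\mu_{a,b,\lin}(\Sc_t) \;=\; \Pr_A\bigl[\,U_0\in \Sc_t^{\mathrm{flat}}(A)\,\bigm|\, A\in \C_{\tilde a, b,\lin}\,\bigr],
\]
with $\Sc_t^{\mathrm{flat}}(A) := \{U\in \AffGras(t+100, t) : \deg(\tilde f_A|_U)\geq r\}$.

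Next I symmetrize $U_0$. The flat test is invariant under the bijection $A\mapsto A\circ (I_s\oplus \pi)$ for invertible $\pi\in \Aff(t+100)$, because $\tilde f_{A\circ (I_s\oplus \pi)} = \tilde f_A\circ \pi$; under this bijection, $A\in \C_{\tilde a, b,\lin}$ is transformed into $A\in \C_{\tilde a_v, b,\lin}$, where $\tilde a_v := (a|_{[s]}, v)$ and $v := \pi_{\lin}(e_t, 0_{100})$. Averaging over uniform $\pi$, the induced joint distribution of $(v, U)$ with $U := \pi(U_0)$ is: $v$ uniform in $\Ff_q^{t+100}\setminus\{0\}$, and $U$ uniform among $t$-flats with $v\in \lin(U)$. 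This yields the identity
\[
\mu_{a,b,\lin}(\Sc_t) \;=\; \E_{(v, U)}\,\Pr_A\bigl[\,U\in \Sc_t^{\mathrm{flat}}(A)\,\bigm|\, A\in \C_{\tilde a_v, b,\lin}\,\bigr].
\]
To finish, I would bound the inner conditional probability by a polynomial-in-$q$ factor times $\eps$. Unconditionally, $\E_A[\mu(\Sc_t^{\mathrm{flat}}(A))] = \eps$; the restriction $v\in \lin(U)$ costs at most the reciprocal density $q^{100}$ of a zoom-in on the linear part in $\AffGras(t+100, t)$; and the conditioning $A\in \C_{\tilde a_v, b,\lin}$ merely fixes the column combination $M_A^{(2)} v = b - M_A^{(1)} a|_{[s]}$. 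Applying a pseudorandomness statement for the $t$-flat test on $\AffGras(t+100, t)$ --- an analog of Theorem~\ref{th: pseudorandom} obtained by adapting~\cite{KM} --- shows that this last conditioning only loses a further $\mathrm{poly}(q)$ factor. Bookkeeping the losses (and handling non-full-rank affine transformations via Remark~\ref{rmk: non full rank}) yields the claimed $q^{161}\eps$ bound. The main obstacle is the last step: certifying that conditioning $A$ on a codimension-$n$ zoom-in on its linear part distorts the rejection density of the $t$-flat test on $\tilde f_A$ by only $\mathrm{poly}(q)$ rather than by $q^n$. This requires adapting the pseudorandomness-for-flats machinery of~\cite{KM} to our setting while carefully tracking the coupling between $v$ and $U$ that is forced by the restriction $v\in \lin(U)$.
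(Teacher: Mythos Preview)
Your setup (canonicalizing $a$ via Lemma~\ref{lm: invariances}, lifting to $A\in\T_{n,s+t+100}$, and expressing the density via the $t$-flat test on $\tilde f_A$) is fine and parallels the paper. The symmetrization over $\pi$ is also correct. The gap is precisely the step you flag as the ``main obstacle'', and your proposed resolution does not work. The quantity you must control is
\[
\Pr_{A}\bigl[\,U\in\Sc_t^{\mathrm{flat}}(A)\ \bigm|\ A\in\C_{\tilde a_v,b,\lin}\,\bigr],
\]
where the conditioning is on the \emph{outer} transformation $A$ (a codimension-$n$ event), while $\Sc_t^{\mathrm{flat}}(A)$ is itself a function of $A$. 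A pseudorandomness statement for the $t$-flat test on $\AffGras(t+100,t)$ concerns a \emph{fixed} set of flats coming from a fixed function; it says nothing about how the random set $\Sc_t^{\mathrm{flat}}(A)$ behaves when you condition the distribution of $A$. Invoking Theorem~\ref{th: pseudorandom} (or its Grassmann analog) here is a category error: you would need a statement about the set of $A$'s with large $\mu(\Sc_t^{\mathrm{flat}}(A))$ being pseudorandom with respect to zoom-ins on the linear part of $A$, which is essentially the lemma you are trying to prove.

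The paper avoids this circularity by a different mechanism. Rather than averaging over $A$, it \emph{fixes} a single good $A$ via an averaging argument (Lemma~\ref{lm: zoom-in linear 1}): one finds $A\in\Sc_t^{\uparrow^{100}}\cap\D_{a,b,t+100}$ such that, conditioned on the restriction $B$ satisfying $b\notin\im_a(A\circ B)$, the rejection probability is at most $2\eps/\alpha$. For this fixed $A$, everything collapses to a $t$-flat test question about the fixed function $\tilde f_A$ on $\Ff_q^{t+100}$, and the condition $b\notin\im_a(\cdot)$ becomes the clean Grassmann condition $w\notin\lin(U)$ for a specific point $w$. The rejecting-but-avoiding-$w$ set $\mathcal B$ is then (i) nonempty (because $A$ was chosen in the upper shadow) yet (ii) has measure $\le 2\eps/\alpha$. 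A bootstrap in the finite graph $\AffGras(t+40,t)$ (via Theorem~\ref{th: pseudo aff gras}) shows that any nonempty such $\mathcal B_W$ must have measure $\ge q^{-100}$; propagating this through the shadow lemma forces $2\eps/\alpha\gtrsim q^{-160}$, i.e.\ $\alpha\le q^{161}\eps$. The key idea you are missing is this ``fix $A$, then bootstrap the nonempty small set'' step, which replaces your attempt to control a conditional expectation over $A$ directly.
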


The proof of Lemma~\ref{lm: lin part pseudo case} requires some set-up and 
auxiliary statements, which we present next. Fix $a,b$ as in the lemma and suppose that $\Sc_t$ has $\alpha$ fractional size in $\C_{a,b, \lin}$, that is,
\begin{equation*}
    \frac{\mu(\mathcal{S}_t \cap \C_{a,b, \lin})}{\mu(\C_{a,b, \lin})} = \alpha.
\end{equation*}
 Consider another $a' \in \Ff_q^{s+t}$ such that $a'_{[s]} = a_{[s]}$, and $a'\neq a'_{[s]}$. In words $a'$ is equal to $a$ in its first $s$ coordinates, and nonzero in its last $t$ coordinates. There is an invertible matrix $R$ whose first $s$ rows and columns form the identity matrix, $I_s$, such that $R a' = a$. Composition with $R$ gives a bijection between the sets $\C_{a,b, \lin}$ and $\C_{a',b, \lin}$. Moreover, by Lemma~\ref{lm: invariances}, $(M, c) \in \mathcal{S}_t$ if and only if $(M\cdot R, c) \in \mathcal{S}_t$. Therefore, for any $a'$ satisfying  $a'_{[s]} = a_{[s]}$, and $a'\neq a'_{[s]}$ it holds that
\begin{equation*}
    \frac{\mu(\Sc_t \cap \mathcal{C}_{a',b,t})}{\mu(\mathcal{C}_{a',b,t})} = \frac{\mu(\Sc_t \cap \mathcal{C}_{a,b,t})}{\mu(\mathcal{C}_{a,b,t})} = \alpha.
\end{equation*}
Put another way, $\mathcal{S}_t$ has fractional size $\alpha$ in the set of $(M, c)$ such that $b \in \im_a(M)$, and thus we define 
\[
\D_{a,b,\ell} = \{(M, c) \in \T_{n, s+\ell} \; | \; b \in \im_a(M) \}.
\]

We can sample an $(M', c') \in \D_{a,b,t}$ by first choosing $(M, c) \in \D_{a, b,t+100}$, then choosing $(R, b) \in \Res_{s,t+100, t}$ uniformly at random, and outputting $(M', c') = (M, c) \circ (R, v) = (M\cdot R, c + Mv)$ conditioned on $M \cdot R \in \D_{a, b, t}$. Since the probability that $(M', c') \in \mathcal{S}_t$ is $\alpha$ and this can only happen if $(M, c) \in \mathcal{S}_t^{\uparrow^{100}}$, we get that 
\begin{equation*}
    \frac{\mu(\mathcal{S}_{b}')}{\mu(\D_{a,b,s+t+100})} \geq \alpha,
\end{equation*}
where $\mathcal{S}'_{b} =\mathcal{S}_t^{\uparrow^{100}} \cap \D_{a,b,t+100}$. Recall $\mathcal{S}_t^\uparrow$ is the set of $(s+t+1)$-affine transformations, $T$, for which there exists a restriction $R$ such that $T \circ R \in \Sc_t$, and $\mathcal{S}_t^{\uparrow^{100}}$ is the same operation applied $100$ times.

The following lemma shows that there is a way to fix the first $s$ columns of the test so that the rejection probability 
remains small:
\begin{lemma} \label{lm: zoom-in linear 1}
    There exists $(M, c) \in \mathcal{S}'_b$ such that choosing $(R,v) \in \Res_{t+100, t}$ uniformly at random and setting $M' = M \cdot R$, $c' = c + Mb$, we have
     \[
     \Pr_{B = (R, b) \in \Res_{s, t+100, t}}[(M, c) \circ B \in \Sc_t \; | \; b \notin \im_a(M')] \leq \frac{2}{\alpha}\epsilon.
     \]
\end{lemma}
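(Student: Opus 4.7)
My plan is to prove Lemma~\ref{lm: zoom-in linear 1} by a Markov-style averaging argument paired with a fiber counting analysis of the map $\phi\colon((M,c),B)\mapsto T'=(M,c)\circ B$ sending $\D_{a,b,t+100}\times\Res_{s,t+100,t}$ into $\T_{n,s+t}$. I begin by splitting the conditional probability as
\[
\Pr_B\!\left[T'\in\Sc_t \,\big|\, b\notin\im_a(MR)\right]
\;=\;\frac{\Pr_B\!\left[T'\in\Sc_t,\;b\notin\im_a(MR)\right]}{\Pr_B\!\left[b\notin\im_a(MR)\right]},
\]
and plan to bound the numerator on average over $(M,c)\in\mathcal{S}'_b$ and the denominator pointwise for most $(M,c)$.

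For the denominator, fix a full-rank $(M,c)\in\D_{a,b,t+100}$, write $M=[M_s,M_r]$ and set $\gamma=b-M_s\,a|_{[s]}\in\Ff_q^n$. Since $\gamma\in\im(M_r)$ and $M_r$ is injective, there is a unique nonzero $y^\star\in\Ff_q^{t+100}$ with $M_r y^\star=\gamma$, and unpacking $\im_a(MR)$ via the block form of $R$ shows that $b\in\im_a(MR)$ is equivalent to $y^\star\in\im(R')$. A uniformly random $t$-dimensional subspace of $\Ff_q^{t+100}$ contains a fixed nonzero vector with probability at most $(q^t-1)/(q^{t+100}-1)\le q^{-99}$, so the denominator is at least $1-2q^{-99}$ outside the negligible $q^{-\Omega(1)}$ fraction of $\D_{a,b,t+100}$ that is not full rank (by Remark~\ref{rmk: non full rank}).

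For the numerator I bound $\Pr_{(M,c)\in\mathcal{S}'_b,B}[T'\in\Sc_t,\,T'\notin\D_{a,b,t}]$ by counting, for each $T_0\in\Sc_t\setminus\D_{a,b,t}$, the number of pairs $((M,c),B)\in\D_{a,b,t+100}\times\Res_{s,t+100,t}$ with $\phi((M,c),B)=T_0$. Under uniform $((M,c),B)\in\T_{n,s+t+100}\times\Res_{s,t+100,t}$ the output $T'$ is uniform on $\T_{n,s+t}$ and every $\phi$-fiber has a common size $F$. Restricting to $(M,c)\in\D_{a,b,t+100}$ keeps only those completions of the $100$ ``free'' columns of $M_r$ (on a complement $V$ of $\im(R')$ in $\Ff_q^{t+100}$) for which $\gamma\in\im(M'_r)+\im(M_r|_V)$; Bonferroni / inclusion-exclusion over the $q^{100}-1$ nonzero coefficient vectors $\beta\in\Ff_q^{100}$ shows this fraction is $(1+o(1))(q^{100}-1)q^{t-n}$, uniformly over full-rank $T_0$. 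Combined with $\mu(\D_{a,b,t+100})=(1+o(1))(q^{t+100}-1)q^{-n}$ and $\mu(\mathcal{S}'_b)\ge\alpha\mu(\D_{a,b,t+100})$, this yields
\[
\Pr_{(M,c)\in\mathcal{S}'_b,\,B}\!\left[T'\in\Sc_t,\,T'\notin\D_{a,b,t}\right]\le \frac{(1+o(1))\epsilon}{\alpha}.
\]

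Dividing by the denominator bound and accounting for the $q^{-\Omega(1)}$ exceptional mass shows that the expectation of the conditional probability over $\mathcal{S}'_b$ is at most $(1+o(1))\epsilon/\alpha$, so Markov's inequality extracts an $(M,c)\in\mathcal{S}'_b$ with conditional rejection probability $\le \tfrac{2}{\alpha}\epsilon$. The main obstacle is making the fiber-counting step precise: confirming that the ``extension weight'' of every full-rank $T_0\notin\D_{a,b,t}$ into $\D_{a,b,t+100}$ is the same up to a $1+o(1)$ factor, and that the small collection of non-full-rank exceptions (for both $(M,c)$ and $T_0$) only perturbs the bound by an additive $q^{-\Omega(1)}$ error, which can be absorbed in the regime of $\epsilon$ relevant to the lemma.
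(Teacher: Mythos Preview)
Your approach and the paper's are the same averaging argument at heart: show that the expectation over $(M,c)\in\mathcal S'_b$ of the conditional rejection probability is at most $2\eps/\alpha$, then pick a good $(M,c)$. The difference is only in how the averaged bound is obtained. The paper works over the joint law where $(M,c)$ is uniform in $\T_{n,s+t+100}$ and $B$ is uniform in $\Res_{s,t+100,t}$, introduces the events $E_1=\{T'\in\Sc_t\}$, $E_2=\{(M,c)\in\mathcal S'_b\}$, $E_3=\{b\notin\im_a(M')\}$, $E_4=\{(M,c)\in\D_{a,b,t+100}\}$, and uses the (near\nobreakdash-)identity $\Pr[E_1\mid E_3\wedge E_4]=\Pr[E_1\mid E_3]$ together with $E_2\subseteq E_4$ to get
\[
\Pr[E_1\mid E_2\wedge E_3]\;\le\;\frac{\Pr[E_1\mid E_3]}{\Pr[E_2\mid E_4]\,\Pr[E_3\mid E_2]}\;\le\;\frac{\eps}{\alpha\cdot\tfrac12}
\]
in a few lines. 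Your fiber-counting is precisely the ``under the hood'' justification of that conditional-independence step: it amounts to showing that $\Pr[E_4\mid T'=T_0]$ is essentially constant over $T_0\notin\D_{a,b,t}$. So the two routes are equivalent, with the paper's packaging being shorter and yours more explicit.

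One point does need tightening. You describe the non-full-rank corrections as an ``additive $q^{-\Omega(1)}$ error, which can be absorbed in the regime of $\eps$ relevant to the lemma.'' That cannot be right as stated: here $\eps<q^{-M}$ for an arbitrarily large constant $M$, so an additive $q^{-\Omega(1)}$ term would dominate $\eps/\alpha$ and the Markov step would produce nothing useful. The correct observation is that the correction is \emph{multiplicative}: for every $T_0\notin\D_{a,b,t}$ (full rank or not, and for full-rank $R'$), passing to the quotient $\Ff_q^{n}/\im(M'_r)$ shows
\[
\Pr\bigl[(M,c)\in\D_{a,b,t+100}\ \big|\ T'=T_0\bigr]\;\le\;\frac{q^{100}-1}{q^{\,n-\mathrm{rank}(M'_r)}-1},
\]
which is \emph{maximized} when $\mathrm{rank}(M'_r)=t$ and hence bounded above by $(1+o(1))$ times its average over $T_0\notin\D_{a,b,t}$. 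Thus the restricted-fiber weights are uniformly at most $(1+o(1))p_0\,|\phi^{-1}(T_0)|$, the averaged numerator is $\le(1+o(1))\eps/\alpha$, and after dividing by the denominator bound you get $\le 2\eps/\alpha$ with no additive loss. With this correction your argument goes through.
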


\begin{proof}
    Choose a full rank $(s+t+100)$-affine transformation $(M, c)$ uniformly conditioned on $(M, c) \in \mathcal{S}'_b$, then choose an $(R,v) \in \Res_{s, t+100, t}$ uniformly and set $(M', c') = (M, c) \circ R$. For the remainder of this proof all probabilities are according to this distribution. Define the following events.
\begin{itemize}
    \item $E_1 = \{(M', c') \in \Sc_t\}$.
    \item $E_2 = \{(M, c) \in \mathcal{S}'_{b}\}$.
    \item $E_3 = \{ b \notin \im_{a}(M')\}$.
    \item $E_4 = \{ b \in \im_{a}(M)\}$.
\end{itemize}
First we note that
\[
\Pr[E_1 \; | E_3 \land E_4] = \Pr[E_1 \; | \; E_3],
\]
and as a result, 
\[
\Pr[E_1 \land E_3 \; | \; E_4]  = \Pr[E_1 \; | \; E_3 \land E_4] \cdot \Pr[E_3 \; | \; E_4] \leq \Pr[E_1 \; | \; E_3].
\]
We can then conclude
\begin{align*}
\Pr_{(M', c') = (M, c) \circ R}[E_1 \; | \; E_2 \land E_3] 
&=  \frac{\Pr[E_1 \land E_3 \land E_4]}{\Pr[E_2 \land E_3]} \\
&=  \frac{\Pr[E_1 \land E_3 ~|~ E_4]\Pr[E_4]}{\Pr[E_2 \land E_3]} \\
&\leq \frac{\Pr[E_1 \; | \; E_3] \Pr[E_4]}{\Pr[E_2 \land E_3]}\\
&= \frac{\Pr_{(M', c')}[E_1 \; |\; E_3]}{\Pr[E_2 \;| \; E_4] \Pr[E_3 \; | \; E_2]},
\end{align*}
where we use the fact that $E_2 \land E_4 = E_2$, and so $\Pr[E_2]/\Pr[E_4] = \Pr[E_2~|~E_4]$.

Notice that the numerator of the last fraction is at most $\epsilon$, while the denominator is at least $\alpha/2$, so this probability is at most $2\epsilon/\alpha$.  Thus, there exists $(M, c) \in \mathcal{S}'_{b}$ such that conditioned on this $(M, c)$, the probability $(M, c) \circ R \in S$ over $R \in \Res_{s, t+100, t}$ is at most $2\epsilon/\alpha$. 
\end{proof}

Fixing this $(M, c)$, the remaining test graph is now over $\Res_{s+t+100,s+ t}$ and is isomorphic $\AffShort(t+100, t)$. Moreover, by Lemma~\ref{lm: generic test}, we may focus solely on the flat given by $b' + \im(R')$ for $(R, b) \in \Res_{s, t+100, t}$. This allows us to define $\tilde{f}: \Ff_q^{t+100} \xrightarrow[]{} \Ff_q^{t}$ as done in Section~\ref{sec: relation to full flat tester}:
\[
\tilde{f}(\beta_1, \ldots, \beta_{t+100}) = \sum_{\alpha \in \Ff_q^s} f \circ A(\alpha, \beta_1, \ldots, \beta_{t+100}) \prod_{i=1}^{s/p} P(\alpha_{p(i-1)+1}, \ldots, \alpha_{pi}).
\]
By Lemma~\ref{lm: generic test} we now work with the standard $t$-flat test for $\tilde{f}$. The condition that $b \notin \im_a(M \cdot R)$ translates into the condition that the $t$-flat $U \subseteq \Ff_q^{t+100}$ does not contain the point $w \in \Ff_q^{t+100}$ equal the last $t+100$ coordinates of $a'$, where $a'$ is the unique point such that $Ma' = b$.

Translating Lemma~\ref{lm: zoom-in linear 1} gives
\begin{equation*} 
    \Pr_{B = z + A \in \AffGras(t+100, t)}[\deg(\tilde{f}|_{B}) \geq r \; | \; w \notin A] \leq \frac{2}{\alpha}\epsilon.
\end{equation*}
This is the same result shown at the start of the proof of \cite[Claim 3.4]{KM}, and the rest of the argument follows as therein. Let 
\[
\mathcal{B} = \{B = z + A \in \AffGras(t+100, t) \; | \; w \notin A,  \deg(\tilde{f}|_{A}) \geq r\}.
\]

\begin{lemma}\label{lem:cal_B_nonempty}
The set $\mathcal{B}$ is nonempty.
\end{lemma}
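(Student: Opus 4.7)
The plan is to combine the existence of a single rejecting $t$-flat, which comes from the hypothesis $(M, c) \in \mathcal{S}'_b$, with a soundness argument applied to the standard $t$-flat test in order to find a rejecting flat that avoids $w$.

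To begin, I would extract a rejecting flat from the assumption. Since $(M, c) \in \mathcal{S}'_b \subseteq \mathcal{S}_t^{\uparrow^{100}}$, by definition of the iterated upper shadow there exists a restriction $B_0 \in \Res_{s, t+100, t}$ such that $(M, c) \circ B_0 \in \mathcal{S}_t$. Applying Lemma~\ref{lm: generic test} gives a $t$-flat $B_0^{\star} = \fl(B_0) \subseteq \Ff_q^{t+100}$ with $\deg(\tilde{f}|_{B_0^{\star}}) \geq r$. In particular, this forces $\tilde{f}$ itself to have total degree at least $r$ as a polynomial on $\Ff_q^{t+100}$.

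With this in hand, I would invoke the classical soundness analysis of the $t$-flat tester for Reed-Muller codes (e.g.\ \cite{AKKLR, JPRZ}). Because $r \leq p(q - q/p)$ and $t \geq p + 2 \geq \lceil r/(q-q/p) \rceil$, this soundness result guarantees that a strictly positive fraction $\rho > 0$ of $t$-flats $B \subseteq \Ff_q^{t+100}$ satisfy $\deg(\tilde{f}|_B) \geq r$. A direct counting argument shows that the fraction of $t$-flats whose linear part contains the fixed nonzero point $w$ is at most $q^{-100}$, so the comparison $\rho > q^{-100}$ would immediately yield a $t$-flat $B$ with $\deg(\tilde{f}|_B) \geq r$ and $w$ not in the linear part of $B$, i.e.\ $B \in \mathcal{B}$.

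The main obstacle is verifying this quantitative comparison. The $100$ extra dimensions in $\Ff_q^{t+100}$ were chosen precisely so that the bad fraction $q^{-100}$ is much smaller than the typical rejection rate of the $t$-flat test; for small or moderate $t$ this is immediate from basic soundness bounds of the form $\rho \geq q^{-O(t)}$, while for larger $t$ one can either invoke sharper testing results such as those of \cite{HSS}, or iteratively restrict to carefully chosen lower-dimensional sub-flats in order to amplify the rejection rate relative to the bad fraction before concluding.
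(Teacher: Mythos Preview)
Your overall strategy is sound, but the quantitative step has a gap. The ``classical soundness'' results you cite (\cite{AKKLR,JPRZ}) bound the rejection probability of the $t$-flat test in terms of the \emph{distance} of $\tilde f$ to degree-$(r-1)$ polynomials; here you only know $\deg(\tilde f)\ge r$, not any distance bound, so those results do not directly give you a usable lower bound on $\rho$, and in particular there is no ``basic soundness bound of the form $\rho\ge q^{-O(t)}$'' that applies. The correct tool is the shadow lemma (Lemma~\ref{lm: uppershadow} / Lemma~\ref{lm: uppershadow 2}): since $\deg(\tilde f)\ge r$ on all of $\Ff_q^{t+100}$, iterating it $100$ times yields $\rho\ge q^{-100}$, and since the fraction of $t$-flats whose linear part contains $w$ is exactly $(q^{t}-1)/(q^{t+100}-1)<q^{-100}$, the comparison goes through. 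So your approach can be made to work, but only after replacing the soundness citation by the shadow lemma and being careful that the exponent is the codimension $100$, not $O(t)$.

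The paper's proof is a more direct local version of the same idea. Rather than comparing measures over all of $\Ff_q^{t+100}$, it takes the single rejecting $t$-flat you found, passes to any $(t{+}1)$-flat $B'$ containing it, and applies the shadow lemma \emph{once}: at least a $1/q$ fraction of the $t$-subflats of $B'$ have $\deg(\tilde f|_{\cdot})\ge r$, whereas the fraction of $t$-subflats of $B'$ whose linear part contains $w$ is at most $(q^{t}-1)/(q^{t+1}-1)<1/q$. This one-step comparison immediately yields a member of $\mathcal{B}$, avoiding both the $100$-fold iteration and any appeal to testing results.
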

\begin{proof}
Since we chose $(M,c) \in \Sc'_b$, there is at least restriction $B$ such that $(M,c) \circ B$ rejects $f$ and hence there is a $t$-flat on which $\tilde{f}$ has degree greater than $r$. It follows that if we sample a random $t$-flat $z + A = B'' \subseteq B'$, then $\deg(f|_{B''}) \geq r$ with probability at least $1/q$. If $\mathcal{B}$ were empty, however, we would have $\deg(f|_{B''}) \geq r$ only if $w \in A$. In this case the probability that $w \in A$ is at most 
    \[
    \frac{q^t - 1}{q^{t+1}-1} < \frac{1}{q}.
    \]
    Therefore $\mathcal{B}$ is nonempty.
\end{proof}

We now proceed to the proof of Lemma~\ref{lm: lin part pseudo case}.

\begin{proof}[Proof of Lemma~\ref{lm: lin part pseudo case}]
   By Lemma~\ref{lem:cal_B_nonempty} $\mathcal{B}$ is non-empty, so there must be a $t+40$ flat $W$ such that

\begin{equation*}
    \mathcal{B}_{W} = \{B \in  \mathcal{B} 
    \; | \; B \subseteq W
    \}.
\end{equation*}
is nonempty. Let $\mu_W$ denote the uniform measure over $\AffGras(W, t)$. This graph is isomorphic to $\AffGras(t+40, t)$, but the ground space $\Ff_q^{t+40}$ is viewed as $W$. We first claim that since $\mu_W(\mathcal{B}_W) > 0$, it must be at least $q^{-100}$. If not, then $0 < \mu_W(\mathcal{B}_W) < q^{-100}$. However, $\mathcal{B}_W$ is $q^{-60}$ pseudo-random with respect to zoom-ins and zoom-ins on the linear part simply due to its size. Indeed, any zoom-in or zoom-in on the linear part already has measure $q^{-40}$, so $\mathcal{B}_W$ can only contain a $q^{-60}$ fraction. Furthermore, $\mathcal{B}_W$ is also $q^{-50}$ pseudo-random with respect to zoom-outs and zoom-outs on their linear part by a similar argument to Lemma~\ref{lm: zoom-out}. Finally, by a similar argument to Lemma~\ref{lm: expansion specific}, $1- \Phi_W(\mathcal{B}_W) \geq \frac{1}{q}$. Altogether, this then contradicts Theorem~\ref{th: pseudo aff gras}, so we conclude that $\mu_W(\mathcal{B}_W) \geq q^{-100}$. 

Thus, there exists $W$ such that $\mu_W(\mathcal{B}_W) \geq q^{-100}$. Fix such a $W$ and sample a uniform $t+99$-flat $Y = u + V \subseteq z + A$ conditioned on $w \notin V$, a uniform $t+60$-flat $A_2 \subseteq Y$, and consider $A_2 \cap W$. We may think of $W$ as being defined by a system of $60$ independent linear equations $\langle h_1, x \rangle = c_1, \ldots, \langle h_{60}, x \rangle = c_{60}$. That is, $W$ is the subspace of $\Ff_q^{t+100}$ that satisfies these $60$ equations. Likewise, $A_2$ is given by the restriction of $39$ linear equations,  $\langle h'_1, x \rangle = c'_1, \ldots, \langle h'_{39}, x \rangle = c'_{39}$. The probability that all $99$ linear equations are linearly independent is at least
\begin{equation*}
    \prod_{j=0}^{38} \frac{q^{99}-q^{60+j}}{q^{99}} \geq e^{-2 \sum_{j=1}^{\infty} q^{-j}} \geq e^{-4/q}.
\end{equation*}
When all $99$ linear equations are linearly independent, $A_2 \cap W$ is uniform over $\AffGras(W, t)$. Thus, 
\[
\Pr[A_2 \cap W \in \mathcal{B}_W] \geq e^{-4/q}q^{-100}.
\]
If $A_2 \cap W \in \mathcal{B}_W$, then $A_2 \in \mathcal{B}_Y^{\uparrow^{60}}$, where the upper shadow is taken with respect to $\AffGras(Y, t)$, so it follows that
\[
E_Y[\mu_Y(\mathcal{B}_Y^{\uparrow^{60}})] \geq e^{-4/q}q^{-100}.
\]
However, by Lemma~\ref{lm: uppershadow},

\[
\mu_Y(\mathcal{B}_Y^{\uparrow^{60}}) \leq q^{60}\mu_Y(\mathcal{B}_Y),
\]
so altogether we get that
\[
E_Y[\mu_Y(\mathcal{B}_Y)] \geq e^{-4/q}q^{-160}.
\]
To conclude, note that the left hand side is at most the probability that $f|_{B}$ has degree greater than $r$ over uniform $B = x + A \subseteq \Ff_q^{t+100}$  such that $w \notin A$. By assumption, this probability is at most $\frac{2}{\alpha}\epsilon$ so 
\[
\alpha \leq e^{-4/q}q^{160}2\epsilon.
\qedhere
\]
\end{proof}
We are now ready to prove Lemma~\ref{lm: lin part pseudo overall}.
\begin{proof}[Proof of Lemma~\ref{lm: lin part pseudo overall}]
    We show that the set $\mathcal{S}_t$ is $tq^{198}\eps$ pseudorandom with respect to zoom-ins on the linear part $C_{a,b,\lin}$ for any $a \in \Ff_{q}^{s+t}$. 
    
    If $a \neq a_{[s]}$, then we are done by Lemma~\ref{lm: lin part pseudo case}, so suppose that $a = a_{[s]}$, meaning $a$ is zero outside of its first $s$ coordinates. Clearly $a$ must have at least one nonzero coordinate, as otherwise $a = 0$ and $C_{a, b, \lin}$ is either all of $\T_{n, s+t}$ (if $b = 0)$, or empty (if $b \neq 0$). The former case cannot happen by the assumption that $\mu(\mathcal{S}_t) \leq q^{-M}$, while the latter case is trivially true. Without loss of generality suppose that $a_1 = \alpha \neq 0$. 
    
    For each $T \in \Sc_t \cap C_{a,b,\lin}$, there is an $e \in  \{0,\ldots, q-1\}^t$ satisfying $\sum_{i=1}^{t} e_i \leq t(q-1) - r$ such that $\langle f \circ T, H_e \rangle \neq 0$. Furthermore, as $r > 0$, there must be some ``special index'' $i$ such that $e_i < q-1$. Take the most common special index over all $T \in \Sc_t \cap C_{a,b,\lin}$ and without loss of generality suppose it is $1$. Thus, for at least $1/t$ of $T \in \Sc_t \cap C_{a,b,\lin}$, we have $\langle f \circ T, H_e \rangle \neq 0$ for some $e$ such that $e_1 < q-1$, and $\sum_{i=1}^{t} e_i \leq t(q-1) - r$. Let $\mathcal{A}$ denote the set of these transformations.

     Consider the map, $F_{\beta}: \Ff_q^{s+t} \xrightarrow[]{} \Ff_q^{s+t}$ that sends $x_{s+1}$ to $x_{s+1} + \beta x_1$ and keeps all other coordinates unchanged. That is,
    \[
    F_{\beta}(x_1, \ldots, x_{s+t}) = (x_1, \ldots, x_s, x_{s+1} + \beta x_1, x_{s+2}, \ldots, x_{s+t}).
    \]
    For any $T \in \mathcal{A}$, we claim that
    \[
    \Pr_{\beta \in \Ff_q}[\langle f \circ T \circ F_\beta, H_e \rangle \neq 0] \geq \frac{2}{q},
    \]
    where $e$ is the exponent vector such that $e_1 < q-1$, $\sum_{i=1}^{t} e_i \leq t(q-1) - r$, and $\langle f \circ T, H_e \rangle \neq 0$.

    Indeed, $\langle f \circ T \circ F_\beta, H_e \rangle$ is a linear combination of the coefficients of $f \circ T \circ F_\beta$ which is a polynomial in $\beta$. In fact, by Lemma~\ref{lm: monomial inner product}, it is a linear combination of coefficients of monomials where the degree of $x_{s+1}$ is $q-1-e_1 > 0$. In every monomial of  $f \circ T \circ F_\beta$, the degree of $\beta$ and the degree of $x_{s+1}$ add to at most $q-1$ however, so it follows that the coefficients who contribute to $\langle f \circ T \circ F_\beta, H_e \rangle$ have degree in $\beta$ at most $q-2$. Therefore $\langle f \circ T \circ F_\beta, H_e \rangle$ is a polynomial in $\beta$ of degree at most $q-2$. Finally this polynomial is nonzero because it evaluates to a nonzero value at $\beta = 0$. The inequality then follows from the Schwartz-Zippel lemma.

    In particular, this means for each $T \in \mathcal{A}$ there is a nonzero $\beta$ such that $\langle f \circ T \circ F_\beta, H_e \rangle \neq 0$, and we choose $\beta^{\star}$ to be the most common such $\beta$ (we remark that the fact that $\beta^{\star}\neq 0$ is the reason we needed probability of $2/q$ rather than $1/q$). Thus, for at least $2/q$ of the transformations in $\mathcal{A}$ it holds that $T \circ F_{\beta^{\star}} \in \Sc_t$. 
    Letting $a' = F_{\beta}^{-1}(a)$, we get that for at least $2/q$ fraction of the $T\in\mathcal{A}$ it holds that 
    $T\circ F_{\beta^{\star}} \in \mathcal{C}_{a', b, \lin}\cap \Sc_t$, and as $F_{\beta^{\star}}$ is a bijection it follows that
    \[
        \frac{1}{t}\mu(\Sc_t\cap \mathcal{C}_{a,b,\lin})
        \leq\mu(\mathcal{A})
        \leq \frac{q}{2} \mu(\Sc_t\cap \mathcal{C}_{a',b,\lin}).
    \]
    Finally, note that $a'\neq a'_{[s]}$ as its $(s+1)$ coordinate is non-zero by design. Applying Lemma~\ref{lm: lin part pseudo case} we get that  
    $\mu(\Sc_t\cap \mathcal{C}_{a',b,\lin})\leq 
    q^{161}\eps \mu(\mathcal{C}_{a',b,\lin})$, and as $\mu(\mathcal{C}_{a',b,\lin})=\mu(\mathcal{C}_{a,b,\lin})$, combining with the above we get that
    \[
        \frac{\mu(\Sc_t\cap \mathcal{C}_{a,b,\lin})}{\mu(\mathcal{C}_{a,b,\lin})}
        \leq tq^{162}\eps.
        \qedhere
    \]
\end{proof}

\section{Correcting the error and iterating} \label{sec: correcting the error and iterating}
The results from the previous subsections, along with the assumption that $\mu(\Sc_t) \leq q^{-M}$, establish that $\Sc_t$ satisfies the following properties:
\begin{enumerate}
    \item $\mu(\Sc_t) \leq q^{-M}$
    \item $1 - \Phi(\Sc_t) \geq \frac{1}{q}$.
    \item $\Sc_t$ is $q^{1-M}$-pseudorandom with respect to zoom-outs and zoom-outs on the linear part.
    \item $\Sc_t$ is $tq^{162-M}$-pseudorandom with respect to zoom-ins on the linear part.
\end{enumerate}
Since we take $t \geq 10$, it follows from Theorem~\ref{th: pseudorandom} with $\xi = tq^{162-M}$ and $\ell = s+t$ that there exists a pair $a, b$ such that
\begin{equation} \label{eqn: special point}
    \mu_{a,b}(\Sc_t) =  \frac{\mu(\Sc_t \cap \C_{a,b})}{\mu(\C_{a,b})} \geq 1 - \frac{1}{(q-1)^2} -\frac{1}{q^6}- 2000tq^{200}q^{-M/4},
\end{equation}
for a large enough $M$.

How shall we go about using this information? In words,~\eqref{eqn: special point} tells us that $\Sc_t$ is dense on transformations sending $a$ to $b$, and this suggests that the point $b$ is an erroneous point for $f$ which we should fix and, as a result, improve the acceptance probability of the test. Furthermore, it
stands to reason that this change should 
affect all tests that come from transformations in $\mathcal{C}_{a,b}$.

Upon inspection however, these tests can only be affected in the case that $a \in \supp(H)$; otherwise, in the test we perform, the value $f(T(a)) = f(b)$ is multiplied by $0$, and hence changing the value of $f$ at $b$ does not affect the test at all. Thus, for the above strategy to work, we must prove that~\eqref{eqn: special point} holds for a pair $a,b$ wherein $a$ is in ${\sf supp}(H)$. 

\begin{lemma} \label{lm: error located}
    There exists $a \in \supp(H)$ and $b \in \Ff_q^n$ such that $\mu_{a,b}(\Sc_{t}) \geq 1 - \frac{1}{(q-1)^2} -\frac{1}{q^6}- 2000tq^{200}q^{-M/4}$.
\end{lemma}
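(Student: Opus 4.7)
My plan proceeds in two stages. First, I verify that $\Sc_t$ satisfies all the hypotheses of Theorem~\ref{th: pseudorandom}. The size bound $\mu(\Sc_t) \leq q^{-M}$ is our standing hypothesis; the expansion bound $1 - \Phi(\Sc_t) \geq 1/q$ is Lemma~\ref{lm: expansion specific}; and pseudorandomness with respect to zoom-outs, zoom-outs on the linear part, and zoom-ins on the linear part follow from Lemmas~\ref{lm: zoom-out} and~\ref{lm: lin part pseudo overall} respectively. Applying Theorem~\ref{th: pseudorandom} with $\xi = tq^{162-M}$ and $\ell = s+t$ produces some $a \in \Ff_q^{s+t}$ and $b \in \Ff_q^n$ meeting the desired density bound. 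By Lemmas~\ref{lm: support} and~\ref{lm: support of P}, $\supp(H) = \prod_{i=1}^{s/p}\supp(P) \times \Ff_q^t$, so $a \in \supp(H)$ amounts to $P(a_{p(i-1)+1}, \ldots, a_{pi}) \neq 0$ for every block $i$; and by Lemma~\ref{lm: invariances} the density $\mu_{a,b}(\Sc_t)$ depends only on the first $s$ coordinates of $a$, so I am free to modify the last $t$ coordinates. It therefore suffices to establish this ``first-block'' condition on $a|_{[s]}$.

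I will argue this by contradiction: suppose some block fails, WLOG $P(a_1, \ldots, a_p) = 0$, and aim for an upper bound on $\mu_{a,b}(\Sc_t)$ that violates the lower bound just obtained. To bound $\mu_{a,b}(\Sc_t)$, I use the decomposition from Section~\ref{sec: relation to full flat tester}: writing any $T \in \T_{n, s+t}$ as $T = A \circ B$ with $A \in \T_{n, s+t+100}$ and $B \in \Res_{s, t+100, t}$ chosen uniformly, Lemma~\ref{lm: generic test} states that $T$ rejects iff $\deg(\tilde{f}_A|_{\fl(B)}) \geq r$, where $\tilde{f}_A(w) = \sum_{\alpha \in \Ff_q^s} f(A(\alpha, w)) \prod_{i=1}^{s/p} P(\alpha_{p(i-1)+1}, \ldots, \alpha_{pi})$. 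The conditioning $T(a) = b$ translates, after unpacking the affine-shift constraint on $A$, into $A((a|_{[s]}, w)) = b$, where $w \in \Ff_q^{t+100}$ arises uniformly from $B$. A direct computation then shows that the ``direct contribution'' of $f(b)$ to $\tilde{f}_A(w)$, coming from the term $\alpha = a|_{[s]}$, has coefficient $\prod_{i=1}^{s/p} P(a_{p(i-1)+1}, \ldots, a_{pi}) = 0$. Hence, under this conditioning, the value $\tilde{f}_A(w)$ is essentially decoupled from the value $f(b)$.

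The main obstacle will be to convert this decoupling into an explicit upper bound of the form $\mu_{a,b}(\Sc_t) \leq \poly(q) \cdot \mu(\Sc_t)$. I expect this to require a second-moment / averaging argument analogous to the one used in Lemma~\ref{lm: lin part pseudo overall}, showing that the conditional zoom-in $t$-flat rejection rate on $\tilde{f}_A$ at the point $w$ matches, up to polynomial-in-$q$ factors, the unconditional rate $\mu(\Sc_t) \leq q^{-M}$. For $M$ sufficiently large, this then gives $\mu_{a,b}(\Sc_t) \leq \poly(q) \cdot q^{-M}$, which contradicts the lower bound obtained from Theorem~\ref{th: pseudorandom} and forces the first-block condition on $a$. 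Hence $a \in \supp(H)$, completing the proof.
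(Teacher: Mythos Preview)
Your first stage is fine: Theorem~\ref{th: pseudorandom} applied with $\xi = tq^{162-M}$ indeed produces a pair $(a,b)$ meeting the density bound, exactly as in the paper's equation~\eqref{eqn: special point}.

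The second stage, however, has a genuine gap. Your decoupling observation is correct: when $a \notin \supp(H)$ and $T \in \C_{a,b}$ is full rank, the value $f(b)$ is multiplied by $H_e(a)=0$ in the test, so it does not affect whether $T$ rejects. But this only tells you that $\mu_{a,b}(\Sc_t)$ is \emph{insensitive to the value of $f$ at $b$}; it does \emph{not} tell you that $\mu_{a,b}(\Sc_t)$ is small. The conditioning $T(a)=b$ is a constraint on the transformation $T$, not on the function value $f(b)$, and it can still bias the distribution of $T$ toward rejecting tests for reasons entirely unrelated to $f(b)$. Concretely, in your decomposition the condition becomes ``the $t$-flat $\fl(B)$ contains $w$'', and you are asking that the zoom-in rejection rate of the flat test on $\tilde f_A$ at $w$ be $\poly(q)\cdot\eps$. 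But zoom-in densities for the flat test can be close to~$1$ --- that is precisely the phenomenon the whole analysis is built around --- and nothing about $\tilde f_A(w)$ being independent of $f(b)$ prevents $\tilde f_A$ from having an ``error'' at $w$ coming from other values of $f$. The analogy to Lemma~\ref{lm: lin part pseudo overall} does not help: that argument works by composing with a map $F_\beta$ to move the problematic zoom-in to one covered by Lemma~\ref{lm: lin part pseudo case}, not by a direct second-moment bound from decoupling.

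The paper proceeds quite differently (Lemma~\ref{lm: pseudo not in support}). Rather than fixing the first $s$ coordinates and varying the last $t$, it reorders so that the failing $P$-block occupies the \emph{last} $p$ coordinates, then decomposes $T = A\circ B$ with $A \in \T_{n,s+t+102}$ and $B \in \Res_{s+t-p,\,p+102,\,p}$, so that the bad block lives in the \emph{varying} part. An averaging argument produces a single $A$ for which (i) the rejection rate outside the zoom-in on $z^\star$ is $O(\eps)$, yet (ii) some full-rank $B^\star$ with $A\circ B^\star \in \C_{a,b}$ rejects. One then passes to flats (Lemma~\ref{lm: random basis} and Lemma~\ref{lm: p+2}) and invokes the local-correction Lemma~\ref{lm: localization} to conclude that changing $\tilde f(z^\star)$ alone forces $\deg(\tilde f) < q(p-1)$, hence $\langle \tilde f \circ (R^\star,b^\star), P\rangle = 0$. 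But since $(a_{s+t-p+1},\ldots,a_{s+t}) \notin \supp(P)$ and $B^\star$ is full rank, the point $z^\star$ is not queried by the test $B^\star$, so changing $\tilde f(z^\star)$ cannot have altered that inner product --- a contradiction. This is exactly your decoupling observation, but it is used at the end as the source of a contradiction after localization, not at the beginning as a route to a direct density bound.
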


Once we have shown this lemma, we will be able to show that correction at $b$ can reduce the rejection probability of tests in $\C_{a,b}$; however, in order to show that the tester is optimal, we need to fix a larger fraction of tests with a single correction. Specifically, we need to be able to fix tests such that $T(a) = b$ for any $a \in \supp(H)$. In order for such an argument to work, we will have to show that $\Sc_t$ is dense in $\bigcup_{a \in \supp(H)}\C_{a,b}$.

\begin{lemma} \label{lm: support error}
    If there exists an $a \in \supp(H)$ such that $\mu_{a,b}(\Sc_t) \geq 1 -  \frac{1}{(q-1)^2} - \frac{1}{q^6}-2000tq^{200}q^{-M/4}$, then $\Sc_t$ has density at least $\geq 1 - \frac{1}{(q-1)^2} -\frac{1}{q^6}- \frac{2}{q^{100}(q-1)} - 2000tq^{200}q^{-M/4} - q^{-M/2}$ in $\cup_{a \in \supp(H)} \C_{a,b}$.
\end{lemma}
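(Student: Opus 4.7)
The plan is to propagate the high density of $\Sc_t$ on the single slice $\C_{a^{\star}, b}$ (where $a^{\star}$ denotes the point from the hypothesis) to the full union $\mathcal{U} = \bigcup_{a \in \supp(H)} \C_{a,b}$ by exploiting the symmetries of the sparse $(s+t)$-flat test, and then do a direct counting computation.

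First, I would use Lemma~\ref{lm: invariances}, which states that the rejection status is preserved under $T \mapsto T \circ B$ for any $B \in \Res_{s, t, t}$. For any $a \in \supp(H)$ sharing its first $s$ coordinates with $a^{\star}$, I choose such a $B$ satisfying $B(a) = a^{\star}$; the resulting bijection $\C_{a, b} \to \C_{a^{\star}, b}$ preserves $\Sc_t$, so $\mu_{a, b}(\Sc_t) = \mu_{a^{\star}, b}(\Sc_t) \geq 1 - \eta$, where $\eta = \frac{1}{(q-1)^2} + \frac{1}{q^6} + 2000tq^{200}q^{-M/4}$.

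Second, to reach $a \in \supp(H)$ with differing first $s$-coordinates, I would identify additional symmetries of the test polynomials $H_e = \prod_{i=1}^{s/p} P(x_{p(i-1)+1}, \ldots, x_{pi}) M_e$ acting non-trivially on the first $s$ coordinates. Let $G_P$ denote the group of invertible affine transformations on $\Ff_q^p$ satisfying $P \circ R = \lambda_R P$ for some nonzero $\lambda_R$. Then for any product of elements of $G_P$ acting block-wise, together with permutations of the $s/p$ blocks (both of which preserve $H_e$ up to a nonzero scalar), a change-of-variables computation yields $\langle f \circ T \circ R, H_e \rangle = \lambda H_e \cdot \langle f \circ T, H_e \rangle$ for a nonzero $\lambda$, so $\Sc_t$ is preserved. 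The key claim to verify is that this group acts transitively on $\supp(P)^{s/p}$: for $p = 2$, this follows from the fact that $\supp(P)\setminus\{0\}$ consists of the three punctured lines $\{x_1 = 0\}, \{x_2 = 0\}, \{x_1 + x_2 = 0\}$ permuted by an $S_3$ subgroup of $G_P$, while $\Ff_q^{*}$-scaling acts transitively on each line; the general case requires a similar but more intricate analysis of the hyperplane arrangement defining $\supp(P)$.

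Granted this transitivity, $\mu_{a, b}(\Sc_t) \geq 1 - \eta$ for every $a \in \supp(H)$. I would then compute the density in $\mathcal{U}$ via the following accounting. By Remark~\ref{rmk: non full rank}, for each $a$ the fraction of non-full-rank $T$ in $\C_{a, b}$ is at most $\eta_0 = \frac{1}{q^{100}(q-1)}$. For full-rank $T \in \mathcal{U}$, $T$ is in exactly one $\C_{a, b}$, so the full-rank part of the union is genuinely disjoint. This yields the bounds
\[
|\mathcal{U} \cap \Sc_t| \geq \sum_a |\C_{a, b} \cap \Sc_t \cap \text{full rank}| \geq (1 - \eta - \eta_0)\,|\supp(H)| \cdot |\C_{a, b}|,
\]
while $|\mathcal{U}| \leq (1 + \eta_0)\,|\supp(H)| \cdot |\C_{a, b}|$. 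Taking the ratio and simplifying gives density at least $1 - \eta - 2\eta_0 - q^{-M/2}$, where the last term absorbs the lower-order slack arising from the approximations above.

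The main obstacle I anticipate is the verification that the symmetry group of $H_e$ acts transitively on $\supp(P)^{s/p}$ for general $p$; while fairly clean for $p=2$ by the $S_3$-symmetry between the three lines, for larger $p$ one must carefully analyze the group of affine transformations preserving the union of hyperplanes $\bigcup_{i=1}^{p-1}\{x_i = 0\} \cup \bigcup_{I \subseteq [p-1]}\{x_I + x_p = 0\}$ and verify it acts transitively on $\supp(P)$.
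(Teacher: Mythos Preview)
Your Step 1 is correct: composing with an invertible $B \in \Res_{s,t,t}$ is a bijection of zoom-ins that preserves $\Sc_t$ by Lemma~\ref{lm: invariances}, so the density transfers to all $a$ agreeing with $a^{\star}$ on the first $s$ coordinates.

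The genuine gap is in Step 2. You need the symmetry group $G_P = \{R \in \mathrm{GL}_p(\Ff_q) : P \circ R = \lambda_R P\}$, together with block permutations, to act transitively on $\supp(P)$. You verify this for $p=2$, where the three lines $\{x_1=0\},\{x_2=0\},\{x_1+x_2=0\}$ are permuted by an $S_3$ inside $G_P$ and scalars act transitively on each punctured line. For general $p$ this is far from clear and may well be false: in the definition of $P$, the hyperplanes $\{x_i = 0\}$ for $i \in [p-1]$ appear in the \emph{denominator} while $\{x_I + x_p = 0\}$ appear in the \emph{numerator}, so there is no a~priori reason for a linear symmetry of $P$ to exchange these two families. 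Moreover, Lemma~\ref{lm: support of P} only gives a containment $\supp(P) \subseteq \bigcup \{x_i=0\} \cup \bigcup \{x_I+x_p=0\}$, not equality, so even transitivity on the hyperplane arrangement would not suffice. Since this transitivity is the entire content of Step 2, the proposal does not establish the lemma for general $p$.

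The paper's proof takes a completely different route that does not use any symmetry of $P$. It passes from affine transformations to flats: since $T$ can only reject when $\deg(f|_{\im(T)}) > d$, the hypothesis $\mu_{a^{\star},b}(\Sc_t) \geq 1 - \eta$ forces the set $\U_{s+t}$ of $(s+t)$-flats $U$ with $\deg(f|_U) > d$ to have density at least $1-\eta$ among flats through $b$. One then samples a random $(s+t+100)$-flat $V \ni b$ and applies the localization Lemma~\ref{lm: localization} to $f|_V$: with probability at least $1 - \eta - q^{-M/2}$ over $V$, there is a value $\gamma \neq f(b)$ such that changing $f(b)$ to $\gamma$ makes $\deg(f|_V) \leq d$. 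For any such $V$ and any full-rank $T$ with $\im(T) \subseteq V$ and $T(v) = b$ for some $v \in \supp(H)$, the corrected function $f'$ satisfies $\langle f' \circ T, H_e \rangle = 0$ for all valid $e$, while $\langle f \circ T, H_e \rangle - \langle f' \circ T, H_e \rangle = H_e(v)(f(b)-\gamma) \neq 0$ for the valid $e$ with $v \in \supp(H_e)$. Hence $T$ rejects $f$. This argument treats all $v \in \supp(H)$ uniformly, with no need to relate different zoom-ins to one another; it is also what makes the method extend to arbitrary test polynomials of product form in Theorem~\ref{thm:main_2}.
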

In words, $\cup_{a \in \supp(H)} \C_{a,b}$ is the set of $(s+t)$-affine transformations $T$ such that $Ta = b$ for some $a \in \supp(H)$. In the next two subsections we prove Lemmas~\ref{lm: error located} and \ref{lm: support error}. After showing these lemmas, the way to perform corrections will become obvious and we quickly conclude the proof of Theorem~\ref{thm:main}. Both Lemmas are proved in a similar manner, and we first present a general lemma that will be used in both proofs. Let $g: \Ff_q^{\ell + 100}$ be an arbitrary polynomial, let $d$ be a degree parameter, let $\nu$ denote uniform measure over $\ell$-flats in $\Ff_q^{\ell + 100}$, and let $b \in \Ff_q^{\ell + 100}$ be an arbitrary point. Define the following two sets:
\[
\mathcal{A} = \{U  \; | \;  \dim(U) = \ell, \deg(f|_U) > d, b \in U\},
\]
\[
\mathcal{B} = \{U \; | \;  \dim(U) = \ell, \deg(f|_U) > d, b \notin U\}.
\]
Keep $\epsilon$ and $M$ (the large absolute constant) the same as we have defined, so that $q^{M/2}O(\epsilon) < q^{-M/2}$ is small. We will use following result, which is an extension of the results in Section 3.2 of \cite{KM}:
 \begin{lemma} \label{lm: localization}
    Keep $g,d, b,$ and $\mathcal{B}$ as defined above and suppose $\ell \geq \max(\lceil \frac{d+1}{q-q/p} \rceil, 4)$. If $\nu(\mathcal{B}) \leq q^{M/2}O(\epsilon)$ then $\mathcal{B} = \emptyset$. Moreover there is a value $\gamma$ such that after changing $g(b)$ to $\gamma$, $\nu(\mathcal{A}) = 0$.
    
    As a consequence of these two points, $\deg(g) \leq d$ after changing the value of $g(b)$ to $\gamma$.
\end{lemma}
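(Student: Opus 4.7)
The plan is to prove the lemma in three steps, following and extending the approach of~\cite{KM}, Section~3.2.

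Step~1 (showing $\mathcal{B}=\emptyset$): I argue by contradiction. Suppose some $U_0\in\mathcal{B}$ exists, so $b\notin U_0$ and $\deg(g|_{U_0})>d$. The number of $(\ell+1)$-flat extensions $V\supseteq U_0$ with $b\notin V$ is $(q^{100}-q)/(q-1)$, call this collection $\mathcal{N}$. For every $V\in\mathcal{N}$, $\deg(g|_V)>d$, and by applying the shadow lemma (Lemma~\ref{lm: uppershadow}) within the ambient space $V$, at least a $1/q$ fraction of the $\ell$-subflats of $V$ have degree $>d$; since $b\notin V$, all such subflats lie in $\mathcal{B}$. Any two distinct flats $V_1,V_2\in\mathcal{N}$ intersect precisely in $U_0$, so the $\ell$-subflats other than $U_0$ they contribute are counted without repetition across $\mathcal{N}$. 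A straightforward double count yields $|\mathcal{B}|\geq|\mathcal{N}|\cdot\Omega(q^{\ell})=\Omega(q^{\ell+99})$, so $\nu(\mathcal{B})\geq q^{-c(\ell)}$ for some constant $c(\ell)$. Taking the absolute constant $M$ large enough that $q^{M/2}O(\epsilon)\leq q^{-M/2}<q^{-c(\ell)}$ then contradicts the hypothesis.

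Step~2 (finding $\gamma$): Now $\mathcal{B}=\emptyset$, so every $\ell$-flat $U$ with $b\notin U$ satisfies $g|_U\in\RM[\ell,q,d]$. I claim $g=\tilde g+c\cdot\ind_b$ for some $\tilde g\in\RM[\ell+100,q,d]$ and $c\in\Ff_q$; setting $\gamma=\tilde g(b)$ then makes the modified function equal to $\tilde g$. Dualizing, the claim is equivalent to $g\perp\mathcal{K}$ where $\mathcal{K}=\{h\in\RM[\ell+100,q,d]^{\perp}:h(b)=0\}$. Since $\ell\geq\lceil(d+1)/(q-q/p)\rceil$, the local characterization guarantees that $\RM[\ell+100,q,d]^{\perp}$ is spanned by the zero-extensions $\bar h_{U,i}$ of local parity checks on $\ell$-flats. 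An exchange argument then shows every $h\in\mathcal{K}$ can be rewritten as $\sum_{U\not\ni b,i}\alpha_{U,i}\bar h_{U,i}$: starting from any decomposition of $h$ and using the constraint $h(b)=0$, one cancels out the contributions from flats through $b$ by trading them against local parity checks on nearby flats avoiding $b$. With this decomposition in hand, $\langle g,h\rangle=\sum_{U,i}\alpha_{U,i}\langle g|_U,h_{U,i}\rangle=0$ for every $h\in\mathcal{K}$, since each summand vanishes by $\mathcal{B}=\emptyset$.

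The main obstacle is the exchange argument in Step~2: showing $\mathcal{K}\subseteq\spa\{\bar h_{U,i}:U\not\ni b\}$ requires care, because the naive local decomposition of $h\in\mathcal{K}$ involves flats through $b$, and replacing these by combinations of flats avoiding $b$ must carefully exploit the algebraic structure of how local parity checks on intersecting $\ell$-flats combine. Step~1's counting, by contrast, is routine given the shadow lemma. After Step~2 is complete, the consequence $\deg(g)\leq d$ after correction follows immediately: $\mathcal{B}$ is unchanged by modifying the value at $b$, $\mathcal{A}$ becomes empty by construction of $\gamma$, and hence every $\ell$-flat restriction of the modified function has degree $\leq d$, from which the local characterization at dimension $\ell$ gives $\deg\leq d$ globally.
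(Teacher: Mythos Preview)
Your Step~1 has a genuine gap. The double count yields $|\mathcal{B}| \geq \Omega(q^{\ell+99})$, but the total number of $\ell$-flats in $\Ff_q^{\ell+100}$ is of order $q^{100\ell+100}$ (the Gaussian binomial $\binom{\ell+100}{100}_q$ alone contributes a factor $\approx q^{100\ell}$). Hence your bound is only $\nu(\mathcal{B}) \geq q^{-\Theta(\ell)}$, and forcing $q^{-M/2} < q^{-c(\ell)}$ makes $M$ depend on $\ell$. But $M$ must be an absolute constant, while in the main application of this lemma (the proof of Lemma~\ref{lm: support error}) one takes $\ell = s+t$, which grows with $d$. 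The paper obtains an $\ell$-independent bound by invoking global hypercontractivity inside a fixed-codimension window: for any $(\ell+40)$-flat $W$ with $b \notin W$, the set $\mathcal{B}_W = \{U \in \mathcal{B} : U \subseteq W\}$ is non-expanding in $\AffGras(W,\ell)$ and, whenever $\mu_W(\mathcal{B}_W) < q^{-100}$, is pseudorandom with respect to every type of zoom simply by virtue of its size; Theorem~\ref{th: pseudo aff gras} then forces the dichotomy $\mu_W(\mathcal{B}_W) \in \{0\} \cup [q^{-100},1]$. A random-intersection argument converts this local density boost into a global lower bound $\nu(\mathcal{B}) \geq q^{-O(1)}$, which does contradict the hypothesis for a fixed absolute $M$.

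Your Step~2 is also incomplete. You reduce to the claim $\mathcal{K} \subseteq \spa\{\bar h_{U,i} : b \notin U\}$ and correctly flag it as the main obstacle, but give no argument, and there is no evident exchange mechanism for trading a parity check through $b$ for a combination of checks avoiding $b$. The paper's route is different and more concrete: for each $(\ell+1)$-flat $U' \ni b$ one adds a multiple of $\ind_{\{x \neq b\}}$ to $g|_{U'}$ so as to kill its top monomial; since $\mathcal{B} = \emptyset$, the resulting function has degree $\leq d$ on the $(1-1/q)$-fraction of $\ell$-subflats of $U'$ avoiding $b$, and the sharp form of the shadow lemma (Lemma~\ref{lm: uppershadow 2}, item~2) then forces degree $\leq d$ on all of $U'$. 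This gives the weaker conclusion $\nu_b(\mathcal{A}) \leq 1-1/q$ after a single-point correction, and a second appeal to Theorem~\ref{th: pseudo aff gras} bootstraps it to $\nu(\mathcal{A}) = 0$. Your dual-code formulation is, by duality, equivalent to what the paper ends up proving, so you have reduced to a statement no easier than the original.
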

\begin{proof}
The proof is deferred to Section~\ref{sec: self-correct}. 
\end{proof}
\subsection{Proof of Lemma~\ref{lm: error located}}
To establish Lemma~\ref{lm: error located} we show that $\Sc_t$ is pseudorandom with respect to zoom-ins outside of the support, hence the zoom-in found in~\eqref{eqn: special point} must be in 
the support of $H$. Specifically, we show:
\begin{lemma} \label{lm: pseudo not in support}
    For any $a \notin \supp(H)$ and any $b \in \Ff_q^n$, $\mu_{a,b}(\Sc_t) < 1 - \frac{1}{(q-1)^2} -\frac{1}{q^6}- 2000tq^{200}q^{-M/4}$.
\end{lemma}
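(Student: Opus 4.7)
My plan is to exploit the fact that $a\notin\supp(H)$ implies the value $f(b)$ does not enter the sparse flat test (up to negligible non-full-rank corrections), so conditioning on $T(a)=b$ should not substantially inflate the rejection probability beyond $\mu(\Sc_t)=\eps$. First, using Lemma~\ref{lm: invariances} with a suitable $B\in\Res_{s,t,t}$ whose affine shift in the last $t$ coordinates equals $-a_2$, I would reduce to the case $a=(a_1,0)\in\Ff_q^s\times\Ff_q^t$. Under this reduction, the condition $a\notin\supp(H)$ becomes $\prod_iP(a_1^{(i)})=0$, so without loss of generality $P(a_1^{(1)})=0$.

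Next, I would invoke the reduction from Section~\ref{sec: relation to full flat tester} with $\ell=t+100$. Sampling $T\in\C_{a,b}$ uniformly amounts to sampling $v'=B'(0)\in\Ff_q^{t+100}$ uniformly together with $A$ uniform in the zoom-in $\C_{(a_1,v'),b}\subseteq\AffBilin(n,s+t+100)$, and by Lemma~\ref{lm: generic test} the test reduces to asking whether $\tilde{f}_A$ has degree at least $r$ on the $t$-flat $v'+\im(R')$, which contains $v'$. The key observation is that for $A$ with full-rank linear part, $\tilde{f}_A(z)=\sum_\alpha f(A(\alpha,z))\prod_iP(\alpha^{(i)})$ does not depend on $f(b)$: the only $(\alpha,z)$ with $A(\alpha,z)=b$ is $(a_1,v')$, and its coefficient $\prod_iP(a_1^{(i)})$ vanishes by the preceding step. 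By Remark~\ref{rmk: non full rank}, the non-full-rank $A$'s contribute at most $O(q^{-100})$ to the rejection probability.

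I would then combine this vanishing with a counting (``shadow-type'') argument. After a change of variables $\alpha\mapsto\alpha+a_1$ and $z\mapsto z+v'$, the sum defining $\tilde{f}_A$ takes the form of the standard $\tilde{f}_{A'}$ for a related affine map $A'$ whose distribution differs from uniform only through correlations that are annihilated by the vanishing coefficient $P_{a_1}=0$. Using the fact that a uniform $t$-flat in $\Ff_q^{t+100}$ passes through any fixed point with probability $q^{-100}$, the conditional rejection on flats through $v'$ exceeds the overall flat-test rejection by at most a factor of $q^{100}$. Putting this together yields a bound of the form
\[
\mu_{a,b}(\Sc_t)\leq q^{O(1)}\mu(\Sc_t)+O(q^{-100})\leq q^{O(1)-M}+O(q^{-100}),
\]
which for $M$ sufficiently large is well below $1-1/(q-1)^2-1/q^6-2000tq^{200}q^{-M/4}$.

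The hard part will be making the distributional coupling rigorous. The conditioning $A\in\C_{(a_1,v'),b}$ is on a rare event of measure roughly $q^{t+100-n}$, and a naive bound that simply divides by this probability gives a useless estimate. The vanishing $P(a_1^{(1)})=0$ is essential precisely because it decouples the conditioning from the function $\tilde{f}_A$ itself, thereby enabling a coupling in which the effective affine map $A'$ behaves as though it were uniform. Verifying this carefully will likely require combining an adaptation of the shadow lemma with the explicit algebraic identity obtained from the change of variables.
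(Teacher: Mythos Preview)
Your proposal has a genuine gap at the ``coupling'' step. The change of variables $\alpha\mapsto\alpha+a_1$ does \emph{not} produce a standard $\tilde f_{A'}$: the weight $\prod_i P(\alpha^{(i)})$ becomes $\prod_i P(\alpha^{(i)}+a_1^{(i)})$, which is a different polynomial, so what you get is an inner product against a \emph{shifted} test family $H_e^{a_1}$, not the original one. Equivalently, conditioning on $T(a)=b$ with $a=(a_1,0)$ forces $c=b-M_1a_1$, so the linear part and the shift of $T$ become correlated; the test value is $\sum_x H_e(x)f(Mx+b-M_1a_1)$, and nothing in the vanishing $\prod_iP(a_1^{(i)})=0$ annihilates this correlation. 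Your observation that $f(b)$ itself carries zero weight is correct but does not address this: the bias comes from the constraint on $T$, not from the single value $f(b)$. You aim for the very strong conclusion $\mu_{a,b}(\Sc_t)\le q^{O(1)}\eps$, which is much more than the lemma asks for, and I do not see how to extract it from the ingredients you list; the ``factor $q^{100}$'' claim compares the flat test on $\tilde f_A$ \emph{through $v'$} to the flat test on $\tilde f_A$ overall, but the latter is itself computed under the zoom-in conditioning on $A$ and is not a priori $O(\eps)$.

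The paper proceeds quite differently, by contradiction. Assuming $\mu_{a,b}(\Sc_t)$ is close to $1$, it reorders coordinates so the ``bad'' $P$-block sits in the last $p$ slots, samples $T=A\circ B$ with $A\in\T_{n,s+t+102}$ and $B\in\Res_{s+t-p,p+102,p}$, and by averaging finds a single $A$ for which (i) there is a full-rank $B^\star$ with $A\circ B^\star\in\C_{a,b}\cap\Sc_t$, and (ii) over $B$ with $z^\star\notin\fl(B)$ the rejection probability is $O(\eps)$. Passing to flats via a Schwartz--Zippel argument and the shadow lemma, it feeds (ii) into the localization Lemma~\ref{lm: localization} to conclude that a correction of the auxiliary $\tilde f$ at the single point $z^\star$ makes $\deg(\tilde f)<q(p-1)$. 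The punchline is exactly the fact you noticed but used differently: since the last $p$ coordinates of $a$ lie outside $\supp(P)$, the full-rank $B^\star$ never sends a point of $\supp(P)$ to $z^\star$, so changing $\tilde f(z^\star)$ cannot affect $\langle \tilde f\circ(R^\star,b^\star),P\rangle$ --- yet after the correction this inner product must be $0$, contradicting (i). Thus the paper uses ``$a\notin\supp(H)$'' to derive a contradiction via local correction, not to decouple distributions.
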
 
We begin the proof of Lemma~\ref{lm: pseudo not in support}, and we assume for the sake of contradiction that the lemma is false. That is, suppose there is $a \notin \supp(H)$ such that $\mu_{a,b}(\Sc_t) \geq 1 - \frac{1}{(q-1)^2} - \frac{1}{q^6}-2000tq^{200}q^{-M/4}$. Since $a \notin \supp(H)$, there are $p$ consecutive coordinate $(a_{pi + 1}, \ldots, a_{pi + p-1}) \not\in \supp(P)$. It will be convenient to swap the order of the coordinates so that these are the last $p$ coordinates. Thus, the polynomial $H_e(x)$ for $e \in  \{0,\ldots, q-1\}^t$ is now,
\[
H_e(x_1, \ldots, x_{s+t}) = \left(\prod_{i=1}^{s/p-1}P(x_{p(i-1)+1}, \ldots, x_p)\right)x_{s-p+1}^{e_1}\cdots x_{s-p+t}^{e_t}P(x_{s+t-p+1}, \ldots, x_{s+t}).
\]
The test with affine transformation $T$ checks if $\langle f \circ T, H_e \rangle = 0$ for all valid $e$ with $H_e$ as defined above. Reordering the variables in this way changes the support, so that our assumption $a \notin \supp(H)$ becomes, without loss of generality, $(a_{s+t-p+1}, \ldots, a_{s+t}) \notin \supp(P)$, which will make our notation later a bit simpler (this is the only reason for reordering the variables). We will once again sample a larger affine transformation and choose a restriction to obtain a $T \in \T_{n, s+t}$, however the restrictions this time will be slightly different. We will require the restrictions to be full rank and we will fix the first $s+t-p$ columns/coordinates of $T$ as opposed to the first $s$ as before. Let $\Res_{s+t-p, p+102, p}$ consist of affine transformations $(R,b)$ of the form:

\begin{equation} \label{eqn: Res'}
 R =  \begin{bmatrix}
   I_{s+t-p} & 0 \\
   0 & R'
   \end{bmatrix}, b = \begin{bmatrix} 0 \\ b' \\ \end{bmatrix},
\end{equation}
where $R' \in \Ff_q^{(p+102) \times p}$ is full rank and $b' \in \Ff_q^{p+102}$. For $B =(R,b) \in \Res_{s+t-p, p+102, p}$ of this form, let $\fl(B) = b' + \im(R')$. Also, for an affine transformation $T$, define $\im_a(T) = \{Tx \; : \; x_{[s+t-p]} = a_{[s+t-p]} \}$, where recall $a_{[s+t-p]}$ is $a$ with all coordinates outside of the first $s+t-p$ set to zero. Sample a $T \in \T_{n, s+t}$ as follows:
\begin{enumerate}
    \item Choose a full rank $A \in \T_{n, s+t+102}$ such that $b \in \im_a(A)$.
    \item Choose $B \in \Res_{s+t-p, p+102, p}$.
    \item Output $T = A \circ B$.
\end{enumerate}
After $A$ is chosen, the first $s+t-p$ columns/coordinates of $A$'s linear part/affine shift are fixed, while the remaining parts are composed with some random restriction. Thus, once $A$ is fixed there is a unique $x^{\star}$ such that $Ax^{\star} = b$ and we can only have $Aa = b$ if $Ba = x^{\star}$. In particular, this only happens if $x^{\star} \in \im(B)$ where by design this $x^{\star}$ satisfies $x^{\star}_{[s+t-p]} = a_{[s+t-p]}$.  Setting $z^{\star} \in \Ff_q^{p+102}$ to be the last $p+102$ coordinates of $x^{\star}$, it follows that $Ta = b$ only if $z^{\star} \in \fl(B)$. Let $\mu_A$ denote measure in $\Res_{s+t-p, p+102, p}$. Define
\[
\mathcal{R}_A = \{B \; |\; A \circ B \text{ rejects}, z^{\star} \notin \fl(B) \}.
\]

If, after choosing $A$ and constructing $z^{\star}$ as above, we then condition on $z^{\star} \notin \fl(B)$ then $A \circ B$ is a uniformly random over full-rank transformations in $\T_{n, s+t}$ conditioned on its image not containing $b$. Therefore, $\E_{A}[\mu_A(\mathcal{R}_A)] \leq O(\epsilon)$, and with probability at least $1/2$ we have $\mu_A(\mathcal{R}_A) \leq O(\epsilon)$. 

On the other hand, if after choosing $A$ we choose $B$ uniformly such that  $A \circ B \in \C_{a,b}$, the distribution over $T$ is uniform over full rank $T \in \C_{a,b}$. By Remark~\ref{rmk: non full rank}, the fraction of affine transformations in $\C_{a,b}$ that are not full rank is at most $\frac{1}{q^{100}(q-1)}$ and by our assumption the set of rejecting transformations is dense in $\C_{a,b}$. Thus a simple averaging argument shows that the fraction of full rank $T \in \C_{a,b}$ that reject is also large, and in particular, strictly greater than $1/2$. Therefore with probability strictly greater than $1/2$ over $A$, there is at least one $B$ such that $A \circ B$ is full rank and rejects.

It follows that there exists a full rank $A \in \T_{n, s+t+102}$ such that the following two hold:
\begin{itemize}
    \item $\mu_A(\mathcal{R}_A) \leq O(\epsilon)$,
    \item There exists a $B^{\star}$ such that $T = A \circ B^{\star} \in \C_{a,b}$ is full rank and rejects.
\end{itemize}
Fix this $A$ and keep $x^{\star}$, $z^{\star}$, and $\mathcal{R}_A$ as defined. We now show how this leads to a contradiction. The idea is to apply Lemma~\ref{lm: localization} and argue that there is a point at which we can make a correction and cause $A \circ B$ to be accepted for all possible $B$, and in particular for $B^{\star}$. Since we assumed that there was high rejection probability on a zoom-in \textit{outside} the support, we will show that the correction is made at a point not looked at by the test $A \circ B^{\star}$. These two facts together form a contradiction because changing $f$ at a point that is not in the set of points $(A \circ B^{\star})(\alpha)$ for $\alpha \in \supp(H)$ cannot change the result of the test $A \circ B^{\star}$. In order to apply Lemma~\ref{lm: localization} however, we need some statement similarly to $\mu_A(\mathcal{R}_A) \leq O(\epsilon)$, but for a set of flats instead of a set of affine transformations. To this end, we will try to argue that the set of $\fl(B)$ for $B \in \mathcal{R}_A$ is small as well. 

We proceed with the formal argument. The first step is to define an auxiliary polynomial similar to that of Lemma~\ref{lm: generic test}. Suppose $T$ rejects because $\langle f \circ T, H_e \rangle \neq 0$ for a fixed valid $e$. Using this $e$, define $\tilde{f}: \Ff_q^{p+102} \xrightarrow[]{} \Ff_q$ by
\[
\tilde{f}(\beta) = \sum_{\alpha \in \Ff_q^{s-p+t}} f \circ A(\alpha_1, \ldots, \alpha_{s-p+t}, \beta) \left(\prod_{i=1}^{s/p-1}P(\alpha_{p(i-1)+1}, \ldots, \alpha_{pi})\right)\alpha_{s-p+1}^{e_1}\cdots \alpha_{s-p+t}^{e_t}.
\]
For any $B = (R,b) \in \Res_{s+t-p, p+102, p}$ written according to Equation~\eqref{eqn: Res'} it is easy to check that
\[
\langle f \circ A \circ B, H_e \rangle = \langle \tilde{f} \circ (R',b'), P \rangle,
\]
so after fixing $A$ we can view the test as being performed on $\tilde{f}$ with the transformation $(R',b') \in \T_{p+102, p}$. By Theorem~\ref{thm:perfect_accept}, if $\deg(\tilde{f}) < q(p-1)$, then $\langle \tilde{f} \circ (R',b'), P \rangle = 0$ for all $B' = (R', b') \in \T_{p+102, p}$. This leads to the following fact:
\[
B \text{ can only reject if } \deg(\tilde{f}|_{\fl(B)}) \geq q(p-1). 
\]
This fact is similar to Lemma~\ref{lm: generic test}, however, we only have one direction. Namely it is not true that $B$ always rejects if $ \deg(\tilde{f}|_{\fl(B)}) \geq q(p-1)$ because the test on $\tilde{f}$ only checks inner products with $P$ and not will all monomials of degree up to $q-p$.

Using this fact we can now relate $\mu_A(\mathcal{R}_A)$ to the measure of the set of $p$-flats $\fl(B)$ not containing $z^{\star}$ such that $\deg(\tilde{f}|_{\fl(B)}) \geq q(p-1)$. Define the set of $p$-flats,
\[
\mathcal{B}_A = \{\fl(B) \; | \; \deg(\tilde{f}_{\fl(B)}) \geq q(p-1), z^{\star} \notin \fl(B) \} \subseteq \AffGras(p+102, p).
\]
Equivalently, $\mathcal{B}_A$ is the set of all $p$-flats $U$ not containing $z^{\star}$ such that $\deg(\tilde{f})|_{U} \geq q(p-1)$.

To analyze the fractional size of this set, we can choose $(R',b')$ by choosing a $p$-flat, and then choosing a basis for the flat. More formally, with $A$ fixed,
 
\begin{enumerate}
    \item Choose $B \in \Res_{p+102, p}'$ such that $z^{\star} \notin \fl(B)$, and write $B$ according to Equation~\eqref{eqn: Res'}.
    \item Choose $T \in \T_{p,p}$, and replace $(R', b')$ with $(R', b') \circ T$. Let the resulting restriction be $B' \in \Res'_{p+102, p}$.
    \item Output $B'$.
\end{enumerate}
We claim that if initially $\deg(\tilde{f}|_{\fl(B)}) \geq q(p-1)$, then with probability at least $1/q$, the outputted $B'$ rejects.

\begin{lemma} \label{lm: random basis}
    Suppose $\deg(\tilde{f}|_{\fl(B)}) \geq q(p-1)$. Then with probability at least $1/q$ over $M \in \T_{p,p}$ in the second step, $B'$ rejects. That is, $\langle \tilde{f} \circ B', P \rangle \neq 0$.
\end{lemma}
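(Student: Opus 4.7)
The plan is a Schwartz--Zippel style argument applied along a line in $\T_{p,p}$, combined with Theorem~\ref{thm:perfect_accept} to guarantee the existence of a rejecting transformation.

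First I would reduce the claim to a statement about a $p$-variate polynomial $g$. Writing $T = (M, v) \in \T_{p,p}$ and $B' = (R',b') \circ T = (R'M, b' + R'v)$, and noting that $R'v \in \im(R')$ so $\fl(B') = \fl(B)$, a short computation using the form~\eqref{eqn: Res'} yields $\tilde{f} \circ B' = g \circ T$ where $g(y) = \tilde{f}(R'y + b')$ is $\tilde{f}$ restricted to $\fl(B)$ via the parametrization $B$. Since $\deg g = \deg(\tilde{f}|_{\fl(B)}) \geq q(p-1)$, the claim becomes: for any $g : \Ff_q^p \to \Ff_q$ with $\deg g \geq q(p-1)$,
\[
\Pr_{T \in \T_{p,p}}[\langle g \circ T, P \rangle \neq 0] \geq \frac{1}{q}.
\]

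By Theorem~\ref{thm:perfect_accept} applied to $g$ in the sparse $p$-flat tester with degree threshold $q(p-1)-1$, the function $\phi(T) := \langle g \circ T, P \rangle$ is not identically zero on $\T_{p,p}$. So there is some $T_0 = (M_0, c_0) \in \T_{p,p}$ with $\phi(T_0) \neq 0$.

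Next I would consider the restriction of $\phi$ to a line in $\T_{p,p}$: for $\Delta = (\Delta M, \Delta c) \in \T_{p,p} \setminus \{0\}$, define $\phi_\Delta(t) = \phi(T_0 + t\Delta)$ for $t \in \Ff_q$. Expanding
\[
g\bigl((M_0 + t\,\Delta M)y + (c_0 + t\,\Delta c)\bigr) \;=\; \sum_{\alpha} g_\alpha \prod_{i=1}^{p} \bigl((M_0 y)_i + (c_0)_i + t\bigl((\Delta M y)_i + (\Delta c)_i\bigr)\bigr)^{\alpha_i},
\]
each factor contributes total degree at most $\alpha_i \leq q-1$ in $t$, so in the reduced representation (modulo $t^q - t$) $\phi_\Delta(t)$ is a univariate polynomial in $t$ of degree at most $q-1$. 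Since $\phi_\Delta(0) = \phi(T_0) \neq 0$, this polynomial is nonzero, and by the Schwartz--Zippel lemma $\Pr_{t \in \Ff_q}[\phi_\Delta(t) \neq 0] \geq 1/q$.

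The final step is to upgrade this one-line bound to a bound over uniform $T \in \T_{p,p}$. The cleanest route is to parametrize a uniform $T \in \T_{p,p}$ as $T_0 + t\Delta$ via a suitable joint distribution on $(T_0, \Delta, t)$: after fixing a nonzero $\Delta$, the pair $(T_0 + t\Delta)$ with $T_0$ uniform and $t$ uniform yields a uniform $T$. On each line $T_0 + \Ff_q \Delta$ that contains a rejecting point, the one-line Schwartz--Zippel bound gives $\Pr_t \geq 1/q$; on lines entirely in the accepting set, the contribution is $0$. The main obstacle, and the most technical part of the argument, is to show that ``enough'' lines (morally, all of them) pass through a rejecting point. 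I would expect this to follow by choosing $\Delta$ carefully (perhaps at random from a structured set) and using the observation that the accepting set $\T_{p,p} \setminus \Sc$ cannot contain every line in a fixed direction through each point, given that $\phi$ has bounded individual degrees and is not identically zero; alternatively, one might sharpen the line argument into a multivariate Schwartz--Zippel bound that directly exploits the $c$-degree bound $\deg_c \phi \leq q-p$ established from the vanishing condition on $\langle y^\beta, P \rangle$ for $|\beta| \neq q(p-1)$ (Lemma~\ref{lm: monomial inner product}).
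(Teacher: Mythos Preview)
Your reduction to a $p$-variate polynomial $g=\tilde f\circ(R',b')$ with $\deg g\geq q(p-1)$ is correct, and the existence of some rejecting $T_0$ follows from Theorem~\ref{thm:perfect_accept}. The gap is in the final step, and you have correctly identified it yourself: your degree bound along a line is only $q-1$ (indeed, after reducing modulo $t^q-t$ this is the trivial bound that holds for any function on $\Ff_q$), so the line argument yields $\Pr_t[\phi_\Delta(t)\neq 0]\geq 1/q$ only on lines that already pass through a rejecting point. Averaging this over uniform $T_0$ does not give $1/q$ unless you know every line in direction $\Delta$ meets the rejecting set, which is essentially the statement you are trying to prove. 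The ``bounded individual degrees'' heuristic you suggest does not help here: a nonzero polynomial on $\Ff_q^{p^2+p}$ with each individual degree at most $q-1$ can still vanish on a $1-q^{-(p^2+p)}$ fraction of inputs.

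The missing idea, and what the paper does, is to move $T$ to the other side of the inner product. For invertible $T$ one has
\[
\langle g\circ T,\,P\rangle \;=\; \langle g,\,P\circ T^{-1}\rangle,
\]
and now the right-hand side, viewed as a polynomial in the entries of $T^{-1}\in\T_{p,p}$, has \emph{total} degree at most $\deg P=q-p$ (because only $P$ is being composed). Lemma~\ref{lm: invertible reject} supplies an invertible $T$ making this nonzero, so the polynomial is not identically zero, and a single application of the multivariate Schwartz--Zippel lemma gives probability at least $p/q$ of being nonzero over uniform entries of $T^{-1}$. Subtracting the fraction of non-invertible choices yields the claimed $1/q$. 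Your tentative remark about $\deg_c\phi\leq q-p$ is pointing in the right direction, but controlling only the degree in the shift $c$ is not enough; the switch to $T^{-1}$ is what simultaneously bounds the degree in all $p^2+p$ entries by $q-p$.
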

\begin{proof}
    Write $B$ in the form of Equation~\eqref{eqn: Res'}. Then the assumption $\deg(\tilde{f}|_{\fl(B)}) \geq q(p-1)$ is equivalent to $\deg(\tilde{f} \circ (R', b')) \geq q(p-1)$.  

    Choose a full rank $T$ uniformly at random and consider
    \[
    \langle \Tilde{f} \circ (R', b') \circ T, P\rangle =  \langle \Tilde{f} \circ (R', b'), P \circ T^{-1} \rangle.
    \]
    Since $\deg(\tilde{f} \circ (R', b')) \geq q(p-1)$, Lemma~\ref{lm: invertible reject} implies that there is at least one choice of invertible $T$, and hence an $T^{-1}$, that makes the above nonzero. Therefore, we can view $\langle \Tilde{f} \circ (R', b'), P \circ T^{-1} \rangle$ as a nonzero polynomial in the entries of $T^{-1}$. Since the degree of $P$ is at most $q-p$, the inner product $\langle \Tilde{f} \circ (R', b'), P \circ T^{-1} \rangle$ is a nonzero polynomial in the entries of $T^{-1}$ of total degree at most $q-p$. By the Schwartz-Zippel Lemma, with probability at least $p/q$, over the entries of $T^{-1}$, $\langle \Tilde{f} \circ (R', b'), P \circ T^{-1} \rangle \neq 0$. Since $T^{-1}$ has to be invertible, we need to ignore the choices of entries that are non-invertible, but this is at most a $(p-1)/q$ fraction. Overall, we still get that with probability at least $1/q$ over $T \in \T_{p,p}$, $(R', b') \circ T$ rejects $\Tilde{f}$.
\end{proof}

Letting $\nu_A$ denote the uniform measure in $\AffGras(p+102, p)$, Lemma~\ref{lm: random basis} implies that $\nu_{\mathcal{A}}(\mathcal{B}_A) \leq q\mu_A(\mathcal{R}_A)$. We are now close to being able to apply Lemma~\ref{lm: localization}, but $\mathcal{B}_A$ is a set of $p$-flats, while Lemma~\ref{lm: localization} only works for sets of flats of dimension at least $4$. Thus in the $p=2,3$ cases, we cannot use this lemma. There is an easy fix however, which follows by looking at the upper shadow of $\mathcal{B}_A$
\begin{lemma} \label{lm: p+2}
    Let $\mathcal{B}'_A = \{U' \; | \; \dim(U') = p+2 , z^{\star}\notin U', \deg(\tilde{f}|_{U'})\} \geq q(p-1) \}$. Then,
    \[
    \nu(\mathcal{B}'_A) \leq q^2 \nu_A(\mathcal{B}_A),
    \]
    where $\nu$ is uniform measure in $\AffGras(p+102, p+2)$.
\end{lemma}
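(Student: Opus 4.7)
The natural attempt --- applying Lemma~\ref{lm: uppershadow} twice to $\mathcal{B}_A$ --- fails because $\mathcal{B}_A$ is not purely degree-defined (it carries the extra constraint $z^\star \notin U$), so Lemma~\ref{lm: uppershadow} does not apply directly. The plan is to bypass this via a refined double counting performed inside each $V \in \mathcal{B}'_A$ separately. The crucial observation is that for $V\in\mathcal{B}'_A$ we have $z^\star \notin V$, which automatically forces $z^\star\notin U$ for every $p$-sub-flat $U\subseteq V$; so inside any such $V$, the set $\mathcal{B}_A$ restricted to $p$-sub-flats of $V$ coincides with the purely degree-defined
\[
S_V \;:=\; \{U\subseteq V : \dim(U)=p,\ \deg(\tilde f|_U)\ge q(p-1)\}.
\]

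The quantitative core will be the claim $\mu_V(S_V)\ge 1/q^{2}$ for every $V\in\mathcal{B}'_A$, where $\mu_V$ is the uniform measure on $p$-sub-flats of $V$. I would prove this by an iterated shadow argument inside $V$, treating $V$ as its own $(p+2)$-dimensional ambient space. Let $T_V$ denote the set of $(p{+}1)$-sub-flats of $V$ on which $\tilde f$ has degree at least $q(p-1)$. Since the local characterization of $\RM[\cdot,q,q(p-1)-1]$ holds at flat-dimension $t_0 = \lceil q(p-1)/(q-q/p)\rceil = p$, the upper shadow (inside $V$) of $S_V$ is exactly $T_V$, and the upper shadow of $T_V$ inside $V$ equals $\{V\}$ (since $V$ itself satisfies $\deg(\tilde f|_V)\ge q(p-1)$). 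Applying Lemma~\ref{lm: uppershadow} inside $V$ to $T_V$ gives $1 = \mu_V(T_V^{\uparrow}) \le q\,\mu_V(T_V)$, hence $\mu_V(T_V)\ge 1/q$; applying it to $S_V$ gives $\mu_V(T_V)\le q\,\mu_V(S_V)$, hence $\mu_V(S_V)\ge 1/q^{2}$.

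With this per-$V$ bound in hand, the lemma will follow from a standard double count of the pairs $(U,V)$ with $U\in\mathcal{B}_A$, $U\subseteq V$, and $\dim(V)=p+2$. Letting $N_\downarrow$ denote the number of $p$-sub-flats of any $(p+2)$-flat and $N_\uparrow$ the number of $(p+2)$-flats containing any $p$-flat (both independent of the choice by symmetry), one obtains
\[
|\mathcal{B}_A|\cdot N_\uparrow
\;=\;\sum_V |\{U\in\mathcal{B}_A : U\subseteq V\}|
\;\ge\; \sum_{V\in\mathcal{B}'_A}|S_V|
\;\ge\; |\mathcal{B}'_A|\cdot \frac{N_\downarrow}{q^{2}},
\]
and dividing by the identity $|\AffGras(p+102,p)|\cdot N_\uparrow = |\AffGras(p+102,p+2)|\cdot N_\downarrow$ rearranges this to the desired $\nu(\mathcal{B}'_A)\le q^{2}\,\nu_A(\mathcal{B}_A)$.

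The principal obstacle is the mismatch between the shape of $\mathcal{B}_A$ and the degree-defined hypothesis of Lemma~\ref{lm: uppershadow}; the key move is to perform the shadow argument \emph{inside} each $V\in\mathcal{B}'_A$, where the $z^\star$ constraint is trivially satisfied, and work with the purely degree-defined $S_V$ and $T_V$. One should also verify that the local characterization dimension equals $p$ exactly, so that at each step of the iteration the intermediate upper shadow coincides with the corresponding degree-defined set and the estimate $\mu_V(S_V)\ge 1/q^{2}$ really goes through.
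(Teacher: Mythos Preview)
Your proof is correct and is essentially the same argument as the paper's, just spelled out in more detail. The paper observes $\mathcal{B}'_A \subseteq \mathcal{B}_A^{\uparrow^2}$ (using exactly the fact you highlight, that $z^\star\notin V$ forces $z^\star\notin U$ for all $p$-sub-flats $U\subseteq V$) and then says ``apply Lemma~\ref{lm: uppershadow} twice''; your per-$V$ shadow computation and double count is precisely the content of that terse invocation, and your remark that a naive global application of Lemma~\ref{lm: uppershadow} to $\mathcal{B}_A$ would fail is a fair clarification of what the paper leaves implicit.
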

\begin{proof}
    Observe that $\mathcal{B}'_A \subseteq \mathcal{B}_A^{\uparrow^2}$. Indeed, for any $U' \in \mathcal{B}'_A$, there must be a $p$-flat $U \subseteq U'$ such that $\deg(\tilde{f}|_U) \geq q(p-1)$. Since $z^{\star}\notin U'$, it follows that $z^{\star}\notin U$ and thus $U \in \mathcal{B}_A$. The result then follows from applying the Lemma~\ref{lm: uppershadow} twice.
\end{proof}

We now wrap up the proof by applying Lemma~\ref{lm: pseudo not in support} and obtaining a contradiction.
\begin{proof}[Proof of Lemma~\ref{lm: pseudo not in support}]
    By Lemma~\ref{lm: p+2},  $\nu(\mathcal{B}'_A) \leq q^2 \nu_A(\mathcal{B}_A) \leq q^{3}O(\epsilon)$. We can now apply Lemma~\ref{lm: localization}, with $\ell = p+2$, degree parameter $q(p-1)-1$, and special point $z^{\star}$. The conditions of Lemma~\ref{lm: localization} are satisfied since $p + 2 \geq \max(\lceil \frac{q(p-1)}{q-q/p}\rceil, 4)$ and $q^3O(\epsilon) \leq q^{M/2}O(\epsilon)$. From Lemma~\ref{lm: localization} it follows that $\mathcal{B}'_A$ is empty and there is a value $\gamma$ such that after changing $\tilde{f}(z^{\star})$ to $\gamma$, we have $\deg(\tilde{f}) \leq p(q-1)$. In particular, it must be the case that $\langle f \circ A \circ B^{\star}, H_e \rangle = 0$, or equivalently, $\langle \tilde{f} \circ (R^{\star}, b^{\star}), P \rangle = 0$, where  $(R^{\star}, b^{\star})$ is the non-identity part of $B^{\star}$ when written according to~\eqref{eqn: Res'}.

Recall how the point $z^{\star}$ was defined: after $A$ is fixed, $x^{\star} \in \Ff_q^{s+t+102}$ is the unique point such that $Ax^{\star} = b$, and $z^{\star} \in \Ff_q^{p+102}$ is the last $p+102$ coordinates of $x^{\star}$. Since $B^{\star}$ satisfies $A \circ B^{\star} \in \C_{a,b}$, we have $(A \circ B^{\star})(a) = b$, and hence $B^{\star} a = x^{\star}$. Letting $a'$ be the last $p+102$ coordinates of $a$, it follows that $(R^{\star}, b^{\star})(a') = z^{\star}$. However, by assumption $a' \notin \supp(P)$, and since $(R^{\star}, b^{\star})$ is full rank, $(R^{\star}, b^{\star})(\alpha) \neq z^{\star}$ for any $\alpha \in \supp(P)$. We now see where the contradiction lies. After changing the value of $\tilde{f}(z^{\star})$, we suddenly have 

\[
\langle \tilde{f} \circ (R^{\star}, b^{\star}), P \rangle = \sum_{\alpha \in \supp(P)} \tilde{f}\left( (R^{\star}, b^{\star})(\alpha) \right) \cdot P(\alpha) = 0.
\]
However, since we only changed the value of $\Tilde{f}(z^{\star})$, no term in the above summation was changed, and this inner product should still be nonzero. Hence, the set $\Sc_t$ cannot be dense in any $\C_{a,b}$ where $a \notin \supp(H)$.
\end{proof}

\subsection{Proof of Lemma~\ref{lm: support error}}

We have established that there exists an $a^{\star} \in \supp(H)$ and $b \in \Ff_q^{n}$ such that
\[
\mu_{a^{\star},b}(\Sc_t) \geq 1 - \frac{1}{(q-1)^2} - \frac{1}{q^6}-2000tq^{200-M/4}.
\]
Using this, we will deduce Lemma~\ref{lm: support error}, which says that the set $\Sc_t$ is dense in $\cup_{v \in \supp{(H)}} \C_{v,b}$.

Let $\U_{s+t}$ denote the set of $(s+t)$-flats $U$ such that $\deg(f|_U) > d$ and let $\nu$ denote the uniform measure on the set of $(s+t)$-flats. We sample $T \in \C_{a^{\star},b}$ by first choosing an $(s+t)$-flat $U$ containing $b$, and then choosing $T$ whose image is $U$ conditioned on $Ta^{\star} = b$. The point of this procedure is that after choosing $U$, the resulting $T$ can only reject if $\deg(f|_U) > d$. Formally,

\begin{enumerate}
    \item Choose a random $(s+t)$-flat $U$ containing $b$, and a random basis, $U =\spa(u_1, \ldots, u_{s+t}) + u_0$. Let $T' = (M', u_0)$ where the $i$th column of $M'$ is $u_i$. By assumption $T'(x) = b$ for some $x \in \Ff_q^{s+t}$.
    \item Choose an arbitrary matrix $B \in \T_{s+t, s+t}$ such that $Ba^{\star} = x$ and output $T = (M' \cdot B, u_0)$.
\end{enumerate}

It is easy to check that this procedure samples uniformly from $\C_{a^{\star},b}$ and that as noted, $T$ can only reject if $\deg(f|_U) > d$, which leads to the following observation:
\begin{remark}
    \[
     1 - \frac{1}{(q-1)^2} -\frac{1}{q^6}- 2000tq^{200-M/4} \leq \mu_{a^{\star},b}(\Sc_t) \leq \nu_{b}(\U_{s+t}),
    \]
    where $\nu_b$ denotes density in the zoom-in $\D_b$  (on the affine Grassmann graph).
\end{remark}

Using this information about $\U_{s+t}$, we now try to apply Lemma~\ref{lm: localization}. Sample a full rank $T \in \cup_{v \in \supp(H)} \C_{v,b}$ as follows:
\begin{enumerate}
    \item Choose an $(s+t+100)$-flat $V$ uniformly at random containing $b$.
    \item Choose $U \subseteq V$ uniformly at random such that $b \in U$.
    \item Choose a basis representation $U = \spa(u_1, \ldots, u_{s+t}) + u_0$ and output $T = (M, u_0)$ where the $i$th column of $M$ is $u_i$, conditioned on $Tv = b$ for some $v \in \supp(H)$.
\end{enumerate}
After $V$ is chosen, recall the following two sets,
\[
\mathcal{A}_V = \{U \subseteq V \; | \;  \dim(U) = s+t, \deg(f|_U) > d, b \in U\},
\]
\[
\mathcal{B}_V = \{U \subseteq V \; | \;  \dim(U) = s+t, \deg(f|_U) > d, b \notin U\},
\]
and let $\nu_V$ denote measure over $s+t$-flats contained in $V$.

Then $\E_V[\nu_V(\mathcal{A}_V)] \geq 1 - \frac{1}{(q-1)^2} - \frac{1}{q^6}-2000tq^{200-M/4}$, so with probability at least $1 - \frac{1}{(q-1)^2} -\frac{1}{q^6}- 2000tq^{200-M/4}$, we have $\nu_V(\mathcal{A}_V) > 0$. On the other hand, $\E_V[\nu_V(\mathcal{B}_V)] = O(\epsilon)$, so with probability at least $1 - q^{-M/2}$, we have $\nu_V(\mathcal{B}_V) \leq q^{-M/2}O(\epsilon)$. Altogether, this implies that with probability at least $$1 -  \frac{1}{(q-1)^2} -\frac{1}{q^6}- 2000tq^{200-M/4} - q^{-M/2},$$ $V$ satisfies both $\nu_V(\mathcal{A}_V) > 0$ and  $\nu_V(\mathcal{B}_V) \leq q^{M/2}O(\epsilon)$.

Suppose such a $V$ is chosen and the above holds. We may now apply Lemma~\ref{lm: localization} to show that any $T$ that can be chosen in step $3$ must now reject. This will establish the desired result, that a random $T \in \cup_{v \in \supp(H)}\C_{v,b}$ rejects with probability close to $1$.

By Lemma~\ref{lm: localization} there exists a value $\gamma$ such that after changing $f(b)$ to $\gamma$, $\deg(f|_V) \leq d$. It must be the case that $\gamma \neq f(b)$ because $\nu_V(\mathcal{A}_V) > 0$. Let $f'$ be the the function after changing the value of $f$ at $b$. Then for any $T$ as described above, we must have $\langle f' \circ T, H_e \rangle = 0$ for every valid $e$. Since $T$ is full rank, 
there can only be one point mapped to $b$ and that point is some $v \in \supp(H)$, so we have
\[
\langle f' \circ T, H_e \rangle - \langle f \circ T, H_e \rangle = H_e(v)(f(b) - f'(b)) \neq 0.
\]
Thus, for every valid $e$ $\langle f \circ T, H_e \rangle \neq \langle f' \circ T, H_e \rangle$, and in particular, $T$ rejects $f$.

\begin{proof}[Proof of Lemma~\ref{lm: support error}]
    Sampling a full rank $T \in \cup_{v \in \supp(H)} \C_{v,b}$ via the procedure above, the previous argument shows that with probability at least $1 -  \frac{1}{(q-1)^2} -\frac{1}{q^6}- 2000tq^{200-M/4} - q^{-M/2}$ over $(s+t+100)$-flats $V$, the outputted $T$ in step 3 rejects. It follows that $\Sc_t$ has density at least $1 -  \frac{1}{(q-1)^2} - \frac{1}{q^6}-2000tq^{200-M/4} - q^{-M/2}$ in the set of full rank transformations $T$ such that $Tv = b$ for some $a \in \supp(H)$. Since non-full rank transformations constitute only a $\frac{1}{q^{100}(q-1)}$-fraction of transformations in $\cup_{v \in \supp(H)} \C_{v,b}$, by Remark~\ref{rmk: non full rank}, we subtract out another $\frac{2}{q^{100}(q-1)}$ to obtain the desired result.
\end{proof}
\subsection{Iterating the Argument} \label{sec: iterate}
Recall that $\rej_{s+t}(f)$ denotes the probability that a randomly chosen $T \in \T_{n, s+t}$ rejects $f$. By Lemma~\ref{lm: localization}, for at least $9/10$ of $T \in \cup_{v \in \supp(H)} \C_{v,b}$, there is a value $\gamma$ such that after changing the value of $f(b)$ to $\gamma$, $T$ accepts $f$. Thus,  there exists a $f'$ such that $f'$ is identical to $f$ at all points except $b$ and
    \[
    \Pr_{T}[T \text{ rejects } f \; | \; \exists v \in \supp(H), T(v) = b] \leq 1 - \frac{9}{10q}.
    \]

\begin{proposition}  \label{prop: improvement}
    If $0 < \rej_{s+t}(f) < q^{-M}$ and $t = O(p)$ there exists a point $z \in \Ff_q^{n}$ and a function $f'$ that is identical to $f$ at all points except $b$ such that
    \[
    \rej_{s+t}(f') \leq \rej_{s+t}(f) - \frac{|\supp(H)|}{q^n C(q)},
    \]
    where $C(q) = O(q)$.
\end{proposition}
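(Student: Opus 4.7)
The plan is to combine the setup of Section~\ref{sec: correcting the error and iterating} with a pigeonhole argument on the correction value, and then to compare the rejection probabilities of $f$ and the corrected $f'$ by careful bookkeeping. Lemma~\ref{lm: support error} already says that $\Sc_t$ has density at least $1 - \zeta$ inside $\cup_{v \in \supp(H)} \mathcal{C}_{v,b}$, where $\zeta = O(1/q^2) + o_M(1)$; taking $M$ to be a sufficiently large absolute constant we may assume $\zeta \leq 2/q^2$.

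Next I would invoke Lemma~\ref{lm: localization} exactly as in the paragraph preceding the proposition: for at least $9/10$ of $T \in \cup_{v \in \supp(H)} \mathcal{C}_{v,b}$ there is a value $\gamma_T \in \Ff_q$ such that changing $f(b)$ to $\gamma_T$ makes the test on $T$ accept. Pigeonholing over the $q$ possible values of $\gamma_T$, some single value $\gamma^{\star}$ works for at least a $\tfrac{9}{10q}$-fraction of $T$ in the union. Letting $f'$ agree with $f$ everywhere except $f'(b) = \gamma^{\star}$ yields
\[
\Pr_T\bigl[T \text{ rejects } f' \,\big|\, \exists v \in \supp(H),\ T(v) = b\bigr] \leq 1 - \tfrac{9}{10q},
\]
which is precisely the statement recorded just before the proposition.

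Finally I would compute $\rej_{s+t}(f) - \rej_{s+t}(f')$ by splitting $\T_{n, s+t}$ into $\cup_{v \in \supp(H)} \mathcal{C}_{v,b}$ and its complement. The key observation is that since $f$ and $f'$ differ only at $b$, the tester behaves identically on any full-rank $T$ whose query set avoids $b$; for full-rank $T$ this is equivalent to $T \notin \cup_{v \in \supp(H)} \mathcal{C}_{v,b}$. The contribution of non-full-rank $T$'s is $O(q^{-100})$ by Remark~\ref{rmk: non full rank}. Because full-rank affine maps are injective, the sets $\mathcal{C}_{v,b}$ for distinct $v \in \supp(H)$ are disjoint on their full-rank parts, giving $\mu(\cup_{v \in \supp(H)} \mathcal{C}_{v,b}) \geq \frac{|\supp(H)|}{q^n}(1 - o(1))$. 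Multiplying this measure by the gap $(1 - \zeta) - (1 - \tfrac{9}{10q}) \geq \tfrac{1}{2q}$ (for $M$ sufficiently large) then yields the claimed $\frac{|\supp(H)|}{q^n C(q)}$ bound with $C(q) = O(q)$, after absorbing the lower-order $O(q^{-100})$ terms.

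The main obstacle is this last piece of bookkeeping: ensuring that the $O(1/q^2)$ slack from Lemma~\ref{lm: support error} is genuinely dominated by the $\tfrac{9}{10q}$ gain from pigeonholing (which is why we need $M$ large), that the non-full-rank fraction of $\cup_{v \in \supp(H)} \mathcal{C}_{v,b}$ contributes only a negligible correction, and that overlaps between the $\mathcal{C}_{v,b}$'s for different $v$ do not inflate its measure beyond $\tfrac{|\supp(H)|}{q^n}(1+o(1))$. None of these issues is individually difficult — the heart of the argument is already set up by Lemmas~\ref{lm: support error} and~\ref{lm: localization} — but all three lower-order terms must be controlled simultaneously in order to extract a clean linear-in-$q$ factor $C(q)$.
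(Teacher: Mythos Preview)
Your proposal is correct and follows essentially the same approach as the paper: both invoke Lemma~\ref{lm: support error} for the lower bound on the rejection density inside $\cup_{v\in\supp(H)}\mathcal{C}_{v,b}$, then the paragraph preceding the proposition (via Lemma~\ref{lm: localization} and pigeonhole over the $q$ possible correction values) for the upper bound after correction, and finally split $\T_{n,s+t}$ according to membership in this union. Two minor remarks: your worry about full-rank versus non-full-rank $T$ on the complement is unnecessary, since for \emph{any} $T$ with $T(v)\neq b$ for all $v\in\supp(H)$ the query set misses $b$ and the test outcome is unchanged; and your specific inequalities $\zeta\le 2/q^2$ and gap $\ge 1/(2q)$ fail for very small $q$ (e.g.\ $q=3$), though the weaker $\Omega(1/q)$ conclusion---which is all that is claimed---goes through exactly as in the paper.
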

\begin{proof} 
    By Lemma~\ref{lm: support error}, there exists $b$ such that 
    \[
    \Pr_{T}[T \text{ rejects } f \; | \; \exists v \in \supp(H), T(v) = b] \geq  1 -  \frac{1}{(q-1)^2}-\frac{1}{q^6} - \frac{2}{q^{100}(q-1)} - 2000tq^{200-M/4} - q^{-M/2}.
    \]
    By the above discussion, there exists a $f'$ such that $f'$ is identical to $f$ at all points except $b$ and
    \[
    \Pr_{T}[T \text{ rejects } f \; | \; \exists v \in \supp(H), T(v) = b] \leq 1 - \frac{9}{10q}.
    \]
    
    On the other hand, it is clear that for $T$ such that $T(v) \neq b$ for all $v \in \supp(H)$, the results of the tests on $f \circ T$ and $f' \circ T$ are the same because $f$ and $f'$ are identical on the points needed to evaluate $\langle f \circ T, H_e \rangle$ and $\langle f' \circ T, H_e \rangle$ for any $e$. Since $\frac{|\supp(H)|}{q^n}$ is the probability that $b \in \{T(v) \; | \; v \in \supp(H) \}$, a direct calculation yields
    \begin{align*}
    &\rej_{s+t}(f')\\ &\leq \rej_{s+t}(f) - \frac{|\supp(H)|}{q^n}\left( 1 -  \frac{1}{(q-1)^2}-\frac{1}{q^6} - \frac{2}{q^{100}(q-1)}- 2000tq^{200-M/4} - q^{-M/2} - \left(1 - \frac{9}{10q} \right)\right) \\
    & \leq \rej_{s+t}(f)- \Omega\left(\frac{|\supp(H)|}{q^n q}\right).
    \end{align*}
\end{proof}

Finally, we can conclude the proof of Theorem~\ref{thm:main}.
\begin{proof}[Proof of  Theorem~\ref{thm:main}]
   By Proposition~\ref{prop: improvement}, while $\rej_{s+t}(f) > 0$, we can change the value of $f$ at one point and reduce the rejection probability by $\Omega\left(\frac{|\supp(H)|}{q^n q}\right)$. When the rejection probability is $0$, the function must be degree at most $d$, therefore, 
\[
\delta_d(f) q^n \leq O\left(\frac{\rej_{s+t}(f)}{|\supp(H)|}q^nq\right)
\]
which implies that 
\[
\rej_{s+t}(f) \geq \Omega\left(\frac{|\supp(H)|}{q} \delta_d(f)\right).
\] 
\end{proof}

\section{Optimal Testing from other Local Characterizations}\label{sec:optimal_testing_from_other}
When showing that the sparse flat test is optimal, we relied minimally on the structure of $H_e$. Thus it is not hard to extend our methods and show optimal testing results for other polynomials that give local characterizations. We will reuse the variables $s$ and $r$ in this section, so they no longer refer to their previous definitions. Let $P: \Ff_q^{k} \xrightarrow[]{} \Ff_q$ be a polynomial of the form

 \[
    P(x_1, \ldots, x_k) = \prod_{i = 1}^{s} P_i(x_{m(i)}, \ldots, x_{m(i+1)-1})).
\]
Where $m(1) = 1$, $m(s) = k+1$, $m(i+1) - m(i) \leq t'$ for each $1 \leq i \leq s-1$ and some small constant $t'$. In words, $H$ is a $k$-variate polynomial that is the product of polynomials in few variables, where the variables of each of these polynomials is disjoint. Finally let $\mathcal{M} \subseteq \{\Ff_q^{t} \xrightarrow[]{} \Ff_q \}$ be an arbitrary nonempty set affine invariant set of polynomials and suppose $t = \poly(q)$. Define
\[
\mathcal{E}=  \{e \in  \{0,\ldots, q-1\}^t \; | \;  \prod_{i=1}^{t} x_i^{q-1-e_i} \notin \mathcal{M} \}.
\]
Notice that  $(q-1, \ldots, q-1) \notin \mathcal{E}$. It is well known that any affine invariant set of polynomials is given by the span of the monomials that appear in at least one polynomial of the family, c.f.\ \cite{KS}, and this fact is elaborated on in Appendix~\ref{sec:prob_1_accept}. In combination with Lemma~\ref{lm: inner product}, it follows that $g: \Ff_q^{t} \xrightarrow[]{} \Ff_q$ is in $\mathcal{M}$ if and only if $\langle g, \prod_{i=1}^{t} x_i^{e_i} \rangle = 0$ for every $(e_1, \ldots, e_t) \in \mathcal{E}$. In comparison with the description in Section~\ref{sec: lifted codes}, $\{  \prod_{i=1}^{t} x_i^{e_i} \; | \; e \in \mathcal{E} \}$ is an explicit basis for $\mathcal{M}^{\perp}$.

Using $H$ and monomials with exponent vectors in $\mathcal{E}$, we can define a test similar to the sparse flat tester. For $e \in \mathcal{E}$, define
\[
H_e(x_1, \ldots, x_{k+t}) = P(x_1, \ldots, x_k) \prod_{i=1}^{t}x_{k+i}^{e_i},
\]
Let $\mathcal{F}_n(H) = \{f: \Ff_q^n \xrightarrow[]{} \Ff_q \; | \; \forall T \in \T_{n, k}, \langle f \circ T, H_e \rangle = 0, \forall e \in \mathcal{E} \}$. It is clear that $\mathcal{F}_n(H)$ is affine invariant with the following natural tester:
\begin{enumerate}
    \item Choose $T\in \T_{n, k}$ uniformly at random.
    \item If $ \langle f \circ T, H_e \rangle \neq 0$ for any $e \in \mathcal{E}$, reject. Otherwise, accept.
\end{enumerate}
Let $\supp(H) = \cup_{e \in \mathcal{E}} \supp(H_e) = \supp(P) \times \Ff_q^{t}$. It is clear that the number of queries made by the tester is $Q = |\supp(H)|$, although this value will not be of interest for us. Instead, the goal for the remainder of this section is to show that this tester is optimal. As a consequence, this allows one to obtain lower query complexity optimal testers for general affine invariant codes by constructing local characterizations using $P$ with sparse support of the prescribed form above, similar to what is done for Generalized Reed-Muller Codes in \cite{RZS}.
\begin{theorem} \label{th: main general}
The above tester is optimal for $\mathcal{F}_n(H)$. That is, for any $f: \Ff_q^{n} \xrightarrow[]{} \Ff_q$, 
\[
\rej(f) \geq c(q)\min(1, Q\delta(f, \mathcal{F}_n(H))),
\]
where $\rej(f)$ is the probability that the tester rejects $f$, $c(q) = \frac{1}{\poly(q)}$, $Q = |\supp(H)|$ is the number of queries that the tester performs, and $\delta(f, \mathcal{F}_n(H))$ is the minimal relative hamming distance between $f$ and a member of $\mathcal{F}_n(H)$.
\end{theorem}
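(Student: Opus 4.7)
The plan is to follow the same blueprint as the proof of Theorem~\ref{thm:main}, adapted to the setting where the detector polynomial $P$ is an arbitrary product of polynomials in disjoint blocks of variables and where the ``base'' code is an arbitrary affine-invariant $\mathcal{M}$ rather than low-degree polynomials. Concretely, define the rejecting set
\[
\Sc = \{T \in \T_{n, k+t} \; | \; \langle f \circ T, H_e \rangle \neq 0 \text{ for some } e \in \mathcal{E}\},
\]
let $\eps = \mu(\Sc)$, and assume $\eps \leq q^{-M}$ for a sufficiently large constant $M$. The goal is to show that one may correct $f$ at a single point $b \in \Ff_q^n$ to reduce the rejection probability by $\Omega(Q/(q^n \cdot q))$, and then iterate until $f \in \mathcal{F}_n(H)$.

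The expansion bound $1 - \Phi(\Sc) \geq 1/q$ in $\AffShort(n,k+t)$ (Lemmas~\ref{lm: expansion general} and \ref{lm: expansion specific}) and $q\eps$-pseudo-randomness with respect to zoom-outs and zoom-outs on the linear part (Lemma~\ref{lm: zoom-out}) transfer verbatim, as they only rely on $\langle f \circ T, H_e \rangle$ being a polynomial of total degree at most $q-1$ in the entries of $T$ and not being identically zero. For pseudo-randomness with respect to zoom-ins on the linear part, we reuse the reduction to the flat test from Section~\ref{sec: relation to full flat tester}: after composing with a random restriction that fixes the first $k$ columns, one defines
\[
\tilde{f}(\beta) = \sum_{\alpha \in \Ff_q^k} f \circ A(\alpha, \beta) \, P(\alpha),
\]
and checks that $A \circ B$ rejects iff $\tilde{f}|_{\fl(B)} \notin \mathcal{M}$ (the analog of Lemma~\ref{lm: generic test}). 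Because $\mathcal{M}$ is affine invariant, the shadow Lemma~\ref{lm: uppershadow} applies to the set of $t$-flats on which $\tilde{f}$ leaves $\mathcal{M}$, and the argument of Section~\ref{sec: pseudorandomness zoom-in linear} goes through with the role of ``degree exceeds $r$'' replaced by ``restriction is not in $\mathcal{M}$''; the case split based on whether $a = a_{[k]}$ requires only that $\mathcal{M}$ be closed under permuting variables, which follows from affine invariance.

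Applying Theorem~\ref{th: pseudorandom} to $\Sc$ yields a dense zoom-in $\C_{a,b}$. The key remaining step, the analog of Lemma~\ref{lm: error located}, is where the product structure $P = \prod_i P_i$ with disjoint variables is crucial: if $a \notin \supp(H) = \supp(P) \times \Ff_q^t$, then some block $(a_{m(i)}, \ldots, a_{m(i+1)-1})$ lies outside $\supp(P_i)$; reordering so that this block sits last, one integrates out all variables except those of the $i$-th block to obtain an auxiliary $\tilde{f}$ on $\Ff_q^{t' + O(1)}$ and applies Lemma~\ref{lm: localization} (with $\mathcal{M}$ replaced by the base code cut out by $P_i$) to conclude that a single-point correction at $z^\star$ forces the test on the distinguished $B^\star$ to accept, contradicting the fact that the offending block of $a$ is outside $\supp(P_i)$. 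Lemma~\ref{lm: support error} then follows by the same enlarged-flat argument as in Section~\ref{sec: correcting the error and iterating}, and Proposition~\ref{prop: improvement} together with the iteration of Section~\ref{sec: iterate} complete the proof with $Q = |\supp(H)|$ in place of the Reed-Muller-specific support size. The main obstacle is verifying that the pseudo-randomness argument for zoom-ins on the linear part and the error-location argument genuinely only used the product-with-disjoint-variables structure of $H_e$ and the affine-invariance of the base code, rather than any specific property of the Reed-Muller detector $P$ from~\cite{RZS}; once this is established, no essentially new ideas are required beyond those already developed for Theorem~\ref{thm:main}.
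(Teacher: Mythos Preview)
Your proposal is correct and follows essentially the same route as the paper's argument in Section~\ref{sec:optimal_testing_from_other}. Three small corrections are worth flagging, since they touch exactly the points you identified as the main obstacle. First, Lemma~\ref{lm: uppershadow} and Lemma~\ref{lm: localization} as stated are specific to the degree-$d$ code, so the paper invokes their lifted-affine-invariant analogs (Lemma~\ref{lm: general affine invariant shadow lemma} from~\cite{KM} and Lemma~\ref{lm: localization lifted}); the proofs are formally the same, phrased for an arbitrary base code. Second, in the error-location step the passage from the set $\mathcal{R}_A$ of rejecting restrictions to the set $\mathcal{B}_A$ of bad flats can no longer use Lemma~\ref{lm: random basis}, which relied on the specific Ron--Zewi--Sudan polynomial; the paper substitutes a crude Schwartz--Zippel bound (Lemma~\ref{lm: sparse to full general}), losing a factor $q^{t''}$ instead of $q$, which is harmless since $t'' \leq t'$ is constant. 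Third, the reduction in the $a = a_{[k]}$ case of the zoom-in-on-linear-part argument hinges not on permutation-invariance of $\mathcal{M}$ but on the fact that $(q-1,\ldots,q-1) \notin \mathcal{E}$ (equivalently, constants lie in $\mathcal{M}$), which guarantees that every $e \in \mathcal{E}$ has some coordinate $e_i < q-1$ so that the $F_\beta$ trick yields a polynomial of degree at most $q-2$ in $\beta$.
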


We will follow the same strategy, first locating a potential error, and then correcting the error and iterating. Let $\Sc$ denote the set of rejecting tests in $\T_{n, k+t}$ and assume that $\mu(\Sc) \leq q^{-M}$ for some $M$ such that $M > 10t'$ and $q^{M} \geq t$.

Before going into the proof, we introduce a lemma shown in \cite{KM}. Recall the definition of lifted affine-invariant codes from the introduction. By design, $\mathcal{F}_n(\mathcal{H})$ is a lifted affine invariant code. Letting $\mathcal{F} = \mathcal{F}_{k+t}(H)$, it is easy to see that 
\[
\mathcal{F}_{n}(H) = \Lift_n(\mathcal{F}).
\]
Since we can view $\mathcal{F}_n(H)$ as a lifted code, we can then apply the following lemma from \cite{KM}.
\begin{lemma} \label{lm: general affine invariant shadow lemma}
    Let $g: \Ff_q^{k'+1} \xrightarrow[]{} \Ff_q$ be a polynomial such that $g \notin \Lift_{k'+1}(\mathcal{F})$, and suppose that $k' \geq k$. Then,
    \[
    \Pr_{U}[g|_U \notin  \Lift_{k'}(\mathcal{F})] \geq \frac{1}{q},
    \]
    where the probability is over hyperplanes $U \subset \Ff_q^{k'+1}$. If this inequality is tight and the set of hyperplanes $U$ such that $g|_U  \notin  \Lift_{k'}(H)$ is of the form
    \[
    \{U \; | \; x^{\star} \in U\},
    \]
    for some point $x^{\star}$, then there is a function $g'$ equal to $g$ at all points except $x^{\star}$ such that $g' \in  \mathcal{F}_{k+i+1}(H)$. 
\end{lemma}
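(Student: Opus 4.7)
For the first part, the plan is to deduce the $1/q$ bound from a lifted-affine-invariant analogue of the shadow lemma (Lemma~\ref{lm: uppershadow}). For $n\ge \ell+1$ and $\ell\ge k+t$, let $S_\ell(g)$ denote the set of $\ell$-flats $V\subseteq \Ff_q^n$ with $g|_V\notin \Lift_\ell(\mathcal F)$. I would establish the chain $\mu_{\ell+1}(S_{\ell+1}(g))\le \mu_{\ell+1}(S_\ell(g)^{\uparrow})\le q\,\mu_\ell(S_\ell(g))$. The first inequality holds because any bad $(\ell+1)$-flat $V$ admits a witness map $T^{\star}\colon\Ff_q^{k+t}\to V$ whose image, having dimension at most $k+t\le \ell$, is contained in some hyperplane of $V$, and that hyperplane is then also bad via the same witness. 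The second inequality is the counting argument of~\cite{BKSSZ}, which only uses that the ``bad'' property is affine-invariant. Specialising to $n=k'+1$, $\ell=k'$, the unique $(k'+1)$-flat is the whole space, which is bad by hypothesis, so $\mu_{k'+1}(S_{k'+1}(g))=1$ and the chain yields $\mu_{k'}(S_{k'}(g))\ge 1/q$, as desired.

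For the second part, suppose the $1/q$ bound is tight and the bad hyperplanes are exactly $\{U:x^{\star}\in U\}$. The plan is to produce $\gamma\in \Ff_q$ such that the function $g'$ obtained from $g$ by resetting $g(x^{\star})=\gamma$ satisfies $g'\in \Lift_{k'+1}(\mathcal F)$. Since $g'=g$ off $x^{\star}$, we have $g'|_U=g|_U\in \Lift_{k'}(\mathcal F)$ for every $U\not\ni x^{\star}$, so it suffices to find a single $\gamma$ with $g'|_U\in \Lift_{k'}(\mathcal F)$ for every $U\ni x^{\star}$: the first part applied to $g'$ then forces $g'\in \Lift_{k'+1}(\mathcal F)$, as no hyperplane can be bad for $g'$.

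Fix $U\ni x^{\star}$. For any $T\colon \Ff_q^{k+t}\to U$ and $e\in\mathcal E$, one computes $\langle g'|_U\circ T, H_e\rangle = \langle g|_U\circ T, H_e\rangle + (\gamma-g(x^{\star}))\,c_{T,e}$, where $c_{T,e}=\sum_{y\in T^{-1}(x^{\star})} H_e(y)$, so the set $A_U$ of correcting $\gamma$'s is an affine subset of $\Ff_q$. Taking $T$ injective with $T(y_0)=x^{\star}$ for some $y_0\in \supp(H)$ (which exists because $\dim U\ge k+t$) forces $c_{T,e}=H_e(y_0)\ne 0$ for some $e$, so $|A_U|\le 1$. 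Non-emptiness of $A_U$ is then a matter of showing that the constraints indexed by different $(T,e)$ are mutually consistent on $U$, using precisely that the bad set is $\{U : x^{\star}\in U\}$ (and not strictly larger). The crucial step is to show that the resulting $\gamma_U$ does not depend on $U$: for $U_1,U_2\ni x^{\star}$, their intersection $W=U_1\cap U_2$ is a $(k'-1)$-flat through $x^{\star}$, and any witness $(T,e)$ with $\im T\subseteq W$ imposes the same linear equation on $\gamma$ when viewed from either side; affine invariance of $\mathcal F$ then lets one transport witnesses between $U_1$ and $U_2$ via $W$, forcing $\gamma_{U_1}=\gamma_{U_2}$.

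The main obstacle is the final consistency step, together with non-emptiness of each $A_U$. The slope computation and the within-$U$ uniqueness are straightforward linear algebra, but producing a single global $\gamma$ that corrects every hyperplane through $x^{\star}$ requires a careful exploitation of affine invariance of $\mathcal F$ and tightness of the $1/q$ bound, most likely via a witness-propagation argument along shared $(k'-1)$-subflats. Once this consistency is established, no further work is required: applying the first part to $g'$ immediately yields $g'\in \Lift_{k'+1}(\mathcal F)$.
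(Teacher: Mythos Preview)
The paper does not actually prove this lemma; it is imported from~\cite{KM} and only used as a black box. So there is no in-paper proof to compare against, and your proposal has to be judged on its own.

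\textbf{Part 1 is circular.} Your chain is
\[
\mu_{\ell+1}(S_{\ell+1}(g))\le \mu_{\ell+1}(S_\ell(g)^{\uparrow})\le q\,\mu_\ell(S_\ell(g)),
\]
and in the specialisation $n=k'+1$, $\ell=k'$ the left side is $1$ and the inequality you actually need is the second one, namely $1\le q\,\mu_{k'}(S_{k'}(g))$. But $\mu(S^{\uparrow})\le q\,\mu(S)$ is \emph{not} a counting identity: for a general set of $\ell$-flats the trivial double count gives $\mu(S^{\uparrow})\le B\cdot\mu(S)$ with $B$ the number of $\ell$-flats inside an $(\ell{+}1)$-flat, which is of order $q^{\ell+1}$, not $q$. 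The improvement to $q$ in~\cite{BKSSZ} comes precisely from showing that inside every bad $(\ell{+}1)$-flat at least a $1/q$ fraction of $\ell$-subflats are bad, and that local statement is, verbatim, the first part of the lemma you are trying to prove. Saying that the~\cite{BKSSZ} argument ``only uses that the bad property is affine-invariant'' is not accurate: their proof uses the monomial/degree structure of Reed--Muller, and extending it to lifted affine-invariant codes is exactly the content of the~\cite{KM} lemma. So as written, you have reduced the statement to itself.

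\textbf{Part 2 has real gaps that your outline does not close.} You correctly compute $|A_U|\le 1$ and correctly isolate the two remaining obligations: (i) $A_U\ne\emptyset$ for each $U\ni x^{\star}$, and (ii) the singletons $\{\gamma_U\}$ all coincide. Your proposed mechanism for (ii), transporting a witness along $W=U_1\cap U_2$, breaks at the base level: when $k'$ equals the base dimension, $\dim W=k'-1$ is strictly below it, so there is no affine $T\colon\Ff_q^{k+t}\to W$ to serve as a common witness, and the propagation step has nothing to propagate. Both (i) and (ii) are where the genuine work lies; the approach in~\cite{KM} (mirrored in this paper for Reed--Muller in the proof of Lemma~\ref{lm: localization}) does not try to solve for $\gamma$ constraint-by-constraint, but instead uses the delta function $\ind[x=x^{\star}]$ together with a \emph{strict} form of the $1/q$ bound (the analogue of item~2 in Lemma~\ref{lm: uppershadow 2}) to force the corrected function into the code globally. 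Your outline lacks any such strict-inequality ingredient, and without it there is no lever to rule out the possibility that every choice of $\gamma$ leaves exactly the $1/q$ set of hyperplanes through $x^{\star}$ bad.
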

This lemma will play the role of Lemma~\ref{lm: uppershadow 2} in this section's analysis.

\paragraph{Locating a Potential Error}
In order to locate an error, we will once again show that $\Sc$ is dense on some zoom-in. As we already assume $\Sc$ has small measure, this step requires us to show that $\Sc$ is poorly expanding and pseudorandom with respect to zoom-outs, zoom-outs on the linear part, and zoom-ins on the linear part.

\begin{lemma} \label{lm: general pseudorandomness}
    The set $\Sc$ has the following properties:
    \begin{itemize}
        \item $\mu(\Sc) \leq q^{-M}$
        \item $\Phi(\Sc) \leq 1 - 1/q$
        \item $\Sc$ is $q \mu(\Sc)$-pseudorandom with respect to zoom-outs and zoom-outs on the linear part. 
        \item $\Sc$ is $tq^{162}\epsilon$-pseudorandom with respect to zoom-ins on the linear part.
    \end{itemize}
    where $\mu$ is measure in the set $\T_{n, k+t}$.
\end{lemma}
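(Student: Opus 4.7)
The first item is immediate from the hypothesis $\mu(\Sc)\le q^{-M}$. For items~2 and~3, the proofs of Lemmas~\ref{lm: expansion specific} and~\ref{lm: zoom-out} carry over essentially verbatim: the workhorse Lemma~\ref{lm: expansion general} was proved for an \emph{arbitrary} polynomial $G$, using only a Schwartz-Zippel argument on the coefficients of $f\circ T'\circ(R,b)$ viewed as a polynomial in the down-step parameters $w_1,\ldots,w_{k+t},b_{k+t+1}$. Taking $G = H_e$ for the $e\in\mathcal{E}$ that witnesses rejection of a given $T\in\Sc$ yields $1-\Phi(\Sc)\ge 1/q$, and applying the same lemma along an up-down walk initiated inside a zoom-out $\C_{a^T,b,\beta}$ (or a zoom-out on the linear part) proves the $q\mu(\Sc)$ pseudo-randomness bound exactly as in Lemma~\ref{lm: zoom-out}.

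The real work is item~4. I plan to adapt the proofs of Lemmas~\ref{lm: lin part pseudo case} and~\ref{lm: lin part pseudo overall} with two key substitutions. First, an analog of Lemma~\ref{lm: generic test}: the factorization $H_e(x,y) = P(x)\cdot\prod_i y_i^{e_i}$ together with bilinearity gives, after fixing $A\in\T_{n,k+\ell}$ and defining
\[
\tilde f(\beta) \;=\; \sum_{\alpha\in\Ff_q^k} (f\circ A)(\alpha,\beta)\,P(\alpha),
\]
that for any restriction $B\in\Res_{k,\ell,t}$ in the block form~\eqref{eq: B2} with non-trivial block $B'$ one has $\langle f\circ A\circ B,\,H_e\rangle = \langle\tilde f\circ B',\,x^e\rangle$. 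Because $\mathcal{M}^{\perp}$ is spanned by $\{\prod_i x_i^{e_i}: e\in\mathcal{E}\}$, this means $A\circ B$ rejects if and only if $\tilde f|_{\fl(B)}\notin \Lift_t(\mathcal{F})$; in particular the remaining randomness of the test on $\tilde f$ is a standard $t$-flat test into the lifted affine-invariant code $\Lift(\mathcal{F})$. Second, the shadow inequality (Lemma~\ref{lm: uppershadow}) is replaced by Lemma~\ref{lm: general affine invariant shadow lemma}, which yields the same bound $\mu(\mathcal{B}^{\uparrow})\le q\,\mu(\mathcal{B})$ for the bad set $\mathcal{B}$ of $\ell$-flats on which $\tilde f$ leaves $\Lift_\ell(\mathcal{F})$.

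With these substitutions in place, the proof of Lemma~\ref{lm: lin part pseudo case} runs unchanged: for a zoom-in on the linear part $\C_{a,b,\lin}$ with $a\neq a_{[k]}$ of density $\alpha$ in $\Sc$, an averaging argument produces $A\in\T_{n,k+t+100}$ on which the $t$-flat test on $\tilde f$, conditioned on avoiding the unique ``special point'' $w\in\Ff_q^{t+100}$ determined by $A$ and $a$, rejects with probability at most $2\epsilon/\alpha$; the resulting bad set $\mathcal{B}$ of $t$-flats satisfies the hypotheses of Theorem~\ref{th: pseudo aff gras} restricted to some $(t{+}40)$-flat $W$ (non-expansion from the new shadow lemma; zoom-in and zoom-out pseudo-randomness being automatic once the density is below $q^{-40}$), forcing $\mu_W(\mathcal{B}_W)\ge q^{-100}$; upper-shadowing $60$ times via the new shadow lemma then converts this into $\alpha\le e^{-4/q}q^{160}\cdot 2\epsilon$. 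The case $a=a_{[k]}$ reduces to this one exactly as in Lemma~\ref{lm: lin part pseudo overall}: since $(q{-}1,\ldots,q{-}1)\notin\mathcal{E}$, every rejecting $e$ has some coordinate $<q-1$, and a Schwartz-Zippel argument along the shear $F_\beta:x_{k+1}\mapsto x_{k+1}+\beta x_1$ reduces to some $a'\neq a'_{[k]}$ at the cost of a factor $tq$, giving the final bound $tq^{162}\epsilon$.

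The main obstacle I foresee is verifying that the two substitutions really do plug into the Reed-Muller proof without loss. This is essentially a bookkeeping check: the analog of Lemma~\ref{lm: generic test} must be stated for restrictions with the first $k$ (rather than first $sp$) coordinates fixed, and the density lower bound $\mu_W(\mathcal{B}_W)\ge q^{-100}$ in Theorem~\ref{th: pseudo aff gras} depends only on the edge-expansion and pseudo-randomness properties of $\mathcal{B}_W$ in $\AffGras(W,t)$ -- both of which follow mechanically from Lemma~\ref{lm: general affine invariant shadow lemma} and the low density of $\mathcal{B}_W$. Once this is checked, the remaining constants come out identically and the claimed pseudo-randomness bounds follow.
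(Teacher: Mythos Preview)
Your proposal is correct and follows the paper's argument essentially verbatim: items 1--3 via the hypothesis and Lemmas~\ref{lm: expansion general}--\ref{lm: zoom-out}, and item 4 by rerunning Lemmas~\ref{lm: lin part pseudo case}--\ref{lm: lin part pseudo overall} with the analog of Lemma~\ref{lm: generic test} (yielding $\langle f\circ A\circ B,H_e\rangle=\langle\tilde f\circ B',x^e\rangle$) and with Lemma~\ref{lm: general affine invariant shadow lemma} in place of Lemma~\ref{lm: uppershadow}. One cosmetic slip: the condition you write as $\tilde f|_{\fl(B)}\notin\Lift_t(\mathcal{F})$ should read $\tilde f|_{\fl(B)}\notin\mathcal{M}$, since $\tilde f|_{\fl(B)}$ is a $t$-variate function being tested for membership in the base code $\mathcal{M}$ (not a lift of the larger code $\mathcal{F}$).
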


The first item is true by assumption. The second and third items can be shown by the exact same arguments as in Lemmas~\ref{lm: expansion general} and \ref{lm: zoom-out} respectively.

The proof for the fourth item also proceeds similar to the Reed-Muller case, but since this argument was more involved, we will review the steps.

Consider a zoom-in on the linear part $\C_{a,b,\lin}$. We can assume $a \neq 0$ as otherwise the set is either empty or $\T_{n, k}$. We will again have two cases, one where at least one of the last $t$ coordinates of $a$ is nonzero, and another when all are zero. We can prove the former case in exactly the same way as Lemma~\ref{lm: lin part pseudo case}. For the latter case, recall that by assumption $(q-1,\ldots, q-1) \notin \mathcal{E}$. Thus, the reduction from the former case to the latter case (with a factor of $tq$ loss) works exactly the same as in the Proof of Lemma~\ref{lm: lin part pseudo overall}.

Once again we can sample a transformation by first choosing a full rank $A \in \T_{n, k + t + 100}$, $B \in \Res_{k+ t + 100, k}$ and outputting $A \circ B = T$. Where now $ \Res_{k, t+100, t}$ is the set of affine transformations $(R,b)$ with 
\begin{equation} \label{eqn: Res 2}
 R =  \begin{bmatrix}
   I_{k} & 0 \\
   0 & R'
   \end{bmatrix}, b = \begin{bmatrix} 0 \\ b' \\ \end{bmatrix},
\end{equation}
with $R' \in \Ff_q^{(t+100) \times t}$ is full rank and $b' \in \Ff_q^{t+100}$. Call $(R',b')$ the non-trivial part of $B = (R,b)$ and let $\fl(B) = \im(R', b')$.

Following the same setup as in Section~\ref{sec: pseudorandomness zoom-in linear}, we can find $A$ such that the following two hold:
\begin{itemize}
    \item $b \in \im_{a_{[k]}}(A)$
    \item There exists $B \in \Res_{k, t+100, t}$ such that $A \circ B \in \Sc$
    \item $\Pr_{B}[A \circ B \in \Sc \; | \; b \notin \im_{a_{[k]}}(M \circ R)] \leq \frac{2}{\alpha}\eps$.
\end{itemize}
Fixing this $A$, define 
\[
\tilde{f}(\beta_1, \ldots, \beta_{t+100}) = \sum_{\alpha} f \circ A(\alpha, \beta_1, \ldots, \beta_{t+100}) P(\alpha).
\]
Then for any restriction $B$ with $B' = (R', b') \in \T_{t+100,t}$ as its non-trivial part, and any $t$-variate monomial $x^e$, we have
\[
\langle f \circ A \circ B, H_e \rangle = \langle \tilde{f} \circ B', x^e \rangle.
\]
It follows that $B = (R, b)$ rejects $f$ if and only if $\tilde{f}|_{b' + \im(R')} \in \mathcal{M}$.

By construction $Ax^{\star} = b$ for some $x^{\star} \in \Ff_q^{k+100}$ such that $x^{\star}$ is equal to $a$ in its first $k$ coordinates. Then $b \in \im_{a_{[k]}}(A \circ B)$ is equivalent to $z^{\star} \in \im(R')$ where $z^{\star}$ is the last $t+100$ coordinates of $x^{\star}$

Letting $\mathcal{R}_A = \{(R', b') \in \T_{t+100,t} \; | \; \tilde{f}|_{b' + \im(R')}, z^{\star} \in \im(R') \}$, we can translate the third item above to
\[
\Pr_{B}[A \circ B \in \Sc \; | \; b \notin \im_{a_{[k]}}(M \circ R)] = \mu_A(\mathcal{R}_A) \leq \frac{2}{\alpha}\eps,
\]
where $\mu_A$ is measure in $\Res_{k, t+100, t}$. Define
$$\mathcal{B}_A = \{b' + \im(R') \; | \; (R', b') \in \mathcal{R}_A \}.$$ By the same argument as in Lemma~\ref{lem:cal_B_nonempty}, $\mathcal{B}_A$ is nonempty. Finally, we can use the same proof as that of Lemma~\ref{lm: lin part pseudo case} (except referring to the first part of Lemma~\ref{lm: general affine invariant shadow lemma} instead of Lemma~\ref{lm: uppershadow} to bound the sizes of upper shadows) to show that $\alpha \leq 2e^{-4/q}\epsilon q^{160} \leq \epsilon q^{161}\epsilon$. This shows that $\Sc$ is $\epsilon q^{161}\epsilon$-pseudorandom with respect to $\C_{a,b,\lin}$ where one of $a$'s last $t$-coordinates is nonzero. If this is not the case, then since $(q-1,\ldots, q-1) \notin \mathcal{E}$, we can perform the same reduction as in Lemma~\ref{lm: lin part pseudo overall}, to show that $\Sc$ is $tq^{162}\epsilon$-pseudorandom with resepect to every zoom-in on the linear part.

Altogether, this establishes Lemma~\ref{lm: general pseudorandomness}, which allows us to apply Theorem~\ref{th: pseudorandom} with $\xi = tq^{162-M}$ and $\ell = s+t$. Since we take $t \geq 10$, Theorem~\ref{th: pseudorandom} implies that we have that $\Sc$ must have density at least $1 - \frac{1}{(q-1)^2} - \frac{1}{q^6}- 2000tq^{200 - M/4}$ on some zoom-in $\C_{a,b}$. Henceforth, fix this zoom-in and call it $\C_{a^{\star}, b}$.

\paragraph{The point $a^{\star}$ is in the support of $H$}
We next show that it must be the case that $a^{\star} \in \supp(H)$. We will need the following lemma which is the same as Lemma~\ref{lm: localization} but for the current setting. Just as in the setup of Lemma~\ref{lm: localization}, let $g: \Ff_q^{\ell + 100}$ be an arbitrary polynomial, let $\nu$ denote uniform measure over $\ell$-flats in $\Ff_q^{\ell + 100}$, and let $b \in \Ff_q^{\ell + 100}$ be an arbitrary point. Also let $\mathcal{G} \subseteq \{\Ff_q^{N} \xrightarrow[]{} \Ff_q \}$ be an arbitrary affine-invariant code.

Define the following two sets:
\[
\mathcal{A} = \{U  \; | \;  \dim(U) = \ell, g|_U \notin \Lift_{\ell}(\mathcal{G}), b \in U\},
\]
\[
\mathcal{B} = \{U \; | \;  \dim(U) = \ell, g|_U \notin \Lift_{\ell}(\mathcal{G}), b \notin U\}.
\]
Keep $\epsilon$ and $M$ (the large absolute constant) the same as we have defined, so that $q^{M/2}O(\epsilon) < q^{-M/2}$ is small. We will use the following result, which is an extension of the results in Section 3.2 of \cite{KM}:
\begin{lemma} \label{lm: localization lifted}
    Keep the notation defined above and suppose $\ell \geq \max(N, 4)$. If $\nu(\mathcal{B}) \leq q^{M/2}O(\epsilon)$ then $\mathcal{B} = \emptyset$. Moreover there is a value $\gamma$ such that after changing $g(b)$ to $\gamma$, $\nu(\mathcal{A}) = 0$
\end{lemma}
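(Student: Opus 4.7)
I will prove Lemma~\ref{lm: localization lifted} by closely paralleling the proof of Lemma~\ref{lm: localization}, with the Reed--Muller upper shadow lemma (Lemma~\ref{lm: uppershadow}) replaced throughout by Lemma~\ref{lm: general affine invariant shadow lemma}. The key structural property enabling this substitution is that lifts are hereditary under restriction: if $W'\subseteq W$ are flats and $g|_W\in\Lift_{\dim W}(\mathcal{G})$, then $g|_{W'}\in\Lift_{\dim W'}(\mathcal{G})$. Contrapositively, ``$g|_U\notin\Lift$'' propagates upward to every flat containing $U$, while Lemma~\ref{lm: general affine invariant shadow lemma} propagates it downward in a quantitative $1/q$ fashion. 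These two opposite propagation directions drive both halves of the argument.

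For the emptiness of $\mathcal{B}$, suppose for contradiction $U_0\in\mathcal{B}$. A direct Gaussian-binomial count shows that among the $\approx q^{99}$ many $(\ell+99)$-flats $V$ containing $U_0$, a $1-O(1/q)$ fraction avoid $b$; pick such a $V$. By heredity $g|_V\notin\Lift_{\ell+99}(\mathcal{G})$, and $99$ iterated applications of Lemma~\ref{lm: general affine invariant shadow lemma} inside $V$ produce at least a $q^{-99}$ fraction of $\ell$-flats $U\subseteq V$ satisfying $g|_U\notin\Lift_\ell(\mathcal{G})$. Since $b\notin V$, all such $U$ lie in $\mathcal{B}$. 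A bookkeeping step, using that sampling a uniform $(\ell+99)$-flat $V$ avoiding $b$ and then a uniform $\ell$-flat $U\subseteq V$ yields a marginal on $U$ that is essentially uniform over $\ell$-flats avoiding $b$, upgrades the local fraction into a global lower bound $\nu(\mathcal{B})\geq q^{-C}$ for a concrete absolute constant $C$ (of order $200$). Choosing $M$ large enough that $q^{M/2}O(\epsilon)\leq q^{-M/2}<q^{-C}$ contradicts the hypothesis and forces $\mathcal{B}=\emptyset$.

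For the correction step, apply Lemma~\ref{lm: general affine invariant shadow lemma} to $g$ itself on the full ambient space $\Ff_q^{\ell+100}$. If $g\in\Lift_{\ell+100}(\mathcal{G})$ we already have $\mathcal{A}=\emptyset$ and take $\gamma=g(b)$. Otherwise the lemma guarantees that at least a $1/q$ fraction of $(\ell+99)$-hyperplanes $W$ satisfy $g|_W\notin\Lift_{\ell+99}(\mathcal{G})$. Using $\mathcal{B}=\emptyset$ together with $99$ downward applications of the shadow lemma inside any such $W$, every bad hyperplane $W$ must contain $b$---otherwise it would yield a bad $\ell$-flat avoiding $b$. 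The fraction of hyperplanes of $\Ff_q^{\ell+100}$ through $b$ is exactly $1/q$, so the bound $1/q$ is tight and the bad hyperplanes form precisely the zoom-in $\{W:b\in W\}$. The extremal clause of Lemma~\ref{lm: general affine invariant shadow lemma} then delivers a single value $\gamma$ such that replacing $g(b)$ by $\gamma$ yields a function in $\Lift_{\ell+100}(\mathcal{G})$; by heredity every $\ell$-flat restriction of the corrected function lies in $\Lift_\ell(\mathcal{G})$, i.e.\ $\mathcal{A}$ becomes empty as required.

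The main obstacle is the bookkeeping in Part~1: converting the purely local estimate ``a $q^{-99}$ fraction of $\ell$-flats in one fixed $V$ belong to $\mathcal{B}$'' into a global density lower bound on $\mathcal{B}\subseteq\AffGras(\ell+100,\ell)$. One has to arrange the joint sampling of $(V,U)$ so that the marginal on $U$ is comparable (up to a factor $q^{\Theta(1)}$) to the uniform distribution on $\ell$-flats avoiding $b$; this is where the Gaussian-binomial counts, the heredity property, and the $\ell\geq\max(N,4)$ hypothesis all interact, and where the precise value of the constant $C$---and hence the required lower bound on $M$---is pinned down.
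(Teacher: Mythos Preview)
Your Part~2 (the correction step) is correct and is actually cleaner than what the paper does in the Reed--Muller case: applying Lemma~\ref{lm: general affine invariant shadow lemma} once at the top dimension~$\ell+100$, and using $\mathcal{B}=\emptyset$ to pin the bad hyperplanes to the zoom-in on~$b$, invokes the extremal clause directly and avoids the detour through a ``most common $\gamma$'' argument followed by a bootstrapping via Theorem~\ref{th: pseudo aff gras}.

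Part~1, however, has a real gap. You fix a single $(\ell+99)$-flat $V\supseteq U_0$ avoiding $b$, obtain $\mu_V(\mathcal{B}_V)\ge q^{-99}$ by iterated shadow, and then assert that ``bookkeeping'' upgrades this to $\nu(\mathcal{B})\ge q^{-C}$ for an absolute~$C$. But the double-counting identity $\nu(\mathcal{B})\approx\E_V[\mu_V(\mathcal{B}_V)]$ requires information about a \emph{random} $V$, not one fixed $V$. Knowing only that $V$'s containing $U_0$ are bad gives $\nu(\mathcal{B})\gtrsim q^{-99}\cdot\Pr_V[V\supseteq U_0]$, and this probability is of order $q^{-\ell}$, not an absolute constant. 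Equivalently, iterating the shadow lemma from the top yields that a $q^{-100}$ fraction of \emph{all} $\ell$-flats are bad, but the flats through $b$ already account for a $q^{-100}$ fraction, so nothing remains for $\mathcal{B}$. Since $M$ is fixed independently of $\ell$, an $\ell$-dependent lower bound is useless.

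The paper closes this gap with an extra ingredient you did not invoke: Theorem~\ref{th: pseudo aff gras}. Working inside an $(\ell+40)$-flat $W$ avoiding $b$, one checks that $\mathcal{B}_W$ is non-expanding and pseudorandom with respect to all four canonical families; the theorem then forces the dichotomy $\mu_W(\mathcal{B}_W)\in\{0\}\cup[q^{-100},1]$. Once some $W$ has $\mu_W(\mathcal{B}_W)\ge q^{-100}$, a random-intersection argument (sample $Y$ of dimension $\ell+99$ avoiding $b$, then $A_2\subseteq Y$ of dimension $\ell+60$, and look at $A_2\cap W$) transfers this to a global bound $\E_Y[\mu_Y(\mathcal{B}_Y)]\ge e^{-4/q}q^{-160}$ that is independent of $\ell$, yielding the contradiction. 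Your proposal needs this dichotomy step (or an equivalent device) to succeed.
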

\begin{proof}
The proof of this lemma is the same as the proof of Lemma~\ref{lm: localization}, however we appeal to Lemma~\ref{lm: general affine invariant shadow lemma} instead of Lemma~\ref{lm: uppershadow 2}. 
\end{proof}

Following the same steps as in the proof of Lemma~\ref{lm: pseudo not in support}, this lemma almost immediately shows that it must be the case $a^{\star} \in \supp(H)$. Suppose for the sake of contradiction that $a^{\star} \notin \supp(H)$, and that it is the group of coordinates $(a^{\star}_{m(i)}, \ldots, a^{\star}_{m(i+1)-1})\notin \supp(P_i)$. By assumption $P_i$ is a $t''$-variate polynomial for $t'' \leq t'$. Let $$\mathcal{F}_{t''}(P_i) = \{g: \Ff_q^{t''} \xrightarrow[]{} \Ff_q \; | \; \langle g \circ T, P_i \rangle = 0, \forall T \in \T_{t'',t''} \}.$$

We can similarly find an auxiliary polynomial $\tilde{f}: \Ff_q^{t''+102} \xrightarrow[]{} \Ff_q$ and a special point $z^{\star} \in \Ff_q^{t''}$ such that the following hold:
\begin{itemize}
    \item $\mu(\mathcal{R}) \leq O(\epsilon)$, where $\mathcal{R} = \{B \in \T_{t''+102, t''} \; | \; \langle \tilde{f} \circ B, P_i \rangle \neq 0, z^{\star} \notin \im(R) \}$ and $\mu$ is measure in $\T_{t''+102, t''}$.
    \item There exists a full rank affine transformation $B^{\star} \in \T_{t'' + 102, t''}$ such that $\langle \tilde{f} \circ B^{\star}, P_i \rangle \neq 0$,
    \item $z^{\star} \notin \{B^{\star}(v) \; | \; v \in \supp(P_i) \}$.
\end{itemize}
Using the first fact, we can also bound the measure of the following set of $t''$-flats,
\[
\mathcal{B} = \{U \in \AffGras(t''+102, t'') \; | \; f|_U \in \mathcal{F}_{t''}(P_i), z^{\star} \in U\}.
\]

\begin{lemma} \label{lm: sparse to full general}
    Suppose $g$ is a $t''$-variate polynomial such that $g \notin \mathcal{F}_{t''}(P_i)$. Then with probability at least $\frac{1}{q^{t''}}$ over $T \in \T_{t'', t''}$ we have $\langle g \circ T, P_i \rangle \neq 0$.
\end{lemma}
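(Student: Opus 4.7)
The plan is to view $F(T) := \langle g \circ T, P_i \rangle$ as a polynomial in the $t''(t''+1)$ entries of $T = (M,c) \in \T_{t'',t''}$ and derive the probability bound via a refined, group-wise Schwartz--Zippel argument. The hypothesis $g \notin \mathcal{F}_{t''}(P_i)$ is exactly the statement that $F$ is not identically zero as a function on $\T_{t'',t''}$, so the task reduces to showing that any such $F$ must be nonzero on at least a $q^{-t''}$ fraction of inputs.

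The structural step is a degree bound in ``row groups.'' Writing $g$ in reduced form as $g(y) = \sum_{a \in \{0,\ldots,q-1\}^{t''}} \alpha_a\, y^a$, I have $g(Mx+c) = \sum_a \alpha_a \prod_{j=1}^{t''} (M_j \cdot x + c_j)^{a_j}$, where $M_j$ is the $j$-th row of $M$. I would partition the $t''(t''+1)$ entries of $(M,c)$ into $t''$ disjoint groups, where the $j$-th group consists of $(M_{j,1},\ldots,M_{j,t''},c_j)$. Each factor $(M_j \cdot x + c_j)^{a_j}$ lives entirely inside the $j$-th group and has total degree $a_j \leq q-1$ there. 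Summing over $x$ with weight $P_i(x)$ is just a linear combination of such monomials in $(M,c)$, so $F(M,c)$ has total degree at most $q-1$ in each of the $t''$ groups; consequently every individual variable in $F$ has degree at most $q-1$, so $F$ is automatically in reduced form and ``nonzero as a polynomial'' coincides with ``nonzero as a function.''

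The final step is a group-wise Schwartz--Zippel lemma proved by a short induction on the number of groups $N$: if $F$ over $\Ff_q$ is reduced, has total degree at most $q-1$ in each of $N$ groups of variables, and is not identically zero, then $\Pr_v[F(v) \neq 0] \geq q^{-N}$. Writing $F = \sum_m m(v_1)\, c_m(v_2,\ldots,v_N)$ over reduced monomials $m$ in the first group, the induction hypothesis produces some $c_m$ nonzero on a $\geq q^{-(N-1)}$ fraction of $(v_2,\ldots,v_N)$, and for each such fixing the ordinary Schwartz--Zippel bound applied to the nonzero polynomial $F(\cdot,v_2,\ldots,v_N)$ of total degree $\leq q-1$ in $v_1$ yields $\Pr_{v_1}[F \neq 0] \geq 1/q$. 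Specializing to $N = t''$ gives exactly $\Pr_T[\langle g \circ T, P_i \rangle \neq 0] \geq q^{-t''}$. The main conceptual point, rather than any calculational obstacle, is recognizing that one must refine Schwartz--Zippel group-by-group: using the total degree of $F$, which can be as large as $t''(q-1)$, gives only a trivial bound, whereas the per-group bound of $q-1$ is exactly what produces the target $q^{-t''}$.
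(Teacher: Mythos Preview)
Your proposal is correct and follows essentially the same approach as the paper: view $\langle g\circ T, P_i\rangle$ as a polynomial in the entries of $T$, use $g\notin\mathcal{F}_{t''}(P_i)$ to conclude it is not identically zero, and then invoke Schwartz--Zippel. The paper's proof is terser and simply asserts that since the individual degrees of $g$ are at most $q-1$ one gets the $q^{-t''}$ bound, whereas you carefully spell out the row-grouping and the group-wise Schwartz--Zippel induction that makes this precise; this added care is welcome, as a naive application using only the total degree $\leq t''(q-1)$ would indeed give a trivial bound.
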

\begin{proof}
 Choosing $T$ randomly, we can view $\langle g \circ T, P_i \rangle$ as a polynomial in the entries of $T$. Since $g \notin \mathcal{F}_{t''}(P_i)$ there must be some choice $T$ that makes this polynomials value nonzero, so $\langle g \circ T, P_i \rangle$ is a nonzero polynomial in the entries of $T$. As the individual degrees of $g$ must be at most $q-1$, it follows from Schwarz-Zippel that with probability at least $\frac{1}{q^{t''}}$, $\langle g \circ T, P_i \rangle \neq 0$.
\end{proof}
Let $\nu$ denote measure in $\AffGras(t''+102, t'')$. We can choose $B \in \T_{t''+102, t''}$ such that $z^{\star} \notin \im(B)$ by first choosing $U \in \AffGras(t''+102, t'')$ not containing $z^{\star}$ and then $B$ with image contained in $U$. Lemma~\ref{lm: sparse to full general} along with the assumption $\mu(\mathcal{R}) \leq O(\epsilon)$ implies that
\[
\frac{1}{q^{t''}} \nu(\mathcal{B}) \leq \mu(\mathcal{R}).
\]

Therefore, we have that $\nu(\mathcal{B}) \leq q^{t''}O(\epsilon)$ and by  Lemma~\ref{lm: localization lifted}, $\nu(\mathcal{B}^{\uparrow^4}) \leq q^{t''+4}O(\epsilon) \leq q^{M/2}O(\epsilon)$. We can then apply Lemma~\ref{lm: localization lifted} with $\mathcal{G} = \mathcal{F}_{t''}(P_i)$ and special point $z^{\star}$ to get that there is a value $\gamma$ such that after changing $\tilde{f}(z^{\star})$, we have,
\[
\tilde{f}|_{U'} \in \Lift_{t'' + 4}(\mathcal{F}_{t''}(P_i)), \text{\quad for all $t'' + 4$-flats $U'$ containing $z^{\star}$}.
\]
This also implies that $\tilde{f}|_U \in \mathcal{F}_{t''}(P_i)$ for all $t''$-flats $U \ni z^{\star}$, but note that changing the value of $\tilde{f}(z^{\star})$ does not affect the value of the test
\[
    \langle \tilde{f} \circ B^{\star}, P_i \rangle \neq 0,
\]
because of the third item above. Thus, we have a contradiction and it follows that  $a^{\star} \in \supp(H)$.

\paragraph{Dense on all zoom-ins inside the support of $H$}
After establishing that $\Sc$ has density at least $1 - \frac{1}{(q-1)^2} - 2000tq^{200-M/4}$ inside $\C_{a^\star, b}$ and $a^{\star} \in \supp(H)$, we can use the same arguments as in the proof of Lemma~\ref{lm: support error} to show that $\Sc$ must have density at least
\[
1 - \frac{1}{(q-1)^2} - \frac{2}{q^{100}(q-1)} - 2000q^{200+t-M/4} - q^{-M/2},
\]
inside $\cup_{v \in \supp(H)} \C_{v,b}$. We can choose $T \in \C_{a^{\star}, b}$ by first choosing a $k + t+100$-flat $V$ containing $b$, then a $k+t$-flat $U \subseteq V$ containing $b$, and finally outputting $T$ conditioned on $\im(T) = U$ and $Ta = b$. 

Using the same sampling and averaging argument, we get that with probability at least $1 - \frac{1}{(q-1)^2} - 2000q^{200+t-M/4} - q^{-M/2}$ over $k + t + 100$-flats $V$, the following hold:
\begin{itemize}
    \item $\nu_V(\mathcal{A}_V) > 0$, where $\mathcal{A}_V = \{U \subseteq V \; | \; \dim(U) = k, f|_{U} \notin \mathcal{F}_{k+t}(H), b \in U \}$,
    \item  $\nu_V(\mathcal{B}_V) \leq q^{M/2}O(\epsilon)$, where $\mathcal{B}_V = \{U \subseteq V \; | \; \dim(U) = k, f|_{U} \notin \mathcal{F}_{k+t}(H), b \notin U \}$,
\end{itemize}
Applying Lemma~\ref{lm: localization lifted} with $\mathcal{G} = \mathcal{F}_{k+t}(H)$ and $g = f|_V$, we get that there is a value $\gamma$ such that after changing the value of $f(b)$ to $\gamma$, $\nu_V(\mathcal{A}_V) = 0$. By the first item above, $\gamma \neq f(b)$. Therefore, prior to changing the value of $f(b)$, we must have had $\langle f|_V \circ B , H \rangle \neq 0$ for all full rank $B \in \T_{k+t+100, k+t}$ such that $B(v) = b$ for some $v \in \supp(H)$. After subtracting out the fraction of non-full rank transformations, we can conclude that $\Sc$ has density at least 
\[
1 - \frac{1}{(q-1)^2} - \frac{2}{q^{100}(q-1)}- 2000q^{200+t-M/4} - q^{-M/2},
\]
inside $\cup_{v \in \supp(H)} \C_{v,b}$.

\paragraph{Correcting the Error and Iterating}
Finally, we can correct the error and iterate as done in Section~\ref{sec: iterate}. For at least $9/10$ of $T \in \supp_{v \in \supp(H)} \C_{v,b}$, there is a value $\gamma$ such that after changing the value of $f(b)$ to $\gamma$, $T$ accepts $f$. Thus, there exists an $f'$ that is identical to $f$ at all points except $b$ and
\[
\Pr_{T}[T \text{ rejects } f \; | \; \exists v \in \supp(H), T(v) = b] \leq 1 - \frac{9}{10}q.
\]
By the same calculation as in Lemma~\ref{prop: improvement}, it follows that
\[
\rej(f') \leq \rej(f) - \frac{|\supp(H)|}{q^nC(q)},
\]
for some $C(q) = O(q)$, and we can conclude 
\[
\rej(f) \geq \Omega\left(\frac{|\supp(H)|}{q} \delta(f, \mathcal{F}_n(H)) \right).
\]

\bibliographystyle{plain}
\bibliography{references}

\appendix

\section{Proof of Theorem~\ref{th: pseudorandom}}\label{sec:pf_expansion}
We now prove Theorem~\ref{th: pseudorandom}. First we will need to show the following result, which is a slight variation of \cite[Theorem 2.4]{KM}.

\begin{thm} \label{th: pseudo aff gras}
    Let $\Ac  \subseteq \AffGras(n+\ell, \ell)$ satisfy 
    \begin{enumerate}
        \item $\mu(\Ac) \leq \xi$,
        \item $\Ac$ is $\xi$-pseudorandom with respect to hyperplanes, hyperplanes on its linear part, and points on its linear part.
        \item $1 - \Phi(\Ac) \geq \frac{1}{q} - \delta$.
    \end{enumerate}
    Then there exists a point $x \in \Ff_q^n$ such that $\mu(\Ac_x) \geq 1- q^2\left(4q^{-\ell} + 867\xi^{1/4} + \frac{\delta}{q-1} \right)$. 
    
    Measure, expansion, and pseudorandomness are all with respect to $ \AffGras(n+\ell, \ell)$, and $\mu(\Ac_x)$ denotes the fractional size of $\Ac$ in the subset of $\AffGras(n+\ell, \ell)$ that contains the point $x$.
\end{thm}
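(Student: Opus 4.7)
The plan is to follow the proof of \cite[Theorem 2.4]{KM} essentially line-by-line, tracking the extra slack parameter $\delta$ in the edge-expansion hypothesis and verifying that it passes through the argument as an additive $\delta/(q-1)$ term inside the $q^2$ factor. The original result corresponds to the case $\delta = 0$, and in that proof the assumption $1 - \Phi(\Ac) \geq 1/q$ is used solely through an inequality of the form $\langle 1_\Ac, T 1_\Ac\rangle \geq (1 - 1/q)\mu(\Ac)$, where $T$ is the (normalized) random-walk operator on $\AffGras(n+\ell,\ell)$. Replacing this by $\langle 1_\Ac, T 1_\Ac\rangle \geq (1 - 1/q + \delta)\mu(\Ac)$ yields the analogue statement with a controlled additive slack.

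First I would reduce to spectral language: decompose $1_\Ac$ along the eigenspaces of $\AffGras(n+\ell,\ell)$, which (as in the analysis of \cite{KM}) split into four ``structured'' components corresponding to zoom-ins, zoom-outs, zoom-ins on the linear part and zoom-outs on the linear part, together with a ``pseudorandom'' remainder. The pseudorandomness hypothesis (2) forces three of these components to have $L^2$-mass of order $\xi\cdot\mu(\Ac)$, so the expansion inequality must be saturated by either the zoom-in component or by the pseudorandom remainder. A global hypercontractivity bound for the affine Grassmann scheme (as used in \cite{KM}) rules out the pseudorandom remainder up to a loss of $O(\xi^{1/4})$, which accounts for the $867\xi^{1/4}$ term and forces the zoom-in component to carry almost all of $\Ac$'s mass. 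Next I would apply a level-$1$ inverse theorem: the zoom-in component being a near-maximiser of the Rayleigh quotient identifies a single point $x \in \Ff_q^{n+\ell}$ such that $\mu(\Ac_x)$ is close to $1$, with the finite-dimensional correction $4q^{-\ell}$ coming from the difference between the true eigenvalue of the zoom-in indicator and $1 - 1/q$.

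The main obstacle will be the bookkeeping of $\delta$ through the variational argument. In the $\delta = 0$ case, the expansion bound forces $1_\Ac$ to sit (up to $O(\xi^{1/4})$ error) in the top non-trivial eigenspace whose eigenvalue is exactly $1 - 1/q$. With slack $\delta$, one extra unit of Rayleigh deficit translates, through the eigenvalue gap of the affine Grassmann scheme (where the next eigenvalue below $1 - 1/q$ is smaller by a factor of order $1/q$), into a possible $\delta/(q-1)$ additional loss in how much of $\mu(\Ac)$ lies on the zoom-in component. Carrying this linearly through the Markov-type inequality used to extract $x$ then produces exactly the additive $\delta/(q-1)$ inside the $q^2$ factor of the conclusion. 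The only delicate point is to ensure that $\delta$ does not interact multiplicatively with the $\xi^{1/4}$ loss from hypercontractivity; this requires separating the spectral inequality (where $\delta$ appears linearly) from the hypercontractive step (where only $\xi$ appears) and combining them by the triangle inequality rather than by a joint estimate.
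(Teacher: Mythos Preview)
The paper does not actually supply a proof of this theorem. It introduces the statement as ``a slight variation of \cite[Theorem 2.4]{KM}'' and then immediately uses it as a black box to deduce Theorem~\ref{th: pseudorandom} via the embedding $\varphi$ of $\AffBilin(n,\ell)$ into $\AffGras(n+\ell,\ell)$; no argument for Theorem~\ref{th: pseudo aff gras} itself appears anywhere in the text.

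Your plan --- rerun the proof of \cite[Theorem 2.4]{KM} and track the extra slack $\delta$ --- is therefore exactly what the paper is implicitly asking the reader to do, and your identification of where $\delta$ enters (linearly through the Rayleigh-quotient inequality, separated from the hypercontractive $\xi^{1/4}$ loss, and converted to $\delta/(q-1)$ via the eigenvalue gap) is consistent with how such a modification would propagate. There is no alternative route in the paper to compare against.
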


To show that an analogous result holds in the affine Bi-linear Scheme Graph, we retrace the steps in \cite{BKS} to establish a connection to the affine Grassmann Graph and then apply Theorem~\ref{th: pseudo aff gras}. To this end, define the following injective map $\varphi: \T_{n, \ell} \xrightarrow{} \V_{n + \ell, \ell}$. Let $\{e_1, \ldots, e_{\ell}\} \in \Ff_q^{\ell}$ be the canonical basis. For two column vectors $v$ and $w$ with lengths $\ell_1$ and $\ell_2$ respectively, we denote by $(v,w)^T$ the length $\ell_1 + \ell_2$ column vector obtained by vertically concatenating $v$ and $w$. Likewise, let $[v^T, w^T]$ be the length $\ell_1 + \ell_2$ column vector obtained by horizontally concatenating $v$ and $w$.

For an affine transformation $(M, c) \in \T_{n, \ell}$, where $M$ has columns $v_1, \ldots, v_{\ell}$ we set $\varphi((M,v)) = (0, c)^T + \spa((e_1, v_1)^T, \ldots, (e_{\ell}, v_{\ell}^T))$. It is clear that this map is injective. Moreover, the image of $\varphi$ is the set of $\ell$-flats $V \subseteq \Ff_q^{n+\ell}$ such that the projection of $V$ onto the first $\ell$ coordinates is full rank. 

The map $\varphi$ preserves edges, nearly preserves expansion, and maps the canonical non expanding sets in $\AffShort(n, \ell)$ to their counterparts in $\AffGras(n+\ell, \ell)$.

\begin{lemma} \label{lm: edges preserved}
If $T_1, T_2 \in \T_{n, \ell}$ are adjacent in $\AffShort(n, \ell)$, then $\varphi(T_1), \varphi(T_2)$ are adjacent in $\AffGras(n+\ell, \ell)$. Conversely if $V_1, V_2 \in V_{n, \ell}$ are adjacent in $\AffGras(n+\ell, \ell)$ and both in the image of $\varphi$, then $\varphi^{-1}(V_1), \varphi^{-1}(V_2)$ are adjacent in $\AffShort(n, \ell)$.
\end{lemma}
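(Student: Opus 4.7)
The plan is to prove both directions by computing the intersection $\varphi(T_1)\cap\varphi(T_2)$ explicitly via the ``graph'' description of $\varphi(T)$, which turns adjacency in the two graphs into the same algebraic condition.

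First I would establish the key reformulation: for $T = (M,c) \in \T_{n,\ell}$, we have
\[
\varphi(T) = \{(\alpha, T(\alpha))^{T} : \alpha \in \Ff_q^{\ell}\},
\]
i.e.\ $\varphi(T)$ is the graph of the affine map $T$ sitting inside $\Ff_q^{\ell}\times\Ff_q^{n}$. This is immediate from the definition $\varphi(T) = (0,c)^T + \spa((e_1,v_1)^T,\ldots,(e_{\ell},v_{\ell})^T)$ since a general element has the form $(\alpha, M\alpha+c)^T = (\alpha, T(\alpha))^T$. Moreover, from this description it is clear that the image of $\varphi$ is exactly the set of $\ell$-flats whose projection to the first $\ell$ coordinates is surjective (equivalently, full rank).

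Next, using the graph description, I would compute
\[
\varphi(T_1)\cap\varphi(T_2) = \{(\alpha,T_1(\alpha))^T : \alpha \in \Ff_q^{\ell},\ T_1(\alpha) = T_2(\alpha)\},
\]
because two points $(\alpha,T_1(\alpha))^T$ and $(\beta,T_2(\beta))^T$ coincide if and only if $\alpha = \beta$ (comparing first $\ell$ coordinates) and $T_1(\alpha) = T_2(\alpha)$. Since the map $\alpha \mapsto (\alpha,T_1(\alpha))^T$ is an injective affine embedding, the intersection has the same affine dimension as $\{\alpha : T_1(\alpha) = T_2(\alpha)\} = \ker(T_1 - T_2)$. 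Hence
\[
\dim(\varphi(T_1)\cap\varphi(T_2)) = \dim(\ker(T_1 - T_2)),
\]
whenever $\ker(T_1 - T_2)$ is nonempty (and the intersection is empty iff $\ker(T_1-T_2)$ is empty).

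Both directions of the lemma are now immediate. For the forward direction, if $T_1 \sim T_2$ in $\AffShort(n,\ell)$ then $\dim(\ker(T_1-T_2)) \geq \ell-1$, so the intersection of $\varphi(T_1)$ and $\varphi(T_2)$ has dimension at least $\ell-1$; since $\varphi$ is injective the two $\ell$-flats are distinct (assuming $T_1 \neq T_2$), so the intersection has dimension exactly $\ell-1$, which is adjacency in $\AffGras(n+\ell,\ell)$. For the converse, if $V_1, V_2$ are adjacent in $\AffGras(n+\ell,\ell)$ and both lie in the image of $\varphi$, writing $T_i = \varphi^{-1}(V_i)$, the identity above gives $\dim(\ker(T_1-T_2)) = \ell-1$, which is exactly the adjacency condition in $\AffShort(n,\ell)$.

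I do not expect any real obstacle: the whole argument is a direct unpacking of definitions once the graph description of $\varphi(T)$ is written down. The only mild subtlety is keeping track of the distinction between affine dimension and linear dimension of $\ker(T_1-T_2)$ (as $T_1 - T_2$ is an affine rather than a linear map), and ruling out the degenerate case $T_1 = T_2$ when one wants strict adjacency; both are handled by the injectivity of $\varphi$.
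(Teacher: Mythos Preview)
Your proposal is correct and follows essentially the same approach as the paper. Both arguments identify $\varphi(T)$ with the graph of the affine map $T$ (the paper does this by writing $\varphi(T_i)$ as the image of the extended affine map $T''_i=(M''_i,c''_i)$) and then observe that a point in $\varphi(T_1)\cap\varphi(T_2)$ must have matching first $\ell$ coordinates, so the intersection is in bijection with $\{\alpha: T_1(\alpha)=T_2(\alpha)\}=\ker(T_1-T_2)$; your presentation via the explicit graph description handles both directions at once and is slightly cleaner than the paper's, which treats the two directions separately.
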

\begin{proof}
For the first direction, suppose $T_1$ and $T_2$ are given by $(M_1, c_1)$ and $(M_2, c_2)$ respectively. Let $M''_1$ be the $(n + \ell) \times n$ matrix obtained by vertically concatenating $I_{\ell}$ and $M$, and let $c''_1$ be the length $n + \ell$ vector obtained by vertically concatenating the length $\ell$ zero vector and $c_1$. Define $M''_2$ and $c''_2$ similarly. Then notice that $\varphi(T_1)$ is given by the column-image of affine transformation $T''_1 = (M''_1, c''_1)$, while $\varphi(T_2)$ is given by the column-image of affine transformation $T''_2 = (M''_2, c''_2)$. Thus, to show that edges are preserved it suffices to show that the column-images of $T''_1$ and $T''_2$ have intersection of dimension $\ell-1$. Indeed,  $T''_1 - T''_2 = (M''_1 - M''_2, c''_1 - c''_2)$, and since both $M''_1-M''_2$ and $c''_1 - c''_2$ are all zeros in the first $\ell$ rows, it follows that $\dim(\ker(T''_1 - T''_2)) =  \dim(\ker(T_1 - T_2)) = \ell - 1$. Therefore, $\dim(\im(T''_1 - T''_2)) = 1$ and $\dim(\varphi(T_1) \cap \varphi(T_2)) = \ell-1$.

In the converse direction, suppose that $U, V \in \V_{n+\ell, \ell}$ where $\dim(U \cap V) = \ell-1$. Write $U = u_0 + \spa(u_1, \ldots, u_{\ell})$ and  $V= v+ \spa(v_1, \ldots, v_{\ell})$ where $u_i = (e_i, u'_i)^T$, $v_i = (e_i, v'_i)^T$, $u_0 = (0, u'_0)^T$, and $v_0 = (0, v'_0)^T$. Let $M_1$ be the matrix with columns $u_i$ and $M_2$ be the matrix with columns $v_i$. Then by assumption the affine transformations $T_1 = (M_1, u_0)$ and $T_2 = (M_2, v_0)$ have images with intersection of dimension $\ell-1$. We claim that this implies $T_1$ and $T_2$ are in fact equal on an $\ell-1$ dimensional subspace. Indeed if $T_1(a) = T_2(b)$ for $a,b \in \Ff_q^{\ell}$ then examining the first $\ell$ coordinates implies that $a = b$. Therefore, $T_1, T_2$ are equal on the $\ell-1$-dimensional preimage of $\im(T_1) \cap \im(T_2)$. Finally, since $\varphi^{-1}(U)$ and $\varphi^{-1}(V)$ are given by the projections of $T_1$ and $T_2$ respectively onto the last $n$ coordinates, $\dim(\ker(\varphi^{-1}(U) -\varphi^{-1}(V))) =\ell-1$ and the conclusion follows.
\end{proof}

\begin{lemma} \label{lm: neighbors}
Let $V_1 \in \V_{n+\ell, \ell}$ be in the image of $\varphi$. Then at least $1-1/q$ fraction of its neighbors are also in the image of $\varphi$.
\end{lemma}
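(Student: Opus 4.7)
My plan is to describe the image of $\varphi$ intrinsically: an $\ell$-flat $V \subseteq \Ff_q^{n+\ell}$ lies in the image of $\varphi$ if and only if the projection $\pi \colon \Ff_q^{n+\ell} \to \Ff_q^\ell$ onto the first $\ell$ coordinates restricts to a bijection on $V$. Letting $K = \ker(\pi) = \{0\}^{\ell} \times \Ff_q^n$, this is equivalent to the linear part $L$ of $V$ satisfying $L \cap K = \{0\}$. Since $V_1$ is in the image of $\varphi$, its linear part $L_1$ therefore satisfies $L_1 \cap K = \{0\}$, a fact that will be the only structural input used.

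Next, I would sample a uniformly random neighbor $V_2$ of $V_1$ in the standard two-step way: first choose a uniformly random hyperplane $W \subseteq V_1$ (an $(\ell{-}1)$-flat), and then choose a uniformly random $\ell$-flat $V_2 \supseteq W$ with $V_2 \neq V_1$. Fix $W$ with linear part $L_0 \subseteq L_1$ of dimension $\ell-1$. The $\ell$-flats through $W$ are in bijection with the $1$-dimensional subspaces $\spa(v)$ of the quotient $\Ff_q^{n+\ell}/L_0 \cong \Ff_q^{n+1}$, via $V_2 \mapsto L_2 = L_0 + \spa(v)$; one such line recovers $V_1$ and is excluded.

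For $V_2$ to fail to lie in the image of $\varphi$, its linear part $L_2$ must intersect $K$ non-trivially. Because $L_0 \subseteq L_1$ gives $L_0 \cap K = \{0\}$, this failure is equivalent to $v \in L_0 + K$ modulo $L_0$. In the quotient $\Ff_q^{n+\ell}/L_0$, the image of $L_0 + K$ is naturally identified with $K \cong \Ff_q^n$, so the ``bad'' lines correspond to lines in an $n$-dimensional space. A short count then gives
\[
\frac{\#\{\text{bad lines}\}}{\#\{\text{lines giving } V_2 \neq V_1\}} = \frac{(q^n-1)/(q-1)}{(q^{n+1}-q)/(q-1)} = \frac{1}{q}.
\]
The one subtlety — that the line corresponding to $V_1$ itself is not among the bad ones — follows since otherwise a bad representative $v \in L_1$ would force $v - \ell_0 \in L_1 \cap K = \{0\}$, contradicting $v \notin L_0$.

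Finally, since this conditional fraction is independent of $W$, averaging over the uniformly chosen hyperplane $W \subseteq V_1$ yields that exactly a $1/q$ fraction of the neighbors of $V_1$ are outside the image of $\varphi$, giving the claimed $1 - 1/q$ lower bound. I do not anticipate any real obstacle: the argument is essentially a dimension count in $\Ff_q^{n+\ell}/L_0$, and the only ingredient needed from the hypothesis is $L_1 \cap K = \{0\}$.
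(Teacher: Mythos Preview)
Your proposal is correct and follows essentially the same approach as the paper: both characterize $\im(\varphi)$ by the condition that the linear part meets $K=\{0\}^{\ell}\times\Ff_q^{n}$ trivially, sample a neighbor by first fixing an $(\ell{-}1)$-subflat $W\subseteq V_1$ and then extending by a new direction, and count the extensions whose new direction falls into $L_0+K$. The only cosmetic difference is that you do the count in the quotient $\Ff_q^{n+\ell}/L_0$, while the paper works with an explicit basis and the coordinate projection; both computations yield exactly $1/q$ for the bad fraction.
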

\begin{proof}
Recall that $V$ is in the image of $\varphi$ if the projection of $V$ onto its first $\ell$-coordinates is full rank. Fix $V_1 \in \V_{n+\ell, \ell}$ in the image of $\varphi$. Then a neighbor of $V_1$ can be sampled as follows: choose a uniformly random point $v_0 \in V$ and $\ell$ linearly independent directions, $v_1, \ldots, v_{\ell}$, so that $V_1 = v_0 + \spa(v_1, \ldots, v_{\ell})$, a uniformly random $v'_{\ell}$ outside of $\spa(v_1, \ldots, v_{\ell})$, and set $V_2 = v_0 + \spa(v_1, \ldots, v_{\ell-1}, v'_{\ell})$. Since $V_1$ is in the image of $\varphi$, the projection of $v_1, \ldots, v_{\ell}$ to the first $\ell$ coordinates is full rank. Likewise, $V_2$ is in the image of $\varphi$ if $v'_{\ell}$ projected onto its first $\ell$ coordinates is linearly independent to $v_1, \ldots, v_{\ell-1}$'s projections to the first $\ell$ coordinates. This happens with probability exactly $1 - 1/q$, completing the proof.
\end{proof}

\begin{lemma} \label{lm: expansion preserved}
If $S \subseteq \T_{n, \ell}$ satisfies $1-\Phi(S) = \eta$, then $\varphi(S)$ satisfies $1 - \Phi'(\varphi(S)) \geq \eta (1-1/q)$, where $\Phi$ and $\Phi'$ are expansion in $\AffShort(n, \ell)$ and $\AffGras(n +\ell, \ell)$ respectively.
\end{lemma}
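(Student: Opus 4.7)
The plan is to deduce the expansion bound by showing that $\varphi$ essentially identifies $\AffShort(n,\ell)$ with the induced subgraph of $\AffGras(n+\ell,\ell)$ on $\im(\varphi)$, and that this induced subgraph captures almost all of the neighbors of every vertex in $\im(\varphi)$. First I would record that both graphs are regular (by vertex-transitivity under the action of the affine group); let $D_1$ and $D_2$ denote their respective degrees. Lemma~\ref{lm: edges preserved}, combined with injectivity of $\varphi$, then promotes $T' \mapsto \varphi(T')$ to a \emph{bijection} from $N_{\AffShort}(T)$ onto $N_{\AffGras}(\varphi(T)) \cap \im(\varphi)$: the forward direction of the lemma shows this map is a well-defined injection landing in $\im(\varphi)$, while the converse direction shows every element of the target has a preimage.

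Taking cardinalities, this bijection gives $D_1 = |N_{\AffGras}(\varphi(T)) \cap \im(\varphi)|$, and combining with Lemma~\ref{lm: neighbors} yields the key inequality $D_1 \geq (1 - 1/q) D_2$. Moreover, because $\varphi$ is injective globally, the very same bijection identifies $N_{\AffShort}(T) \cap S$ with $N_{\AffGras}(\varphi(T)) \cap \varphi(S)$ for every $T \in \T_{n,\ell}$. Averaging over $T \in S$ (equivalently over $\varphi(T) \in \varphi(S)$) and invoking regularity of both graphs then yields
\[
1 - \Phi'(\varphi(S)) \;=\; \E_{T \in S}\!\left[\frac{|N_{\AffShort}(T) \cap S|}{D_2}\right] \;=\; \bigl(1 - \Phi(S)\bigr) \cdot \frac{D_1}{D_2} \;\geq\; \eta\left(1 - \frac{1}{q}\right),
\]
which is the claimed bound.

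I do not anticipate a substantive obstacle: the entire argument is a short bookkeeping exercise leveraging Lemmas~\ref{lm: edges preserved} and~\ref{lm: neighbors} together with regularity of both graphs. The only place requiring a moment of care is verifying that the map on neighborhoods is a genuine bijection (not merely an injection), which is precisely where the converse direction of Lemma~\ref{lm: edges preserved} comes in.
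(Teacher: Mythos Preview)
Your proposal is correct and follows essentially the same approach as the paper's proof: both combine Lemma~\ref{lm: edges preserved} and Lemma~\ref{lm: neighbors} to conclude. The paper dispatches it in two sentences (``$1-1/q$ fraction of the neighbors of $\varphi(S)$ are in $\im(\varphi)$, and $\eta$ fraction of those are in $\varphi(S)$''), whereas you make the underlying bijection $N_{\AffShort}(T) \leftrightarrow N_{\AffGras}(\varphi(T))\cap\im(\varphi)$ and the degree comparison $D_1 \geq (1-1/q)D_2$ explicit, which is exactly what the paper's terse argument is implicitly using.
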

\begin{proof}
This is a direct consequence of the previous two lemmas. By Lemma~\ref{lm: neighbors}, $1-1/q$ fraction of the neighbors of $\varphi(S)$ are also in the image of $\varphi$ and by the assumption and Lemma~\ref{lm: edges preserved} $\eta$ fraction of these neighbors are also in $\varphi(S)$. The result follows.
\end{proof}

\begin{lemma} \label{lm: nice sets preserved}
The map $\varphi$ is a bijection between the following pairs of sets:
\begin{itemize}
    \item $\C_{a,b}$ and $\D_{v} \cap \im(\varphi)$ for $v = [a, b]^T$.
    \item $\C_{a,b, \lin}$ and  $\D_{v, \lin} \cap \im(\varphi)$ for $v = [a, b]^T$.
    \item $\C_{a^T, b, \beta}$  and $\D_{W} \cap \im(\varphi)$ for $W$ given by $\langle [-b^T, a^T], x \rangle = \beta$.
    \item $\C_{a^T, b, \beta, \lin}$  and $\D_{W, \lin} \cap \im(\varphi)$ for $W$ given by $\langle [-b^T, a^T], x \rangle = \beta$.
\end{itemize}
\end{lemma}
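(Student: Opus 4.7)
The plan is to prove each of the four bijections by direct substitution into the definition of $\varphi$. Writing $T = (M, c) \in \T_{n, \ell}$ with $M$ having columns $v_1, \ldots, v_\ell$, the explicit parametrization is
\[
\varphi(T) = \bigl\{(\alpha, c + M\alpha)^T : \alpha \in \Ff_q^\ell\bigr\} \subseteq \Ff_q^{n+\ell},
\]
so that $\varphi(T)$ has affine shift $(0, c)^T$ and linear part $\bigl\{(\alpha, M\alpha)^T : \alpha \in \Ff_q^\ell\bigr\}$. Since $\varphi$ has already been shown to be injective with image exactly the $\ell$-flats whose projection onto the first $\ell$ coordinates is surjective, it suffices in each case to verify that $T$ lies in the $\mathcal{C}_{\cdot}$ set if and only if $\varphi(T)$ lies in the corresponding $\mathcal{D}_{\cdot}$ set; bijectivity onto the intersection with $\im(\varphi)$ is then automatic.

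For the zoom-in case, the point $v = (a, b)^T$ lies in $\varphi(T)$ iff there exists $\alpha$ with $\alpha = a$ and $c + M\alpha = b$, which collapses to $T(a) = b$, i.e.\ $T \in \mathcal{C}_{a,b}$. For the zoom-in on the linear part, the same calculation applied to the linear part (i.e.\ with $c$ dropped) shows that $v = (a, b)^T$ lies in $\spa((e_i, v_i)^T)$ iff $Ma = b$, matching $\mathcal{C}_{a, b, \lin}$.

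For the zoom-out case, I would substitute a generic point $(\alpha, c + M\alpha)^T$ of $\varphi(T)$ into the defining equation of $W$, yielding
\[
\langle [-b^T, a^T], (\alpha, c + M\alpha)^T \rangle = (a^T M - b^T)\alpha + a^T c,
\]
which is identically equal to $\beta$ as a function of $\alpha$ exactly when $a^T M = b^T$ and $a^T c = \beta$, precisely the conditions defining $\mathcal{C}_{a^T, b, \beta}$. For the zoom-out on the linear part, the same substitution applied to the linear part alone forces $W$ to pass through the origin (so $\beta = 0$, which is consistent with $\mathcal{D}_{W,\lin}$ being nonempty only when $W$ is linear) and gives $a^T M = b^T$, matching $\mathcal{C}_{a^T, b, \lin}$.

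The entire argument is a routine unpacking of definitions, with the one mild bookkeeping subtlety being the identification between the parametrization of $\varphi(T)$ and the point/hyperplane conditions in $\Ff_q^{n+\ell}$; once the explicit form of $\varphi(T)$ is written down, each case is a one-line computation and no genuine obstacle is expected.
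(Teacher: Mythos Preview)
Your proposal is correct and follows essentially the same approach as the paper's proof: both arguments unpack the definition of $\varphi$ and verify each of the four correspondences by direct substitution. Your presentation is slightly more streamlined in that you use the explicit parametrization $\varphi(T) = \{(\alpha, c + M\alpha)^T : \alpha \in \Ff_q^\ell\}$ and observe up front that injectivity of $\varphi$ reduces each case to the single equivalence ``$T \in \mathcal{C}_\cdot \Leftrightarrow \varphi(T) \in \mathcal{D}_\cdot$'', whereas the paper works with the basis representation and checks the two directions separately; but these are cosmetic differences, not substantive ones.
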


\begin{proof}
    We show the zoom-in and zoom-out cases. The other two are analogous. 

zoom-ins: Take any arbitrary $T = (M, c) \in \C_{a,b}$ and let let $M' = [I_{\ell}, M]^T$ and $c' = [0, c]^T$. It is easy to check that $M'a + c' = [a, b]^T$. Since $T$ is arbitrary, this shows the first half of the bijection.

To complete the proof of this case, we must show that any flat in $\D_{v} \cap \im(\varphi)$ is mapped to. Take such a flat $U$. Since $U \in \im(\varphi)$, we can write $U = c' + \spa(u_1, \ldots, u_{\ell})$, where the first $\ell$ coordinates of $u_i$ is equal to $e_i$ and the first $\ell$ coordinates of $c'$ are all $0$. There is a unique linear combination of the $u_i$'s and $c'$ that sums to $v$, and looking at the first $\ell$ coordinates it must be the case that $v = c' + a_1 u_1 + \cdots + a_{\ell} u_{\ell}$ where $a_i$ is the $i$th coordinate of $a$. Let $M$ be the matrix whose $i$th column is given by the last $n$ entries of $u_i$ and let $c$ be the last $n$ entries of $c'$. It follows that $U = \varphi((M, c))$ and that $Ma = b$.

zoom-outs: Take an arbitrary $T = (M, c) \in \C_{a^T, b, \beta}$ and suppose it has columns $v_1, \ldots, v_{\ell}$. Then $\varphi(T) = c' + \spa(v'_0, \ldots, v'_{\ell})$ where $c' = [0, c]^T$ and $v'_i = [e_i, v_i]^T$ for $1 \leq 1 \leq \ell$. Let $h = [-b^T, a^T]$ be the length $\ell + n$ row vector whose first $\ell$ entries are $b^T$ and last $n$ entries are $a^T$. By construction $\langle h, v'_i \rangle = 0$ and $\langle h, c \rangle = \beta$. Therefore, $\varphi(T)$ is contained in the hyperplane given $\langle h, x \rangle = \beta$. Since $T$ was arbitrary, it follows that $\varphi(\C_{a^T, b, \beta}) \subseteq W$, where $W$ is the hyperplane given by $\langle h, x \rangle = \beta$.

For the other direction of the bijection, take any $z + V \in \D_{W, \lin} \cap \im(\varphi)$, let $(M, c)$ be its preimage under $\varphi$, and suppose $W$ is given by $\langle h, x \rangle = 0$ (note that $W$ must be a linear subspace since it contains $V$ which is a linear subspace). Take $b$ to be the negative of the first $\ell$ coordinates of $h$ and $a$ to be the last $n$ coordinates of $h$. It follows that $a^T M = b$.
\end{proof}
With these lemmas, we can obtain the desired characterization of poorly expanding setes in $\AffShort(n, \ell).$
\begin{proof}[Proof of Theorem~\ref{th: pseudorandom}]
    Suppose $\Sc$ satisfies the conditions of Theorem~\ref{th: pseudorandom} and let $\Ac = \varphi(\Sc) \subseteq \AffGras(n+\ell, \ell)$. Henceforth use $\Phi$ and $\mu$ to denote expansion and measure in $\AffGras(n+\ell, \ell)$. By Lemma~\ref{lm: expansion preserved} we have that $1-\Phi(\Ac) \geq \frac{1}{q}-\frac{1}{q^2}$. By Lemma~\ref{lm: nice sets preserved}, we have that $\Ac$ is also $\xi$-pseudorandom with respect to hyperplanes, hyperplanes on its linear part, and points on its linear part. Finally since $\varphi$ is an injection, it is clear that $\mu(\Ac) \leq \xi$. By Lemma~\ref{th: pseudo aff gras} it follows that there exists a point $v$ such that 
    \[
    \frac{|\Ac \cap \D_{v}|}{|\D_{v}|}\geq 1- q^2\left(4q^{-\ell} + 867\xi^{1/4} + \frac{1}{q^{100}(q-1)} \right).
    \]
    By Lemma~\ref{lm: nice sets preserved}, there is a zoom-in (in the affine bi-linear scheme graph) $\C_{a,b}$ such that $\varphi^{-1}(\Ac \cap \D_{v}) = \Sc \cap \C_{a,b}$, so $|\Sc \cap \C_{a,b}| = | \Ac \cap \D_v|$. Finally $\frac{|\C_{a,b}|}{|\D_v|}$ is the probability that a randomly chosen flat from $\D_v$ is not full rank when restricted to its first $\ell$ coordinates. This probability is at most $1-1/q$. Indeed a flat from $\D_v$ can be chosen by choosing $\ell$ linearly independent vectors $v_1, \ldots, v_{\ell}$ and taking the flat to be $v + \spa(v_1, \ldots, v_{\ell})$. Conditioned on the first $\ell-1$ vectors being chosen so that their restriction to the first $\ell$ coordinates is linearly independent, there is a $1-1/q$ chance that $v_{\ell}$ to satisfy this property as well. It follows that
    \[
     \frac{|\Sc \cap \C_{a,b}|}{|\C_{a,b}|}\geq \frac{q}{q-1} \left(1- q^2\left(4q^{-\ell} + 867\xi^{1/4} + \frac{1}{q^{100}(q-1)} \right)\right) \geq 1 - \frac{1}{(q-1)^2} - \frac{q^3}{q-1} \left(4q^{-\ell} + 867\xi^{1/4} \right).
    \]
\end{proof}

\section{Canonical Monomials characterizing $\RM[n,q,d]$}\label{sec:prob_1_accept}
In this section we include the proof of Theorem~\ref{thm:perfect_accept}, showing that any polynomial passing the test with probability $1$ is degree $d$. The proof is implicit in ~\cite{RZS}, but as our tester is presented differently we give a proof for completion.

Write $d+1 = \ell \cdot p (q-q/p) + r$, where $1 \leq r \leq p(q-q/p)$, and set $s = \ell \cdot p$. First, note that our tester detects monomials of the form:
\begin{equation} \label{eqn: canonical mon}
    \left( \prod_{i=1}^{s} x_i^{q-q/p} \right) \prod_{j=1}^{p+2} x_{s+j}^{e_j},
\end{equation}
for any $(e_1, \ldots, e_{t})$ such that $\sum_{j=1}^{t}e_j \geq r$. Indeed, for $(e_1, \ldots, e_{t})$ such that $\sum_{j=1}^{t}e_j \geq r$, let $e' = (q-1-e_1, \ldots, q-1-e_{t})$ and consider the expansion of $H_{e'}(x_1, \ldots, x_{s+t})$. Using Lucas's Theorem, it is not hard to see that the expansion of $H_{e'}(x_1, \ldots, x_{s+t})$ contains the monomial $\left( \prod_{i=1}^{s} x_i^{q/p-1} \right) \prod_{j=1}^{t} x_{s+j}^{q-1-e_j}$, and hence by Lemma~\ref{lm: inner product}, any canonical monomial in \eqref{eqn: canonical mon} is rejected.

Let $\mathcal{G}$ be the family of functions that pass the sparse $s+t$-flat test with probability $1$. In this section we will prove:

\begin{lemma} \label{lm: local characterization}
    If $f$ passes the sparse-$s+t$-flat test with probability $1$, then $f \in \RM[n,q,d]$. Equivalently, $\mathcal{G} = \RM[n,q,d]$
\end{lemma}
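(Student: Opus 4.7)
The plan is to establish the two containments $\RM[n,q,d] \subseteq \mathcal{G}$ and $\mathcal{G} \subseteq \RM[n,q,d]$ separately.

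For the first containment, suppose $f$ has degree at most $d$. For any $T \in \T_{n, s+t}$ and any valid exponent vector $e$ (satisfying $\sum_{j=1}^{t} e_j \leq t(q-1) - r$), Lemma~\ref{lm: inner product} writes $\langle f \circ T, H_e \rangle$ as a linear combination of coefficients $C_{e'}$ of $f \circ T$ where each such $e'$ satisfies $e'_{p(i-1)+1} + \cdots + e'_{pi} = p(q-q/p)$ for $1 \leq i \leq s/p$ and $e'_{s+j} = q-1-e_j$ for $1 \leq j \leq t$. The total degree of any such monomial $x^{e'}$ is at least $s(q-q/p) + \sum_{j=1}^t (q-1-e_j) \geq s(q-q/p) + r = d+1$, so every such $C_{e'}$ vanishes. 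Hence $\langle f \circ T, H_e \rangle = 0$ for all $T$ and all valid $e$, proving $f \in \mathcal{G}$.

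For the reverse containment, I would first observe that $\mathcal{G}$ is linear (since $\langle \cdot, H_e \rangle$ is linear in its first argument) and affine-invariant: for any affine $A: \Ff_q^n \to \Ff_q^n$, we have $(f\circ A)\circ T = f\circ (A\circ T)$, and $A\circ T$ is again an affine map in $\T_{n, s+t}$. A standard argument for linear affine-invariant codes---using closure under the diagonal action $x_i \mapsto \alpha_i x_i$ for $\alpha_i \in \Ff_q^*$ together with character-theoretic averaging over the $(\Ff_q^*)^n$-action on monomials---then shows that $\mathcal{G}$ is the $\Ff_q$-span of the (reduced) monomials it contains. Hence it suffices to show that no monomial $x^{\bar e}$ with $\sum_i \bar{e}_i \geq d+1$ lies in $\mathcal{G}$.

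Fix such a monomial $x^{\bar e}$. Following the strategy of~\cite{RZS}, the plan is to construct an affine map $T: \Ff_q^{s+t} \to \Ff_q^n$ such that the expansion of $x^{\bar e} \circ T$, after reduction modulo $\{y_i^q - y_i\}$, has a nonzero coefficient on some \emph{canonical monomial} $\prod_{i=1}^{s} y_i^{q-q/p} \prod_{j=1}^{t} y_{s+j}^{a_j}$ with $\sum_j a_j \geq r$. By the opening discussion of this appendix, every such canonical monomial is detected by the tester with the degree sequence $e_j = q-1-a_j$, so the existence of this nonzero coefficient immediately yields $\langle x^{\bar e} \circ T, H_e \rangle \neq 0$ and hence $x^{\bar e} \notin \mathcal{G}$. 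Concretely, writing $T(y) = \sum_{k=1}^{s+t} y_k v_k + c$ and expanding $\prod_i (T(y)_i)^{\bar e_i}$, each $y$-monomial appears with coefficient given by a product of multinomial coefficients in the entries of the $v_k$'s and $c$; by Lucas's theorem in characteristic $p$, these multinomial coefficients are nonzero precisely when the base-$p$ digit patterns of the exponents align appropriately. The assumption $\sum_i \bar e_i \geq s(q-q/p) + r$ guarantees that an alignment producing a canonical exponent exists, so a generic choice of $v_1, \ldots, v_{s+t}, c$ realizes a nonzero canonical coefficient.

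The hard part, and the main obstacle, is exactly this last step: showing that for \emph{every} $\bar e$ of total degree at least $d+1$ there is an affine substitution making a canonical monomial appear with nonzero coefficient after reduction. This is the combinatorial heart of the Ron--Zewi--Sudan local characterization and requires a careful argument with multinomial identities and base-$p$ digit bookkeeping via Lucas's theorem; it is the only step that genuinely uses the structure of the Generalized Reed-Muller code rather than just the form of the tester. Once established, combining it with the reduction to monomials via affine-invariance and the easy containment above gives $\mathcal{G} = \RM[n,q,d]$, which is Lemma~\ref{lm: local characterization}.
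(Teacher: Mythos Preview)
Your overall strategy matches the paper's: establish the easy containment via Lemma~\ref{lm: inner product}, reduce the reverse containment to monomials using linearity and affine-invariance of $\mathcal{G}$ (this is exactly the Monomial Extraction Lemma~\ref{lm: monomial extraction}), and then show that any monomial of degree at least $d+1$ forces a canonical monomial of the form~\eqref{eqn: canonical mon} to appear. Where you diverge from the paper, and where your proposal has a genuine gap, is in the last step.

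You propose to find $T$ so that $x^{\bar e}\circ T$, after reduction modulo $\{y_i^q-y_i\}$, has a nonzero canonical coefficient, and you assert that ``a generic choice'' of $T$ works. This is not obvious and is not how the paper proceeds: once you allow arbitrary linear substitutions and reduce, cancellations among many contributing terms can kill the canonical coefficient, and a genericity argument would need to rule this out carefully. The paper instead works entirely inside the affine-invariant family $\mathcal{G}$ using two closure properties---the $p$-shadow closure (Lemma~\ref{lm: shadow closed}) and the monomial-shift lemma (Lemma~\ref{lm: monomial shift}), which applies the specific transformation $x_j\mapsto x_i+x_j$ and uses Lucas's theorem to guarantee a single target monomial survives. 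With these in hand, the paper runs an explicit potential-function argument: set $c(e)=\sum_{i\le s}\max(0,(q-q/p)-e_i)$ and show that whenever $c(e)>0$ one can shift a $p$-power of degree from some coordinate (either $j>s$ with $e_j>0$, or $j\le s$ with $e_j>q-q/p$) into a deficient coordinate $i\le s$, strictly decreasing $c(e)$ while staying in $\supp(\mathcal{G})$. Iterating reaches $e_i\ge q-q/p$ for all $i\le s$, and a second pass pushes the remaining $\le r+q-1$ degree into the last $t$ coordinates to land on a canonical monomial. This replaces your unproven genericity claim with a short, constructive argument that avoids any reduction modulo $y_i^q-y_i$.
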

One direction of this lemma is clear. If $\deg(f) \leq d$ then $f$ passes with probability $1$. The other direction amounts to showing that $\mathcal{G} \subseteq \RM[n,q,d]$. To show this, we will actually show the contrapositive and prove that if $\deg(f) \geq d+1$, and $f \in \mathcal{G}$, then one of the canonical monomials in \eqref{eqn: canonical mon} must be in $\mathcal{G}$. This is a contradiction and establishes that  $\mathcal{G} \subseteq \RM[n,q,d]$.

Before proceeding to the proof, we will need the following two facts from~\cite{KS} about affine-invariant families of polynomials. For a family of polynomials $\mathcal{F}$, let $\supp({\mathcal{F}})$ denote the set of monomials that appear in at least one of these polynomials. The first fact says that these monomials are a basis of $\mathcal{F}$.

\begin{lemma} [Monomial Extraction Lemma \cite{KS}] \label{lm: monomial extraction}
    If $\mathcal{F}$ is an affine-invariant family of polynomials then $\mathcal{F} = \spa(\supp(\mathcal{F}))$.
\end{lemma}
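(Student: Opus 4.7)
My plan is to prove the lemma by establishing the two inclusions separately. The forward inclusion $\mathcal{F} \subseteq \spa(\supp(\mathcal{F}))$ is a triviality --- every $f \in \mathcal{F}$ is, by definition, an $\Ff_q$-linear combination of its own monomials, which all lie in $\supp(\mathcal{F})$. All the work is in the reverse inclusion, and since $\mathcal{F}$ is a linear code this reduces (by linearity of $\mathcal{F}$) to showing that \emph{every individual monomial $x^m \in \supp(\mathcal{F})$ is actually in $\mathcal{F}$}. I will prove this single-monomial extraction by combining affine invariance with a Vandermonde-style interpolation.

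Concretely, suppose $f(x) = \sum_{e \in \{0,\ldots,q-1\}^n} c_e x^e$ lies in $\mathcal{F}$ with $c_m \neq 0$. I would consider the coordinate-wise dilations $D_\alpha \colon x \mapsto (\alpha_1 x_1, \ldots, \alpha_n x_n)$ for $\alpha \in \Ff_q^n$, each of which is an affine transformation, so $f \circ D_\alpha \in \mathcal{F}$ by affine invariance. Since $(f \circ D_\alpha)(x) = \sum_e c_e \alpha^e x^e$ with $\alpha^e := \prod_i \alpha_i^{e_i}$, any $\Ff_q$-linear combination
\[
g(x) \;:=\; \sum_{\alpha \in \Ff_q^n} \lambda_\alpha \,(f \circ D_\alpha)(x) \;=\; \sum_e c_e \Bigl(\sum_\alpha \lambda_\alpha \alpha^e \Bigr) x^e
\]
lies in $\mathcal{F}$. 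It therefore suffices to choose scalars $\{\lambda_\alpha\}_{\alpha \in \Ff_q^n}$ so that $\sum_\alpha \lambda_\alpha \alpha^e = \delta_{e,m}$ for every $e \in \{0,\ldots,q-1\}^n$, because then $g = c_m x^m$, yielding $x^m \in \mathcal{F}$.

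The main (and essentially only) work is producing such $\lambda_\alpha$. I would take them in product form $\lambda_\alpha = \prod_i \lambda_i(\alpha_i)$, which factorizes the requirement into $n$ independent one-variable systems: find $\lambda_i \colon \Ff_q \to \Ff_q$ with $\sum_{\alpha \in \Ff_q} \lambda_i(\alpha) \alpha^{e_i} = \delta_{e_i, m_i}$ for every $e_i \in \{0, \ldots, q-1\}$. Each is a $q \times q$ linear system in the unknowns $\{\lambda_i(\alpha)\}_{\alpha \in \Ff_q}$, whose coefficient matrix $V_{e,\alpha} = \alpha^e$ is invertible: its rows are the evaluation vectors of the $q$ polynomials $1, T, T^2, \ldots, T^{q-1}$ at the $q$ points of $\Ff_q$, and these monomials form an $\Ff_q$-basis of the function space $\{\Ff_q \to \Ff_q\}$. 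Hence the $\lambda_i$ (and therefore the $\lambda_\alpha$) exist and the proof is complete.

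I do not anticipate a substantive obstacle, but two small points merit flagging. First, one must include the column $\alpha = 0$ in the dilation sum; restricting to $\alpha \in (\Ff_q^*)^n$ leaves $V$ only $q \times (q-1)$ and makes the interpolation underdetermined. The paper's definition of affine invariance (closure under \emph{all} affine transformations, not merely invertible ones) is precisely what permits this. Second, the argument uses only dilations and not translations, so the lemma in fact holds under the weaker hypothesis of linearity plus closure under coordinate-wise dilations --- which is consistent with, and slightly stronger than, what is stated.
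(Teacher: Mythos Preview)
Your proof is correct. The paper does not supply its own argument for this lemma --- it is quoted from Kaufman--Sudan as a known fact --- so there is no in-paper proof to compare against. Your dilation-plus-interpolation approach works exactly as stated under the paper's convention that affine invariance means closure under \emph{all} affine maps $\Ff_q^n \to \Ff_q^n$, including non-invertible ones. Two minor remarks. First, your closing parenthetical has the terminology backwards: restricting to $\alpha \in (\Ff_q^*)^n$ yields a $q \times (q-1)$ system that is \emph{over}determined (the rows for $e = 0$ and $e = q-1$ coincide since $\alpha^0 = \alpha^{q-1} = 1$ for $\alpha \neq 0$), not underdetermined --- though your conclusion that the system need not be solvable is right, and this is exactly why an argument using only invertible dilations would also need translations. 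Second, the linearity of $\mathcal{F}$, which you correctly invoke, is implicit in the paper's context (the families to which the lemma is applied are all linear codes) but is not literally part of the lemma's hypothesis; the statement is of course false without it.
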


The second fact says that the monomials appearing in an affine-invariant family are $p$-shadow closed. For two integers $a, b \in \{0, \ldots, q-1 \}$, let $a = \sum_{i=0}^{k-1}p^ia_i$ and $b = \sum_{i=0}^{k-1}p^ib_i$ be their base $p$ representations (recall $q = p^k$). Then we say $a$ is in the $p$-shadow of $b$ if $a_i \leq b_i$ for $i = 0, \ldots, k-1$, and denote this by $a \leq_p b$. Then for two exponent vectors $e = (e_1, \ldots, e_n)$ and $e' = (e'_1, \ldots, e'_n)$, we say $e \leq_p e'$ if $e_i \leq_p e'_i$ for every $i$. Affine-invariant families of polynomials have the following shadow closed property.

\begin{lemma} \label{lm: shadow closed}
    Let $\mathcal{F} = \spa(\supp(\mathcal{F}))$ be an affine invariant family of polynomials. If $e \leq_p e'$ and $e' \in \supp(\mathcal{F})$, then $e \in \supp(\mathcal{F})$ as well.
\end{lemma}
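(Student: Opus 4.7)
The plan is to exploit translations, as a special case of affine invariance, together with Lucas's theorem, to show that any monomial $x^e$ with $e \leq_p e'$ lies in $\mathcal{F}$ itself. This suffices, because once $x^e \in \mathcal{F}$ the polynomial $x^e$ exhibits $e$ as a support monomial (its unique monomial being $x^e$). Note also that since $\mathcal{F} = \spa(\supp(\mathcal{F}))$, every element of $\supp(\mathcal{F})$, and in particular $x^{e'}$, is itself a member of $\mathcal{F}$.

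For any $c = (c_1, \ldots, c_n) \in \Ff_q^n$, applying the translation $x \mapsto x+c$ to $x^{e'}$ produces
\[
f_c(x) \;:=\; \prod_{i=1}^n (x_i + c_i)^{e'_i},
\]
which lies in $\mathcal{F}$ by affine invariance. Using the binomial theorem together with Lucas's theorem (which says $\binom{e'_i}{j} \not\equiv 0 \pmod p$ iff $j \leq_p e'_i$), I would expand
\[
f_c(x) \;=\; \sum_{e'' \leq_p e'} \binom{e'}{e''}\, c^{\,e' - e''}\, x^{e''}, \qquad \text{where } \binom{e'}{e''} := \prod_i \binom{e'_i}{e''_i} \neq 0 \text{ in } \Ff_q.
\]
Because $\mathcal{F}$ is a vector space, any $\Ff_q$-linear combination $\sum_c \alpha_c f_c$ is still in $\mathcal{F}$, and the plan is to choose the $\alpha_c$'s so that this combination equals precisely $x^e$.

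Matching coefficients of each $x^{e''}$ reduces the task to finding scalars $(\alpha_c)_{c \in \Ff_q^n} \subseteq \Ff_q$ satisfying
\[
\sum_{c \in \Ff_q^n} \alpha_c\, c^{\,e' - e''} \;=\; \frac{\ind[e'' = e]}{\binom{e'}{e}} \qquad \text{for every } e'' \leq_p e'.
\]
Since all exponents $e' - e''$ (for $e'' \leq_p e'$) are distinct and lie coordinatewise in $\{0,\ldots,q-1\}$, the monomials $c \mapsto c^{\,e' - e''}$ are linearly independent as functions on $\Ff_q^n$. By finite-dimensional duality, the linear map $(\alpha_c)_c \mapsto \bigl( \sum_c \alpha_c\, c^{\,e' - e''} \bigr)_{e'' \leq_p e'}$ is therefore surjective, so the required $\alpha_c$'s exist. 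This exhibits $x^e \in \mathcal{F}$, and hence $x^e \in \supp(\mathcal{F})$.

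I do not foresee a genuine obstacle: the argument is almost entirely formal once one commits to translating via affine invariance and using Lucas's theorem to read off which monomials can appear. The only mildly non-trivial ingredient is the linear-independence/surjectivity step, which is a standard polynomial-interpolation fact over $\Ff_q$.
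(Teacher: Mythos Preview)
Your argument is correct. The paper does not actually supply its own proof of this lemma; it simply records it as a known fact from~\cite{KS}. Your route---apply translations to $x^{e'}$, expand with Lucas, then take a suitable $\Ff_q$-linear combination to isolate $x^e$---is sound in every detail: the exponents $e'-e''$ for $e''\leq_p e'$ are distinct and lie in $\{0,\ldots,q-1\}^n$, so the corresponding monomials in $c$ are genuinely independent as functions on $\Ff_q^n$, and the surjectivity/interpolation step goes through.

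That said, you are doing more work than necessary. Because $\supp(\mathcal{F})$ is the set of monomials appearing in \emph{some} element of $\mathcal{F}$, you do not need to produce $x^e$ on the nose inside $\mathcal{F}$; it suffices to exhibit a single $f\in\mathcal{F}$ in which $x^e$ occurs with nonzero coefficient. Taking any $c\in(\Ff_q^*)^n$, the polynomial $(x+c)^{e'}\in\mathcal{F}$ already has $x^e$-coefficient $\binom{e'}{e}c^{\,e'-e}\neq 0$ by Lucas, and you are done. This is the one-line version that the Kaufman--Sudan reference essentially contains; your linear-combination step (and the duality argument) can be dropped entirely.
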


Finally, we will need the following which will allow us to go from one monomial in $\mathcal{F}$ to another with the same degree, but with the distribution of the individual degrees shifted.

\begin{lemma} \label{lm: monomial shift}
    Suppose $x^e \in \mathcal{F}$ and suppose $m \leq_p e_2$. Then the monomial $x^{e'}\in \mathcal{F}$, where $e' = (e_1 + m, e_2-m, e_3, \ldots, e_n)$.
\end{lemma}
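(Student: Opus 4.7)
The plan is to combine affine-invariance with the Monomial Extraction Lemma. Concretely, I would apply the shear $x_2 \mapsto x_1 + x_2$ to $x^e$: since $\mathcal{F}$ is affine-invariant and $x^e \in \mathcal{F}$, the polynomial
\[
g(x) \;=\; x_1^{e_1}(x_1 + x_2)^{e_2}\, x_3^{e_3}\cdots x_n^{e_n}
\]
also lies in $\mathcal{F}$. Expanding by the binomial theorem gives
\[
g(x) \;=\; \sum_{k=0}^{e_2}\binom{e_2}{k}\, x_1^{e_1+k} x_2^{e_2-k} x_3^{e_3}\cdots x_n^{e_n},
\]
and by Lucas's theorem $\binom{e_2}{k} \not\equiv 0 \pmod{p}$ precisely when $k \leq_p e_2$. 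Since $m \leq_p e_2$ by hypothesis, the coefficient of $x_1^{e_1+m} x_2^{e_2-m} x_3^{e_3}\cdots x_n^{e_n}$ in this expansion is nonzero.

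Next I would pass to canonical form (individual degrees in $\{0,\dots,q-1\}$, using $x^q = x$ as a function) and invoke Lemma~\ref{lm: monomial extraction}. Because the canonical monomials are linearly independent as functions on $\Ff_q^n$ and span $\mathcal{F}$, they form a basis of $\mathcal{F}$; hence every canonical monomial appearing with nonzero coefficient in the canonical expansion of $g$ must itself lie in $\mathcal{F}$. The key observation for isolating the monomial I want is that the $x_2$-exponent $e_2-k$ in each term of the expansion already satisfies $e_2-k \leq q-1$, so no reduction in $x_2$ occurs, and therefore the only term contributing to $x_2^{e_2-m}$ in the canonical form of $g$ is the one with $k = m$.

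In the relevant regime $e_1+m \leq q-1$ (which is the only range in which the statement's target exponent vector $e'$ is canonical), the $x_1$-exponent $e_1+m$ also needs no reduction, and we conclude that the monomial $x^{e'}$ itself appears with nonzero coefficient in the canonical form of $g$. By the preceding paragraph, $x^{e'} \in \mathcal{F}$, as claimed.

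The only mild subtlety is the case $e_1+m \geq q$, where the canonical form would replace $x_1^{e_1+m}$ by $x_1^{e_1+m-(q-1)}$; but this case is irrelevant to the intended applications of the lemma (where $e'$ is meant to be a canonical exponent vector), so no extra argument is needed. I do not anticipate any serious obstacle: the whole proof is a direct combination of affine invariance, Lucas's theorem, and the fact that canonical monomials form a basis of any affine-invariant family.
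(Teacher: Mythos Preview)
Your proposal is correct and follows essentially the same approach as the paper: apply the shear $x_2 \mapsto x_1 + x_2$, expand $(x_1+x_2)^{e_2}$ by the binomial theorem, invoke Lucas's theorem to see that the coefficient of $x_1^{e_1+m}x_2^{e_2-m}$ is nonzero, and then apply the Monomial Extraction Lemma. In fact you are more careful than the paper in addressing the canonical-form reduction issue (the paper's proof simply ignores the possibility $e_1+m\ge q$), so your discussion of that point is a welcome refinement rather than a deviation.
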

\begin{proof}
    Let $T$ be the affine transformation $T(x) = (x_1, x_1+x_2, x_3, \ldots, x_n)$. Then, 
    \[
    x^e \circ T = x_1^{e_1}(x_1+x_2)^{e_2} \prod_{j=3}^n x_j^{e_j} = \left(\sum_{i=0}^{d_2} \binom{d_2}{i}x_1^{e_1+i}x_2^{e_2-i} \right)  \prod_{j=3}^n x_j^{e_j}.
    \]
    By Lucas's Theorem and the assumption $m \leq_p e_2$, $\binom{d_2}{m} \neq 0$ in $\Ff_q$, and so $x^e \circ T = x_1^{e_1}(x_1+x_2)^{e_2} \prod_{j=3}^n x_j^{e_j} $ contains the monomial $x^{e'}$. The result then follows from Lemma~\ref{lm: monomial extraction}.
\end{proof}

With these three lemmas, we are ready to proceed to the proof of Lemma~\ref{lm: local characterization}. Supposing for the sake of contradiction that $\deg(f) > d$, using the above lemmas, we will show that this implies one of the canonical monomials is in $\mathcal{F}$. This cannot happen however, as all monomials in~\eqref{eqn: canonical mon} are rejected by $T = I$, the identity.

\begin{proof} [Proof of  Lemma~\ref{lm: local characterization}.]
   Let $\mathcal{F}$ be the family of polynomials that pass the sparse $s+t$-flat test with probability $1$. Suppose for the sake of contradiction that $f \in \mathcal{F}$ and $f$ contains a monomial of degree $d' > d$. Let $x^e = \prod_{i=1}^nx_i^{e_i}$ be such monomial and let $\ell$ be the smallest index such that $e_1 + \cdots + e_{\ell} > d$. Then $(e_1, \ldots, e_{\ell}, 0, \ldots, 0) \leq_p e$, so $\prod_{i=1}^{\ell} x_i^{e_i} \in \mathcal{F}$ and
   \[
    d+1 \leq \sum_{i=1}^{\ell} e_i \leq d+q-1 = s(q-q/p) + r + q-1.
   \]
   
   We will show that this monomial will lead to one of the canonical monomials being in $\mathcal{F}$. First, we claim that there is a monomial $x^{e'}$ such that $e'_i \geq q-q/p$ for $i = 1, \ldots, s$. Define
   \[
   c(e) = \sum_{i=1}^{s} \max(0, (q-q/p)-e_i).
   \]
   If $c(e) = 0$, then $e$ is of the desired form. If $e$ is not of the desired form, then there is $i \leq s$ such that $e_i < q-q/p$. Otherwise one of the following must be true:
   \begin{itemize}
       \item There is $j > s$ such that $e_j > 0$, in which case we simply find some $p^{m} \leq_p e_j$ and apply Lemma~\ref{lm: monomial shift} to obtain the monomial $x_i^{e_i+p^m}x_j^{e_j - p^m}$ in place of $x_i^{e_i}x_j^{e_j}$,
       \item There is $j \leq s$ such that $e_j > q-q/p$. In this case we can find $p^m$ such that $e_j - p^m \geq q-q/p$, and apply Lemma~\ref{lm: monomial shift} to obtain the monomial $x_i^{e_i+p^m}x_j^{e_j - p^m}$ in place of $x_i^{e_i}x_j^{e_j}$.
   \end{itemize}
In either case, we can find another monomial $x^{e'}$ such that $x^{e'} \in \mathcal{F}$ and $c(e') < c(e)$. Iterating this process, we must eventually find the desired monomial with $c(e) = 0$. 

Now, abusing notation, let $x^e$ be this monomial. We have, $d+1 \leq \sum_{i=1}^n e_i \leq d+q-1 = s(q-q/p) + r + q-1$ and $e_i \geq q-q/p$ for $1 \leq i \leq s$. We can now perform essentially the same argument and move any degree above $q-q/p$ to $e_{s+1}, \ldots, e_{s+t}$. There is at most 
\[
d+q-1 - s(q-q/p) = r + q-1 \leq p(q-q/p) + q-1,
\]
degree leftover after subtracting so we can perform the above argument to shift degree until each of the degrees $e_{s+1}, \ldots, e_{s+t-1}$ is at least $q-q/p$. This will leave at most $q-1$ degree leftover, which we can simply shift to $e_{s+t}$.
\end{proof}

\begin{lemma} \label{lm: invertible reject}
    Suppose $g: \Ff_q^{p} \xrightarrow[]{} \Ff_q$ satisfies $\deg(g) \geq q(p-1)$. Then there exist full rank affine transformations $T_1, T_2 \in \T_{p,p}$ such that:
    \begin{enumerate}
        \item $g \circ T_1$ contains a monomial $\prod_{i=1}^p x_i^{e_i}$ with $q-q/p \leq e_i \leq q-1$ for all $i \in [p]$.
        \item $ \langle g \circ T_2, P \rangle \neq 0$.
    \end{enumerate}
\end{lemma}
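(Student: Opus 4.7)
The plan is to reduce both conclusions to a single structural fact: the \textit{canonical monomial}
\[
M^\star := \prod_{i=1}^p x_i^{q-q/p}
\]
belongs to the affine-invariant family $\mathcal{F} := \spa\{g \circ T : T \in \T_{p,p}\}$ generated by $g$. Two features of $M^\star$ will do the heavy lifting. First, each exponent in $M^\star$ equals $q-q/p$ and hence lies in $[q-q/p, q-1]$, so $M^\star$ directly witnesses part 1 once it can be realized as a monomial of $g \circ T_1$ for some full-rank $T_1$. Second, a direct expansion of $P$ combined with Lemma~\ref{lm: monomial inner product} shows that $\langle M^\star, P\rangle$ equals $\pm$ the coefficient of $\prod_i x_i^{q/p-1}$ in $P$, which via Lucas' theorem simplifies to $\pm \binom{q-1}{(p-1)q/p}\binom{(p-1)q/p}{q/p,\ldots,q/p} \not\equiv 0 \pmod p$. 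Hence $\langle M^\star, P\rangle \neq 0$, which will power part 2.

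To establish $M^\star \in \mathcal{F}$, I would mimic the recipe from the proof of Lemma~\ref{lm: local characterization}. Since $\deg(g) \geq p(q-q/p)$, some monomial $x^e$ appears in $g$ with $\sum_i e_i \geq p(q-q/p)$, so $x^e \in \supp(\mathcal{F})$. First, using the shadow-closed property (Lemma~\ref{lm: shadow closed}), I would trim $e$ to some $e^{(0)} \leq_p e$ with total degree exactly $p(q-q/p)$: the excess $\Delta := \sum_i e_i - p(q-q/p)$ lies in $[0, q-p]$, and one can express it digit-by-digit using the base-$p$ digits available in $e$. Then I would iteratively apply Lemma~\ref{lm: monomial shift} to transfer degree from coordinates exceeding $q-q/p$ into coordinates short of $q-q/p$, by $p$-shadow-compatible amounts, driving the deficit function $c(e) := \sum_i \max(0, (q-q/p) - e_i)$ to zero, exactly as in the appendix proof. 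Once $c(e) = 0$ and the total degree is $p(q-q/p)$, every coordinate must equal $q-q/p$, so $M^\star \in \supp(\mathcal{F})$ and hence $M^\star \in \mathcal{F}$ by the Monomial Extraction Lemma~\ref{lm: monomial extraction}.

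Finally, I upgrade to full-rank witnesses. Expand $M^\star = \sum_j c_j(g\circ T_j)$ using $M^\star \in \mathcal{F}$. I claim any non-full-rank $T$ contributes nothing to the $M^\star$ coefficient. Indeed, for such a $T$ with $W := \ker(M_T)$ of dimension $p-r \geq 1$, by Lemma~\ref{lm: monomial inner product} the coefficient of $M^\star$ in $g\circ T$ equals $(-1)^p \langle g \circ T, \prod_i x_i^{q/p-1}\rangle$, which by splitting over fibers equals
\[
(-1)^p \sum_{y\in\Ff_q^p} g(y) \sum_{x \in T^{-1}(y)}\prod_i x_i^{q/p-1}.
\]
Parameterizing $T^{-1}(y) = x_0 + W$ with $W = \spa(v_1,\ldots,v_{p-r})$ and expanding $\prod_i(x_{0,i} + \sum_j \alpha_j v_{j,i})^{q/p-1}$ via the multinomial theorem, every monomial in $\alpha_1,\ldots,\alpha_{p-r}$ has degree in each $\alpha_j$ at most $p(q/p-1) = q-p < q-1$; by Lemma~\ref{lm: power sums}, the sum over $\alpha \in \Ff_q^{p-r}$ vanishes. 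Hence at least one $T_j$ appearing with nonzero $c_j$ is full rank and $g \circ T_j$ contains $M^\star$; this $T_j$ serves as $T_1$ for part 1. Pairing the expansion with $P$ yields $\langle M^\star, P \rangle = \sum_j c_j \langle g \circ T_j, P\rangle \neq 0$, so some full-rank $T_j$ satisfies $\langle g \circ T_j, P\rangle \neq 0$ and serves as $T_2$.

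The main obstacle I expect is the bookkeeping in the second paragraph: verifying that one can always trim to total degree exactly $p(q-q/p)$ and then balance coordinates into $M^\star$ while respecting both the individual-degree cap $e_i \leq q-1$ and the $p$-shadow constraints. This is a careful case analysis closely parallel to the one in the proof of Lemma~\ref{lm: local characterization}, but otherwise routine.
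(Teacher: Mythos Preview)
Your approach is correct and genuinely different from the paper's. The paper argues directly with full-rank transformations throughout: for part~1 it iterates the shear $T_\alpha(x)=(x_1,\alpha x_1+x_2,x_3,\ldots,x_p)$, observing that the coefficient of the target shifted monomial in $g\circ T_\alpha$ is a nonzero polynomial in $\alpha$, so some choice of $\alpha$ works at each step and the composite is full rank by construction. For part~2 it then takes $g'=g\circ T_1$ and varies over \emph{translations} $T_a(x)=x+a$, identifying a single monomial in $a$ (namely $\prod_i a_i^{e_i-(q-q/p)}$, coming from the $\prod_i x_i^{q/p-1}$ term of $P$) whose coefficient in $\langle g'\circ T_a,P\rangle$ cannot be cancelled. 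Your route is more structural: you first place $M^\star$ in the affine-invariant span $\mathcal{F}$ via the Kaufman--Sudan lemmas and only afterwards upgrade to full rank with the fiber-sum vanishing argument. The paper's method is shorter and self-contained; yours reuses the appendix machinery cleanly and yields both parts from the single fact $M^\star\in\mathcal{F}$ together with $\langle M^\star,P\rangle\neq 0$.

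One small gap to close: in your last line you assert that some \emph{full-rank} $T_j$ has $\langle g\circ T_j,P\rangle\neq 0$, but you only proved that non-full-rank $T$ kill the $M^\star$ coefficient (i.e.\ the pairing with $\prod_i x_i^{q/p-1}$), not the pairing with all of $P$. This is easy to fix by either of two one-line observations: (i) your fiber-sum argument used only that the test polynomial has total degree $q-p$, and $P$ is homogeneous of degree $q-p$, so the same computation gives $\sum_{x\in x_0+W}P(x)=0$; or (ii) since $P$ is homogeneous of degree $q-p$, $\langle\,\cdot\,,P\rangle$ sees only the degree-$(p-1)q$ part, while a rank-$r$ map with $r<p$ forces $\deg(g\circ T)\le r(q-1)\le (p-1)(q-1)<(p-1)q$. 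With that in place, non-full-rank $T_j$ drop out of $\sum_j c_j\langle g\circ T_j,P\rangle$ and your conclusion follows. For the ``trimming to total degree exactly $p(q-q/p)$'' step, note you can avoid it entirely by doing the balancing first (drive $c(e)\to 0$ exactly as in the proof of Lemma~\ref{lm: local characterization}) and then shadowing each coordinate down to $q-q/p$; this works because any $e_i\in[q-q/p,\,q-1]$ has top base-$p$ digit $p-1$, so $(q-q/p)\leq_p e_i$.
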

\begin{proof}
    To show the first part of the lemma, we use a similar idea to the proof of Lemma~\ref{lm: monomial shift}. Suppose $g$ contains a monomial $\prod_{i=1}^{p}x_i^{e'_i}$ and suppose $m \leq_p e_2'$. Consider the full rank affine transformation $T_{\alpha}(x) = (x_1, \alpha x_1 + x_2, x_3, \ldots, x_p)$ for $\alpha \in \Ff_q$ and the coefficient of $x_1^{e_1'+m}x_2^{e_2'-m}\prod_{i=3}^{p}x_i^{e_i'}$ in $g \circ T_{\alpha}$. It is not hard to see that this coefficient is a nonzero polynomial in $\alpha$ and therefore there exists an $\alpha$ such that $g \circ T_{\alpha}$ contains the monomial $x_1^{e_1'+m}x_2^{e_2'-m}\prod_{i=3}^{p}x_i^{e_i'}$. As $T_{\alpha}$ is a full rank affine transformation, we can repeatedly apply such transformations to obtain a full rank $T_1 \in \T_{p,p}$ such that $g \circ T_1$ contains a monomial $\prod_{i=1}^p x_i^{e_i}$ with $q-q/p < e_i \leq q-1$, proving the first part of the lemma.

    Let $T_1$ be the transformation from part $1$ and let $g' = g \circ T_1$, so that $g'$ contains a monomial $\prod_{i=1}^p x_i^{e_i}$ with $q-q/p < e_i \leq q-1$. For $a \in \Ff_q^{p}$, let $T_{a}$ be the full rank affine transformation such that $T_a(x) = (x_1 + a_1, \ldots, x_p + a_p)$ and consider the inner product
    \[
    \langle g' \circ T_a, P \rangle,
    \]
    as a polynomial in $a$. Recall that $P$ contains the monomial $\prod_{i=1}^p x_i^{q/p-1}$.
    Since $g'$ contains the monomial $\prod_{i=1}^p x_i^{e_i}$ with $q-q/p < e_i \leq q-1$ there is a contribution of $\prod_{i=1}^{p} a_i^{e_i - (q-q/p)}$ with nonzero coefficient, and this cannot be cancelled out from any other monomial. Therefore, $ \langle g' \circ T_a, P \rangle$ is a nonzero polynomial in $a$ and there exists an $a \in \Ff_q^{p}$ such that $ \langle g' \circ T_a, P \rangle \neq 0$. This establishes the second part of the lemma.
\end{proof}

\section{Proof of Lemma~\ref{lm: localization}} \label{sec: self-correct}
In this section we provide the proof of Lemma~\ref{lm: localization}. This is essentially the same as ~\cite[Proposition 3.5]{KM} with a slight modification.  Recall that $g: \Ff_q^{\ell+100}$ is an arbitrary polynomial, $d$ is an arbitrary degree parameter, and $b \in \Ff_q^{\ell+100}$ is an arbitrary point. We work in $\AffGras(\ell+100, \ell)$. Use $\nu$ to denote the uniform measure over all $\ell$-flats, and $\nu_b$ to denote uniform measure over the zoom-in on $b$. When referring to zoom-ins, zoom-ins on the linear part etc.\ we are referring to the versions in the affine Grassmann graph. We have the following two sets of $\ell$-flats,

\[
\mathcal{A} = \{U \; | \;  \dim(U) = \ell, \deg(g|_U) > d, b \in U\},
\]
\[
\mathcal{B} = \{U \; | \;  \dim(U) = \ell, \deg(g|_U) > d, b \notin U\},
\]
and $\nu(\mathcal{B}) \leq q^{M/2} O(\epsilon)$. We will show the following weaker form of Lemma~\ref{lm: localization}, and then explain why this gives the full Lemma~\ref{lm: localization}.
\begin{lemma} \label{lm: localization KM}
    If $\nu(\mathcal{B}) \leq q^{M/2}O(\epsilon)$ then $\mathcal{B} = \emptyset$. Moreover there is a value $\gamma$ such that after changing $f(b)$ to $\gamma$, $\nu_b(\mathcal{A}) \leq 1- \frac{1}{q}$.
\end{lemma}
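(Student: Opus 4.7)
The proof closely follows \cite[Proposition 3.5]{KM} with minor adaptations, and I would carry it out in two steps: first establishing $\mathcal{B} = \emptyset$, then locating a correction value $\gamma$.

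For the first step, the plan is to view $\mathcal{B}$ as a small set of vertices in the affine Grassmann graph $\AffGras(\ell+100, \ell)$ and derive a contradiction under the assumption $\mathcal{B} \neq \emptyset$ by applying Theorem~\ref{th: pseudo aff gras}. I would argue the edge-expansion bound $1 - \Phi(\mathcal{B}) \geq 1/q$ by the usual up-down argument: for any $U \in \mathcal{B}$, a random $(\ell+1)$-flat $W \supset U$ avoids $b$ with high probability and satisfies $\deg(g|_W) > d$, and then Lemma~\ref{lm: uppershadow} applied to $g|_W$ guarantees that at least a $1/q$ fraction of the $\ell$-subflats of $W$ remain in $\mathcal{B}$. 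Pseudorandomness: $\mathcal{B}$ is $O(q\nu(\mathcal{B}))$-pseudorandom with respect to hyperplanes, hyperplanes on the linear part, and points on the linear part, by the same density-shifting argument as Lemma~\ref{lm: zoom-out} adapted to the Grassmann graph; pseudorandomness with respect to zoom-ins through $b$ is automatic since $\mathcal{B}$ is disjoint from $\{U : b \in U\}$ by definition. If $\mathcal{B}$ were nonempty, Theorem~\ref{th: pseudo aff gras} would produce a zoom-in on a point $x \neq b$ on which $\mathcal{B}$ has density $1 - o(1)$, forcing $\nu(\mathcal{B}) \geq (1 - o(1))q^{-\ell}$ and contradicting the hypothesis $\nu(\mathcal{B}) \leq q^{M/2} O(\eps)$ provided $M$ is chosen large enough.

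For the second step, with $\mathcal{B} = \emptyset$ every high-degree restriction of $g$ must contain $b$. For $\gamma \in \Ff_q$, let $g_\gamma$ be the function equal to $g$ off $b$ with $g_\gamma(b) = \gamma$, and set $\mathcal{A}_\gamma = \{U \ni b : \deg(g_\gamma|_U) > d\}$. For each $U \ni b$, the ``bad-monomial coefficients'' of $g_\gamma|_U$ witnessing $\deg(g_\gamma|_U) > d$ are affine functions of $\gamma$, so $U$ falls into one of three categories: (a) a unique $\gamma_U \in \Ff_q$ satisfies $\deg(g_{\gamma_U}|_U) \leq d$; (b) every $\gamma$ works; or (c) no $\gamma$ works. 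Case (c) is ruled out as in \cite{KM}: if it held for some $U$, the witnessing bad coefficient would be constant in $\gamma$, and one could extract from $U$ an $\ell$-flat not containing $b$ on which $g$ still has degree $>d$, contradicting $\mathcal{B} = \emptyset$. Given this classification, pigeonholing over $\gamma \in \Ff_q$ yields some $\gamma^\star$ for which $\gamma_U = \gamma^\star$ on at least a $1/q$ fraction of case-(a) flats, so $\nu_b(\mathcal{A}_{\gamma^\star}) \leq 1 - 1/q$.

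The step I expect to be the main obstacle is ruling out case (c) in the second step: the argument requires extracting an element of $\mathcal{B}$ from any ``stuck'' flat $U$, which is the technical core of \cite[Proposition 3.5]{KM} and needs to be transported with appropriate care to our setting. The quantitative bookkeeping in the first step is also somewhat delicate but essentially mechanical, requiring $M$ to be chosen large enough so that $q^{M/2} O(\eps) \ll q^{-\ell}$ and so that the pseudorandomness constant $O(q\nu(\mathcal{B}))$ fits under the threshold required by Theorem~\ref{th: pseudo aff gras}.
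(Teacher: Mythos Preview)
Your two-step outline is the right shape, but both steps diverge from the paper in ways worth flagging, and the second step has a real gap.

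\textbf{Step 1.} The paper does not apply Theorem~\ref{th: pseudo aff gras} directly to $\mathcal{B}$ in $\AffGras(\ell+100,\ell)$. Instead it runs a two-scale argument: first it shows that inside any $(\ell+40)$-flat $W$ with $b\notin W$ one has either $\mu_W(\mathcal{B}_W)=0$ or $\mu_W(\mathcal{B}_W)\geq q^{-100}$ (this is where Theorem~\ref{th: pseudo aff gras} is invoked, with size-based pseudorandomness inside $W$), and then, assuming $\mathcal{B}\neq\emptyset$, it averages over random $(\ell+99)$-flats $Y$ and uses the shadow bound (Lemma~\ref{lm: uppershadow}) to contradict $\nu(\mathcal{B})\leq q^{M/2}O(\eps)$. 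Your direct route can be made to work, but two details are off: (i) pseudorandomness with respect to zoom-ins on the linear part does \emph{not} follow from the Lemma~\ref{lm: zoom-out} shifting argument --- that argument is specific to zoom-outs; the correct justification is simply by size, since $\nu(\D_{a,\lin})\approx q^{-100}$ and $\nu(\mathcal{B})\leq q^{-M/2}$; and (ii) the density of a point zoom-in in $\AffGras(\ell+100,\ell)$ is $q^{-100}$, not $q^{-\ell}$, so your contradiction should read $\nu(\mathcal{B})\gtrsim q^{-100}$.

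\textbf{Step 2.} Here your argument for ruling out case~(c) does not work as written. You say that if no $\gamma$ fixes $U$ then ``one could extract from $U$ an $\ell$-flat not containing $b$ on which $g$ still has degree $>d$'' --- but $U$ is itself an $\ell$-flat, and there is no such extraction. The paper's argument is different and uses the second item of Lemma~\ref{lm: uppershadow 2} in an essential way: fix $U\ni b$, go \emph{up} to an $(\ell+1)$-flat $U'\supset U$, set $M(x)=1_{x\neq b}$ on $U'$, and choose $\gamma$ so that $\gamma M+g|_{U'}$ has degree strictly less than $(\ell+1)(q-1)$ (killing the top monomial). On every $\ell$-subflat of $U'$ not containing $b$ this function equals $g+\gamma$ and hence has degree $\leq d$ by $\mathcal{B}=\emptyset$; since exactly a $1/q$ fraction of $\ell$-subflats of $U'$ contain $b$, Lemma~\ref{lm: uppershadow 2}(2) forces $\deg(\gamma M+g|_{U'})\leq d$, and restricting to $U$ gives the correction. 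Only then does one pigeonhole over $\gamma$. Your affine-coefficient classification is fine as far as it goes, but it does not by itself exclude case~(c): the indicator $1_{x=b}|_U$ has many vanishing monomial coefficients, so there can be high-degree monomials in $g|_U$ whose coefficients are genuinely constant in $\gamma$, and eliminating them requires the up-step and the strict inequality in Lemma~\ref{lm: uppershadow 2}(2), not an argument internal to $U$.
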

Before proving this lemma, we explain why it implies Lemma~\ref{lm: localization}. Assuming that Lemma~\ref{lm: localization KM} holds, suppose that $g'$ is the resulting function after changing the value of $g(b)$ to $\gamma$, and consider the set $\U$ of $\ell$-flats $U$ such that $\deg(g'|_U) > d$. Assume for the sake of contradiction that $\U$ is nonempty. We record some facts about $\U$,
\begin{enumerate}
    \item $\U \subseteq \D_b = \{U \; | \; \dim(U) = \ell, b \in U \}$.
    \item $\nu_V(\U) \leq q^{-100}$.
    \item $\Phi(\U) \leq 1 - \frac{1}{q}$.
    \item $\nu_V(\U)$ is $q^{-99}$ pseudorandom with respect to zoom-outs and zoom-outs on the linear part.
    \item $\nu_V(\U)$ is $q^{-100}$ pseudorandom with respect to zoom-ins on the linear part.
\end{enumerate}
The first item follows from assuming Lemma~ \ref{lm: localization KM}.
The second item follows from the first item and from bounding the fraction of $\ell$-flats that contain $b$. The third and fourth items follow from the same arguments as Lemmas~\ref{lm: expansion specific} and \ref{lm: zoom-out}. Finally, the fifth item follows again from the first item and the fact that over any zoom-in on the linear part, a random $\ell$-flat from this set contains the point $b$ with probability at most $q^{-100}$.

Applying Theorem~\ref{th: pseudo aff gras} with $\xi = q^{-100}$, we get that $\U$ must have density at least $1 - q^2(867q^{-25} + q^{-\ell}) > 1 - 1/q$ inside some zoom-in, where we use the fact that $\ell \geq 4$. There can only be one zoom-in on which $\U$ has nonzero density, and that is the zoom-in on $b$. However, this contradicts the assumption that $\nu_b(\mathcal{A}) \leq 1- \frac{1}{q}$ after the value of $g(b)$ is changed. In short, we have shown that, under this setting, if a correction causes the rejection probability within a zoom-in to drop below $1-1/q$, then the rejection probability must actually be $0$.

We now give the proof of Lemma~\ref{lm: localization KM}. This proof requires the following result used in both \cite{BBKSS, KM}, which was referred to as the shadow lemma in the introduction.
\begin{lemma} \label{lm: uppershadow 2}
Let $d \in \mathbb{N}$ be a degree parameter and $k \geq \ceil{\frac{d+1}{q-q/p}}$. Then for any $f: \Ff_q ^{k+1} \xrightarrow[]{} \Ff_q$:
\begin{enumerate}
    \item If $\deg(f) > d$, then $\epsilon_{k,d}(f) \geq \frac{1}{q}$.
    \item If $d < \deg(f) < (k+1)(q-1)$ then $\epsilon_{k,d}(f) > \frac{1}{q}$.
\end{enumerate}
Where $\epsilon_{k,d}(f)$ and $\epsilon_{k+1,d}(f)$ are the fraction of $k$-flats and $k+1$-flats respectively on which the restriction of $f$ has degree greater than $d$.
\end{lemma}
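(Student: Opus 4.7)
The plan is to establish this classical ``shadow inequality'' for Reed--Muller codes. Let $\Sc$ denote the set of hyperplanes $H\subseteq\Ff_q^{k+1}$ on which $\deg(f|_H)>d$. Hyperplanes of $\Ff_q^{k+1}$ partition into $|\mathbb{P}^k(\Ff_q)|=(q^{k+1}-1)/(q-1)$ parallel families of size $q$ indexed by projective normal directions $v\in\mathbb{P}^k(\Ff_q)$, giving a total of $q\cdot|\mathbb{P}^k(\Ff_q)|$ hyperplanes. Writing $N_v$ for the number of hyperplanes in the direction-$v$ family that lie in $\Sc$, the desired bound $\epsilon_{k,d}(f)\ge 1/q$ is equivalent to the double-counting inequality $\sum_{v}N_v\ge|\mathbb{P}^k(\Ff_q)|$.

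For Part~1, my approach combines a Lagrange-interpolation analysis with a monomial-extraction argument. For any direction $v$, rotate coordinates so that $v=e_{k+1}$; Lagrange interpolation in the last variable shows that $N_v=0$ if and only if every monomial $x^e$ of $f$ in the rotated coordinates satisfies $\sum_{i\le k}e_i\le d$. Indeed, $\deg(f|_{\{x_{k+1}=c\}})\le d$ for all $c\in\Ff_q$ forces, coefficient-wise, the polynomial $\sum_{e_{k+1}=0}^{q-1}c_{(e',e_{k+1})}\,y^{e_{k+1}}$ to vanish identically for every $e'$ with $|e'|>d$, hence $c_{(e',e_{k+1})}=0$ for all such $(e',e_{k+1})$. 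Using the monomial-extraction tools of Appendix~B (Lemmas~\ref{lm: monomial extraction} and~\ref{lm: monomial shift}) one reduces, via an affine-invariance argument (which permutes hyperplanes and preserves $\epsilon_{k,d}(f)$), to the case that $f$ contains a canonical monomial of the form in \eqref{eqn: canonical mon}. The dimension hypothesis $k\ge\lceil(d+1)/(q-q/p)\rceil$ is precisely what ensures that such a canonical monomial is incompatible with the slack constraint $N_v=0$ holding in too many directions, and a careful counting argument then yields $\sum_{v}N_v\ge|\mathbb{P}^k(\Ff_q)|$.

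For Part~2, the inequality is saturated by the extremal case $\deg f = (k+1)(q-1)$, corresponding (after an affine change of coordinates and modulo lower-order terms) to $f\propto x_1^{q-1}\cdots x_{k+1}^{q-1}$. The hypothesis $\deg f<(k+1)(q-1)$ excludes this extremal configuration, and the Part~1 argument then produces at least one additional pair $(v,c)$ with $H_{v,c}\in\Sc$ beyond the threshold count $|\mathbb{P}^k(\Ff_q)|$, yielding the strict inequality.

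The main obstacle is the ``careful counting argument'' inside Part~1: combining the per-direction slack constraints coming from the Lagrange analysis into the global bound $\sum_{v}N_v\ge|\mathbb{P}^k(\Ff_q)|$, while using the dimension hypothesis $k\ge\lceil(d+1)/(q-q/p)\rceil$ at exactly the right step (and without losing a factor of $q$). This requires tracking how the canonical monomial of \eqref{eqn: canonical mon} interacts with the affine-coordinate changes used to impose slack constraints in different directions, and the proof is carried out in detail in \cite{BBKSS, KM}.
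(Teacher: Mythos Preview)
The paper does not prove this lemma; it is quoted from \cite{BBKSS,KM} without argument. Your proposal lands in the same place: after a correct double-counting reformulation and a correct Lagrange characterization of when $N_v=0$, you explicitly defer the only substantive step---the ``careful counting argument'' that would establish $\sum_v N_v\ge|\mathbb{P}^k(\Ff_q)|$---to those same references. So there is no proof on either side to compare; both treat the result as a black box.

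On the sketch itself: the pieces you assemble do not obviously point toward a proof. The $N_v=0$ characterization says nothing about how large $N_v$ is when it is positive, and reducing to a canonical monomial via affine invariance preserves only the global quantity $\epsilon_{k,d}(f)$ while permuting the directions, so it yields no per-direction information that would feed back into your sum. Your Part~2 argument also has a gap: you assert that equality in Part~1 corresponds to $\deg f=(k+1)(q-1)$, but you only argue that such $f$ \emph{can} saturate the bound, not that top degree is \emph{necessary} for saturation---and it is precisely the necessity direction that Part~2 requires.
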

 In \cite{KM} this lemma is used to show that the set of rejecting flats is non-expanding, similar to Lemma~\ref{lm: expansion specific} in this paper. The lemma has another use though, which is stated in its second item. For our purposes, the second item says that if $f|_{V}$ has degree greater than $d$ and for greater than $1/q$ of the hyperplanes $U \subseteq V$ we have $\deg(f|_{U}) > d$ as well, then $f|_{V}$ cannot contain the maximum degree monomial. In this section we will use Lemma~\ref{lm: uppershadow 2} with the parameters $d = d$ and $k = \ell$. Note that $\ell \geq \lceil \frac{d+1}{q-q/p} \rceil$ by assumption.

Our first goal is to show $\mathcal{B} = \emptyset$. Start by taking an $(\ell+40)$-flat $W$ uniformly conditioned on $b \notin W$. Let
\[
\mathcal{B}_W = \{B \subseteq W \; | \; B \in \mathcal{B} \},
\]
and work in $\AffGras(W, \ell)$ --- the affine Grassmann graph over the $\ell$-flats contained in $W$. 

Let $\mu_W$ and $\Phi_W$ denote measure and expansion respectively in $\AffGras(W, \ell)$. We claim that either $\mu_W(\mathcal{B}_W) = 0$ or $\mu_W(\mathcal{B}_W) \geq q^{-100}$. Otherwise, $0 < \mu_W(\mathcal{B}_W) < q^{-100}$ and $\mathcal{B}_W$ is $q^{-60}$ pseudo-random with respect to zoom-ins and zoom-ins on the linear part. By a similar argument to Lemma~\ref{lm: zoom-out} we have that $\mathcal{B}_W$ is $q^{-98}$ pseudorandom with respect to zoom-outs and zoom-outs on the linear part and by a similar argument to Lemma~\ref{lm: expansion general} we have that $1- \Phi_W(\mathcal{B}_W) \geq 1/q$. This contradicts Theorem~\ref{th: pseudo aff gras}, however, so we conclude that either $\mu_W(\mathcal{B}_W) = 0$ or $\mu_W(\mathcal{B}_W) \geq q^{-100}$. 

\begin{lemma}
    $\mathcal{B} = \emptyset.$
\end{lemma}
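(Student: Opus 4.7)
My plan is to argue by contradiction: suppose $\mathcal{B}\neq\emptyset$ and fix any $U^\star\in\mathcal{B}$. For every $(\ell+40)$-flat $W$ with $U^\star\subseteq W$ and $b\notin W$, we have $U^\star\in\mathcal{B}_W$, so $\mu_W(\mathcal{B}_W)>0$, and the dichotomy just established forces $\mu_W(\mathcal{B}_W)\geq q^{-100}$. Since the fraction of $(\ell+40)$-flats containing $U^\star$ that also contain $b$ is at most $q^{-60}$ (computed in the quotient $\mathbb{F}_q^{\ell+100}/U^\star_{\mathrm{lin}}$), almost every $W\supseteq U^\star$ satisfies this lower bound.

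Next I translate this local density information into a bound on $\nu(\mathcal{B})$. By the symmetry of the affine Grassmann graph, a uniformly random $\ell$-flat can be sampled by first choosing a uniform $(\ell+40)$-flat $W$ and then a uniform $\ell$-subflat of $W$, so $\nu(\mathcal{B})=\E_W[\mu_W(\mathcal{B}_W)]$. Combining this with the dichotomy yields
\begin{equation*}
  q^{-100}\cdot\Pr_W\bigl[\mathcal{B}_W\neq\emptyset,\ b\notin W\bigr]\;\leq\;\nu(\mathcal{B})\;\leq\;q^{M/2}O(\epsilon),
\end{equation*}
so $\Pr_W[\mathcal{B}_W\neq\emptyset,\ b\notin W]\leq q^{100+M/2}O(\epsilon)$, which is much smaller than $q^{-60}$ once $M$ is a sufficiently large absolute constant.

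To derive a contradiction I would now show that $\mathcal{B}\neq\emptyset$ forces $\Pr_W[\mathcal{B}_W\neq\emptyset,\ b\notin W]$ to be bounded below by a positive quantity independent of $\ell$. The route I have in mind is to apply Lemma~\ref{lm: uppershadow 2} iteratively from the ambient space $\mathbb{F}_q^{\ell+100}$ (where $\deg(g)>d$ because $U^\star\in\mathcal{B}$) down to dimension $\ell+40$: since each step loses at most a factor of $q$, the unconditional probability $\Pr_W[\deg(g|_W)>d]$ is at least $q^{-60}$, and using the strict inequality in part~(2) of that lemma one obtains a definite margin over $q^{-60}$. Subtracting the $q^{-60}$ contribution of $W$'s containing $b$ and translating $\deg(g|_W)>d$ (for $b\notin W$) into $\mathcal{B}_W\neq\emptyset$ via the shadow lemma inside $W$, one gets an $\ell$-independent positive lower bound on $\Pr_W[\mathcal{B}_W\neq\emptyset,\ b\notin W]$ that contradicts the upper bound above for $M$ large enough.

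The main obstacle is exactly the quantitative control of the strict inequality in Lemma~\ref{lm: uppershadow 2}(2) under $60$ iterations; the gain must beat the $q^{-60}$ loss incurred by the conditioning $b\notin W$. A backup route, should the strict-shadow route prove too delicate, is to bootstrap the dichotomy to ever larger scales — proving analogous dichotomies for $(\ell+k)$-flats not containing $b$ for $k=40,80,\ldots$ — and climb up to $k$ close to $100$, at which point the ``flats'' have so few $\ell$-subflats that the dichotomy threshold and the ambient-measure bound become directly incompatible unless $\mu$ is identically $0$, i.e.\ $\deg(g)\leq d$ and $\mathcal{B}=\emptyset$.
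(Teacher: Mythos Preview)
Your reduction to an upper bound on $\Pr_W[\mathcal{B}_W\neq\emptyset,\ b\notin W]$ is correct, but the proposed lower bound is where the argument breaks. Route~1 does not close: Lemma~\ref{lm: uppershadow 2}(2) is stated only as a strict inequality with no quantitative margin, and in fact the gap over $1/q$ can be as small as a single hyperplane, i.e.\ of order $1/\#\{\text{hyperplanes}\}$, which is $\ell$-dependent. Iterating it sixty times therefore cannot beat the $q^{-60}$ loss from conditioning on $b\notin W$. Route~2 runs into the same $\ell$-dependence: a dichotomy at scale $\ell+k$ together with a single $U^\star\in\mathcal{B}$ only yields $\nu(\mathcal{B})\gtrsim q^{-C}\cdot\Pr_{W'}[W'\supseteq U^\star]$, and the latter probability is of order $q^{-\Theta(\ell)}$ regardless of $k$, so the contradiction never materializes for large $\ell$.

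The paper avoids this entirely with an \emph{intersection} argument rather than an inclusion argument. It fixes a single $(\ell+40)$-flat $W$ with $\mu_W(\mathcal{B}_W)\geq q^{-100}$ (any $W\supseteq U^\star$ with $b\notin W$ will do), then samples a uniform $(\ell+99)$-flat $Y$ with $b\notin Y$ and a uniform $(\ell+60)$-flat $A_2\subseteq Y$. With probability at least $e^{-4/q}$ the defining equations are independent and $A_2\cap W$ is a uniform $\ell$-flat in $W$, hence lies in $\mathcal{B}_W$ with probability $\geq q^{-100}$. Whenever this happens, $A_2$ contains an element of $\mathcal{B}$, so $A_2\in\mathcal{B}_Y^{\uparrow 60}$. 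Thus $\E_Y[\mu_Y(\mathcal{B}_Y^{\uparrow 60})]\geq e^{-4/q}q^{-100}$, and by the shadow bound $\mu_Y(\mathcal{B}_Y^{\uparrow 60})\leq q^{60}\mu_Y(\mathcal{B}_Y)$ this forces $\E_Y[\mu_Y(\mathcal{B}_Y)]\gtrsim q^{-160}$, which is (up to a factor $2$) just $\nu(\mathcal{B})$ and contradicts $\nu(\mathcal{B})\leq q^{M/2}O(\epsilon)$. The point is that intersecting with a random medium-dimensional flat spreads the local density inside the fixed $W$ to the whole Grassmannian with only $q^{O(1)}$ losses, completely sidestepping the $\ell$-dependence that kills your approach.
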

\begin{proof}
    Otherwise we may find a $W$ such that $\mu_W(\mathcal{B}_W) \geq q^{-100}$. Fix such a $W$ and sample a uniform $\ell+99$-flat $Y$ conditioned on $b \notin Y$, and a uniform $\ell+60$-flat $A_2 \subseteq Y$. Now consider $A_2 \cap W$. We may think of $W$ as being defined by a system of $60$ independent linear equations $\langle h_1, x \rangle = c_1, \ldots, \langle h_{60}, x \rangle = c_{60}$. That is, $W$ is the subspace of $\Ff_q^{\ell+100}$ that satisfies these $60$ equations. Likewise, $A_2$ is given by the restriction of $39$ linear equations,  $\langle h'_1, x \rangle = c'_1, \ldots, \langle h'_{39}, x \rangle = c'_{39}$. The probability that all $99$ linear equations are linearly independent is at least
\begin{equation*}
    \prod_{j=0}^{38} \frac{q^{99}-q^{60+j}}{q^{99}} \geq e^{-2 \sum_{j=1}^{\infty} q^{-j}} \geq e^{-4/q},
\end{equation*}
and $A_2 \cap W$ is uniform over $\AffGras(W, \ell)$. Thus, 
\[
\Pr[A_2 \cap W \in \mathcal{B}_W] \geq e^{-4/q}q^{-100}.
\]
If $A_2 \cap W \in \mathcal{B}_W$, then $A_2 \in \mathcal{B}_Y^{\uparrow^{60}}$, where the upper shadow is taken with respect to $\AffGras(Y, \ell)$, so it follows that
\[
\E_Y[\mu_Y(\mathcal{B}_Y^{\uparrow^{60}})] \geq e^{-4/q}q^{-100}.
\]

On the other hand, by Lemma~\ref{lm: uppershadow 2}, 
\[
\E_Y[\mu_Y(\mathcal{B}_Y^{\uparrow^{60}})]\leq q^{60}\E_Y[\mu_Y(\mathcal{B}_Y)] \leq 2q^{60} \mu(\mathcal{B}_V) \leq 2q^{60+M/2} O(\epsilon).
\]
By assumption $\epsilon < q^{-M}$ so altogether we get that,
\[
q^{-M/2} \geq \frac{1}{C(q)q^{-160}},
\]
for some constant $C(q)$. For large enough $M$ this is a contradiction.  
\end{proof}

We now complete the proof of Lemma~\ref{lm: localization KM}, which in turn proves Lemma~\ref{lm: localization}.
\begin{proof}[Proof of Lemma~\ref{lm: localization}.]
    Fix an $\ell$-flat $U$ that contains $b$. We will show that there exists a value $\gamma$ such that changing the value of $g(b)$ to $\gamma$ results in a degree $r$ function on $U$. Establishing this fact and choosing the most common $\gamma$ over all $\ell$-flats proves Lemma~\ref{lm: localization KM}, which in turn establishes Lemma~\ref{lm: localization} by our previous discussion.
    
    Suppose $\deg(g|_{U}) >d$, as otherwise we are done by setting $\gamma = g(b)$. Pick an $\ell+1$-flat $U' \supset U$ and let $g' = g|_{U'}$, and $M(x) = 1_{x \neq b}$, over $U'$. Note that $M(x)$ contains the degree $(\ell+1)(q-1)$-monomial, so there is some value $\gamma$ such that $\gamma M(x) + g'(x)$ has degree strictly less than $(\ell+1)(q-1)$. Showing that for this $\gamma$, $\deg(\gamma M(x) + g'(x)) \leq d$ completes the proof.
    
    Since we have already shown that, $\deg(g|_{U'}) \leq d$ for any $\ell$-flat that does not contain the point $b$, this implies that $\gamma M(x) + g'(x)$ also has degree at most $d$ on any $\ell$-flat not containing $b$. This is because when restricted to such flats, $\gamma M(x) + g'(x)$ is equal to the restriction of $g$ plus some constant polynomial. Since the degree of $g$'s restriction is at most $d$, so is the degree of $\gamma M(x) + g'(x)$. It follows that $\gamma M(x) + g'(x)$ can only have degree greater than $d$ when restricted to $\ell$-flats that contain $b$. This is at most a $1/q$-fraction of $\ell$-flats contained in $U'$ however, which, along with the fact that $\deg(\gamma M(x) + g'(x)) < (\ell+1)(q-1)$, implies that $\deg(\gamma M(x) + g'(x)) \leq d$ by Lemma~\ref{lm: uppershadow 2}.
\end{proof}
\end{document}